\acrodef{marl}[MARL]{Multi-Agent Reinforcement Learning}
\acrodef{mfg}[MFG]{Mean-Field Game}
\acrodef{mfc}[MFC]{Mean-Field Control}
\acrodef{gmfg}[GMFG]{Graphon Mean-Field Game}
\acrodef{mdp}[MDP]{Markov Decision Process}
\acrodef{sbm}[SBM]{Stochastic Block Model}
\acrodef{ne}[NE]{Nash Equilibrium}
\acrodef{rkhs}[RKHS]{Reproducing Kernel Hilbert Space}
\acrodef{gmfgppo}[GMFG-PPO]{Proximal Policy Optimization for GMFG}
\acrodef{ppo}[PPO]{Proximal Policy Optimization}
\acrodef{sbm}[SBM]{Stochastic Block Model}
\acrodef{sis}[SIS]{Susceptible-Infectious-Susceptible}
\title{\Huge Learning Regularized Graphon Mean-Field Games with Unknown Graphons}
\author{Fengzhuo Zhang$^{1}$ \quad Vincent Y.~F.~Tan$^1$ \quad Zhaoran Wang$^2$\quad Zhuoran Yang$^3$\\
$^1$National University of Singapore \quad $^2$ Northwestern University   \quad $^3$Yale University\\
\texttt{fzzhang@u.nus.edu}, \texttt{vtan@nus.edu.sg}, \\
\texttt{zhaoranwang@gmail.com}, \texttt{zhuoranyang.work@gmail.com}
}
\date{}
\begin{document}

\maketitle
\begin{abstract}
We design and analyze reinforcement learning algorithms for Graphon Mean-Field Games (GMFGs). In contrast to previous works that require the precise values of the graphons, we aim to learn the Nash Equilibrium (NE) of the regularized GMFGs when the graphons are unknown. Our contributions are threefold. First, we propose the Proximal Policy Optimization for GMFG (GMFG-PPO) algorithm and show that it converges at a rate of $\tilO(T^{-1/3})$ after $T$ iterations with an estimation oracle, improving on a previous work by Xie~{\em et al.}\ (ICML, 2021). Second, using kernel embedding of distributions, we design efficient algorithms to estimate the transition kernels, reward functions, and graphons from sampled agents. Convergence rates are then derived when the positions of the agents are either known or unknown. Results for the combination of the optimization algorithm GMFG-PPO and the estimation algorithm are then provided. These algorithms are the first specifically designed for learning graphons from sampled agents. Finally, the efficacy of the proposed algorithms are corroborated through simulations. These simulations demonstrate that learning the unknown graphons  reduces the {\em exploitability} effectively.%, a measure of the proximity of the learned policy and the \ac{ne}.
\end{abstract}
\section{Introduction}
\ac{marl} aims to solve sequential decision-making problems in  multi-agent systems~\citep{zhang2021multi,gronauer2022multi,oroojlooy2022review}. Although \ac{marl} has enjoyed tremendous successes across a wide range of real-world applications~\citep{tang2021sensory,wang2022mathrm,wang2022equivariant,xu2021relation}, it suffers from the ``curse of many agents'' where the sizes of the state  and action spaces increase exponentially with the number of  agents~\citep{menda2018deep,wang2020breaking}. A potential remedy is to use the {\em mean-field approximation}~\citep{yang2018mean,carmona2019model}. It assumes that the agents are \emph{homogeneous}, and each agent is influenced only by the {\em common} state distribution of agents. This assumption mitigates the exponential growth of the state and action spaces~\citep{wang2020breaking,guo2022mf}. However, the homogeneity assumption heavily restricts the applicability of the \ac{mfg}. As a result, the \ac{gmfg} is proposed as a means to relax the homogeneity assumption. It captures the heterogeneity of agents through  {\em graphons} and allows the number of agents to be potentially uncountably infinite~\citep{parise2019graphon,carmona2022stochastic}. \ac{gmfg}s have achieved great successes in a  wide range of applications~\citep{gao2019graphon,aurell2022finite}.

However, learning algorithms for \ac{gmfg} require significantly more efforts to design and analyze. \cite{cui2021learning} proposed to learn the \ac{ne} of \ac{gmfg}s by modifying existing \ac{mfg} learning algorithms. However, these model-free algorithms suffer from the fact that  the distribution flow estimation in \ac{gmfg} requires a  large number of samples due to the   heterogeneity of the agents. In addition, these algorithms potentially necessitate the use of  a very large class of  value functions. In particular, this function class   should include the nominal value function in \ac{gmfg} with any graphons to satisfy the realizability assumption~\citep{jin2021bellman,zhan2022offline}. Moreover, existing works only prove the consistency of learning algorithms with rather stringent assumptions~\citep{cui2021learning,fabian2022learning}. These assumptions include the contractivity of the estimated operators and the access to the nominal value functions. The {\em convergence rates} of algorithms in \ac{gmfg}s with milder assumptions are currently lacking in the literature.

% main contributions

%\subsection{Main Contributions and Technical Challenges}
In this paper, we focus on learning the \ac{ne} from the collected data of sampled agents. Concretely, we have access to a simulator of the \ac{gmfg} which generates the states and rewards of agents with the agent policies as its inputs. However, only the states and rewards of only a {\em finite} set of agents are revealed to the learner. Compared with the settings in \cite{cui2021learning} and \cite{fabian2022learning}, our setting is more relevant in real-world applications where the number of agents is  always finite. We aim to learn the \ac{ne} of the \ac{gmfg} from the states and rewards  of these sampled agents.  

Learning the \ac{ne}s in our problem  involves overcoming difficulties from   the {\em statistical} and {\em optimization} perspectives. From the statistical side, we suffer from the lack of information about the inputs of the functions to estimate. The transition kernels and the reward functions of each agent take as inputs the collective behavior of all the other agents and the graphon. In contrast, we do not know the graphons and only have information provided by  a {\em  finite} subset of agents. From the optimization perspective, each agent is faced with a non-stationary environment formed by other agents. Thus, we should design policy optimization procedures that ensure that  the policy of each agent converges to the optimal one in a time-varying environment, while also ensuring that the non-stationary environment converges to a  \ac{ne}.

\paragraph{Main Contributions} 

Addressing these difficulties, we summarize our main contributions and results in Table~\ref{table:sum} and in more details as follows:
\begin{itemize}
    \item We propose and analyze the \ac{gmfgppo} algorithm to learn the \ac{ne}. Given an estimate oracle, our algorithm implements a \ac{ppo}-like algorithm to update the agents' policies~\citep{schulman2017proximal}. The environment is simultaneously updated with a carefully designed learning rate. These strategies overcome the {\em optimization-related} hurdles. \ac{gmfgppo} achieves a convergence rate $\tilO(T^{-1/3})$, where $T$ is the number of iterations. This convergence rate is faster than that of the algorithm in \citet{xie2021learning} and is proved under fewer assumptions. This improvement is attributed to our carefully designed policy and environment update rates. In addition, the analysis of our optimization leads to a faster convergence of the mirror descent algorithm on a fixed MDP. As a byproduct, we generalize the result in \citet{lan2022policy} to inhomogeneous MDPs with a finite horizon.
    \item We design and analyze the model learning algorithm of \ac{gmfg} under three different agent sampling schemes, as shown in Table~\ref{table:sum}. The algorithm first incorporates the graphon with the empirical measure to estimate the mean-embedding of each agent's influence. Then we take this estimate as the input and then perform a  regression task; this resolves the {\em statistical} difficulties mentioned above. In the case where sampled agents have known and fixed positions, Theorem~\ref{thm:fixest} shows that the convergence rate for the model estimate is $O((NL)^{-1}+N^{-1/2})$, where $N$ is the number of sampled agents, and $L$ is the number of sampled from each agent.   We also consider two additional scenarios---the case in which the agents are {\em  randomly sampled} from the unit interval but their positions are known, and learning from sample agents with {\em unknown} grid positions. Pertaining to the final scenario, Theorem~\ref{thm:ufixest} indicates that the lack of  information of the position of the agents results in the sample complexity being degraded by an additional factor of $O(N\log N)$.
    \item Our model estimation learning algorithm is the first one proposed for   \ac{gmfg}s. It recovers the underlying graphons from the states sampled from a  finite number of  agents. This model-learning problem is a considerable generalization of the distribution regression problem~\citep{szabo2016learning}. Detailed discussions are provided in Section~\ref{sec:knownest}.  Also, our graphon learning setting can be regarded as a novel addition to the existing graphon estimation literature, as discussed in Section~\ref{sec:relatedwork}. 
\end{itemize}

\begin{table}[t]
	\centering
	\caption{Summary of the theoretical results}
	\setlength\tabcolsep{2pt}
	\begin{tabular}{|c|c|}
		\hline
	    Results & Description  \\ 
		\hline
		Theorem~\ref{thm:contractopt}& \thead[c]{Convergence rate of  GMFG-PPO, \\ when  an estimation oracle is assumed.}\\
        \hline
		Theorem~\ref{thm:fixest}& \thead[c]{Convergence rate of the model estimation procedure,\\ when the  agents have {\em known fixed} positions.}\\
        \hline
		Theorem~\ref{thm:randest}& \thead[c]{Convergence rate of the model estimation procedure, \\ when the agents have \emph{known random} positions.}\\
        \hline
		Theorem~\ref{thm:ufixest}& \thead[c]{Convergence rate of the model estimation procedure, \\ when the  agents have \emph{unknown  fixed} positions.}\\
        \hline
		Corollary~\ref{coro:optknownfix}& \thead[c]{Convergence rate of the \ac{ne} learning algorithm that implements GMFG-PPO\\ and collects data from agents with \emph{known fixed} positions.}\\
        \hline
		Corollary~\ref{coro:optknownrand}& \thead[c]{Convergence rate of the \ac{ne} learning algorithm that implements GMFG-PPO\\ and collects data from agents with \emph{known random} positions.}\\
        \hline
		Corollary~\ref{coro:optunknownfix}& \thead[c]{Convergence rate of the \ac{ne} learning algorithm that implements GMFG-PPO\\ and collects data from agents with \emph{unknown    fixed} positions.}\\
		\hline 
	\end{tabular}
	\label{table:sum}
	\vspace{-6mm}
\end{table} 

\vspace{-4mm}
\paragraph{Paper Outline} 
The rest of the paper is organized as follows. We discuss  related works in Section~\ref{sec:relatedwork}. In Section~\ref{sec:prelim}, we introduce the \ac{gmfg}s and a key property that they possess, namely {\em equivariance}. Our three sampling schemes are also introduced. In Section~\ref{sec:opt}, we propose \ac{gmfgppo} and analyze its convergence rate  assuming an estimation oracle. In Section~\ref{sec:modellearn}, we first introduce our mean-embedding procedure. Then we propose and analyze the model-learning algorithms for three sampling schemes. In Section~\ref{sec:combine}, we combine the results from Sections~\ref{sec:opt} and \ref{sec:modellearn}. In Section~\ref{sec:expts}, we provide  the numerical simulation results to corroborate our theoretical findings. In Section~\ref{sec:conclusion}, we conclude our paper.

\section{Related Work}\label{sec:relatedwork}
The \ac{gmfg} has been proposed to study the games played between a large number of heterogenous agents for several year. \citet{parise2019graphon} first formulated the static \ac{gmfg} and proved that it is the limit of finite-agent games with graph structure. \citet{carmona2022stochastic} then generalized these results to the Bayesian setting. \citet{caines2019graphon} analyzed the existence and uniqueness of \ac{ne} of the continuous-time \ac{gmfg}. As a special case, the continuous-time linear-quadratic \ac{gmfg} was studied by \citet{aurell2022stochastic,tchuendom2020master,gao2020linear,gao2021lqg}, where the existence and uniqueness of \ac{ne} were established, and the convergence of finite-agent games to \ac{gmfg} was analyzed. Learning of the \ac{ne} on the discrete-time \ac{gmfg} was first considered in \citet{vasal2020master} via the master equation. After that, \citet{cui2021learning} and \citet{fabian2022learning} proposed algorithms to learn the \ac{ne} of discrete-time \ac{gmfg} with dense and sparse graphons, respectively.

As a special case, \ac{mfg} studies the game between a large number of homogeneous agents. \ac{ne} learning algorithms for the continuous-time \ac{mfg} have been designed via fictitious play~\citep{cardaliaguet2017learning}, mirror descent~\citep{hadikhanloo2017learning}, generalized conditional gradient ~\citep{lavigne2022generalized}, and policy gradient~\citep{guo2022entropy}. For discrete-time \ac{mfg}, efficient algorithms have been proposed based on the notion of contraction~\citep{guo2019learning,xie2021learning,anahtarci2022q,yardim2022policy,guo2023general}. With the monotonicity condition, \citet{perrin2020fictitious} and \citet{perolat2021scaling} propose fictitious play and mirror descent algorithms for learning the \ac{ne}, respectively. Readers are encouraged to refer to \citet{lauriere2022learning} for a comprehensive survey of \ac{mfg}s.

The graphon estimation problem has been studied for a decade under different classes of graphons and different performance metrics. Existing works mainly focus on the estimation of graphons from the random graphs generated from it. \citet{gao2015rate} first proposed a rate-optimal algorithm to estimate the graphon at sampled points. The graphon estimation is then studied under $L_{2}$ norm~\citep{klopp2017oracle,wolfe2013nonparametric}, and cut distance~\citep{klopp2019optimal}. The spectral method for graphon estimation was also studied in \citet{xu2018rates}. For a comprehensice survey of graphon estimation, readers are encouraged to refer to \citet{gao2021minimax}. Different from these works, we aim to estimate the graphons without the graphs generated from them. Instead, we only have access to the state and action samples of agents, who interact with each other according to an unknown graphon structure.

\paragraph{Notations}
We denote $\{1,\cdots,N\}$ as $[N]$. For a set $\calS$, we denote the collection of all the measures and the probability measures on $\calS$ as $\calM(\calS)$ and $\Delta(\calS)$, respectively. For a measurable space $(\calX,\calF)$ and two distributions $P,Q\in\Delta(\calX)$ supported on $\calX$, the total variation distance between them is defined as $\tv(P,Q)=\sup_{A\in\calF}|P(A)-Q(A)|$. For to random variables $X,Y$ supported on $(\calX,\calF)$, we write $\tv(X,Y)$ to denote the total variation between their distributions. For a graphon $W$, we define its infinity norm  as $\|W\|_{\infty}=\sup_{x,y\in[0,1]}|W(x,y)|$.

\section{Preliminaries}\label{sec:prelim}
{\em Graphons} are {\em symmetric} functions that map $[0,1]^2$ to $[0,1]$. By symmetry, we mean that  $W(\alpha,\beta)=W(\beta,\alpha)$ for any $\alpha,\beta\in[0,1]$. The set of all graphons is denoted as $\calW=\{W:[0,1]^{2}\rightarrow[0,1]\,|\, W$ is symmetric$\}$. In the following, graphons are used to represent interactions between agents. We consider a finite horizon \ac{gmfg}  $(\calI,\calS,\calA,\mu_{1},H,P^{*},r^{*},W^{*})$. In this game, each agent is indexed by $\alpha\in\calI=[0,1]$. The state space and the action space of each agent are respectively denoted as $\calS\subseteq \bbR^{d_{\rms}}$ and $\calA\subseteq \bbR^{d_{\rma}}$. We assume that 
$\calS$ is a compact subset of $\bbR^{d_{\rms}}$ and $\calA$ is a finite subset of $\bbR^{d_{\rma}}$. The horizon of the game is denoted as $H\in\bbN$. The initial state distribution of each  agent is $\mu_{1}\in\Delta(\calS)$, where $\Delta(\calS)$ is the set of probability measures on $\calS$. %, i.e., $\mu_{1}^{\alpha}=\mu_{1}$ for all $\alpha\in\calI$.
The state transition kernels $P^{*}=\{P_{h}^{*}\}_{h=1}^{H}$ are functions $P_{h}^{*}:\calS\times\calA\times\calM(\calS) \rightarrow\Delta(\calS)$ for all $h\in[H]$, where $\calM(\calS)$ is the  set of measures on $\calS$. In contrast to  the single-agent \ac{mdp}, the state dynamics of each agent in a \ac{gmfg} depends on an \emph{aggregate} $z\in\calM(\calS)$, which reflects the influence of other agents on it.   For time $h$ and   agent $\alpha\in\calI$, given a graphon $W_{h}^{*}\in W^{*}=\{W_{h}^{*}\}_{h=1}^{H}$, the aggregate $z_{h}^{\alpha}$ for agent $\alpha$ is defined as
\vspace{-2mm}
\begin{align}
    z^{\alpha}_{h}=\int_{0}^{1} W_{h}^{*}(\alpha,\beta)\calL(s^{\beta}_{h}) \, \rmd \beta, \label{eq:1}
\end{align}
%\vspace{-2mm}
where $\calL(s)\in\Delta(\calS)$ denotes the law of the random variable $s$. We note that the agents in this game are \emph{heterogeneous}. This means that each agent is affected differently by other agents or, in other words, the aggregates $z_{h}^{\alpha}$ for different $\alpha\in\calI$ are, in general, different. Given the state $s^{\alpha}_{h}\in\calS$ and the action $a^{\alpha}_{h}\in\calA$ of agent $\alpha$, the agent transitions to a new state
$s^{\alpha}_{h+1}\sim P_{h}^{*}(\cdot\, |\, s^{\alpha}_{h},a^{\alpha}_{h},z^{\alpha}_{h})$. The reward functions $r^{*}=\{r_{h}^{*}\}_{h=1}^{H}$ are deterministic functions $r_{h}^{*}:\calS\times\calA\times \calM(\calS) \rightarrow\bbR$ for all $h\in[H]$. For agent $\alpha\in\calI$ at time $h$, taking the action $a_{h}^{\alpha}$ under the state $s_{h}^{\alpha}$ and the aggregate $z_{h}^{\alpha}$ earns the agent a reward of $r_{h}^{*}(s_{h}^{\alpha}, a_{h}^{\alpha},z_{h}^{\alpha})$.%, where $z_{h}^{\alpha}$ is defined in Eqn.~\eqref{eq:1}.

%\begin{remark}
    We remark that the above \ac{gmfg} subsumes the \ac{mfg}~\citep{xie2021learning,anahtarci2022q} as a special case. To see this, let $W_{h}(\alpha,\beta)=1$ for all $\alpha,\beta\in\calI$ and $h\in[H]$, then the agents are \emph{homogeneous}. The aggregate $z_{h}^{\alpha}$ in Eqn.~\eqref{eq:1} is simply the state distributions of these homogeneous agents.
%\end{remark}

A \emph{Markov policy} for the agent $\alpha\in\calI$ is characterized by $\pi^{\alpha}=\{\pi_{h}^{\alpha}\}_{h=1}^{H}\in\Pi^{H}$, where $\pi_{h}^{\alpha}:\calS\rightarrow \Delta(\calA)$ lies in the class $\Pi=\{\pi:\calS\rightarrow\Delta(\calA)\}$. The collection of policies of all agents is denoted as $\pi^{\calI}=(\pi^{\alpha})_{\alpha\in\calI}\in\Pi^{\calI\times H}=\tilde{\Pi}$. We let $\mu^{\alpha}_{h}=\calL(s^{\alpha}_{h})\in \Delta(\calS)$ be the state distribution of the agent $\alpha$ at time $h$. Then $\mu^{\calI}_{h}=(\mu^{\alpha}_{h})_{\alpha\in\calI}\in\Delta(\calS)^{\calI}$ is the set of state distributions of all agents at time $h$. Note that the aggregate $z^{\alpha}_{h}$ is a function of the distributions $\mu^{\calI}_{h}$ and the graphon $W_{h}^{*}$, so we may write it more explicitly as $z^{\alpha}_{h}(\mu^{\calI}_{h},W_{h})$. The distribution flow $\mu^{\calI}=(\mu^{\calI}_{h})_{h=1}^{H}\in\Delta(\calS)^{\calI\times H}=\tilde{\Delta}$ consists of the state distributions of all agents at any given time.

In this work, we focus on the regularized problem~\citep{nachum2017bridging,cui2021approximately}. This setting augments standard reward functions with the entropy of the implemented policy. Some recent works have shown that entropy regularization can accelerate the convergence of the policy gradient methods~\citep{shani2020adaptive,cen2022fast}. In a $\lambda$-regularized \ac{gmfg}, when agent $\alpha$ implements policy $\pi_{h}^{\alpha}$ at time $h$, she will receive a reward $r_{h}^{*}(s^{\alpha}_{h},a^{\alpha}_{h},z^{\alpha}_{h})-\lambda \log\pi_{h}^{\alpha}(a_{h}^{\alpha}\,|\,s_{h}^{\alpha})$ by taking action $a_{h}^{\alpha}$ at state $s_{h}^{\alpha}$. Given the underlying distribution flow $\mu^{\calI}$ and the policy $\pi^{\calI}$, the \emph{value function} and the \emph{action-value function} for agent $\alpha\in\calI$ in the $\lambda$-regularized game with $\lambda>0$ are respectively defined as
\begin{align*}
    V_{h}^{\lambda,\alpha}(s,\pi^{\alpha},\mu^{\calI},W^{*})&=\bbE^{\pi^{\alpha}}\bigg[\sum_{t=h}^{H}r_{t}^{*}\big(s^{\alpha}_{t},a^{\alpha}_{t},z^{\alpha}_{t}(\mu_{t}^{\calI},W_{t}^{*})\big)-\lambda\log\pi_{t}^{\alpha}(a_{t}^{\alpha}\,|\,s_{t}^{\alpha})\,\bigg|\, s^{\alpha}_{h}=s\bigg],\\
    Q_{h}^{\lambda,\alpha}(s,a,\pi^{\alpha},\mu^{\calI},W^{*})&=r_{h}^{*}\big(s,a,z^{\alpha}_{h}(\mu_{h}^{\calI},W_{h}^{*})\big)+\bbE\big[V_{h+1}^{\lambda,\alpha}(s_{h+1}^{\alpha},\pi^{\alpha},\mu^{\calI},W^{*})\,|\, s_{h}^{\alpha}=s,a_{h}^{\alpha}=a\big],
\end{align*}
where the expectation $\bbE^{\pi^{\alpha}}[\cdot]$ is taken with respect to $a_{t}^{\alpha}\sim\pi_{t}^{\alpha}(\cdot\,|\, s_{t}^{\alpha})$ and $s^{\alpha}_{t+1}\sim P_{t}(\cdot\,|\,s_{t}^{\alpha},a_{t}^{\alpha},z_{t}^{\alpha})$ for all $t\in[H]$. The cumulative reward of agent $\alpha \in \mathcal{I}$ under policy $\pi^{\calI}$ is defined as $J^{\lambda,\alpha}(\pi^{\alpha},\mu^{\calI},W^{*})=\bbE_{\mu_{1}^{\alpha}}[V_{1}^{\lambda,\alpha}(s,\pi^{\alpha},\mu^{\calI},W^{*})]$, where the expectation is taken 
with respect to $s\sim \mu_{1}^{\alpha}$.

\begin{definition}\label{def:regne}
    A \ac{ne} of the $\lambda$-regularized \ac{gmfg} is a pair $(\pi^{*,\calI},\mu^{*,\calI})\in \tilde{\Pi}\times\tilde{\Delta}$ that satisfies the following two conditions:
    \begin{itemize}
        \item (Agent rationality) $J^{\lambda,\alpha}(\pi^{*,\alpha},\mu^{*,\calI},W^{*})\geq J^{\lambda,\alpha}(\tilde{\pi}^{\alpha},\mu^{*,\calI},W^{*})$ for all $\alpha\in\calI$ and $\tilde{\pi}^{\alpha}\in\Pi^{ H}$.
        \item (Distribution consistency) The distribution flow $\mu^{*,\calI}$ is equal to the distribution flow $\mu^{\pi^{*,\calI},\calI}$ induced by the policy $\pi^{*,\calI}$.
    \end{itemize}
\end{definition}

We define the operator that returns the optimal policy when the underlying distribution flow is $\mu^{\calI}$ and the graphon is $W$ as $\Gamma_{1}^{\lambda}(\mu^{\calI},W)\in\tilde{\Pi}$, i.e., $\pi^{\calI}=\Gamma_{1}^{\lambda}(\mu^{\calI},W)$ if $J^{\lambda,\alpha}(\pi^{\alpha},\mu^{\calI},W)=\sup_{\tilde{\pi}^{\alpha}\in\tilde{\Pi}} J^{\lambda,\alpha}(\tilde{\pi}^{\calI},\mu^{\calI},W)$ for all $\alpha\in\calI$. We also define the operator that returns the distribution flow induced by the policy $\pi^{\calI}$ as $\Gamma_{2}(\pi^{\calI},W^{*})\in\tilde{\Delta}$, i.e., $\tilde{\mu}^{\calI}=\Gamma_{2}(\pi^{\calI},W^{*})$ if
\begin{align*}
     \tilde{\mu}_{h+1}^{\alpha}(s^{\prime})&=\int_{\calS}\sum_{a\in\calA}\tilde{\mu}_{h}^{\alpha}(s)\pi_{h}^{\alpha}(a\,|\, s)P_{h}\big(s^{\prime}\, |\, s,a,z_{h}^{\alpha}(\tilde{\mu}_{h}^{\calI},W_{h}^{*})\big) \rmd s  \\ 
     &\hspace{2in}\text{ for all }s^{\prime}\in\calS,h\in[H-1]\text{ and }\alpha\in\calI,
\end{align*}
and $\tilde{\mu}_{1}^{\calI}=\mu_{1}^{\calI}$.  Our goal in this paper is to learn the \ac{ne} of the $\lambda$-regularized \ac{gmfg} from the data collected of the sampled agents. Before giving an overview of our agent sampling schemes, we first introduce the equivariance property of \ac{gmfg}.

\subsection{Equivariance Property of \ac{gmfg}s}
We now argue that \ac{gmfg} is \emph{equivariant} to the measure-preserving bijection imposed on agents. In the \ac{gmfg}, all the interactions among agents are captured by the underlying graphons. For agents $\alpha,\beta\in\calI$, the value $W(\alpha,\beta)$ represents the strength of interactions between $\alpha$ and $\beta$. %In contrast, the values of $\alpha$ and $\beta$ are only used to index $W(\alpha,\beta)$. 
Intuitively, if we ``permute'' the positions of agents in the graphon (i.e., we ``permute'' the values of $\alpha\in\calI$) and transform the graphons accordingly, the resultant game  remains the same up to this permutation. However, given an uncountable number of agents in $[0,1]$, the concept of ``permutation" of finite objects should be more precisely stated. This is formalized  by the notion of {\em measure-preserving bijections} from $[0,1]$ to $[0,1]$. Given a measure-preserving bijection $\phi:[0,1]\rightarrow[0,1]$, the transformation of a graphon $W^{\phi}$ is defined as
\begin{align*}
    W^{\phi}(x,y)=W\big(\phi(x),\phi(y)\big).
\end{align*}
We denote the set of all the measure-preserving bijections as $\calB_{[0,1]}$. Then the equivariance property of the \ac{gmfg} can be stated as follows. 
\begin{proposition}\label{prop:equivari}
    For any policy $\pi^{\calI}\in\tilde{\Pi}$, let its distribution flow on $(\calS,\calA,\mu_{1},H,P^{*},r^{*},W^{*})$ be $\mu^{\calI}\in\tilde{\Delta}$. In other words, $\mu^{\calI}=\Gamma_{2}(\pi^{\calI},W^{*})$. For any $\phi\in\calB_{[0,1]}$, define the $\phi$-transformed policy $\pi^{\phi,\calI}$ as $\pi^{\phi,\alpha}=\pi^{\phi(\alpha)}$ for all $\alpha\in\calI$. Then we denote its distribution flow on $(\calS,\calA,\mu_{1},H,P^{*},r^{*},W^{\phi,*})$ as $\mu^{\phi,\calI}\in\tilde{\Delta}$, i.e., $\mu^{\phi,\calI}=\Gamma_{2}(\pi^{\phi,\calI},W^{\phi,*})$. We have
    \begin{align*}
        \mu^{\phi,\alpha}=\mu^{\phi(\alpha)} \text{ for all }\alpha\in\calI.
    \end{align*}
\end{proposition}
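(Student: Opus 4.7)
The plan is to prove the proposition by induction on the time index $h$, showing that at every $h \in [H]$ we have $\mu_{h}^{\phi,\alpha} = \mu_{h}^{\phi(\alpha)}$ for all $\alpha \in \calI$. The intuition is simply that $\phi$ acts as a relabeling of the continuum of agents, so if we also relabel the graphon accordingly, the dynamics on the relabeled system should produce the same trajectories at the relabeled positions.

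For the base case $h=1$, both $\mu_{1}^{\phi,\alpha}$ and $\mu_{1}^{\phi(\alpha)}$ equal the common initial distribution $\mu_{1}$, so equality is immediate. For the inductive step, assume $\mu_{h}^{\phi,\beta} = \mu_{h}^{\phi(\beta)}$ for all $\beta \in \calI$. First, I would compute the aggregate under the transformed system, writing
\begin{align*}
    z_{h}^{\phi,\alpha} \;=\; \int_{0}^{1} W_{h}^{\phi,*}(\alpha,\beta)\,\mu_{h}^{\phi,\beta}\,\rmd\beta \;=\; \int_{0}^{1} W_{h}^{*}\bigl(\phi(\alpha),\phi(\beta)\bigr)\,\mu_{h}^{\phi(\beta)}\,\rmd\beta,
\end{align*}
using the definition of $W^{\phi,*}$ and the inductive hypothesis. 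The crucial step is then to apply a change of variables $\gamma = \phi(\beta)$: since $\phi$ is a measure-preserving bijection of $[0,1]$, the pushforward of Lebesgue measure under $\phi$ is itself Lebesgue measure, so the integral equals $\int_{0}^{1} W_{h}^{*}(\phi(\alpha),\gamma)\,\mu_{h}^{\gamma}\,\rmd\gamma = z_{h}^{\phi(\alpha)}$.

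Having matched the aggregates, the transition in the definition of $\Gamma_{2}$ then gives, for any $s^{\prime}\in\calS$,
\begin{align*}
    \mu_{h+1}^{\phi,\alpha}(s^{\prime}) \;=\; \int_{\calS}\sum_{a\in\calA}\mu_{h}^{\phi,\alpha}(s)\,\pi_{h}^{\phi,\alpha}(a\,|\,s)\,P_{h}^{*}\bigl(s^{\prime}\,|\,s,a,z_{h}^{\phi,\alpha}\bigr)\,\rmd s,
\end{align*}
and substituting $\mu_{h}^{\phi,\alpha} = \mu_{h}^{\phi(\alpha)}$ (induction), $\pi_{h}^{\phi,\alpha} = \pi_{h}^{\phi(\alpha)}$ (definition), and $z_{h}^{\phi,\alpha} = z_{h}^{\phi(\alpha)}$ (the computation above) yields exactly $\mu_{h+1}^{\phi(\alpha)}(s^{\prime})$, closing the induction.

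The only nontrivial point is the change-of-variables identity for the measure-preserving bijection. This is a standard fact but I would state it explicitly as a small lemma: for any integrable $f:[0,1]\to \bbR$ and any $\phi \in \calB_{[0,1]}$, $\int_{0}^{1} f(\phi(\beta))\,\rmd\beta = \int_{0}^{1} f(\gamma)\,\rmd\gamma$. Everything else is a bookkeeping unrolling of the definitions of $\Gamma_{2}$, $W^{\phi,*}$, and $\pi^{\phi,\calI}$, so the main obstacle is conceptual clarity rather than technical difficulty, namely keeping straight which quantities live in the original game and which live in the $\phi$-transformed one.
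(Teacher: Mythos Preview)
Your proposal is correct and follows essentially the same approach as the paper: induction on $h$, with the base case at $h=1$ trivial from the common initial distribution, and the inductive step reducing to showing $z_{h}^{\phi,\alpha}=z_{h}^{\phi(\alpha)}$ via the change of variables $\gamma=\phi(\beta)$ using the measure-preserving property. The paper's proof is slightly terser and does not isolate the change-of-variables identity as a separate lemma, but the logical content is identical.
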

Proposition~\ref{prop:equivari} shows that the graphons transformed by a measure-preserving bijections defines the same game as the original graphons up to the bijection. %This result implies that it is possible to learn the \ac{ne} up to a measure-preserving bijections if we have access only to graphons transformed by a measure-preserving bijections from $W^{*}$.

\subsection{Overview of Sampling Schemes}\label{sec:sampling}
Our goal is to learn the \ac{ne} from the data collected from a subset of sampled agents. We sample $N$ agents from $[0,1]$ and collect their states, actions, and rewards when implementing behavior policies. In this work, we consider three types of agent sampling procedures
\begin{itemize}
    \item Agents are sampled from \emph{known grid} positions. In particular, we sample the agents at grid positions $\{i/N\}_{i=1}^{N}\subset [0,1]$, and we know the position of each agent; 
    \item Agents are sampled from \emph{known random} positions. In particular, we sample the agents from $N$ i.i.d.\ samples of $\unif([0,1])$, and the positions of agents are also known;
    \item  Agents are sampled from \emph{grid} positions, but the positions of the sampled agents are \emph{unknown}. For example, we know the positions of the sampled agents belong to the set $\{i/N\}_{i=1}^{N}$. However, the position of each agent within the set $\{i/N\}_{i=1}^{N}$ is unknown. 
\end{itemize}

In Section~\ref{sec:modellearn}, we design and analyze a model learning algorithm that estimates the transition kernel $P^{*}$, the reward function $r^{*}$, and the underlying graphons $W^{*}$ for each of these three sampling schemes. In the first two cases, we design a model learning algorithm that estimates the transition kernel $P^{*}$, the reward function $r^{*}$, and the underlying graphons $W^{*}$. However, in the third case, we cannot estimate the original graphons, since the positions of the agents are unknown. Instead, we can only estimate the original graphons up to a measure-preserving bijection. In this case, we need to recover the ``relative positions'' of sampled agents to select the graphons from set $\tilde{\calW}$. For $N$ agents, there are $N!$ potential cases for their relative positions. The super-exponential size of the search space makes the problem statistically challenging. To complete the story, there is a sampling scheme where the positions of agents are unknown and random. However, the analysis of algorithms in this case is difficult due to the need to carefully analyze the order statistics which is rather different from the abovementioned three cases. We leave this case for future work.

\section{Learning Algorithm for \ac{gmfg}}\label{sec:opt}

\subsection{Design of the \ac{gmfgppo} Algorithm}
In this section, we design an algorithm called \ac{gmfgppo} (Algorithm~\ref{algo:MDcontract}) to learn an \ac{ne} of the $\lambda$-regularized \ac{gmfg} with $\lambda>0$.  \ac{gmfgppo}, which is an iterative algorithm,   involves three main steps in each iteration. First, it evaluates the distribution flow and the action-value function (Line 4), assuming the access to a sub-module for computing these. In Section~\ref{sec:modellearn}, we design this sub-module as a model-based learning algorithm. Second, it updates the distribution flow as a mixture of the distribution flow of the current policy $\pi_{t}^{\calI}$ and the current distribution flow $\hbmu_{t}^{\calI}$ (Line 5). This procedure is called \emph{fictitious play} in \citet{xie2021learning} and \citet{perrin2020fictitious}. It slows down the update of the distribution flow. In our analysis, this deceleration is shown to be important for learning the optimal policy with respect to the current distribution flow. Finally, we improve the policy with one-step mirror descent (Line~6). We note that Line 6 is in fact the closed-form solution to the optimization
\begin{align*}
    \hpi_{t+1,h}^{\alpha}(\cdot\,|\,s)=\argmax_{p\in\Delta(\calA)}\eta_{t+1}\Big[\big\langle \hatQ_{h}^{\lambda,\alpha}(s,\cdot,\pi_{t}^{\alpha},\hbmu_{t}^{\calI},\hatW),p \big\rangle-\lambda \barh(p)\Big]-\kl\big(p\|\pi_{t,h}^{\alpha}(\cdot\,|\,s)\big)\quad \forall\, s\in\calS,
\end{align*}
where $\barh(p) = \langle p , \log p\rangle$ is the negative entropy function. This procedure is one-step policy mirror descent in \cite{lan2022policy}, and it also corresponds to the \ac{ppo} algorithm in \cite{schulman2017proximal}. This policy improvement procedure aims to optimize the policy in the \ac{mdp} induced by $\hbmu_{t}^{\calI}$. With the convergence of $\hbmu_{t}^{\calI}$ to $\mu^{*,\calI}$, this procedure can  learn the optimal policy on $\mu^{*,\calI}$, i.e., the policy $\pi^{*,\calI}$ in the \ac{ne}.
\begin{algorithm}[t]
	\caption{\ac{gmfgppo}}
	\textbf{Procedure:}
	\begin{algorithmic}[1]\label{algo:MDcontract}
        \STATE Initialize $\pi_{1,h}^{\alpha}(\cdot\,|\,s)=\unif(\calA)$ for all $s\in\calS$, $h\in[H]$ and $\alpha\in\calI$.
        \STATE Initialize $\hbmu_{1}^{\calI}=\hat{\Gamma}_{2}(\pi_{1}^{\calI},\hatW)$.
        \FOR{$t=1,2,\cdots,T$}
            \STATE Compute the distribution flow $\hat{\mu}_{t}^{\calI}=\hat{\Gamma}_{2}(\pi_{t}^{\calI},\hatW)$ induced by policy $\pi_{t}^{\calI}$ and corresponding action-value function $\hatQ_{h}^{\lambda,\alpha}(s,a,\pi_{t}^{\alpha},\hbmu_{t}^{\calI},\hatW)$ for all $\alpha\in\calI$ and $h\in[H]$.
            \STATE $\hbmu_{t+1}^{\calI}=(1-\alpha_{t})\hbmu_{t}^{\calI}+\alpha_{t}\hmu_{t}^{\calI}$.
            \STATE $\hpi_{t+1,h}^{\alpha}(\cdot\,|\,s)\propto \big(\pi_{t,h}^{\alpha}(\cdot\,|\,s)\big)^{1-\frac{\lambda\eta_{t+1}}{1+\lambda\eta_{t+1}}}\exp\Big(\frac{\eta_{t+1}}{1+\lambda\eta_{t+1}} \hatQ_{h}^{\lambda,\alpha}(s,\cdot,\pi_{t}^{\alpha},\hbmu_{t}^{\calI},\hatW)\Big)$ for all $\alpha\in\calI$ and  $h\in[H]$
            \STATE $\pi_{t+1,h}(\cdot\,|\,s)=(1-\beta_{t+1})\hpi_{t+1,h}^{\alpha}(\cdot\,|\,s)+\beta_{t+1}\unif(\calA)$
        \ENDFOR
        \STATE Output $\bar{\pi}^{\calI}=\unif\big(\pi_{[1:T]}^{\calI}\big)$ and $\barmu^{\calI}=\unif\big(\hbmu_{[1:T]}^{\calI}\big)$
	\end{algorithmic}
\end{algorithm}
%Compared to the algorithm in \citet{cui2021learning}, the \ac{gmfgppo} algorithm does not require the values of graphons as inputs. Without the values of graphons, the algorithm in \citet{cui2021learning} needs a extremely large class of functions (resp. models) for the policy evaluation (resp. model learning). This class should include the value function of any policy under any graphon. Since graphons are infinite dimensional objects, this extremely large class will result in statistical and implementation inefficiencies. In contrast, \ac{gmfgppo} does not require the values of graphons but learns the graphons via a model-based algorithm in Section~\ref{sec:modellearn}.

\ac{gmfgppo} differs from the \ac{ne} learning algorithm of regularized \ac{mfg} in \cite{xie2021learning} in three aspects. First, \ac{gmfgppo} is designed to learn the \ac{ne} of the regularized \ac{gmfg}. It involves  graphon learning and requires the policy and action-value function updates for all the agents. In contrast, the algorithm in \cite{xie2021learning} can only learn the \ac{ne} of the regularized \ac{mfg}, which is a special case of \ac{gmfg} with {\em constant} graphons. It only keeps track of the policy and action-value function of a representative agent. Second, \ac{gmfgppo} learns a non-stationary \ac{ne}, whereas the algorithm in \cite{xie2021learning} learns a  {\em stationary} \ac{ne}.  Finally, the stepsize $\eta_{t}$ used in the policy improvement (Line 6) will be set to be a (non-vanishing) constant in Section~\ref{sec:optanaly}. In contrast, the algorithm in \cite{xie2021learning} sets $\eta_{t}=o(1)$. Our 
choice of $\eta_{t}$ is the chief reason for the  improved convergence rate. % in Section~\ref{sec:optanaly}.

\subsection{Convergence Analysis of \ac{gmfgppo}}\label{sec:optanaly}
Assuming that an \ac{ne} exists~\citep{cui2021learning, fabian2022learning}, we now present convergence results for learning it. We denote an \ac{ne} of the $\lambda$-regularized \ac{gmfg} as $(\pi^{*,\calI},\mu^{*,\calI})$. We measure the distances between policies and distribution flows with
\begin{align*}D(\pi^{\calI},\tilde{\pi}^{\calI}) &=\int_{0}^{1}\sum_{h=1}^{H}\bbE_{\mu_{h}^{*,\alpha}}\Big[\big\|\pi_{h}^{\alpha}(\cdot\,|\,s)-\tilde{\pi}_{h}^{\alpha}(\cdot\,|\,s)\big\|_{1}\Big]\, \rmd\alpha,\qquad\text{and} \\ d(\mu^{\calI},\tilde{\mu}^{\calI})&=\int_{0}^{1}\sum_{h=1}^{H}\|\mu_{h}^{\alpha}-\tilde{\mu}_{h}^{\alpha}\|_{1}\, \rmd\alpha.
\end{align*}
For the purpose of our convergence results, we make a few assumptions about the $\lambda$-regularized \ac{gmfg}. We first assume the Lipschitz continuity of transition kernels and reward functions.
\begin{assumption}\label{assump:lipconti}
    The reward function $r_{h}(s,a,z)$ is Lipschitz continuous in $z$ for all $h\in[H]$, that is $|r_{h}(s,a,z)-r_{h}(s,a,z^{\prime})|\leq L_{r} \|z-z^{\prime}\|_{1}$ for all $h\in[H]$, $s\in\calS$ and $a\in\calA$. The transition kernel $P_{h}(\cdot\,|\, s,a,z)$ is Lipschitz continuous in $z$ with respect to the total variation, that is $\tv(P_{h}(\cdot\,|\, s,a,z),P_{h}(\cdot\,|\, s,a,z^{\prime}))\leq L_{P}\|z-z^{\prime}\|_{1}$ for all $h\in[H]$, $s\in\calS$ and $a\in\calA$.
\end{assumption}
This assumption is common in the \ac{mfg} and \ac{gmfg} literature~\citep{cui2021learning,anahtarci2022q}. We then assume that the composition of the operators $\Gamma_{1}^{\lambda}$ and $\Gamma_{2}$ is contractive in the following sense.
\begin{assumption}\label{assump:contract}
    There exist constants $d_{1},d_{2}>0$ and $d_{1}d_{2}<1$ such that for any policies $\pi^{\calI},\tilde{\pi}^{\calI}$ and distribution flows $\mu^{\calI},\tilde{\mu}^{\calI}$, it holds that
    \begin{align*}
        D\big(\Gamma_{1}^{\lambda}(\mu^{\calI},W^{*}),\Gamma_{1}^{\lambda}(\tilde{\mu}^{\calI},W^{*})\big)&\leq d_{1}\, d(\mu^{\calI},\tilde{\mu}^{\calI}),\quad \mbox{and}\\
        d\big(\Gamma_{2}(\pi^{\calI},W^{*}),\Gamma_{2}(\tilde{\pi}^{\calI},W^{*})\big)&\leq d_{2}\, D(\pi^{\calI},\tilde{\pi}^{\calI}).
    \end{align*}
\end{assumption}
This ``contractive'' assumption plays an important role in the design of efficient algorithms, since it guarantees the convergence of both $\pi^\calI$ and $\mu^{\calI}$ using simple fixed point iterations. This assumption is widely adopted in the \ac{mfg} literature~\citep{xie2021learning,guo2019learning}, and it holds if the regularization $\lambda$ is higher enough than $L_{r}$ and $L_{P}$~\citep{anahtarci2022q,cui2021approximately}. The uniqueness of the \ac{ne} implied by Assumption~\ref{assump:contract} is proved in Appendix~\ref{app:uniquene}.

For a policy $\pi^{\calI}$ and any distribution flow $\mu^{\calI}$, we define the operator  $\Gamma_3$ that satisfies $\mu^{+,\calI}=\Gamma_{3}(\pi^{\calI},\mu^{\calI},W)$ as
\begin{align*}
    \mu_{1}^{+,\calI}=\mu_{1}^{\calI}, \qquad\mu_{h+1}^{+,\alpha}(s^{\prime})=\sum_{a\in\calA}\int_{\calS}\mu_{h}^{+,\alpha}(s)\pi_{h}^{\alpha}(a\,|\, s)P_{h}\big(s^{\prime}\, |\, s,a,z_{h}^{\alpha}(\mu_{h}^{\calI},W_{h})\big) \, \rmd s ,
\end{align*}
  for all $s^{\prime}\in\calS,\alpha\in\calI$,   and $h\geq 1$.
The operator $\Gamma_{3}$ outputs the distribution flow $\mu^{+,\calI}$ for implementing the policy $\pi^{\calI}$ on the \ac{mdp} induced by $\mu^{\calI}$. We now make an assumption about certain concentrability coefficients.
\begin{assumption}\label{assump:optconcen}
    For any distribution flow $\mu^{\calI}$, we define its induced optimal policy on the \ac{mdp} induced by it as $\pi^{*,\calI}=\Gamma_{1}^{\lambda}(\mu^{\calI},W^{*})$ and the induced distribution flow as $\tilde{\mu}^{*,\calI}=\Gamma_{3}(\pi^{*,\calI},\mu^{\calI},W^{*})$. Then there exists a constant $C_{\mu}>0$ such that for any distribution flow $\mu^{\calI}$, it hold that
    \begin{align*}\sup_{\alpha\in\calI,h\in[H]}\bbE_{s\sim\tilde{\mu}_{h}^{*,\alpha}}\bigg[\bigg|\frac{\mu_{h}^{*,\alpha}(s)}{\tilde{\mu}_{h}^{*,\alpha}(s)}\bigg|^{2}\bigg] \leq C_{\mu}^2.
    \end{align*}
\end{assumption}
This assumption concerns the boundedness of concentrability coefficients. This type of assumption are standard in the policy optimization literatures~\citep{shani2020adaptive,bhandari2019global,agarwal2020optimality}. We then make an assumption about the accuracy about our distribution flow and action-value function estimates in Line 4 of Algorithm~\ref{algo:MDcontract}.
\begin{assumption}\label{assump:est}
    We have access to the estimator $\hatP=\{\hatP_{h}\}_{h=1}^{H}$, $\hatr=\{\hatr_{h}\}_{h=1}^{H}$, and $\hatW=\{W_{h}\}_{h=1}^{H}$ and corresponding operator estimate $\hat{\Gamma}_{2}(\cdot,\hatW)$ and action-value function estimator $\hatQ_{h}^{\lambda,\alpha}(\cdot,\hatW)$. These estimates satisfy that
     for any policy $\pi^{\calI}$, we have that
    \begin{align*}
        d\big(\hat{\Gamma}_{2}(\pi^{\calI},\hatW),\Gamma_{2}(\pi^{\calI},W^{*})\big)\leq \varepsilon_{\mu},
    \end{align*}
    and that for any policy $\pi^{\calI}$ and distribution flow $\mu^{\calI}$
    \begin{align*}
        \sup_{\tilde{\pi}^{\calI},\alpha}\bbE_{\tilde{\pi}^{\alpha},\mu^{\calI}}\big\|\hatQ_{h}^{\lambda,\alpha}(s,\cdot,\pi^{\alpha},\mu^{\calI},\hatW)-Q_{h}^{\lambda,\alpha}(s,\cdot,\pi^{\alpha},\mu^{\calI},W^{*})\big\|_{\infty}\leq \varepsilon_{Q}.
    \end{align*}
    for some constants $\varepsilon_{\mu}$ and $\varepsilon_{Q}$.
\end{assumption}
We make this assumption only for ease of the presentation of the analysis of our algorithm. In Section~\ref{sec:combine}, we will replace this assumption with the actual performance guarantee of our model learning algorithms, in which $\varepsilon_{\mu}$ and $\varepsilon_{Q}$ will be quantified.

\begin{theorem}\label{thm:contractopt}
    We set $\alpha_{t}=O(T^{-2/3})$, $\beta_{t}=O(T^{-1})$, and $\eta_{t}$ to a constant that only depends on $\lambda$, $H$ and $|\calA|$. Under Assumptions~\ref{assump:lipconti}, \ref{assump:contract},  \ref{assump:optconcen},  and \ref{assump:est}, Algorithm~\ref{algo:MDcontract} returns the policy $\bar{\pi}^{\calI}$ and the distribution flow $\bar{\mu}^{\calI}$ that satisfies
    \begin{align*}
        &D\bigg(\frac{1}{T}\sum_{t=1}^{T}\pi_{t}^{\calI},\pi^{*,\calI}\bigg)+d\bigg(\frac{1}{T}\sum_{t=1}^{T}\hbmu_{t}^{\calI},\mu^{*,\calI}\bigg)=O\bigg(\frac{\sqrt{\log T}}{T^{1/3}}\bigg)+O(\varepsilon_{\mu}+\sqrt{\varepsilon_{Q}+\varepsilon_{\mu}}).
    \end{align*}
\end{theorem}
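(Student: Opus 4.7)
}
The plan is to introduce, at each iteration $t$, the \emph{response policy} $\pi_t^{*,\calI}:=\Gamma_1^{\lambda}(\hbmu_t^{\calI},W^*)$ that is optimal on the MDP induced by the current belief $\hbmu_t^{\calI}$, together with its induced flow $\mu_t^{*,\calI}:=\Gamma_2(\pi_t^{*,\calI},W^*)$, and then to decompose every error via these intermediate points. The triangle inequality and Assumption~\ref{assump:contract} immediately give $D(\pi_t^{\calI},\pi^{*,\calI})\leq D(\pi_t^{\calI},\pi_t^{*,\calI}) + d_1\,d(\hbmu_t^{\calI},\mu^{*,\calI})$ and an analogous bound for $d(\hbmu_t^{\calI},\mu^{*,\calI})$, so the whole analysis reduces to controlling the policy mirror descent tracking error $P_t:=D(\pi_t^{\calI},\pi_t^{*,\calI})$ and the fictitious-play error $e_t:=d(\hbmu_t^{\calI},\mu^{*,\calI})$.

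For $e_t$, I would unroll Line~5 of Algorithm~\ref{algo:MDcontract}. Using Assumption~\ref{assump:est} to replace $\hmu_t^{\calI}$ by $\Gamma_2(\pi_t^{\calI},W^*)$ at cost $\varepsilon_\mu$, the contractivity of $\Gamma_2$ from Assumption~\ref{assump:contract} to bound $d(\Gamma_2(\pi_t^{\calI},W^*),\mu^{*,\calI})\leq d_2\, D(\pi_t^{\calI},\pi^{*,\calI})$, and then the triangle bound on $D(\pi_t^{\calI},\pi^{*,\calI})$, I would obtain the contracting recursion $e_{t+1}\leq \bigl(1-\alpha_t(1-d_1d_2)\bigr)e_t + \alpha_t\bigl(\varepsilon_\mu + d_2 P_t\bigr)$, in which the dependence on the policy side enters only through $P_t$.

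The core of the proof is the PMD tracking bound. I would generalize Lan~(2022)'s one-step improvement lemma from infinite-horizon discounted MDPs to finite-horizon inhomogeneous MDPs by backward induction on $h\in[H]$, showing that with constant stepsize $\eta$ the update in Line~6 contracts the weighted KL divergence to $\pi_t^{*,\calI}$ by the factor $\tfrac{1}{1+\lambda\eta}$, up to an $O(\eta\varepsilon_Q)$ term from the $\hat Q$-oracle and an $O(\beta_t)$ term from the uniform perturbation in Line~7. Assumption~\ref{assump:optconcen} is used here to change measure from the visitation of $\pi_t^{*,\alpha}$ to the target occupancy $\mu_h^{*,\alpha}$ appearing in the definition of $D$, via a performance-difference-lemma calculation. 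Since $\hbmu_t^{\calI}$ drifts by $O(\alpha_t)$ per iteration, Assumption~\ref{assump:contract} forces the response $\pi_t^{*,\calI}$ to drift by $O(\alpha_t)$ as well, so the contraction sees a non-vanishing drift input; Pinsker's inequality then converts the resulting KL-level steady-state bound into a TV-level bound of the form $P_t=\tilO\bigl(\sqrt{\alpha_T}+\sqrt{\varepsilon_Q+\varepsilon_\mu}\bigr)$ on average.

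Finally, I would Cesaro-average both recursions over $t\in[T]$. Summing the PMD inequality yields $\tfrac{1}{T}\sum_t P_t = \tilO(\sqrt{\alpha_T})+\tilO(\sqrt{\varepsilon_Q+\varepsilon_\mu})$, and substituting this into the $e_t$ recursion yields $\tfrac{1}{T}\sum_t e_t = \tilO\bigl(\tfrac{1}{T\alpha_T}+\sqrt{\alpha_T}\bigr)+\tilO(\varepsilon_\mu+\sqrt{\varepsilon_Q+\varepsilon_\mu})$, where the transient term $1/(T\alpha_T)$ comes from the mixing time of the fictitious-play averaging. Balancing $1/(T\alpha_T)$ against the steady-state drift $\sqrt{\alpha_T}$ forces $\alpha_T=\Theta(T^{-2/3})$ and delivers the claimed $\tilO(T^{-1/3})$ rate; $\beta_t=O(T^{-1})$ is then chosen small enough that the exploration bias is of lower order, and $\eta$ is taken to be the constant that maximizes $\tfrac{\lambda\eta}{1+\lambda\eta}$ subject to the $O(\eta\varepsilon_Q)$ term staying controlled. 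The hardest part is the moving-target PMD lemma in an inhomogeneous finite-horizon MDP with oracle noise: the classical Lan~(2022) telescoping no longer closes against a single optimal policy, so the $\alpha_t$-scale drift of $\pi_t^{*,\calI}$ must be tracked through the KL recursion and then matched, after Pinsker, to the $\sqrt{\alpha_T}$ scaling that makes the $T^{-1/3}$ balance go through.
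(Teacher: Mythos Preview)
Your overall architecture --- decompose via the best-response $\pi_t^{*,\calI}=\Gamma_1^{\lambda}(\hbmu_t^{\calI},W^*)$, derive a contracting recursion for the fictitious-play error $e_t$, control the PMD tracking error $P_t$, Cesaro-average, and balance $1/(T\alpha_T)$ against $\sqrt{\alpha_T}$ --- is exactly the skeleton of the paper's proof, and your $e_t$ recursion is essentially identical to the one the paper obtains. (The paper anchors the response at the \emph{true} mixture $\barmu_t^{\calI}$ rather than $\hbmu_t^{\calI}$, but this is an $O(\varepsilon_\mu)$ difference.)

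There is, however, a real gap in your PMD step. The claim that Line~6 contracts a weighted KL by the factor $\tfrac{1}{1+\lambda\eta}$, with $\eta$ a free constant you later ``maximize subject to the $O(\eta\varepsilon_Q)$ term staying controlled'', does not hold in the finite-horizon inhomogeneous setting. In Lan's discounted analysis the discount factor $\gamma$ is what makes the value-gap part of the Lyapunov contract; here there is no $\gamma$. After the three-point inequality and the performance-difference lemma, what one actually gets (in the paper's notation, ignoring errors) is
\[
\eta\,\bbE\Big[\textstyle\sum_h\big(V_h^{*}-V_h^{\pi_{t+1}}\big)\Big]+(1+\lambda\eta)\,\bbE\Big[\textstyle\sum_h\kl_h^{\,t+1}\Big]
\;\le\;
\eta\,\bbE\Big[\textstyle\sum_h\big(V_h^{*}-V_h^{\pi_t}\big)\Big]-\eta\,\bbE\big[V_1^{*}-V_1^{\pi_t}\big]+\bbE\Big[\textstyle\sum_h\kl_h^{\,t}\Big],
\]
so the KL sum contracts by $1/(1+\lambda\eta)$ but the value-gap sum only sheds its $h=1$ term. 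The paper closes this with a structural lemma (Proposition~\ref{prop:valuefraction}): in an entropy-regularized finite-horizon MDP the $h=1$ value gap already dominates a fixed fraction $\beta^*=\beta^*(\lambda,H,|\calA|)>0$ of $\sum_{h\ge 2}(V_h^{*}-V_h^{\pi})$, proved via the lower bound on $\min_a\pi_h^{*}(a\,|\,s)$ that the regularization enforces. This gives a combined Lyapunov $X_t=\sum_h(V_h^{*}-V_h^{\pi_t})+\tfrac{1}{\eta\theta^*}\sum_h\kl_h^{\,t}$ that contracts by $\theta^*:=\tfrac{1}{1+\beta^*}$, and it \emph{determines} the stepsize via $1+\lambda\eta=1/\theta^*$, i.e.\ $\eta=\beta^*/\lambda$; $\eta$ is not a knob you can turn up. Your ``backward induction on $h$'' will not by itself produce $\beta^*$, and without it the $P_t$ recursion does not close at a geometric rate. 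Once this piece is supplied, the rest of your plan (drift of order $\alpha_t$, Pinsker, Cesaro, balance) matches the paper and yields the $\tilO(T^{-1/3})$ rate.
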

There are two main differences in Theorem~\ref{thm:contractopt} and~\citet[Theorem 1]{xie2021learning}. First, we achieve a faster rate $\tilO(T^{-1/3})$ than the rate $\tilO(T^{-1/5})$ in \cite{xie2021learning}. This improvement is attributed to the newly designed stepsize $\eta_{t}$, which is a {\em constant}, but the algorithm in \cite{xie2021learning} sets $\eta_{t}$ to be $O(T^{-2/5})$. Intuitively, a stepsize $\eta_{t}$ that is independent of $T$ will result in faster convergence of an algorithm compared to one that decays as $T$ grows. However, the proof involves a novel optimization error recursion analysis for this new stepsize. This novel optimization error recursion analysis also generalizes  \citet[Theorem~1]{lan2022policy} to the time-inhomogeneous \ac{mdp} with a finite horizon. See Appendix~\ref{app:singlepmd} for the statement. Second, Theorem~\ref{thm:contractopt} does not require the first condition in   Assumptions~4 and~5 in \cite{xie2021learning}. Instead, we adopt the more realistic Assumption~\ref{assump:lipconti} concerning the Lipschitzness of transition kernels and reward functions to control the difference between the \ac{mdp} induced by difference distribution flows.

\section{Model Estimation From Datasets}\label{sec:modellearn}
We assume that the state space $\calS\subseteq\bbR^{d_{\rms}}$ is a subset of $\bbR$, i.e., $d_{\rms}=1$. Our results can be  extended to the case $d_{\rms}>1$ by using kernels of functions with multiple outputs. Since $\calS$ is compact, there exists a constant $B_{S}>0$ such that $|s|\leq B_{S}$ for all $s\in\calS$.

\subsection{Dataset Collection}\label{sec:datacollect}
Since the \ac{gmfg} involves uncountably infinite agents, it is impossible to collect the trajectories of all the agents. Thus, we sample $N$ agents $\{\xi_{i}\}_{i=1}^{N}$ in $[0,1]$ to collect their states, actions, and rewards in each episode. We consider three sampling methods: (i) agents' positions $\{\xi_{i}\}_{i=1}^{N}$ are known grids, namely, $\xi_{i}=i/N$ for all $i\in[N]$. Furthermore, the map between the identity of each agent to the grid $\{i/N\}_{i=1}^{N}$ is known. (ii) $\{\xi_{i}\}_{i=1}^{N}$ are known i.i.d.\ samples  of the uniform distribution $\unif([0,1])$. (iii) agents' positions $\{\xi_{i}\}_{i=1}^{N}$ are grid points, and these positions are \emph{unknown}. Then we acquire the states and actions of these sampled agents. For notational simplicity, we denote the state $s_{h}^{\xi_{i}}$ and action $a_{h}^{\xi_{i}}$ of the agent $\xi_{i}$ as $s_{h}^{i}=s_{h}^{\xi_{i}}$ and $a_{h}^{i}=a_{h}^{\xi_{i}}$, respectively. To collect these data, we implement $L$ behavior policies $\pi_{\tau}^{\calI}$ for all $\tau\in[L]$. In the $\tau^{\text{th}}$ episode, a trajectory of these agents is $\calD_{\tau}=\{(s_{\tau,h}^{[N]},a_{\tau,h}^{[N]},r_{\tau,h}^{[N]},s_{\tau,h+1}^{[N]})\}_{h=1}^{H}$. The dataset consists of $L$ trajectories, i.e., $\calD=\{\calD_{\tau}\}_{\tau=1}^{L}$.

We note that once the behavior policy $\pi_{\tau}^{\calI}$ is determined, the distribution flow $\mu_{\tau}^{\calI}$ is fixed. Then the influence aggregate on the $i^{th}$ agent $z_{\tau,h}^{i}(W_{h}^{*})$ is a function only of $\xi_{i}$, which is independent of the states of other agents. Thus, the distribution of $s_{\tau,h}^{[N]}$ is $\prod_{i=1}^{N}\mu_{\tau,h}^{\xi_{i}}=\prod_{i=1}^{N}\mu_{\tau,h}^{i}$.

\subsection{Mean-Embedding of Distribution Flows}
The transition kernels and the reward functions both take $(s,a,z)$ as their inputs. However, the aggregate $z\in\calM(\calS)$ for an agent which is defined in Eqn.~\eqref{eq:1} is not available to us, since it requires the unknown values of graphons $W^{*}$ and the distribution flow $\mu^{\calI}$. From the collected data, we only have the states $\{s_{\tau,h}^{i}\}_{i=1}^{N}$ sampled from distributions $\{\mu_{\tau,h}^{i}\}_{i=1}^{N}$. Thus, we first need to estimate the distribution flow $\mu^{\calI}$ from these sample. We handle this by using a mean-embedding, which is a widely adopted method in distribution regression~\citep{szabo2016learning,szabo2015two}. Define $\Xi=\calS\times\calA\times\calS$, then $d_{s,a,z}=\delta_{s}\times\delta_{a}\times z$ is measure on $\Xi$. Given a positive definite kernel $k:\Xi\times\Xi\rightarrow\bbR$, we denote the \ac{rkhs} spanned by kernel $k$ as $\calH$. Then we embed the measure $d_{s,a,z}$ with the kernel $k$ as
\begin{align*}
    \omega_{d_{s,a,z}}=\int_{\calS}k\big(\cdot,(s,a,s^{\prime})\big)z(\rmd s^{\prime}).
\end{align*}
We have   $\omega_{d_{s,a,z}}\in\calH$. We note that such mean-embedding procedure will not cause the problem to be degenerate, since the embedding with the identity kernel degenerates to $d_{s,a,z}$. For our regression setting, we will embed the measure $\delta_{s_{h}^{\alpha}}\times\delta_{a_{h}^{\alpha}}\times z_{h}^{\alpha}(W_{h}^{*})$ for all $\alpha\in\calI$, and $h\in[H]$. Here the aggregate $z_{h}^{\alpha}$ is the influence aggregate for agent $\alpha$ at time $h$ defined in Eqn.~\eqref{eq:1}. Then the mean-embedding of the measure $\delta_{s_{h}^{\alpha}}\times\delta_{a_{h}^{\alpha}}\times z_{h}^{\alpha}(W_{h}^{*})$ is
\begin{align*}
    \omega_{h}^{\alpha}(W_{h}^{*})=\int_{0}^{1}\int_{\calS}W_{h}^{*}(\alpha,\beta)k\big(\cdot,(s_{h}^{\alpha},a_{h}^{\alpha},s)\big)\mu_{h}^{\beta}(s)\, \rmd s\, \rmd\beta.%\label{eq:meanembed}
\end{align*}
Given such embedding representation, we reformulate the transition kernels and the reward functions as functions $f_{h}^{*},g_{h}^{*}:\calH\rightarrow\bbR$ that is defined as
\begin{align}
    s_{h+1}^{\alpha}=f_{h}^{*}\big(\omega_{h}^{\alpha}(W_{h}^{*})\big)+\varepsilon_{h},\quad r_{h}^{\alpha}=g_{h}^{*}\big(\omega_{h}^{\alpha}(W_{h}^{*})\big) \text{ for all }h\in[H], \alpha\in\calI,\label{eq:sd}
\end{align}
where $\{\varepsilon_{h}^{\alpha}\}_{\alpha\in\calI}$ are independent zero-mean noises. Since $|s|\leq B_{S}$, we have  $|\varepsilon_{h}^{\alpha}|\leq 2B_{S}$.

\subsection{Assumptions for Model Learning}
In the following, we will estimate the transition kernels $\{f_{h}^{*}\}_{h=1}^{H}$, the reward functions $\{g_{h}^{*}\}_{h=1}^{H}$ and the graphons $\{W_{h}^{*}\}_{h=1}^{H}$ from the collected data. With nonparametric regression methods, we adopt a general graphon class $\tilde{\calW}$ to estimate the underlying graphons and adopt the kernels $K:\calH\times\calH\rightarrow\bbR$ and $\tilK:\calH\times\calH\rightarrow\bbR$ to estimate the transition kernel and reward functions, respectively. The space spanned by the kernels $K$ and $\tilK$ are respectively denoted as $\bar{\calH}$ and $\tilde{\calH}$. We postpone the details of the estimation algorithms for three sampling schemes to the following sections, and we first state the assumptions needed for the convergence of all these estimation algorithms.

First, we assume the Lipschitz continuity of the graphon class $\tilde{\calW}$ and the nominal graphons $W^{*}=\{W_{h}^{*}\}_{h=1}^{H}$. This assumption will help us to generalize the estimate from the sampled agents to the unobserved agents.
\begin{assumption}[Lipschitzness of Graphons]\label{assump:graphon}
    For any $W\in\tilde{\calW}$ (resp. $\{W_{h}^{*}\}_{h=1}^{H}$), we have that $|W(\alpha,\beta)-W(\alpha^{\prime},\beta^{\prime})|\leq L_{\tilde{\calW}}(|\alpha-\alpha^{\prime}|+|\beta-\beta^{\prime}|)$ (resp. $|W(\alpha,\beta)-W(\alpha^{\prime},\beta^{\prime})|\leq L_{W^{*}}(|\alpha-\alpha^{\prime}|+|\beta-\beta^{\prime}|)$) for all $\alpha,\alpha^{\prime},\beta,\beta^{\prime}\in[0,1]$, where $L_{\tilde{\calW}}>0$ (resp. $L_{W^{*}}>0$) is a constant.
\end{assumption}
For ease of notation, we define $L_{\bar{\calW}}=\max\{L_{\tilde{\calW}},L_{W^{*}}\}$. Second, we assume the boundedness and the Lipschitz continuity of the kernels. Similar as Assumption~\ref{assump:graphon}, this assumption is helpful to guarantee the boundedness of estimates and generalize the estimates from the sampled agents to the unobserved agents.
\begin{assumption}[Boundedness and Lipschitzness of Kernels]\label{assump:kernel}
    The reproducing kernels $k$, $K$ and $\tilK$ satisfy
    \begin{itemize}
        \item The kernel $k$ is bounded, i.e., there exists $B_{k}>0$ such that $k(x,x)\leq B_{k}^{2}$ for all $x\in\Xi$.
        \item The kernel $K$ (resp. $\tilK$) is bounded, i.e., there exists $B_{K}>0$ (resp. $B_{\tilK}>0$) such that $K(\omega,\omega)\leq B_{K}^{2}$ (resp. $\tilK(\omega,\omega)\leq B_{\tilK}^{2}$) for all $\omega\in\calH$.
        \item The kernel $K$ (resp. $\tilK$) is $L_{K}$-Lipschitz (resp. $L_{\tilK}$) continuous, i.e., $\|K(\cdot,\omega)-K(\cdot,\omega^{\prime})\|_{\bar{\calH}}\leq L_{K}\|\omega-\omega^{\prime}\|_{\calH}$ (resp. $\|\tilK(\cdot,\omega)-\tilK(\cdot,\omega^{\prime})\|_{\bar{\calH}}\leq L_{\tilK}\|\omega-\omega^{\prime}\|_{\calH}$) for all $\omega,\omega^{\prime}\in\calH$.
    \end{itemize}
\end{assumption}
For ease of notation, we define the maximal boundedness parameter $\barB_{K}=\max\{B_{K},B_{\tilK}\}$ and the maximal Lipschitz constant $\barL_{K}=\max\{L_{K},L_{\tilK}\}$. Finally, we state the realizability assumption. It guarantees that we choose the proper function class for our regression task. We define the $r$-ball in a \ac{rkhs} $\bar{\calH}$ as $\bbB(r,\bar{\calH})=\{f\in\bar{\calH}\,|\, \|f\|_{\bar{\calH}}\leq r\}$.
\begin{assumption}[Realizability]\label{assump:real}
    The nominal transition functions $f_{h}^{*}$, reward functions $g_{h}^{*}$ and graphons $W_{h}^{*}$ satisfy that $f_{h}^{*}\in \bbB(r,\bar{\calH})$, $g_{h}^{*}\in \bbB(\tilr,\tilde{\calH})$ and $W_{h}^{*}\in\tilde{\calW}$ for all $h\in[H]$, where $r,\tilr>0$ are some constants.
\end{assumption}
For ease of notation, we define the maximal radius as $\barr=\max\{r,\tilr\}$. We note that our algorithms and analysis are also applicable to the general function class $\calF$ and $\tilde{\calF}$, replacing $\calH$ and $\tilde{\calH}$. Here we adopt the \ac{rkhs} for the ease of representation.

\subsection{Learning from Sampled Agents with Known Positions}\label{sec:knownest}
In this section, we design regression algorithms when the positions of sampled agents are known. From the data collection procedure in Section~\ref{sec:datacollect}, the values of the distribution flows $\mu_{\tau}^{\calI}$ for $\tau\in[L]$ are not directly accessible. For the $i^{\rm{th}}$ agent, the mean-embedding of her state, action and the aggregate at time $h$ in the $\tau^{\rm{th}}$ episode is
\begin{align}
    \omega_{\tau,h}^{i}(W_{h}^{*})=\int_{\calS}k\big(\cdot,(s_{\tau,h}^{i},a_{\tau,h}^{i},s)\big)z_{\tau,h}^{i}(\rmd s)=\int_{0}^{1}\int_{\calS}W_{h}^{*}(\xi_{i},\beta)k\big(\cdot,(s_{\tau,h}^{i},a_{\tau,h}^{i},s)\big)\mu_{\tau,h}^{\beta}(s)\, \rmd s\, \rmd\beta.\label{eq:meanembed}
\end{align}
Thus, the input of $f_{h}^{*}$ and $g_{h}^{*}$, i.e., $\omega_{\tau,h}^{i}(W_{h}^{*})$, needs to be estimated. Given any graphon $W\in\tilde{\calW}$, we derive the empirical estimate of the aggregate of the $i^{\rm{th}}$ agent at time $h$ as
\begin{align*}
    \hat{z}_{\tau,h}^{i}(W)=\frac{1}{N-1}\sum_{j\neq i}W(\xi_{i},\xi_{j})\delta_{s_{\tau,h}^{j}}.
\end{align*}
This estimate involves three kinds of error sources. The first is the {\em graphon estimation error}, which originates from the difference between $W$ and $W_{h}^{*}$. The second is the {\em agent sampling error} which originates from the approximation of uncountably many agents in $[0,1]$ with $N-1$ of them, i.e., an integral over $[0,1]$ is replaced by a sum over $N-1$ terms. The last is the {\em state sampling error} in which   we replace the integral of $\mu_{\tau,h}^{\xi_{j}}$ over state space $\calS$ with the singleton $\delta_{s_{\tau,h}^{j}}$. In the analysis, we handle these three errors separately. Given the aggregate estimate $\hat{z}_{\tau,h}^{i}(W)$, the corresponding mean-embedding of the state, action, and the aggregate for the $i^{\rm{th}}$ agent is
\begin{align}
    \homega_{\tau,h}^{i}(W)=\frac{1}{N-1}\sum_{j\neq i}W(\xi_{i},\xi_{j})k\big(\cdot,(s_{\tau,h}^{i},a_{\tau,h}^{i},s_{\tau,h}^{j})\big).\label{eq:homega}
\end{align}
Taking this estimate as the input of $f_{h}^{*}$ and $g_{h}^{*}$, we evaluate the square error of the prediction and derive the estimates by minimizing the error. Thus, the estimation procedure for learning the system dynamics, the reward functions, and the graphons can be expressed as
\begin{align}
    (\hatf_{h},\hatg_{h},\hatW_{h})=\!\!\!\!\argmin_{\substack{f\in\bbB(r,\bar{\calH}),g\in\bbB(\tilr,\tilde{\calH}),\\ W\in\tilde{\calW}}} \!\!\frac{1}{NL}\sum_{\tau=1}^{L}\sum_{i=1}^{N}\Big(\!s_{\tau,h+1}^{i}\!-\!f\big(\homega_{\tau,h}^{i}(W)\big)\!\Big)^{2}\!\!\!+\!\Big(\!r_{\tau,h}^{i}\!-\!g\big(\homega_{\tau,h}^{i}(W)\big)\!\Big)^{2}.\label{algo:estalgo}
\end{align}
We note that the above optimization problem is, in general, non-convex. However, we focus on the statistical property of it in this work, and the practical implementation can be done with the help of non-convex optimization algorithms. In this estimation procedure, we form our predictions of states/rewards via the composition of two procedures, i.e., % as shown in Eqn.~\eqref{eq:twostage}.
\begin{align}\label{eq:twostage}
    \{(s_{\tau,h}^{i},a_{\tau,h}^{i})\}_{i=1}^{N}\overset{k,W}{\longrightarrow}\homega_{\tau,h}^{i}(W)\overset{f \text{ or } g}{\longrightarrow}s_{\tau,h+1}^{i}/r_{\tau,h}^{i}
\end{align}
In the first stage, the states and actions are embedded with the kernel $k$ and a selected graphon $W$. In the second stage, the mean-embedding $\homega_{\tau,h}^{i}(W)$ is forwarded by the functions in $\calH$ or $\tilde{\calH}$. 

This two-stage prediction distinguishes our estimation procedure from the algorithms designed for the distribution regression problem~\citep{szabo2016learning,fang2020optimal,meunier2022distribution}. In the distribution regression problem, the covariate, i.e., the input of $f$ or $g$ in Eqn.~\eqref{eq:twostage}, is an unknown distribution. In this problem, we are tasked with performing a regression from the data of the response variable and the i.i.d.\ samples of the unknown distribution. Although the distribution regression problem also requires a two-stage prediction similarly as Eqn.~\eqref{eq:twostage}, i.e., the covariate should be first estimated from i.i.d.\ samples drawn from itself, our problem setting involving graphons is a strict generalization of distribution regression. First, the input of $f$ or $g$ in our problem is a function of a set of distributions $\{\mu_{\tau,h}^{\alpha}\}_{\alpha\in\calI}$. In contrast, the covariate of the distribution regression problem is a single distribution. Second, in addition to the recovery of $\mu_{\tau,h}^{\calI}$ from its samples, our problem requires the estimation of the graphon $W$ to form $\homega_{\tau,h}^{i}(W)$. However, the distribution regression problem only requires the recovery of a distribution from its i.i.d.\ samples, which corresponds to the case that $W$ is a constant function.

\subsubsection{Agents with Known Grid Positions}\label{sec:knownfixest}
In this section, we provide the convergence result of the estimation procedure in Eqn.~\eqref{algo:estalgo} in the setting where the agents' positions $\{\xi_{i}\}_{i=1}^{N}$ form a known grid on $[0,1]$. Without loss of generality, we assume that $\xi_{i}\leq\xi_{j}$ for any $i\leq j$ in $[N]$, and denote the set of positions as $\bar{\xi}=\{\xi_{i}\}_{i=1}^{N}$. In this section, our behavior policies $\pi_{\tau}^{\calI}$ for $\tau\in[L]$ are set as $L_{\pi}$-Lipschitz policies. It means that $\|\pi_{h}^{\alpha}(\cdot\,|\,s)-\pi_{h}^{\beta}(\cdot\,|\,s)\|_{1}\leq L_{\pi}|\alpha-\beta|$ for all $h\in[H]$ and $\alpha,\beta\in\calI$. We note that setting the behavior policies as Lipschitz policies will not restrict the applicability of our estimation procedure, since the \ac{ne} is shown to be Lipschitz under Assumptions~\ref{assump:lipconti} and \ref{assump:graphon} in Appendix~\ref{app:nelip}.

Then we introduce the performance metric for our estimates. Given $\bar{\xi}$, the joint distribution of $(s_{\tau,h}^{i},a_{\tau,h}^{i},\mu_{\tau,h}^{\calI},r_{\tau,h}^{i},s_{\tau,h+1}^{i})_{i=1}^{N}$ is $\prod_{i=1}^{N}\rho_{\tau,h}^{i}$, where $\rho_{\tau,h}^{i}=\mu_{\tau,h}^{i}\times\pi_{\tau,h}^{i}\times\delta_{\mu_{\tau,h}^{\calI}}\times \delta_{r_{h}^{*}}\times P_{h}^{*}$. Here $\delta_{\mu_{\tau,h}^{\calI}}$ is the delta distribution induced by the deterministic function $r_{h}^{*}$. We define the risk of $(f,g,W)$ given $\bar{\xi}$ as
\begin{align}
    \calR_{\bar{\xi}}(f,g,W)=\frac{1}{NL}\sum_{\tau=1}^{L}\sum_{i=1}^{N}\bbE_{\rho_{\tau,h}^{i}}\bigg[\Big(s_{\tau,h+1}^{i}-f\big(\omega_{\tau,h}^{i}(W)\big)\Big)^{2}+\Big(r_{\tau,h}^{i}-g\big(\omega_{\tau,h}^{i}(W)\big)\Big)^{2}\bigg].\label{eq:condirisk}
\end{align}
The risk $\calR_{\bar{\xi}}(f,g,W)$ measures the mean square error of the estimates $f,g,W$ with respect to the distributions of states, actions and distribution flow on the sampled agents. The convergence rate of our estimates $(\hatf_{h},\hatg_{h},\hatW_{h})$ is stated as follows.
\begin{theorem}\label{thm:fixest}
    Under Assumptions~\ref{assump:lipconti}, \ref{assump:graphon}, \ref{assump:kernel}, and \ref{assump:real}, if $\{\xi_{i}\}_{i=1}^{N}$ are known grid positions such that $\xi_{i}=i/N$ for $i\in[N]$, then with probability at least $1-\delta$, the risk of the estimates in Eqn.~\eqref{algo:estalgo} can be bounded as
    \begin{align*}
        &\calR_{\bar{\xi}}(\hatf_{h},\hatg_{h},\hatW_{h})-\calR_{\bar{\xi}}(f_{h}^{*},g_{h}^{*},W_{h}^{*})\\
        &\quad= O\bigg(\underbrace{\frac{(B_{S}+\barr\barB_{K})^{4}}{NL}\log\frac{N_{\bbB_{r}}N_{\tilde{\bbB}_{\tilr}}N_{\tilde{\calW}}}{\delta}}_{\text{generalization error}}+\underbrace{\frac{(B_{S}+\barr\barB_{K})\barr\barL_{K}B_{k}}{\sqrt{N}}\log\frac{NL\calN_{\infty}(1/\sqrt{N},\tilde{\calW})}{\delta}}_{\text{mean-embedding estimation error}}\bigg),
    \end{align*}
    where
    \begin{align*}
        N_{\bbB_{r}}=\calN_{\bar{\calH}}\bigg(\frac{3}{NL},\bbB(r,\bar{\calH})\bigg),\quad N_{\tilde{\bbB}_{\tilr}}=\calN_{\bar{\calH}}\bigg(\frac{3}{NL},\bbB(\tilr,\tilde{\calH})\bigg),\quad N_{\tilde{\calW}}=\calN_{\infty}\bigg(\frac{3}{L_{K}NL},\tilde{\calW}\bigg).
    \end{align*}
\end{theorem}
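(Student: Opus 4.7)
The plan is to control the excess risk $\calR_{\bar\xi}(\hat f_h,\hat g_h,\hat W_h)-\calR_{\bar\xi}(f_h^*,g_h^*,W_h^*)$ by splitting it into (i) a generalization gap between the population risk, evaluated at the true mean-embedding $\omega_{\tau,h}^i(W)$, and an empirical risk of the same form, and (ii) a mean-embedding approximation error incurred by replacing $\omega_{\tau,h}^i(W)$ by its sample estimate $\homega_{\tau,h}^i(W)$ from Eqn.~\eqref{eq:homega}. Writing $\hat\calR$ for the empirical risk minimized in \eqref{algo:estalgo} and $\hat\calR_\omega$ for the corresponding expression with $\homega$ replaced by $\omega$, ERM optimality of $(\hat f_h,\hat g_h,\hat W_h)$ against $(f_h^*,g_h^*,W_h^*)$ yields the sandwich
\[
\calR_{\bar\xi}(\hat f_h,\hat g_h,\hat W_h)-\calR_{\bar\xi}(f_h^*,g_h^*,W_h^*)\le 2\sup_{f,g,W}\bigl|\calR_{\bar\xi}(f,g,W)-\hat\calR_\omega(f,g,W)\bigr|+2\sup_{f,g,W}\bigl|\hat\calR_\omega(f,g,W)-\hat\calR(f,g,W)\bigr|,
\]
whose two terms I will treat separately.

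For the generalization term, the reproducing property plus Assumption~\ref{assump:kernel} gives $|f(\omega)|,|g(\omega)|\le\barr\barB_K$, so that, with $|s|\le B_S$ and an analogous reward bound, every squared-loss summand is at most $(B_S+\barr\barB_K)^2$ and its variance at most $(B_S+\barr\barB_K)^4$. Across episodes the trajectories are independent, and conditional on $\bar\xi$ the per-agent observations are independent across $i\in[N]$, so a Bernstein inequality at a fixed triple $(f,g,W)$ gives a deviation of the desired order. To make this uniform, I discretize $\bbB(r,\bar\calH)$ and $\bbB(\tilr,\tilde\calH)$ at scale $1/(NL)$ and $\tilde\calW$ at scale $1/(L_K NL)$: Lipschitzness of the squared loss in $(f,g)$, Lipschitzness of $K,\tilK$ in $\omega$ from Assumption~\ref{assump:kernel}, and Lipschitzness of $\omega(W)$ in $W$ from Assumption~\ref{assump:graphon} together imply that any perturbation at those resolutions changes the loss by at most $O(1/(NL))$. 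A union bound over the covers of cardinalities $N_{\bbB_r},N_{\tilde\bbB_{\tilr}},N_{\tilde\calW}$, combined with the realizability Assumption~\ref{assump:real} (which supplies the variance-to-mean comparison needed for the fast rate), yields the first claimed term.

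For the mean-embedding term, I bound $\|\homega_{\tau,h}^i(W)-\omega_{\tau,h}^i(W)\|_\calH$ uniformly in $(i,\tau,h,W)$ by combining two ingredients. First, at fixed $W$, $\homega_{\tau,h}^i(W)$ is an average of $N-1$ conditionally independent $\calH$-valued elements of norm at most $B_k$, so a Hoeffding inequality in Hilbert space gives $\|\homega_{\tau,h}^i(W)-\bbE\homega_{\tau,h}^i(W)\|_\calH=O(B_k\sqrt{\log(1/\delta)/N})$. Second, since the sampled agents sit on the regular grid $\{i/N\}$ and both $W$ and $\mu_{\tau,h}^{\cdot}$ are Lipschitz in the agent index (the latter inherited from the Lipschitz behavior policies through Assumptions~\ref{assump:lipconti} and~\ref{assump:graphon}, as verified in Appendix~\ref{app:nelip}), $\bbE\homega_{\tau,h}^i(W)$ approximates the Riemann integral $\omega_{\tau,h}^i(W)$ at the deterministic rate $O(B_k/N)$, which is dominated by the stochastic term. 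A uniform-in-$W$ bound is obtained by a union bound over an $\ell_\infty$-cover of $\tilde\calW$ at scale $1/\sqrt{N}$, contributing $\log\calN_\infty(1/\sqrt{N},\tilde\calW)$, together with a union over $(i,\tau)$ contributing $\log(NL)$. Finally, the inequality $|f(\omega)-f(\homega)|\le\barr L_K\|\omega-\homega\|_\calH$ combined with the envelope $B_S+\barr\barB_K$ converts this RKHS error into the second term of the theorem.

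The main obstacle will be the entanglement of $W$ and the RKHS input $\omega(W)$: the loss is not a fixed function of $\omega$ but depends on the candidate $W$ that is itself being optimized, so a naive Rademacher-style argument will not work, and one must track covers of $\bar\calH$, $\tilde\calH$ and $\tilde\calW$ simultaneously with carefully matched scales, verifying joint Lipschitzness of the composite squared loss in all three arguments. A secondary delicate point is cleanly separating the deterministic Riemann-sum error arising from the agent grid (order $O(1/N)$, driven by Lipschitzness of $W$ and of $\mu$) from the $O(1/\sqrt{N})$ stochastic error from drawing a single state per agent and episode, so that the final rate is governed by the dominant $1/\sqrt{N}$ behavior rather than an artificial product of the two.
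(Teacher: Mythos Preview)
Your treatment of the mean-embedding term is essentially the paper's: split $\|\homega_{\tau,h}^i(W)-\omega_{\tau,h}^i(W)\|_{\calH}$ into a deterministic Riemann-sum bias $O(B_k/N)$ (via Lipschitzness of $W$ and of $\mu_{\tau,h}^{\cdot}$ in the agent index) and a stochastic fluctuation handled by a Hilbert-space Hoeffding (Pinelis), made uniform in $W$ by an $\ell_\infty$-cover of $\tilde\calW$ at scale $1/\sqrt{N}$ and a union over $(i,\tau)$. Converting via $|f(\omega)-f(\homega)|\le \barr L_K\|\omega-\homega\|_{\calH}$ and the envelope $B_S+\barr\barB_K$ gives the second term exactly as in the paper.

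The generalization step, however, has a genuine gap. Your sandwich bounds the excess risk by $2\sup_{f,g,W}\bigl|\calR_{\bar\xi}(f,g,W)-\hat\calR_\omega(f,g,W)\bigr|$, an \emph{uncentered} uniform deviation. Already at the truth $(f_h^*,g_h^*,W_h^*)$ this quantity equals $\bigl|\bbE[\varepsilon_h^2]-\frac{1}{NL}\sum_{\tau,i}\varepsilon_{\tau,h,i}^2\bigr|$, whose fluctuation is of order $(B_S+\barr\barB_K)^2/\sqrt{NL}$; hence the sup cannot be $O(1/(NL))$. The variance-to-mean relation you invoke controls the variance of the \emph{centered} loss $(s'-f(\omega(W)))^2-(s'-f_h^*(\omega(W_h^*)))^2$, not of the raw squared loss, so it is useless inside your symmetric sup and Bernstein there only returns the slow rate.

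The paper's decomposition is not symmetric: it writes the excess risk as
\[
\Bigl[\calR_{\bar\xi}(\hat f_h,\hat g_h,\hat W_h)-\calR_{\bar\xi}(f_h^*,g_h^*,W_h^*)-2\bigl(\hat\calR_\omega(\hat f_h,\hat g_h,\hat W_h)-\hat\calR_\omega(f_h^*,g_h^*,W_h^*)\bigr)\Bigr]
+2\bigl(\hat\calR_\omega-\hat\calR\bigr)\text{ terms}
+2\bigl(\hat\calR(\hat f_h,\ldots)-\hat\calR(f_h^*,\ldots)\bigr),
\]
the last bracket being $\le 0$ by ERM. The first bracket is then controlled uniformly by a \emph{relative} concentration of the centered class $f_W(e)=(s'-f(\omega(W)))^2-(s'-f_h^*(\omega(W_h^*)))^2$: one shows that with high probability $\frac{1}{NL}\sum_{\tau,i}\bbE[f_W]-\frac{2}{NL}\sum_{\tau,i}f_W(e_{\tau,h}^i)\le O\bigl(\frac{(B_S+\barr\barB_K)^4}{NL}\log\frac{N_{\bbB_r}N_{\tilde\bbB_{\tilr}}N_{\tilde\calW}}{\delta}\bigr)$ for all $(f,W)$. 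The factor $2$ (rather than $1$) is precisely what lets the Bernstein term $\sqrt{\bbE[f_W]\cdot\log(\cdot)/(NL)}$ be absorbed into $\frac12\bbE[f_W]$ and yield the fast $1/(NL)$ rate; the paper carries this out via a symmetrization/peeling argument in the style of Gy\"orfi et al.\ (Proposition~\ref{prop:frconcen}). To fix your plan, replace the symmetric sup by this offset-by-$2$ localized decomposition and apply Bernstein to the centered class; the covering scales you chose are then the right ones.
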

The estimation error in Theorem~\ref{thm:fixest} consists of two terms: the first term corresponds to the generalization error, and the second term corresponds to the mean-embedding estimation error. The generalization error involves the error from optimizing over the empirical mean of the risk in Eqn.~\eqref{algo:estalgo} instead of the population risk in Eqn.~\eqref{eq:condirisk}. The mean-embedding estimation error comes from the fact that we cannot directly observe the distribution flow $\mu_{\tau}^{\calI}$, but we need to estimate it from the states of sampled agents. As discussed in Section~\ref{sec:knownest}, the mean-embedding estimation error consists of the agent sampling error and the state sampling error. If we use finite general function classes, then the covering number in the bound will be replaced by the cardinalities of these function classes. The resultant convergence rate would thus be $O(1/\sqrt{N})$.

The model learning algorithm in \citet{pasztor2021efficient} for the \ac{mfg} assumes access to the nominal value of the distribution flow. Such an assumption can be achieved in \ac{mfg}s by sampling a large number of agents at each time, since all the agents are homogeneous and have the same state distribution flow. This estimation procedure will however, come at a cost of $O(1/\sqrt{N})$, which is not reflected in their results. What's more, such an assumption is no longer realistic in the \ac{gmfg}, since the agents in \ac{gmfg} are heterogeneous, and the state distributions of agents are different. Our estimation procedure in~\eqref{algo:estalgo} does not require the access to the nominal value of the distribution flow $\mu_{h}^{\calI}$. Instead, we estimate this quantity from states of sampled agents and prove that such an estimate works for the heterogeneous agents.

Next, we derive a corollary for the setting where we implement a single behavior policy for $L$ independent times to collect the data, i.e., $\pi_{\tau}^{\calI}=\pi^{\calI}$ for all $\tau\in[L]$. As such, instead of Eqn.~\eqref{eq:homega}, we estimate the mean-embedding via
\begin{align}
    \htomega_{\tau,h}^{i}(W)=\frac{1}{(N-1)L}\sum_{j\neq i}\sum_{\tau^{\prime}=1}^{L}W(\xi_{i},\xi_{j})k\big(\cdot,(s_{\tau,h}^{i},a_{\tau,h}^{i},s_{\tau^{\prime},h}^{j})\big).\label{eq:homega1}
\end{align}
We note that Eqn.~\eqref{eq:homega1} averages the states over $L$ episodes, since the distribution flows of these $L$ episodes are same. Correspondingly, the estimation procedure in Eqn.~\eqref{algo:estalgo} is modified to be
\begin{align}(\hatf_{h},\hatg_{h},\hatW_{h})=\!\!\!\!\argmin_{\substack{f\in\bbB(r,\bar{\calH}),g\in\bbB(\tilr,\tilde{\calH}),\\W\in\tilde{\calW}}} \!\!\frac{1}{NL}\!\sum_{\tau=1}^{L}\sum_{i=1}^{N}\!\Big(\!s_{\tau,h+1}^{i}-f\big(\htomega_{\tau,h}^{i}(W)\big)\!\Big)^{2}\!\!\!+\!\Big(\!r_{\tau,h}^{i}-g\big(\htomega_{\tau,h}^{i}(W)\big)\!\Big)^{2}.\label{algo:estalgo1}
\end{align}
The convergence rate of the corresponding estimates $(\hatf_{h},\hatg_{h},\hatW_{h})$ can be derived as follows.
\begin{corollary}\label{coro:fixest}
    Under Assumptions~\ref{assump:lipconti}, \ref{assump:graphon}, \ref{assump:kernel}, and \ref{assump:real}, if we implement a policy $L$ independent times to collect the data, and $\xi_{i}=i/N$ for $i\in[N]$, then with probability at least $1-\delta$, the risk of the estimates in Eqn.~\eqref{algo:estalgo1} can be bounded as
    \begin{align*}
        &\calR_{\bar{\xi}}(\hatf_{h},\hatg_{h},\hatW_{h})-\calR_{\bar{\xi}}(f_{h}^{*},g_{h}^{*},W_{h}^{*})\\*
        &\quad= O\bigg(\frac{(B_{S}+\barr\barB_{K})^{4}}{NL}\log\frac{N_{\bbB_{r}}N_{\tilde{\bbB}_{\tilr}}N_{\tilde{\calW}}}{\delta}+\frac{(B_{S}+\barr\barB_{K})\barr\barL_{K}B_{k}}{\sqrt{NL}}\log\frac{NL\calN_{\infty}(1/\sqrt{NL},\tilde{\calW})}{\delta}\bigg),
    \end{align*}
    where
    \begin{align*}
        N_{\bbB_{r}}=\calN_{\bar{\calH}}\bigg(\frac{3}{NL},\bbB(r,\bar{\calH})\bigg),\quad N_{\tilde{\bbB}_{\tilr}}=\calN_{\bar{\calH}}\bigg(\frac{3}{NL},\bbB(\tilr,\tilde{\calH})\bigg),\quad N_{\tilde{\calW}}=\calN_{\infty}\bigg(\frac{3}{L_{K}NL},\tilde{\calW}\bigg).
    \end{align*}
\end{corollary}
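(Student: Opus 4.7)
} My plan is to mirror the proof of Theorem~\ref{thm:fixest} and isolate the single quantity that actually changes, namely the state-sampling contribution to the mean-embedding estimation error. Introduce three risks: the population risk $\calR_{\bar{\xi}}$, the empirical risk $\hat{\calR}$ that replaces $\omega_{\tau,h}^{i}(W)$ by $\htomega_{\tau,h}^{i}(W)$ (what the estimator in~\eqref{algo:estalgo1} actually minimizes), and an intermediate empirical risk $\tilde{\calR}$ that uses the true $\omega_{\tau,h}^{i}(W)$ averaged over the same $NL$ samples. Using the defining inequality $\hat{\calR}(\hatf_h,\hatg_h,\hatW_h)\leq\hat{\calR}(f_h^*,g_h^*,W_h^*)$ and adding/subtracting $\tilde{\calR}$, the excess risk splits as
\begin{align*}
\calR_{\bar{\xi}}(\hatf_h,\hatg_h,\hatW_h)-\calR_{\bar{\xi}}(f_h^*,g_h^*,W_h^*) \leq \underbrace{\sup_{f,g,W}|\calR_{\bar{\xi}}-\tilde{\calR}|(f,g,W)}_{\text{generalization gap}}+\underbrace{\sup_{f,g,W}|\tilde{\calR}-\hat{\calR}|(f,g,W)}_{\text{mean-embedding gap}}.
\end{align*}

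For the generalization gap, the uniform-concentration argument of Theorem~\ref{thm:fixest} transfers verbatim: we still have $NL$ episodes that are independent across $\tau$ and, conditional on the deterministic distribution flow, independent across the sampled agents; the squared losses are uniformly bounded by $O((B_S+\barr\barB_K)^2)$; and a union bound over a $(NL)^{-1}$-cover of $\bbB(r,\bar{\calH})\times\bbB(\tilr,\tilde{\calH})\times\tilde{\calW}$, lifted to the full class via the Lipschitzness assumptions on $K,\tilK,\tilde{\calW}$ (Assumptions~\ref{assump:kernel} and~\ref{assump:graphon}), yields the first term of the stated bound with the same covering-number structure as in Theorem~\ref{thm:fixest}.

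The only genuinely new step is the mean-embedding gap. As in Theorem~\ref{thm:fixest}, decompose $\htomega_{\tau,h}^i(W)-\omega_{\tau,h}^i(W)$ for any fixed $W\in\tilde{\calW}$ into an agent-sampling piece of order $L_{\bar{\calW}}B_k/N$, obtained from the Lipschitz grid-quadrature approximation of $\int_0^1 W(\xi_i,\beta)(\cdot)\,\rmd\beta$, plus the state-sampling piece
\begin{align*}
\frac{1}{(N-1)L}\sum_{j\neq i}\sum_{\tau'=1}^{L} W(\xi_i,\xi_j)\Big(k\big(\cdot,(s_{\tau,h}^i,a_{\tau,h}^i,s_{\tau',h}^j)\big)-\bbE_{s\sim\mu_h^{\xi_j}}\big[k\big(\cdot,(s_{\tau,h}^i,a_{\tau,h}^i,s)\big)\big]\Big).
\end{align*}
Because the behavior policy is identical across episodes, conditional on $(s_{\tau,h}^i,a_{\tau,h}^i)$ and the deterministic flow $\mu_h^\calI$, the states $\{s_{\tau',h}^j:j\neq i,\,\tau'\in[L]\}$ form $(N-1)L$ mutually independent draws from $\mu_h^{\xi_j}$ (as opposed to only $N-1$ in Theorem~\ref{thm:fixest}). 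A Pinelis--Hoeffding inequality in $\calH$ then bounds the $\calH$-norm of this sum by $\tilO(B_k/\sqrt{NL})$ with probability $1-\delta$ for each fixed $W$; using $\|K(\cdot,\omega)-K(\cdot,\omega')\|_{\bar{\calH}}\leq L_K\|\omega-\omega'\|_{\calH}$ to transport the deviation through a $(L_K NL)^{-1}$-cover of $\tilde{\calW}$ at radius $1/\sqrt{NL}$ produces the factor $\calN_\infty(1/\sqrt{NL},\tilde{\calW})$. Multiplying the resulting $\calH$-deviation by the Lipschitz constants $\barr\barL_K$ of $f,g$ in their inputs, and by the residual bound $(B_S+\barr\barB_K)$ from expanding the squared losses, yields the second term of the corollary. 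The main obstacle I foresee is purely book-keeping: the two sampling errors must be tracked at their correct rates (the agent-sampling piece sits at $1/N$ and is dominated by the new $1/\sqrt{NL}$ state-sampling piece whenever $L\lesssim N$), and the Hilbert-space concentration must be applied at the right conditioning (on the ``query'' index $(s_{\tau,h}^i,a_{\tau,h}^i)$ and on $\mu_h^\calI$); no genuinely new probabilistic machinery is required beyond what is already deployed for Theorem~\ref{thm:fixest}.
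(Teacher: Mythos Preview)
Your proposal is correct and matches the paper's approach exactly: the paper's proof simply says to follow Theorem~\ref{thm:fixest} verbatim and replace only the state-sampling concentration bound (inequality~\eqref{ieq:15}, term~(IV)) by the improved $O\big(B_k/\sqrt{(N-1)L}\big)$ rate arising from the $(N-1)L$ independent summands now available in $\htomega_{\tau,h}^{i}$. One small caveat on presentation: the naive $\sup_{f,g,W}|\calR_{\bar\xi}-\tilde\calR|$ decomposition you sketch at the top would by itself only deliver a slow $1/\sqrt{NL}$ generalization rate, so to obtain the stated fast $1/(NL)$ first term you must---as your phrase ``transfers verbatim'' correctly indicates---reuse Theorem~\ref{thm:fixest}'s offset decomposition handled by Proposition~\ref{prop:frconcen} rather than a plain uniform deviation bound.
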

Compared to the result in Theorem~\ref{thm:fixest}, the mean-embedding estimation error, i.e., the second term, is improved from $O(1/\sqrt{N})$ to $O(1/\sqrt{NL})$. Such an improvement is intuitive, since we now utilize the data from $L$ episodes to estimate the distribution flow, but the estimation procedure in Theorem~\ref{thm:fixest} only uses the data from a \emph{single} episode for the same purpose.

\subsubsection{Agents with Known Random Positions}\label{sec:knownrandest}
In this section, we provide the convergence result of estimation procedure in Eqn.~\eqref{algo:estalgo} in the setting where the agent positions $\{\xi_{i}\}_{i=1}^{N}$ are known realizations of i.i.d.\ samples drawn from $\unif([0,1])$. The set of positions is denoted as $\bar{\xi}=\{\xi_{i}\}_{i=1}^{N}$. We first specify the performance metric in this section. For an agent $\alpha\in\calI$, we denote the joint distribution of $(s_{\tau,h}^{\alpha},a_{\tau,h}^{\alpha},\mu_{\tau,h}^{\calI},r_{\tau,h}^{\alpha},s_{\tau,h+1}^{\alpha})$ as $\rho_{\tau,h}^{\alpha}$, where $\rho_{\tau,h}^{\alpha}=\mu_{\tau,h}^{\alpha}\times\pi_{\tau,h}^{\alpha}\times\delta_{\mu_{\tau,h}^{\calI}}\times \delta_{r_{h}^{*}}\times P_{h}^{*}$. Then the risk of   $f\in\calH$, $g\in\tilde{\calH}$, and $W\in\tilde{\calW}$ is defined as
\begin{align}
    \calR(f,g,W)=\frac{1}{L}\sum_{\tau=1}^{L}\int_{0}^{1}\bbE_{\rho_{\tau,h}^{\alpha}}\bigg[\Big(s_{\tau,h+1}^{\alpha}-f\big(\omega_{\tau,h}^{\alpha}(W)\big)\Big)^{2}+\Big(r_{\tau,h}^{\alpha}-g\big(\omega_{\tau,h}^{\alpha}(W)\big)\Big)^{2}\bigg]\rmd \alpha. \label{eq:popurisk}
\end{align}
Compared to the risk with grid positions defined in Eqn.~\eqref{eq:condirisk}, the risk defined in Eqn.~\eqref{eq:popurisk} can be derived by taking expectation with respect the distribution of the positions, i.e., $\calR(f,g,W)=\bbE_{\bar{\xi}}\big[\calR_{\bar{\xi}}(f,g,W)\big]$. The convergence rate of our estimates can be stated as follows. 
\begin{theorem}\label{thm:randest}
     Under Assumptions~\ref{assump:graphon}, \ref{assump:kernel}, \ref{assump:real}, and \ref{assump:lipconti}, if $\{\xi_{i}\}_{i=1}^{N}$ are known i.i.d.\ samples of $\unif([0,1])$, then with probability at least $1-\delta$, the risk of the estimates in Eqn.~\eqref{algo:estalgo} can be bounded as
    \begin{align*}
        &\calR(\hatf_{h},\hatg_{h},\hatW_{h})-\calR(f_{h}^{*},g_{h}^{*},W_{h}^{*})\\
        &\quad= O\bigg(\frac{(B_{S}+\barr\barB_{K})^{2}}{\sqrt{N}}\log\frac{\tilN_{\bbB_{r}}\tilN_{\tilde{\bbB}_{\tilr}}\tilN_{\tilde{\calW}}}{\delta}\\
        &\quad\qquad+\frac{(B_{S}+\barr\barB_{K})\barr\barL_{K}B_{k}}{\sqrt{N}}\log\frac{NL\calN_{\infty}\big(1/\sqrt{N},\tilde{\calW}\big)}{\delta}+\frac{(B_{S}+\barr\barB_{K})^{4}}{NL}\log\frac{N_{\bbB_{r}}N_{\tilde{\bbB}_{\tilr}}N_{\tilde{\calW}}}{\delta}\bigg),
    \end{align*}
    where
    \begin{align*}
        \tilN_{\bbB_{r}}=\calN_{\bar{\calH}}\bigg(\!\frac{1}{16\sqrt{N}},\bbB(r,\bar{\calH})\!\bigg), \tilN_{\tilde{\bbB}_{\tilr}}=\calN_{\tilde{\calH}}\bigg(\!\frac{1}{16\sqrt{N}},\bbB(\tilr,\tilde{\calH})\!\bigg), \tilN_{\tilde{\calW}}=\calN_{\infty}\bigg(\!\frac{1}{16rL_{K}B_{k}\sqrt{N}},\tilde{\calW}\!\bigg).
    \end{align*}
\end{theorem}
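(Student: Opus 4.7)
\textbf{Proof plan for Theorem~\ref{thm:randest}.}
The natural strategy is to pivot through the conditional risk $\calR_{\bar{\xi}}$ and reduce to the grid-case analysis behind Theorem~\ref{thm:fixest}. Since $\calR(f,g,W)=\bbE_{\bar{\xi}}[\calR_{\bar{\xi}}(f,g,W)]$ when $\xi_i\overset{\text{i.i.d.}}{\sim}\unif([0,1])$, write the excess risk as the telescoping decomposition
\begin{align*}
    \calR(\hatf_{h},\hatg_{h},\hatW_{h})-\calR(f_{h}^{*},g_{h}^{*},W_{h}^{*})
    &=\underbrace{\big[\calR(\hatf_{h},\hatg_{h},\hatW_{h})-\calR_{\bar{\xi}}(\hatf_{h},\hatg_{h},\hatW_{h})\big]}_{\text{(I)}}\\
    &\quad +\underbrace{\big[\calR_{\bar{\xi}}(\hatf_{h},\hatg_{h},\hatW_{h})-\calR_{\bar{\xi}}(f_{h}^{*},g_{h}^{*},W_{h}^{*})\big]}_{\text{(II)}}\\
    &\quad +\underbrace{\big[\calR_{\bar{\xi}}(f_{h}^{*},g_{h}^{*},W_{h}^{*})-\calR(f_{h}^{*},g_{h}^{*},W_{h}^{*})\big]}_{\text{(III)}}.
\end{align*}
Conditional on $\bar{\xi}$, the estimator $(\hatf_{h},\hatg_{h},\hatW_{h})$ is the ERM of the same empirical objective as in Theorem~\ref{thm:fixest}, so term~(II) can be controlled by that theorem's argument. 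The key observation is that the proof of Theorem~\ref{thm:fixest} does not actually require $\bar{\xi}$ to lie on the grid: the Lipschitz bound on $W^{*}$ was used only to convert a Riemann sum into an integral, and for i.i.d.\ positions the corresponding step is a one-sided Hoeffding concentration for the sum $\frac{1}{N-1}\sum_{j\neq i}W(\xi_i,\xi_j)(\cdot)$, which yields the same $O(1/\sqrt{N})$ rate. Thus~(II) contributes exactly the generalization and mean-embedding terms appearing in the bound.

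The new ingredient is the uniform control of terms~(I) and~(III). For any fixed $(f,g,W)$, the quantity $\calR_{\bar{\xi}}(f,g,W)$ is an empirical average of $N$ i.i.d.\ bounded random variables (indexed by $\xi_i$), each bounded by $(B_S+\barr\barB_K)^2$ thanks to Assumption~\ref{assump:kernel} and Assumption~\ref{assump:real}, and its mean is exactly $\calR(f,g,W)$. A one-step Hoeffding bound therefore gives
\begin{align*}
    \big|\calR_{\bar{\xi}}(f,g,W)-\calR(f,g,W)\big|=O\!\left(\frac{(B_S+\barr\barB_K)^2}{\sqrt{N}}\sqrt{\log(1/\delta)}\right)
\end{align*}
with probability at least $1-\delta$. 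To upgrade this to a uniform bound over $(f,g,W)\in\bbB(r,\bar{\calH})\times\bbB(\tilr,\tilde{\calH})\times\tilde{\calW}$, we construct coverings at scale $1/\sqrt{N}$ for each factor, apply a union bound over the cover, and use Lipschitzness of the squared-loss integrand in $(f,g,W)$ to pass from grid points to the whole class; this produces the covering-number logarithms $\tilN_{\bbB_{r}},\tilN_{\tilde{\bbB}_{\tilr}},\tilN_{\tilde{\calW}}$ appearing in the first term. Assembling (I), (II), (III) then yields the three-term bound claimed in the theorem.

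\textbf{Main obstacle.} The delicate step is establishing the Lipschitz continuity of the population functional $(f,g,W)\mapsto\calR_{\bar{\xi}}(f,g,W)$ with the right dependence on $\barB_K, \barL_K, B_k$, because $W$ enters the risk only \emph{through the mean embedding} $\omega_{\tau,h}^{\alpha}(W)$, which itself involves an integral against the unknown flow $\mu^{\calI}$. Controlling the perturbation of $\omega_{\tau,h}^{\alpha}(W)$ under a uniform change in $W$ requires the kernel-Lipschitz bound from Assumption~\ref{assump:kernel} and the bound $\|W-W'\|_{\infty}$ on the graphon cover, which is exactly why $\calN_{\infty}(1/(16rL_KB_k\sqrt{N}),\tilde{\calW})$ appears inside the first logarithm rather than, say, an $L^2$ covering. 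A secondary nuisance is that terms (I) and (II) are \emph{not} independent, since the same $\bar{\xi}$ is used to define the ERM and to evaluate $\calR_{\bar{\xi}}(\hatf_{h},\hatg_{h},\hatW_{h})$; this is resolved by making the concentration bound for (I) uniform over the function classes, so that it holds \emph{simultaneously} for the random estimator, and then combining high-probability events for (I), (II), (III) by a union bound over $\delta/3$.
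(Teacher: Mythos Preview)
Your proposal is correct and follows essentially the same route as the paper: the paper bounds (I)+(III) by $2\sup_{f,g,W}|\calR(f,g,W)-\calR_{\bar{\xi}}(f,g,W)|$ via Hoeffding plus a covering argument at scale $1/\sqrt{N}$ (using exactly the Lipschitz-in-$W$ calculation you flag as the main obstacle), and handles (II) by redoing the Theorem~\ref{thm:fixest} analysis conditional on $\bar{\xi}$, replacing the grid Riemann-sum step for $\|\bar{\omega}-\omega\|_{\calH}$ with a Hilbert-space concentration over the i.i.d.\ positions. The only minor refinement is that the paper uses Pinelis' inequality (Lemma~\ref{lem:hilconcen}) rather than scalar Hoeffding for the mean-embedding term, since the summands are $\calH$-valued.
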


The estimation error in Theorem~\ref{thm:randest} consists of three terms: the first term corresponds to the approximation error, the second term corresponds to the generalization error, and the third term corresponds to the mean-embedding estimation error. The first term comes from the fact that we can only approximate the risk $\calR(f,g,W)$ by $\calR_{\bar{\xi}}(f,g,W)$ in the estimation procedure specified in Eqn.~\eqref{algo:estalgo}. The second and the third terms can be explained in the same way as for the terms in Theorem~\ref{thm:fixest}.

\subsection{Learning from Sampled Agents with Unknown Positions}\label{sec:unknownfixest}
We now consider the setting where the positions of the sampled agents $\{\xi_{i}\}_{i=1}^{N}$ are on the grid in $[0,1]$, but are \emph{unknown}. This means that the set of sampled positions $\{\xi_{i}\}_{i=1}^{N}$ is equal to  $\{i/N\}_{i=1}^{N}$, but we do not know which $i/N$ each $\xi_{i}$ corresponds to. In addition to the data collection procedures in Section~\ref{sec:datacollect}, we assume that we implement the \emph{same} policy over $L$ independent rounds. This sampling method implies that the distribution defined in Section~\ref{sec:unknownfixest} satisfies  $\rho_{\tau,h}^{\alpha}=\rho_{\tau^{\prime},h}^{i}$ for all $\tau,\tau^{\prime}\in[L]$, $\alpha\in\calI$ and $h\in[H]$.

Intuitively, since the position information is missing from our observations, we cannot estimate the precise values of graphons. For example, the collected data from the agents in Figure~\ref{fig:relapos1} is same as that in Figure~\ref{fig:relapos2}, so we cannot distinguish between these two different graphons. However, we can see that these two graphons are the same up to a measure-preserving bijection. Proposition~\ref{prop:equivari} shows that the model with transformed graphons is same as the original model up to a measure-preserving bijection. Thus, in this section, our goal is to estimate the model of \ac{gmfg} up to a measure-preserving bijection.

\begin{figure}[t]
	\centering
	\subfigure[The \ac{sbm} graphon and four sampled agents.]{
	\begin{minipage}[t]{0.44\linewidth}
	\centering
	\includegraphics[width=1.5in]{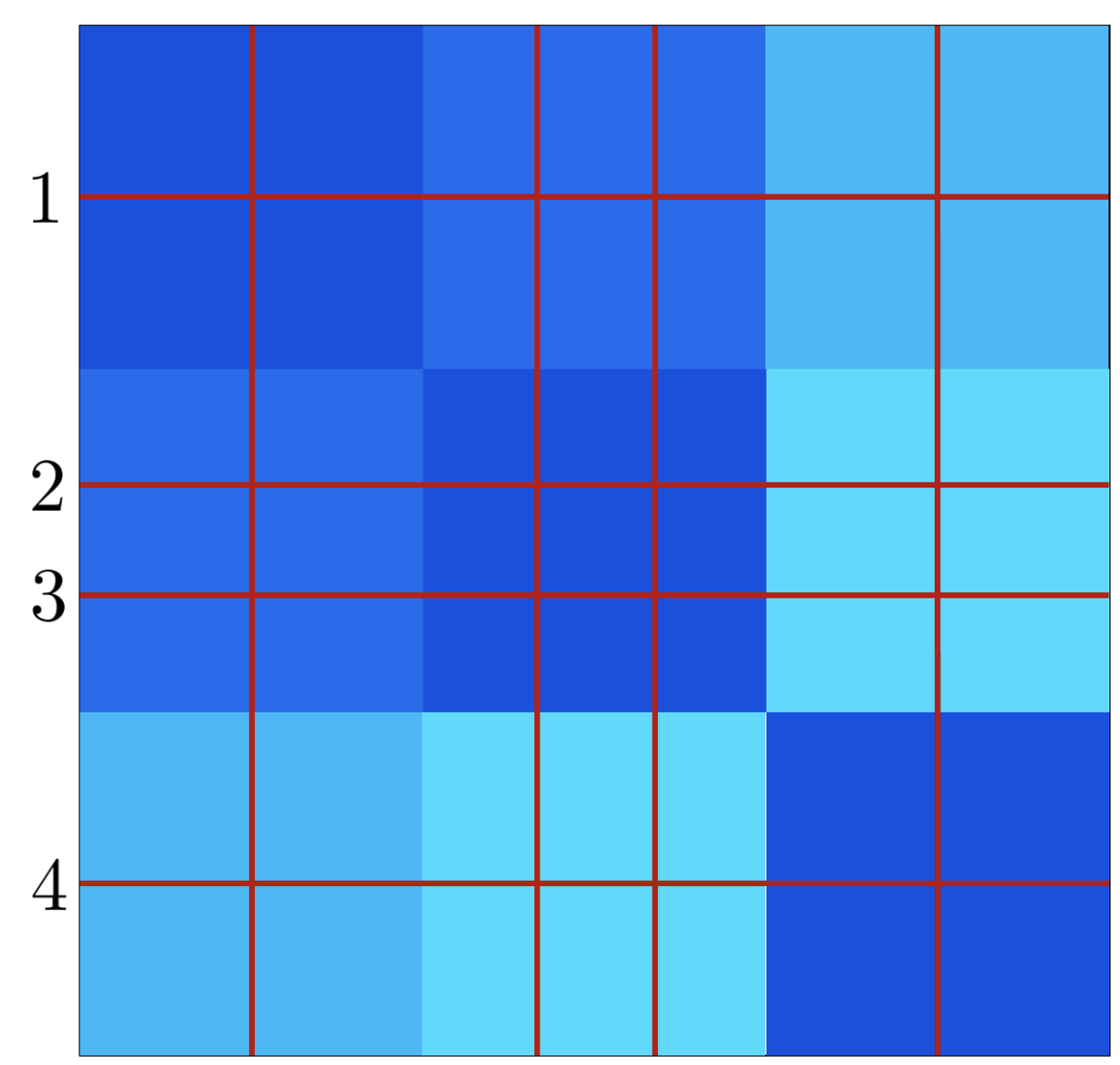}
	%\caption{fig2}
	\end{minipage}%
	\label{fig:relapos1}
	}%
	\hspace{1cm}
	\centering
	\subfigure[Transformed \ac{sbm} graphon and the correspondingly transformed agents.]{
	\begin{minipage}[t]{0.44\linewidth}
	\centering
	\includegraphics[width=1.5in]{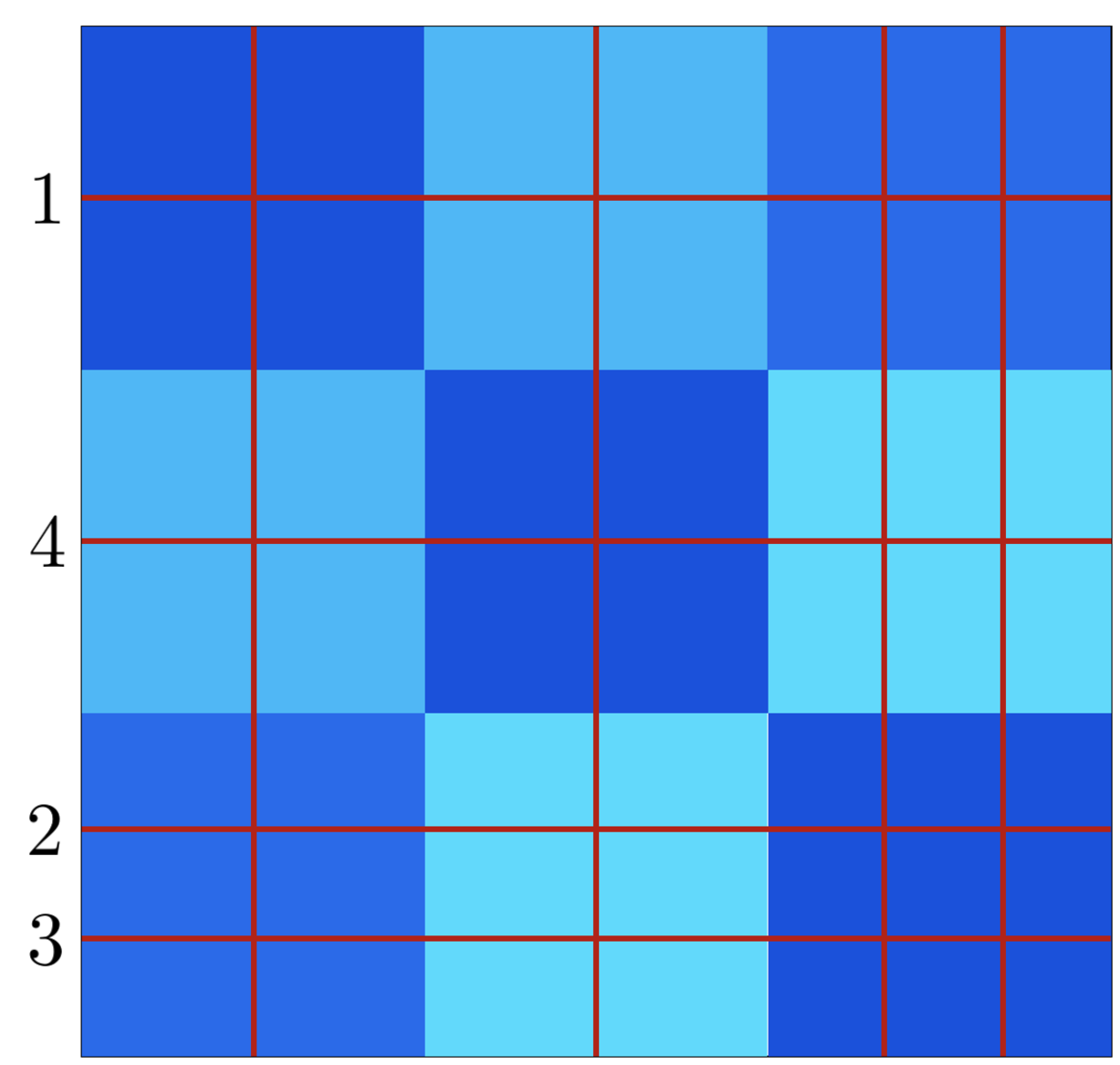}
	%\caption{fig2}
	\end{minipage}%
	\label{fig:relapos2}
	}%
	\vspace{-0.3cm}
	%\centering
	\caption{The left figure shows the \ac{sbm} graphon and four sampled agents. Swapping the second and the third communities, we obtain the graphon on the right. The sampled agents are correspondingly swapped. Although the graphons and agent positions in the left and the right figures are not the same, the agents in both figures retain the same ``relative positions'' with the underlying graphons.}
	\label{fig:relapos}
\end{figure}

In this setting, we cannot estimate the mean-embedding $\omega_{\tau,h}^{i}(W_{h}^{*})$ as Eqn.~\eqref{eq:homega}, since we do not know the agents' positions $\{\xi_{i}\}_{i=1}^{N}$. Instead, we need to estimate the ``relative positions'' of these agents. Here the relative positions refer to the relationship between the agents' positions and the underlying graphon. For example, in Figure~\ref{fig:relapos}, the agents retain the same relative positions in different graphons. With $N$ sampled agents, the relative positions can be represented by the permutation of these agents. We denote the set of all the permutations of $N$ objects as $\calC^{N}$, where $|\calC^{N}|=N!$. For a permutation $\sigma\in\calC^{N}$ and a graphon $W$, we estimate the relative position of $i^{\rm{th}}$ agent as $\sigma(i)/N$ for all $i\in[N]$. Then mean-field embedding estimate can be derived as
\begin{align}
    \hbomega_{\tau,h}^{i,\sigma}(W)=\frac{1}{(N-1)L}\sum_{j\neq i}\sum_{\tau^{\prime}=1}^{L}W\bigg(\frac{\sigma(i)}{N},\frac{\sigma(j)}{N}\bigg)k\big(\cdot,(s_{\tau,h}^{i},a_{\tau,h}^{i},s_{\tau^{\prime},h}^{j})\big).\label{eq:hbomega}
\end{align}
Similar as Eqn.~\eqref{eq:homega1}, Eqn.~\eqref{eq:hbomega} is also an  average  over $L$ episodes, since we implement the same policy for $L$  independent times. In this estimate, only the relative positions between agents and the underlying graphon matters, so we can equivalently express such estimate with a transformed graphon. We define $\hbomega_{\tau,h}^{i}(W)$ as $\hbomega_{\tau,h}^{i,\sigma}(W)$ with the identity map $\sigma$. The set of measure-preserving bijections that are permutations of the intervals $[(i-1)/N,i/N]$ for $i\in[N]$ is denoted as $\calC_{[0,1]}^{N}$. Then for some $\phi\in\calC_{[0,1]}^{N}$, the estimate in Eqn.~\eqref{eq:hbomega} can be reformulated as
\begin{align*}
    \hbomega_{\tau,h}^{i}(W^{\phi})=\frac{1}{(N-1)L}\sum_{j\neq i}\sum_{\tau^{\prime}=1}^{L}W\Big(\phi\big(i/N\big),\phi\big(j/N\big)\Big)k\big(\cdot,(s_{\tau,h}^{i},a_{\tau,h}^{i},s_{\tau^{\prime},h}^{j})\big).
\end{align*}
Given this mean-embedding estimate, our model estimation estimation procedure can be stated as
\begin{align}
    (\hatf_{h},\hatg_{h},\hatW_{h},\hphi_{h})\!=\!\!\argmin_{\substack{f\in\bbB(r,\bar{\calH}), \\ g\in\bbB(\tilr,\tilde{\calH}),\\W\in\tilde{\calW},\phi\in\calC_{[0,1]}^{N}}}\! \frac{1}{NL}\sum_{\tau=1}^{L}\sum_{i=1}^{N} \Big(s_{\tau,h+1}^{i}-f\big(\hbomega_{\tau,h}^{i}(W^{\phi})\big) \Big)^{2}+\Big(\!r_{\tau,h}^{i} - g\big( \hbomega_{\tau,h}^{i}(W^{\phi})\big)\Big)^{2}\!\!\!.\label{algo:inestalgo2}    
\end{align} 
We then specify the performance metric under this setting. As mentioned earlier, we cannot estimate the precise values of graphons. Thus, we measure the accuracy of our estimates by transforming the graphon estimate with the optimal measure-preserving bijections. Such a risk is known as the \emph{permutation-invariant risk}, which is defined as
\begin{align}
    \bar{\calR}_{\bar{\xi}}(f,g,W)&=\inf_{\phi\in\calB_{[0,1]}}\frac{1}{NL}\sum_{\tau=1}^{L}\sum_{i=1}^{N}\bbE_{\rho_{\tau,h}^{i}}\bigg[\Big(s_{\tau,h+1}^{i}-f\big(\omega_{\tau,h}^{i}(W^{\phi})\big)\Big)^{2}+\Big(r_{\tau,h}^{i}-g\big(\omega_{\tau,h}^{i}(W^{\phi})\big)\Big)^{2}\bigg]\nonumber\\
    &=\inf_{\phi\in\calB_{[0,1]}}\frac{1}{N}\sum_{i=1}^{N}\bbE_{\rho_{h}^{i}}\bigg[\Big(s_{h+1}^{i}-f\big(\omega_{h}^{i}(W^{\phi})\big)\Big)^{2}+\Big(r_{\tau,h}^{i}-g\big(\omega_{\tau,h}^{i}(W^{\phi})\big)\Big)^{2}\bigg],\label{eq:pincondirisk}
\end{align}
where $\rho_{h}^{i}=\rho_{\tau,h}^{i}$ for all $\tau\in[L]$. The term ``permutation-invariant'' comes from the the analogy between the permutations and measure-preserving bijections and the fact that $\bar{\calR}_{\bar{\xi}}(f,g,W)=\bar{\calR}_{\bar{\xi}}(f,g,W^{\phi})$ for any $\phi\in\calB_{[0,1]}$. Our convergence guarantee of the estimation procedure can be stated as follows.
\begin{theorem}\label{thm:ufixest}
    Under Assumptions~\ref{assump:graphon}, \ref{assump:kernel}, \ref{assump:real}, and \ref{assump:lipconti}, if $\{\xi_{i}\}_{i=1}^{N}=\{i/N\}_{i=1}^{N}$, then with probability at least $1-\delta$, the risk of the estimates in Eqn.~\eqref{algo:inestalgo2} can be bounded as
    \begin{align*}
        &\bar{\calR}_{\bar{\xi}}(\hatf_{h},\hatg_{h},\hatW_{h})-\bar{\calR}_{\bar{\xi}}(f_{h}^{*},g_{h}^{*},W_{h}^{*})\nonumber\\
        &\quad=O\bigg(\underbrace{\frac{B_{k}\barr\barL_{K}(B_{S}+\barr\barB_{K})}{N}}_{\text{agent sampling error}}+\underbrace{(B_{S}+\barr\barB_{K})\barr\barL_{K}\barB_{k}\sqrt{\frac{N}{L}}\log\frac{NL \calN_{\infty}(\sqrt{N/L},\tilde{\calW})}{\delta}}_{\text{state sampling error}}\\
        &\quad\qquad+\underbrace{\frac{(B_{S}+\barr\barB_{K})^{4}}{L}\log\frac{N\tilN_{\bbB_{r}}\tilN_{\tilde{\bbB}_{\tilr}}\tilN_{\infty}}{\delta}}_{\text{generalization error}}\bigg),
    \end{align*}
    where
    \begin{align*}
        \tilN_{\bbB_{r}}=\calN_{\bar{\calH}}\bigg(\frac{3}{L},\bbB(r,\bar{\calH})\bigg),\quad\tilN_{\tilde{\bbB}_{\tilr}}=\calN_{\tilde{\calH}}\bigg(\frac{3}{L},\bbB(\tilr,\tilde{\calH})\bigg),\quad \tilN_{\tilde{\calW}}=\calN_{\infty}\bigg(\frac{3}{L_{K}L},\tilde{\calW}\bigg).
    \end{align*}
\end{theorem}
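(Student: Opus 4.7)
The starting point is the equivariance property of Proposition~\ref{prop:equivari}. Since $\{\xi_i\}_{i=1}^N$ is an unknown permutation of the grid $\{i/N\}_{i=1}^N$, there exists a measure-preserving bijection $\phi^\star\in\calC_{[0,1]}^N$ with $\xi_i=\phi^\star(i/N)$, and by equivariance the collected data has the same joint law as that obtained from a GMFG with known grid positions $\{i/N\}$ and graphon $W_h^{*,\phi^\star}$. Because $\bar{\calR}_{\bar{\xi}}$ takes an infimum over $\phi\in\calB_{[0,1]}$, we have $\bar{\calR}_{\bar{\xi}}(f_h^*,g_h^*,W_h^*)=\bar{\calR}_{\bar{\xi}}(f_h^*,g_h^*,W_h^{*,\phi^\star})$, so $(f_h^*,g_h^*,W_h^{*,\phi^\star},\phi^\star)$ is a feasible quadruple in the program~\eqref{algo:inestalgo2} attaining the permutation-invariant population optimum. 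This reduces the problem to the known-grid regression of Corollary~\ref{coro:fixest}, with the extra ingredient of a search over $\phi\in\calC_{[0,1]}^N$.

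Writing $\hat{\calR}_{\bar{\xi}}(\cdot,\phi)$ for the empirical objective in~\eqref{algo:inestalgo2}, the standard three-term split gives
\begin{align*}
    \bar{\calR}_{\bar{\xi}}(\hatf_h,\hatg_h,\hatW_h)-\bar{\calR}_{\bar{\xi}}(f_h^*,g_h^*,W_h^*)\leq \bigl[\bar{\calR}_{\bar{\xi}}-\hat{\calR}_{\bar{\xi}}\bigr](\hatf_h,\hatg_h,\hatW_h,\hphi_h)+\bigl[\hat{\calR}_{\bar{\xi}}-\bar{\calR}_{\bar{\xi}}\bigr](f_h^*,g_h^*,W_h^{*,\phi^\star},\phi^\star),
\end{align*}
with the cross term dropped by optimality of the empirical minimizer. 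Each one-sided deviation decomposes further into a \emph{generalization gap} (empirical vs.\ population risk at the true mean-embedding $\omega$) and a \emph{mean-embedding gap} (risk with $\hbomega$ vs.\ with $\omega$). Once $\bar{\xi}$ is fixed, the only independence is across the $L$ i.i.d.\ rollouts (the $N$ agents inside a rollout are correlated through the aggregate), so Hoeffding on a loss bounded by $(B_S+\barr\barB_K)^2$ gives the $1/L$ rate of the generalization term; a covering of $\bbB(r,\bar{\calH})\times\bbB(\tilr,\tilde{\calH})\times\tilde{\calW}$ combined with a union bound over $\phi\in\calC_{[0,1]}^N$ supplies the remaining $\log N$ factor.

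For the mean-embedding gap, Lipschitzness of $f$ and of $K$ (Assumption~\ref{assump:kernel}) reduces the problem to controlling $\|\hbomega_{\tau,h}^i(W^\phi)-\omega_{\tau,h}^i(W^{*,\phi^\star})\|_\calH$. I split this into an \emph{agent-sampling error} from discretizing $\int_0^1 W(i/N,\beta)\mu_h^\beta\,d\beta$ as a Riemann sum over $\{j/N\}_{j=1}^N$, and a \emph{state-sampling error} from replacing $\mu_h^{\xi_j}$ by the empirical measure $\frac{1}{L}\sum_{\tau'}\delta_{s_{\tau',h}^j}$. The agent-sampling error is deterministic and of order $O(1/N)$, combining the Lipschitzness of $W$ (Assumption~\ref{assump:graphon}), of $k$ (Assumption~\ref{assump:kernel}), and of $\beta\mapsto\mu_h^\beta$ in the agent index (which propagates from Assumption~\ref{assump:lipconti} and a Lipschitz behavior policy, analogously to the NE-Lipschitzness argument of Appendix~\ref{app:nelip}). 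The state-sampling error is, for each fixed $(W,\phi)$, an average of $L$ i.i.d.\ bounded Hilbert-valued terms and thus $O(B_k/\sqrt{L})$ by a Hilbert-space Hoeffding bound.

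The crux, and the only place where unknown positions cost us over Corollary~\ref{coro:fixest}, is upgrading this state-sampling bound to hold uniformly over $\phi\in\calC_{[0,1]}^N$, a set of cardinality $N!$. I take a union bound over all $N!$ permutations together with an $\varepsilon$-net of $\tilde{\calW}$ at scale $\varepsilon=\sqrt{N/L}$, extending from the net to $\tilde{\calW}$ via Lipschitzness of $K$. The $\log(N!)=\Theta(N\log N)$ premium is precisely what converts the per-element $1/\sqrt{L}$ concentration rate into the $\sqrt{N/L}$ prefactor appearing in the state-sampling term, with the remaining logarithms collected into $\log\bigl(NL\,\calN_\infty(\sqrt{N/L},\tilde{\calW})/\delta\bigr)$; this explains the factor-$N$ degradation relative to the known-position rate $1/\sqrt{NL}$ of Corollary~\ref{coro:fixest}. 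Summing the agent-sampling $O(1/N)$, state-sampling $O(\sqrt{N/L}\cdot\log)$, and generalization $O(1/L\cdot\log)$ contributions yields the stated bound.
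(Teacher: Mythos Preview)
Your high-level architecture matches the paper's: reduce via equivariance to a known-grid problem with an unknown $\phi^\star\in\calC_{[0,1]}^N$, split into a generalization piece and a mean-embedding piece, decompose the latter into an agent-sampling (Riemann-sum) error and a state-sampling (empirical-measure) error, and absorb the search over $\calC_{[0,1]}^N$ through a $\log(N!)$ union bound. The $O(1/N)$ agent-sampling bound and the mechanism by which $\log(N!)\asymp N\log N$ inflates the concentration rate are both right.

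The generalization step, however, contains a real gap. Your premise that ``the $N$ agents inside a rollout are correlated through the aggregate'' is false in this model: as noted in Section~\ref{sec:datacollect}, once the behavior policy is fixed the aggregate $z_h^\alpha$ depends only on the \emph{population} flow $\mu_h^{\calI}$, not on the other agents' sampled states, so $(s_{\tau,h}^i)_{i=1}^N$ are mutually independent and the paper works with all $NL$ independent samples. More importantly, plain Hoeffding on a bounded loss never yields a $1/L$ rate; it gives $1/\sqrt{L}$. To obtain the stated $(B_S+\barr\barB_K)^4/L$ generalization term the paper uses a Bernstein/localization argument (Proposition~\ref{prop:frconcen}), whose key input is the realizability relation $\bbE[f_W^2]\le 4(B_S+rB_K)^2\,\bbE[f_W]$; this delivers a fast $1/(NL)$ rate, and only then does the $N!$ union bound degrade it to $1/L$. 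Your Hoeffding-with-$L$-samples argument would produce at best $(B_S+\barr\barB_K)^2\sqrt{N/L}$ for this piece, which does not reproduce the theorem's three-term decomposition (even though it would be absorbed into the state-sampling term up to constants). The same miscount appears in your state-sampling bound: $\hbomega_{\tau,h}^i$ averages $(N-1)L$ independent terms, not $L$, so the per-$(W,\phi)$ concentration is $O(B_k/\sqrt{NL})$; combined with the $N!$ union bound this also lands at $O(\sqrt{N/L})$ up to logarithms, so there your final rate is correct even though the intermediate count is off.
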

The estimation error in Theorem~\ref{thm:ufixest} consists of three terms: the first two terms correspond to the mean-embedding estimation error, and the last term corresponds to the generalization error. As mentioned in Section~\ref{sec:knownest}, the mean-embedding estimation error consists of agent sampling error and the state sampling error. The first term in the bound represents the agent sampling error. Since the distance between adjacent agents is $1/N$, this approximation error is of order $O(1/N)$. The second term represents the state sampling error. The term $\sqrt{N}$ in the numerator comes from the estimation of relative positions from $\calC^{N}$, whose size is $N!$, and the union bound among this set. The third term, which is the generalization error, also suffers from the union bound of $N!$ relative positions. Compared with Corollary~\ref{coro:fixest} in Section~\ref{sec:knownest}, the result in Theorem~\ref{thm:ufixest} suffers from a multiplicative factor $\log N!$. When the function classes are finite and $L=\Theta(N^{\beta})$ with $\beta>1$, the convergence rate in Theorem~\ref{thm:ufixest} is $O(\max\{N^{-(\beta-1)/2},N^{-1}\})$. In contrast, the convergence rate Corollary~\ref{coro:fixest} is $O(N^{-(\beta+1)/2})$.

Theorem~\ref{thm:ufixest} states the estimate error in the permutation-invariant risk. In fact, we can also derive the convergence rate of our estimation of relative positions $\hphi_{h}$. This means that for some unknown correction $\psi^{*}\in \calC_{[0,1]}^{N}$, the risk defined in Eqn.~\eqref{eq:condirisk} of our estimate $(\hatf_{h},\hatg_{h},\hatW_{h}^{\hphi_{h}\circ\psi^{*}})$ vanishes.
\begin{corollary}\label{coro:learnequi}
    Given $\{\xi_{i}\}_{i=1}^{N}=\{i/N\}_{i=1}^{N}$, we adopt $\psi^{*}\in \calC_{[0,1]}^{N}$ to denote the mapping that $\psi^{*}(\xi_{i})=i/N$ for all $i\in[N]$. Under Assumptions~\ref{assump:graphon}, \ref{assump:kernel}, \ref{assump:real}, and \ref{assump:lipconti},  the risk of estimate can be bounded as
    \begin{align*}
        &\calR_{\bar{\xi}}(\hatf_{h},\hatg_{h},\hatW_{h}^{\hphi_{h}\circ\psi^{*}})-\calR_{\bar{\xi}}(f_{h}^{*},g_{h}^{*},W_{h}^{*})\nonumber\\
        &\quad=O\bigg(\frac{B_{k}\barr\barL_{K}(B_{S}+\barr\barB_{K})}{N}+(B_{S}+\barr\barB_{K})\barr\barL_{K}\barB_{k}\sqrt{\frac{N}{L}}\log\frac{NL \calN_{\infty}(\sqrt{N/L},\tilde{\calW})}{\delta}\\
        &\quad\qquad+\frac{(B_{S}+\barr\barB_{K})^{4}}{L}\log\frac{N\tilN_{\bbB_{r}}\tilN_{\tilde{\bbB}_{\tilr}}\tilN_{\infty}}{\delta}\bigg)
    \end{align*}
    with probability at least $1-\delta$.
\end{corollary}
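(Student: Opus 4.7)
The plan is to reduce the claim to the bound already established in Theorem~\ref{thm:ufixest} by observing that post-composing the algorithm's learned bijection $\hphi_h$ with the ground-truth relabeling $\psi^*$ converts the permutation-invariant empirical objective of Eqn.~\eqref{algo:inestalgo2} into the empirical counterpart of the standard risk $\calR_{\bar{\xi}}$. The crucial identity is
\begin{align*}
\hatW_h^{\hphi_h \circ \psi^*}(\xi_i, \xi_j) = \hatW_h\bigl(\hphi_h(\psi^*(\xi_i)), \hphi_h(\psi^*(\xi_j))\bigr) = \hatW_h\bigl(\hphi_h(i/N), \hphi_h(j/N)\bigr),
\end{align*}
using $\psi^*(\xi_i) = i/N$. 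Hence $\hbomega_{\tau,h}^i(\hatW_h^{\hphi_h})$ from Eqn.~\eqref{eq:hbomega} is exactly the empirical approximation of the standard mean-embedding $\omega_{\tau,h}^i(\hatW_h^{\hphi_h \circ \psi^*})$ of the corrected graphon, with the $\beta$-integral replaced by a sum over the $N-1$ other sampled agents (whose actual positions are $\{\xi_j\}_{j \neq i}$) and each $\mu_{\tau,h}^\beta$ replaced by a single state sample.

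Using this identity, I would decompose
\begin{align*}
\calR_{\bar{\xi}}(\hatf_h, \hatg_h, \hatW_h^{\hphi_h \circ \psi^*}) - \calR_{\bar{\xi}}(f_h^*, g_h^*, W_h^*)
\end{align*}
into three pieces: (i) the concentration gap between the true and empirical risks at the estimator; (ii) the difference of empirical objectives between $(\hatf_h, \hatg_h, \hatW_h, \hphi_h)$ and the feasible plug-in $(f_h^*, g_h^*, W_h^*, \psi^{*,-1}) \in \bbB(r,\bar{\calH}) \times \bbB(\tilr,\tilde{\calH}) \times \tilde{\calW} \times \calC_{[0,1]}^N$; and (iii) the concentration gap at the true model. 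Term (ii) is nonpositive by the optimality in Eqn.~\eqref{algo:inestalgo2}, because applying the identity once more yields $W_h^{*, \psi^{*,-1} \circ \psi^*} = W_h^*$, so the plug-in's empirical value equals the empirical version of $\calR_{\bar{\xi}}(f_h^*, g_h^*, W_h^*)$. Terms (i) and (iii) are controlled by directly reusing the agent-sampling, state-sampling, and generalization-error arguments from the proof of Theorem~\ref{thm:ufixest}; the $\log N!$-type factors absorbed into the logarithmic terms are inherited from the union bound over $\calC_{[0,1]}^N$, which is unchanged since the algorithm's search space is the same.

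The main obstacle is to verify that the agent-sampling and state-sampling error estimates from the proof of Theorem~\ref{thm:ufixest} transfer to the \emph{specific} bijection $\hphi_h \circ \psi^*$ rather than to the minimizer inside the infimum defining $\bar{\calR}_{\bar{\xi}}$. This reduces to establishing a bound on $\|\omega_{\tau,h}^i(W^\phi) - \hbomega_{\tau,h}^i(W^\phi)\|_{\calH}$ that holds uniformly over $\phi \in \calC_{[0,1]}^N$, which is delivered by the Lipschitzness of graphons in $\tilde{\calW}$ (Assumption~\ref{assump:graphon}), together with the boundedness of the kernel $k$ (Assumption~\ref{assump:kernel}), both already exploited in that proof; hence no new technical tool is required beyond careful bookkeeping of which bijection each term is evaluated at.
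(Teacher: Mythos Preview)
Your proposal is correct and follows essentially the same route as the paper: define $\phi^*=(\psi^*)^{-1}$, split the risk difference into a generalization term, a mean-embedding estimation term, and an empirical risk difference that is nonpositive by optimality of~\eqref{algo:inestalgo2} at the feasible plug-in $(f_h^*,g_h^*,W_h^*,\phi^*)$, then reuse the uniform-in-$\phi$ bounds (inequalities~\eqref{ieq:45} and~\eqref{ieq:44}) from the proof of Theorem~\ref{thm:ufixest}. One small point worth being explicit about: the paper's generalization term is written in the ``localized'' form $\mathbb{E}[\cdot]-2\cdot(\text{empirical})$ so that Proposition~\ref{prop:frconcen} yields the fast $O(1/L)$ rate; your plain concentration-gap phrasing of (i) and (iii) would only give a slow rate unless you invoke that same device, but since you say you will reuse the Theorem~\ref{thm:ufixest} generalization argument verbatim, this is already covered.
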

Combined with Proposition~\ref{prop:equivari}, Corollary~\ref{coro:learnequi} shows that the model estimate $(\hatf_{h},\hatg_{h},\hatW_{h}^{\hphi_{h}})$ converges to the nominal model in the sense that they are shown to be equivalent up to an unknown  measure-preserving bijection $\psi^{*}\in \calC_{[0,1]}^{N}$.

\section{Combination of Optimization and Estimation Results}\label{sec:combine}
In this section, we make use of the estimator we constructed and analyzed in Section~\ref{sec:modellearn} to derive estimates in Step 4 in Algorithm~\ref{algo:MDcontract}. We assume that one has access to a population simulator; this assumption is commonly made in the \ac{mfg} literature~ \citep{guo2019learning,anahtarci2019fitted,anahtarci2022q}. This simulator is able to generate data according to two types of requests: (i) implement policy $\pi^{\calI}$ on the \ac{mdp} induced by a pre-specified distribution flow $\mu^{\calI}$, (ii) implement policy $\pi^{\calI}$ directly. In the latter case, the \ac{mdp} is induced by the distribution flow of $\pi^{\calI}$ itself. 

\begin{algorithm}[t]
	\caption{Estimation of $\hmu_{t}^{\calI}$, $\hbmu_{t+1}^{\calI}$, and $\hatQ_{h}^{\lambda,\alpha}(s,a,\pi_{t}^{\alpha},\hbmu_{t}^{\calI},\hatW)$}
	\textbf{Inputs:} the current policy $\pi_{t}^{\calI}$ and the past distribution flow estimate $\hbmu_{t}^{\calI}$\\
	\textbf{Outputs:} $\hmu_{t}^{\calI}$, $\hbmu_{t+1}^{\calI}$, and $\hatQ_{h}^{\lambda,\alpha}(\cdot,\cdot,\pi_{t}^{\alpha},\hbmu_{t}^{\calI},\hatW)$ for all $h\in[H]$, $\alpha\in\calI$\\
	\textbf{Procedure:}
	\begin{algorithmic}[1]\label{algo:gammaqest}
        \STATE Implement policy $\pi_{t}^{\calI}$ for $L$ times and collect the data $\{\calD_{\tau}\}_{\tau=1}^{L}$ (with any kind of sampled agents in Section~\ref{sec:modellearn})
        \STATE Derive the \ac{mdp} estimate $(\hatP,\hatr,\hatW)$ with the estimation procedures in Section~\ref{sec:modellearn}, where $\hatP=\{\hatP_{h}\}_{h=1}^{H}$, $\hatP=\{\hatP_{h}\}_{h=1}^{H}$, $\hatr=\{\hatr_{h}\}_{h=1}^{H}$, and $\hatW=\{\hatW_{h}\}_{h=1}^{H}$
        \STATE Derive $\hmu_{t}^{\calI}$ as the distribution flow of implementing $\pi_{t}^{\calI}$ on the \ac{mdp} estimate.
        \STATE Derive $\hbmu_{t+1}^{\calI}$ as $\hbmu_{t+1}^{\calI}=(1-\alpha_{t})\hbmu_{t}^{\calI}+\alpha_{t}\hmu_{t}^{\calI}$.
        \STATE Implement a behavior policy $\pi_{t}^{\rmb,\calI}$ on the \ac{mdp} induced by $\hbmu_{t}^{\calI}$ for $L$ times and collect the data $\{\calD_{\tau}^{\prime}\}_{\tau=1}^{L}$ (with any kind of sampled agents in Section~\ref{sec:modellearn})
        \STATE Derive the \ac{mdp} estimate $(\hatP^{\prime},\hatr^{\prime},\hatW^{\prime})$ with the estimation procedures in Section~\ref{sec:modellearn}
        \STATE Derive $\hatQ_{h}^{\lambda,\alpha}(\cdot,\cdot,\pi_{t}^{\alpha},\hbmu_{t}^{\calI},\hatW)$ as action-value functions of $\pi_{t}^{\calI}$ on the \ac{mdp} estimate $(\hatP^{\prime},\hatr^{\prime},\hatW^{\prime})$.
	\end{algorithmic}
\end{algorithm}
We use Algorithm~\ref{algo:gammaqest} to derive the distribution flow and action-value function estimate in Line 4 of Algorithm~\ref{algo:MDcontract}. In this algorithm, we call the simulator twice. First, we directly implement the policy $\pi_{t}^{\calI}$ for $L$ times independently. With the collected data, we can estimate the distribution flow $\mu_{t}^{\calI}$. Second, we implement a behavior policy $\pi_{t}^{\rmb,\calI}$ on the \ac{mdp} induced by $\hbmu_{t}^{\calI}$ for $L$ times. Then estimate the action-value functions with the collected data.

One natural question is that why we need to estimate the transition kernels and underlying graphons to estimate $\mu_{t}^{\calI}$. An alternative is to implement $\pi_{t}^{\calI}$ for $L$ times and estimate the distribution flow of the sampled agents as their empirical distribution. In fact, the convergence rate of the alternative will be $O(1/\sqrt{L})$ from central limit theorem. However, our estimate will shown to have risk bounded by $O(1/\sqrt{NL})$. This improvement is because our algorithm makes use of the information of all the agents, but the alternative only uses the information of single agent for the estimation.

To derive the theoretical guarantees on the accuracy of the distribution flow and action-value function estimates, we make the following assumptions.
\begin{assumption}\label{assump:noise}
    There exist $L_{\varepsilon}>0$ such that the noises $\varepsilon_{h}$ for $h\in[H]$ satisfy that for any $a\in\bbR$, $\tv(\varepsilon_{h}+a,\varepsilon_{h})\leq L_{\varepsilon}a$ for all $h\in[H]$.
\end{assumption}
This assumption enables us to control the total variation error of our transition kernels $P_{h}^{*}$ by the estimation error of $f_{h}^{*}$. We note that Assumption~\ref{assump:noise} is satisfied for a wide range of distributions, including the uniform distribution, the centralized Beta distributions for $\alpha>1,\beta>1$, and the truncated Gaussian distribution. We then assume that the behavior policy $\pi_{t}^{\rmb,\calI}$ satisfies the following assumptions.
\begin{assumption}\label{assump:policycover1}
    There exist two constants $C_{\pi},C_{\pi}^{\prime}>0$ such that for all $t\in[T]$
    \begin{align*}
        \sup_{s\in\calS,a\in\calA,\alpha\in\calI,h\in[H]}\frac{\bar{\pi}_{t,h}^{*,\alpha}(a\,|\,s)}{\pi_{t,h}^{\rmb,\alpha}(a\,|\,s)}\leq C_{\pi} \quad \mbox{and}\quad\sup_{s\in\calS,a\in\calA,\alpha\in\calI,h\in[H]}\frac{\pi_{t+1,h}^{\alpha}(a\,|\,s)}{\pi_{t,h}^{\rmb,\alpha}(a\,|\,s)}\leq C_{\pi}^{\prime}.
    \end{align*}
\end{assumption}
This assumption states that the behavior policy should explore the actions of the \ac{ne} and the policy $\pi_{t+1}^{\calI}$. It is quite natural since we want to estimate the action-value function of $\pi_{t+1}^{\calI}$ from the data collected by $\pi_{t}^{\rmb,\calI}$. Similar assumptions have been commonly made in the off-policy evaluation literature~\citep{kallus2021optimal,uehara2020minimax}.
\begin{assumption}\label{assump:policycover2}
    For any policy $\pi^{\calI}\in\tilde{\Pi}$, we define $\mu^{+,\calI}=\Gamma_{3}(\pi^{\calI},\barmu_{t}^{\calI},W^{*})$. We also define $\barmu_{t}^{\rmb,\calI}=\Gamma_{3}(\pi_{t}^{\rmb,\calI},\barmu_{t}^{\calI},W^{*})$. There exists a constant $C_{\pi}^{\prime\prime}>0$ such that for any $t\in[T]$ and any policy $\pi^{\calI}$ specified above, we have
    \begin{align*}
        \sup_{s\in\calS,h\in[H],\alpha\in\calI}\frac{\mu_{h}^{+,\alpha}(s)}{\barmu_{t,h}^{\rmb,\alpha}(s)}\leq C_{\pi}^{\prime\prime}.
    \end{align*}
\end{assumption}
This assumption states that the behavior policy should be sufficiently exploratory such that the induced distribution of other policies can be covered by that of the behavior policy. Similar assumptions haven been made in the policy optimization literatures~\citep{shani2020adaptive,agarwal2020optimality}. We note that if we take the behavior policy $\pi_{t}^{\rmb,\calI}=\unif(\calA)^{\calI\times H}$ to be the uniform distribution on the action space, then the constants in Assumptions~\ref{assump:policycover1} and \ref{assump:policycover2} can be set as $C_{\pi}=C_{\pi}^{\prime}=|\calA|$ and $C_{\pi}^{\prime\prime}=|\calA|^{H}$.

\subsection{Known-position Case}\label{sec:knownopt}
In this section, we analyze Algorithm~\ref{algo:MDcontract} and Algorithm~\ref{algo:gammaqest} when we know the positions (grid or random) of the sampled agents. In Algorithm~\ref{algo:gammaqest}, we know the distribution flow $\hbmu_{t}^{\calI}$ during our second call of the simulator. Thus, in Line 6 of Algorithm~\ref{algo:gammaqest}, we estimate the model from the collected and the precise value of the distribution flows. This estimation procedure can be acquired by simplifying the estimation procedure in Section~\ref{sec:knownfixest} as
\begin{align}
    (\hatf_{h},\hatg_{h},\hatW_{h})=\argmin_{\substack{f\in\bbB(r,\bar{\calH}), \\ g\in\bbB(\tilr,\tilde{\calH}), \\ W\in\tilde{\calW}}} \frac{1}{NL}\sum_{\tau=1}^{L}\sum_{i=1}^{N}\Big(s_{\tau,h+1}^{i}-f\big(\omega_{\tau,h}^{i}(W)\big)\Big)^{2}+\Big(r_{\tau,h}^{i}-g\big(\omega_{\tau,h}^{i}(W)\big)\Big)^{2},\label{eq:fixdistest}
\end{align}
where $\omega_{\tau,h}^{i}(W)$ is the mean-embedding calculated by Eqn.~\eqref{eq:meanembed} and the known distribution flow. Then the result for the agents with known grid positions is stated as
\begin{corollary}\label{coro:optknownfix}
    If we sample agents with known grid positions and adopt Algorithms~\eqref{algo:estalgo} and \eqref{eq:fixdistest} to estimate the \ac{mdp}, then under Assumptions~\ref{assump:noise}, \ref{assump:policycover1}, \ref{assump:policycover2} and assumptions in Theorems~\ref{thm:contractopt} and \ref{thm:fixest}, we have that \ac{gmfgppo} return the following estimates with probability at least $1-\delta$
    \begin{align*}
        &D\bigg(\frac{1}{T}\sum_{t=1}^{T}\pi_{t}^{\calI},\pi^{*,\calI}\bigg)+d\bigg(\frac{1}{T}\sum_{t=1}^{T}\hbmu_{t}^{\calI},\mu^{*,\calI}\bigg)\\
        &\quad=O\bigg(\frac{B_{S}+\barr\barB_{K}}{(NL)^{1/4}}\log^{1/4}\frac{TN_{\bbB_{r}}N_{\tilde{\bbB}_{\tilr}}N_{\tilde{\calW}}}{\delta}+\frac{(B_{S}+\barr\barB_{K})^{1/4}(\barr\barL_{K}B_{k})^{1/4}}{(NL)^{1/8}}\log^{1/4}\frac{TNL\calN_{\infty}(1/\sqrt{N},\tilde{\calW})}{\delta}\\
        &\quad\qquad+\frac{(B_{S}+\barr\barB_{K})^{1/4}(B_{S}+\barr\barB_{K}+\barr\barL_{K}B_{k})^{1/4}}{N^{1/4}}\bigg)+O\bigg(\frac{\sqrt{\log T}}{T^{1/3}}\bigg).
    \end{align*}
\end{corollary}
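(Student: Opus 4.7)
The plan is to bound the quantities $\varepsilon_\mu$ and $\varepsilon_Q$ from Assumption~\ref{assump:est} via the model-learning guarantees of Section~\ref{sec:knownfixest}, and then substitute them into Theorem~\ref{thm:contractopt}. The first simulator call inside Algorithm~\ref{algo:gammaqest} implements the same policy $\pi_t^{\calI}$ for $L$ independent rounds, so Corollary~\ref{coro:fixest} (rather than the multi-policy Theorem~\ref{thm:fixest}) applies and delivers a high-probability bound on $\calR_{\bar\xi}(\hat f_h,\hat g_h,\hat W_h)-\calR_{\bar\xi}(f_h^*,g_h^*,W_h^*)$ of order $1/(NL)$ (generalization) plus $1/\sqrt{NL}$ (mean-embedding). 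The second call operates on the known distribution flow $\hbmu_t^{\calI}$, so the simpler estimator in Eqn.~\eqref{eq:fixdistest} keeps only the $1/(NL)$ generalization piece. A union bound over the $T$ outer iterations at confidence $\delta/T$ accounts for the $\log T$ factor in the final rate.

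The second step is to translate these $L^2$-type risks into the norms demanded by Assumption~\ref{assump:est}. For $\varepsilon_\mu$, I would lift the $L^2$ control on $\hat f_h-f_h^*$ to a total-variation bound on $\hat P_h(\cdot\mid s,a,z)$ using the translation-regularity of the noise in Assumption~\ref{assump:noise}; a Cauchy--Schwarz step converts $L^2(\rho_h)$ into $L^1$ at the cost of a square root, and iterating the simulation lemma through $H$ steps, using the Lipschitzness of $z_h^\alpha(\mu^{\calI},W)$ in both arguments (Assumptions~\ref{assump:lipconti} and~\ref{assump:graphon}), propagates errors across time. For $\varepsilon_Q$, I would decompose $\hat Q_h^{\lambda,\alpha}-Q_h^{\lambda,\alpha}$ through the Bellman equations of the two $\lambda$-regularized MDPs, invoke Assumptions~\ref{assump:policycover1} and~\ref{assump:policycover2} as a change-of-measure between the behavior distribution generated by $\pi_t^{\rmb,\calI}$ on $\hbmu_t^{\calI}$ and the target distribution induced by an arbitrary $(\tilde\pi^\alpha,\mu^{\calI})$, and then exploit the finiteness of $\calA$ to convert the $L^2$ bound into the $\ell^\infty$ bound required. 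Because $\calR_{\bar\xi}$ averages only over the $N$ grid positions while $\varepsilon_\mu$ and $\varepsilon_Q$ integrate over all $\alpha\in\calI$, I would close the gap by a Riemann-sum argument on the $1/N$-grid: the Lipschitzness of $W_h^*$ in Assumption~\ref{assump:graphon}, combined with a short induction showing that Lipschitzness of the behavior policy in $\alpha$ implies Lipschitzness of $\mu_{t,h}^{\alpha}$ in $\alpha$, contributes an additive $O(1/\sqrt{N})$ approximation term.

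Substituting these scalings into the bound $O(\sqrt{\log T}/T^{1/3}) + O(\varepsilon_\mu + \sqrt{\varepsilon_Q+\varepsilon_\mu})$ of Theorem~\ref{thm:contractopt} gives the $(NL)^{-1/4}$ and $(NL)^{-1/8}$ rates from the model-learning terms and the residual $N^{-1/4}$ rate from the grid-to-continuum gap, in the combinations of $B_S,\barr,\barB_K,\barL_K,B_k$ shown in the corollary. The hard part is precisely this translation step: upgrading an empirical, mean-squared risk on a finite set of sampled agents into the pointwise, uniform distribution-flow and action-value bounds of Assumption~\ref{assump:est}, while simultaneously handling (i) the noise-to-TV conversion for $\hat P_h$, (ii) the coverage-based off-policy correction needed for $\hat Q$, and (iii) the Riemann approximation of an integral over $\calI$ by an average over the grid $\{i/N\}_{i=1}^N$. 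Tight coupling of these three sources is what determines the exponents $1/4$, $1/8$, and $1/4$ in the stated bound.
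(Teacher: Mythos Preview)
Your proposal tracks the paper's proof closely: apply Corollary~\ref{coro:fixest} and the simplified estimator~\eqref{eq:fixdistest} for the two simulator calls in Algorithm~\ref{algo:gammaqest}, pass from the grid to all of $\calI$ by a Lipschitz/Riemann argument (the paper obtains $O(1/N)$ at the risk level, Eqn.~\eqref{eq:20}, which becomes $N^{-1/4}$ after the two square roots), convert risk to $\varepsilon_\mu$ via the noise-to-TV mechanism of Assumption~\ref{assump:noise} (packaged in the paper as Proposition~\ref{prop:muerr}), treat the $Q$-error through the Bellman recursion under the coverage ratios (Proposition~\ref{prop:qerr}), and substitute into Theorem~\ref{thm:contractopt} with a union bound over the $T$ rounds.

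There is one technical gap. You aim to verify the $\ell^\infty$-over-actions bound in Assumption~\ref{assump:est} directly, invoking ``finiteness of $\calA$'' to upgrade an $L^2(\rho_h^{\rmb})$ estimate to the required $\|\cdot\|_\infty$ over $a$. That upgrade needs a uniform lower bound on $\pi_t^{\rmb}(a\,|\,s)$, which Assumptions~\ref{assump:policycover1}--\ref{assump:policycover2} do not supply: they bound only the ratios $\bar\pi_t^{*}/\pi_t^{\rmb}$, $\pi_{t+1}/\pi_t^{\rmb}$, and the state-marginal ratio. The paper sidesteps this by \emph{not} verifying Assumption~\ref{assump:est} literally. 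It re-enters the proof of Theorem~\ref{thm:contractopt} and, for the specific choice $p=\bar\pi_{t,h}^{*,\alpha}(\cdot\,|\,s_h)$ that is actually used downstream, replaces the crude bound on term~(I) by
\[
\big|\big\langle Q_h-\hat Q_h,\;\bar\pi_{t,h}^{*,\alpha}-\pi_{t+1,h}^{\alpha}\big\rangle\big|
\;\le\;(C_\pi+C_\pi')\,\sum_{a\in\calA}\big|Q_h(s_h,a)-\hat Q_h(s_h,a)\big|\,\pi_{t,h}^{\rmb,\alpha}(a\,|\,s_h),
\]
using Assumption~\ref{assump:policycover1}. This behavior-policy-weighted $L^1$ quantity is precisely what Proposition~\ref{prop:qerr} and the state-coverage Assumption~\ref{assump:policycover2} control. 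So the correct fix to your outline is to drop the $\ell^\infty$ detour and argue directly on the inner product inside the optimization proof.
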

The error of learning \ac{ne} consists of two types of terms. The first originates from the estimation error of the distribution flow and the action-value function. It involves the number of sampled agents $N$ and the number of episodes $L$. The second represents the optimization error and  involves the number of iterations $T$. Consider the case where the function classes are finite. To learn a \ac{ne} with error $\varepsilon$ measured according to $D(\cdot,\cdot)$ and $d(\cdot,\cdot)$, we can run Algorithms~\ref{algo:MDcontract} and \ref{algo:gammaqest} with $T=\tilO(\varepsilon^{-3})$ iterations and $O((NL)^{-1/8}+N^{-1/4})=\varepsilon$. The second condition can be achieved by several parameter settings, e.g., $L=1$, $N=O(\varepsilon^{-8})$ and $L=O(\varepsilon^{-4})$, $N=O(\varepsilon^{-4})$.

The result for the agents with known random positions is stated as follows.
\begin{corollary}\label{coro:optknownrand}
    If we sample agents with known random positions and adopt Algorithms~\eqref{algo:estalgo} and \eqref{eq:fixdistest} to estimate the \ac{mdp}, then under Assumptions~\ref{assump:noise}, \ref{assump:policycover1}, \ref{assump:policycover2} and assumptions in Theorems~\ref{thm:contractopt} and \ref{thm:randest}, we have that \ac{gmfgppo} return the following estimates with probability at least $1-\delta$
    \begin{align*}
        &D\bigg(\frac{1}{T}\sum_{t=1}^{T}\pi_{t}^{\calI},\pi^{*,\calI}\bigg)+d\bigg(\frac{1}{T}\sum_{t=1}^{T}\hbmu_{t}^{\calI},\mu^{*,\calI}\bigg)\\
        &\quad=O\bigg(\frac{B_{S}+\barr\barB_{K}}{(NL)^{1/4}}\log^{1/4}\frac{TN_{\bbB_{r}}N_{\tilde{\bbB}_{\tilr}}N_{\tilde{\calW}}}{\delta}+\frac{(B_{S}+\barr\barB_{K})^{1/4}(\barr\barL_{K}B_{k})^{1/4}}{(NL)^{1/8}}\log^{1/4}\frac{TNL\calN_{\infty}(1/\sqrt{NL},\tilde{\calW})}{\delta}\\
        &\quad\qquad+\frac{(B_{S}+\barr\barB_{K})^{1/2}}{N^{1/8}}\log^{1/4}\frac{\tilN_{\bbB_{r}}\tilN_{\tilde{\bbB}_{\tilr}}\tilN_{\tilde{\calW}}}{\delta}\bigg)+O\bigg(\frac{\sqrt{\log T}}{T^{1/3}}\bigg).
    \end{align*}
\end{corollary}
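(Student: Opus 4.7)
The plan is to instantiate Assumption~\ref{assump:est} with concrete bounds derived from Theorem~\ref{thm:randest} and then invoke Theorem~\ref{thm:contractopt}. At each iteration $t$, Algorithm~\ref{algo:gammaqest} solves two independent copies of the regression problem analyzed in Theorem~\ref{thm:randest}: one directly under $\pi_t^\calI$ to obtain the model used to form $\hmu_t^\calI$, and one under a behavior policy $\pi_t^{\rmb,\calI}$ rolled out on the MDP induced by $\hbmu_t^\calI$ to obtain the model used to evaluate $\hatQ_h^{\lambda,\alpha}$. In both cases the known-random-positions rate of Theorem~\ref{thm:randest} applies verbatim; a union bound over $t\in[T]$ replaces $\delta$ by $\delta/T$ and only inflates logarithmic factors.

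First I would bound $\varepsilon_\mu$. Because the additive noise $\varepsilon_h$ in~\eqref{eq:sd} is zero-mean and independent of the inputs, the excess risk $\calR(\hatf_h,\hatg_h,\hatW_h)-\calR(f_h^*,g_h^*,W_h^*)$ equals the squared $L_2$ prediction error of $\hatf_h\circ\homega(\hatW)$ against $f_h^*\circ\omega(W^*)$ under the sampling distribution, plus the analogous term for $\hatg_h$. Assumption~\ref{assump:noise} upgrades an $L_2$ error on predicted next-state means to a $\tv$ error on the conditional transition $\hatP_h(\cdot\,|\,s,a,z)$, incurring a factor $L_\varepsilon$. A standard simulation-lemma rollout then telescopes this per-step $\tv$ gap over the $H$ stages, using the Lipschitzness of $P_h^*$ and $r_h^*$ (Assumption~\ref{assump:lipconti}) together with the Lipschitzness of the graphons (Assumption~\ref{assump:graphon}) and the boundedness of the kernel (Assumption~\ref{assump:kernel}) to control the propagation of error through the nonlinear aggregate input $\omega^\alpha_h(\mu^\calI,W)$. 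Applying Jensen to convert squared $L_2$ to $L_1$ yields $\varepsilon_\mu$ of order $\sqrt{\cdot}$ of the bound in Theorem~\ref{thm:randest}.

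Next I would bound $\varepsilon_Q$. Here the estimates are built from data collected by the behavior policy $\pi_t^{\rmb,\calI}$ on the MDP induced by $\hbmu_t^\calI$, so Theorem~\ref{thm:randest} furnishes risk bounds with respect to that rollout distribution, whereas Assumption~\ref{assump:est} demands an expectation under an arbitrary test policy $\tilde\pi^\alpha$ with state distribution from $\mu^\calI$. Assumptions~\ref{assump:policycover1}--\ref{assump:policycover2} provide importance ratios $C_\pi,C_\pi',C_\pi''$ that handle this off-policy conversion, and a Bellman-residual recursion accumulates the one-step regression errors of $\hatf_h$ and $\hatg_h$ into a bound on $\|\hatQ_h^{\lambda,\alpha}-Q_h^{\lambda,\alpha}\|_\infty$ with an extra $H^2$ factor. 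Plugging these $\varepsilon_\mu$ and $\varepsilon_Q$ into Theorem~\ref{thm:contractopt} yields the stated rate: the $(NL)^{-1/4}$ term arises as $\sqrt{(NL)^{-1/2}}$ from the generalization component, the $(NL)^{-1/8}$ term is the nested $\sqrt{\varepsilon_Q+\varepsilon_\mu}$ applied to the same $(NL)^{-1/2}$ rate, and the $N^{-1/8}$ term is the square root of the $N^{-1/2}$ approximation-error piece unique to the known-random-positions result (absent from the known-grid case of Corollary~\ref{coro:fixest}).

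The main obstacle will be the simulation-lemma propagation. The input $\omega_{\tau,h}^\alpha(W)$ to $\hatf_h$ depends nonlinearly on both $W$ and on the entire flow $\mu^\calI$ up to time $h$, so errors in $\hatW_h$ and in the rolled-out $\hat\mu_{[1:h]}^\calI$ couple across the horizon. Controlling $\|\omega_{\tau,h}^\alpha(\hatW)-\omega_{\tau,h}^\alpha(W^*)\|_\calH$ requires combining the kernel Lipschitzness in Assumption~\ref{assump:kernel} with the graphon Lipschitzness in Assumption~\ref{assump:graphon}, and this aggregate-input perturbation must be composed with the function-class regression error through the one-step transition before telescoping. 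A secondary subtlety is that the i.i.d.\ positions $\{\xi_i\}$ couple Theorem~\ref{thm:randest}'s guarantee across iterations; resampling fresh positions each round (or otherwise conditioning on a high-probability good event for $\bar\xi$) keeps the per-iteration union bound clean.
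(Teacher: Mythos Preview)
Your overall strategy matches the paper's: the paper's proof of Corollary~\ref{coro:optknownrand} literally reads ``same as the proof of Corollary~\ref{coro:optknownfix}, except that we use the bound in Theorem~\ref{thm:randest} instead of Theorem~\ref{thm:fixest},'' and that proof in turn follows exactly your outline of (i) a simulation-lemma rollout (Proposition~\ref{prop:muerr}) to bound $\varepsilon_\mu$, (ii) a Bellman recursion under the behavior distribution (Proposition~\ref{prop:qerr}) for the $Q$-error, and (iii) substitution into Theorem~\ref{thm:contractopt}. Your identification of the extra $N^{-1/8}$ term as coming from the approximation piece of Theorem~\ref{thm:randest} is also correct.

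There is one genuine technical point where your plan and the paper diverge. You propose to bound $\|\hatQ_h^{\lambda,\alpha}-Q_h^{\lambda,\alpha}\|_\infty$ and then feed it into Theorem~\ref{thm:contractopt} as a black box via $\varepsilon_Q$ in Assumption~\ref{assump:est}. But the regression guarantee of Theorem~\ref{thm:randest} is in $L_2$ under the behavior state-action distribution, and Assumptions~\ref{assump:policycover1}--\ref{assump:policycover2} only bound ratios of $\barpi_t^{*,\alpha}$ and $\pi_{t+1}^\alpha$ (and state densities) to the behavior distribution; they do not bound $1/\pi_{t,h}^{\rmb,\alpha}(a\,|\,s)$ uniformly, so you cannot in general recover the $\sup$-over-actions required by Assumption~\ref{assump:est}. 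The paper avoids this by \emph{not} treating Theorem~\ref{thm:contractopt} as a black box: in Step~3 of the proof of Corollary~\ref{coro:optknownfix} it re-opens the optimization analysis at term~(I) (inequalities~\eqref{ieq:75}--\eqref{ieq:77}) and bounds the inner product $\langle Q-\hatQ,\,\barpi_t^{*,\alpha}-\pi_{t+1,h}^\alpha\rangle$ directly by $(C_\pi+C_\pi')\,\bbE_{a\sim\pi_{t,h}^{\rmb,\alpha}}[|Q-\hatQ|]$, which is exactly the quantity Proposition~\ref{prop:qerr} controls. This replaces $\varepsilon_Q$ by an $L_1$-under-behavior surrogate rather than the $L_\infty$-over-actions one. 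Your sketch should incorporate this re-entry into the proof of Theorem~\ref{thm:contractopt}; otherwise the plan is sound.
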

Similar as Corollary~\ref{coro:optknownfix}, error of learning \ac{ne} consists of the estimation error and the optimization error. To learn a \ac{ne} with error $\varepsilon$ measured according to $D(\cdot,\cdot)$ and $d(\cdot,\cdot)$, we can run Algorithms~\ref{algo:MDcontract} and \ref{algo:gammaqest} with $T=\tilO(\varepsilon^{-3})$ iterations and $N=O(\varepsilon^{-8})$ sampled agents.

\subsection{Unknown-position Case}\label{sec:unknownopt}
In this section, we analyze Algorithms~\ref{algo:MDcontract} and~\ref{algo:gammaqest} when we do not know the grid positions of the sampled agents. In Algorithm~\ref{algo:gammaqest}, we need to specify the policy $\pi_{t}^{\calI}$ and distribution flow $\hbmu_{t}^{\calI}$, which requires the  information of agents' positions. Thus, we additionally assume that for a specific agent $\alpha\in\calI$, we know which sampled agent is closest to $\alpha$ and the relative position to the closest sampled agent. This assumption holds in many realistic scenarios. For example, we may wish to find the \ac{ne} of the U.S.\ social welfare problem, which can be formulated as a \ac{gmfg}. The simulator can be a computer program that can simulate the influence of the policies $\pi^{\alpha}$ and the distributions of state $\mu^{\alpha}$ of the people in the U.S. In this case, there is one sampled person from each state, and we assume that each person knows which state she belongs to, i.e., which sampled person is the closest person to her. 

\begin{corollary}\label{coro:optunknownfix}
    If we sample agents with known grid positions and adopt Algorithms~\eqref{algo:estalgo} and \eqref{eq:fixdistest} to estimate the \ac{mdp}, then under Assumptions~\ref{assump:noise}, \ref{assump:policycover1}, \ref{assump:policycover2} and assumptions in Theorems~\ref{thm:contractopt} and \ref{thm:ufixest}, we have that \ac{gmfgppo} return the following estimates with probability at least $1-\delta$
    \begin{align*}
        &D\bigg(\frac{1}{T}\sum_{t=1}^{T}\pi_{t}^{\calI},\pi^{*,\calI}\bigg)+d\bigg(\frac{1}{T}\sum_{t=1}^{T}\hbmu_{t}^{\calI},\mu^{*,\calI}\bigg)\\
        &\quad=O\bigg(\frac{B_{k}\barr\barL_{K}(B_{S}+\barr\barB_{K})}{N^{1/4}}+\frac{(B_{S}+\barr\barB_{K})^{1/4}(\barr\barL_{K}\barB_{k}N)^{1/4}N^{1/8}}{L^{1/8}}\log^{1/4}\frac{NL \calN_{\infty}(\sqrt{N/L},\tilde{\calW})}{\delta}\\
        &\quad\qquad+\frac{B_{S}+\barr\barB_{K}}{L^{1/4}}\log^{1/4}\frac{N\tilN_{\bbB_{r}}\tilN_{\tilde{\bbB}_{\tilr}}\tilN_{\infty}}{\delta}\bigg)+O\bigg(\frac{\sqrt{\log T}}{T^{1/3}}\bigg).
    \end{align*}
\end{corollary}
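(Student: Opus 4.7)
The plan is to instantiate the estimation oracle assumed in Theorem~\ref{thm:contractopt} with the model returned by Algorithm~\ref{algo:gammaqest} using the unknown-grid estimator in \eqref{algo:inestalgo2}, then translate the risk bound of Theorem~\ref{thm:ufixest} into values of $\varepsilon_\mu$ and $\varepsilon_Q$ that can be fed into Theorem~\ref{thm:contractopt}. The first task is to neutralise the bijection ambiguity that arises because in the unknown-position regime the graphon is only identified up to a measure-preserving permutation $\hat{\phi}_h$. By Corollary~\ref{coro:learnequi} there exists $\psi^{*}\in\calC^{N}_{[0,1]}$ such that the standard (not permutation-invariant) risk of $(\hat{f}_{h},\hat{g}_{h},\hat{W}_{h}^{\hat{\phi}_{h}\circ\psi^{*}})$ is controlled by the quantity in Theorem~\ref{thm:ufixest}. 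Proposition~\ref{prop:equivari} then ensures that replacing $W^{*}_{h}$ by this representative leaves all distribution flows, action-value functions, and cumulative rewards of interest unchanged up to the same measure-preserving bijection, which in turn does not affect $D(\cdot,\cdot)$ and $d(\cdot,\cdot)$ because these are integrals over the whole interval $[0,1]$.

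Once the bijection has been removed, I would turn the mean-squared regression bound of Theorem~\ref{thm:ufixest} into an $L_{1}$/total-variation bound on the transition kernel and reward estimates. Because of Assumption~\ref{assump:noise}, the additive noise has a density Lipschitz in its location parameter, so $\tv(\hat{P}_{h}(\cdot\,|\,s,a,z),P_{h}^{*}(\cdot\,|\,s,a,z))\lesssim L_{\varepsilon}\,|\hat{f}_{h}-f_{h}^{*}|(\omega)$, and Jensen yields an $L_{1}$ bound from the $L_{2}$ risk at the cost of a square root. Plugging this into a one-step simulation lemma and telescoping across the horizon $H$ using Assumption~\ref{assump:lipconti} controls $d\!\big(\hat{\Gamma}_{2}(\pi_{t}^{\calI},\hat{W}),\Gamma_{2}(\pi_{t}^{\calI},W^{*})\big)$, giving $\varepsilon_{\mu}$. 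For $\varepsilon_{Q}$, I would run the Bellman recursion on the estimated model and compare term-by-term with the true one; the distribution shift between the behavior policy $\pi_{t}^{\mathrm{b},\calI}$ used to collect the second dataset and the target policies $\bar{\pi}_{t}^{*,\calI}$ and $\pi_{t+1}^{\calI}$ is absorbed by Assumptions~\ref{assump:policycover1} and~\ref{assump:policycover2}, which bound the importance weights by $C_{\pi}$, $C_{\pi}^{\prime}$ and $C_{\pi}^{\prime\prime}$. Changing measure to the on-policy distribution and applying the Lipschitz continuity of $r^{*}$ and $P^{*}$ in the aggregate then delivers $\varepsilon_{Q}$ at the same $L_{2}$ rate times a horizon-dependent constant.

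The final step is a plug-in. Inserting the bound from Theorem~\ref{thm:ufixest} we obtain
\[
\varepsilon_{\mu}+\varepsilon_{Q}\;\lesssim\;\frac{B_{k}\bar{r}\bar{L}_{K}(B_{S}+\bar{r}\bar{B}_{K})}{N}+(B_{S}+\bar{r}\bar{B}_{K})\bar{r}\bar{L}_{K}\bar{B}_{k}\sqrt{\tfrac{N}{L}}\log\tfrac{NL\,\calN_{\infty}(\sqrt{N/L},\tilde{\calW})}{\delta}+\tfrac{(B_{S}+\bar{r}\bar{B}_{K})^{4}}{L}\log\tfrac{N\tilde{N}_{\bbB_{r}}\tilde{N}_{\tilde{\bbB}_{\tilr}}\tilde{N}_{\infty}}{\delta},
\]
up to polynomial factors in $H$, $|\calA|$, $C_{\pi}$, $C_{\pi}^{\prime}$, $C_{\pi}^{\prime\prime}$, $L_{\varepsilon}$, $L_{P}$, $L_{r}$. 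A union bound over $t\in[T]$ contributes the extra $\log T$ factor. Theorem~\ref{thm:contractopt} then yields the total bound $\tilO(T^{-1/3})+O(\varepsilon_{\mu}+\sqrt{\varepsilon_{Q}+\varepsilon_{\mu}})$; the $\sqrt{\cdot}$ wrapper converts the $N^{-1}$, $(N/L)^{1/2}$, and $L^{-1}$ terms above into the $N^{-1/4}$, $(N/L)^{1/4}N^{1/8}/L^{1/8}$, and $L^{-1/4}$ terms that appear in the statement, which is exactly the claimed rate.

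The main obstacle I foresee is the second step: carefully propagating the permutation-invariant mean-squared risk from Theorem~\ref{thm:ufixest} through the Bellman recursion to produce a uniform $\varepsilon_{Q}$ bound. Two issues have to be handled simultaneously: first, the oracle estimate of the aggregate $z_{h}^{\alpha}$ is based on \emph{sampled} agents while the required value function is defined for every $\alpha\in[0,1]$, so the Lipschitzness of $\tilde{\calW}$ and of the kernel $K$ (Assumptions~\ref{assump:graphon}, \ref{assump:kernel}) must be invoked to generalise from observed agents to the continuum; second, the non-stationary distribution flow $\hat{\bar{\mu}}_{t}^{\calI}$ used to generate off-policy data varies with $t$, so the coverage constants and the concentrability coefficient of Assumption~\ref{assump:optconcen} must be shown to control the variance uniformly in $t$. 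Once these two technicalities are settled, the rest of the combination is essentially a substitution of the model-learning rate into the optimization rate.
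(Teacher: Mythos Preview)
Your proposal is correct and follows essentially the same route as the paper: reduce the permutation ambiguity via Corollary~\ref{coro:learnequi}/Proposition~\ref{prop:equivari}, convert the $L_2$ risk of Theorem~\ref{thm:ufixest} into $\varepsilon_\mu$ and $\varepsilon_Q$ using Assumption~\ref{assump:noise} plus a simulation/Bellman argument under the coverage Assumptions~\ref{assump:policycover1}--\ref{assump:policycover2}, extend from the grid to $[0,1]$ by Lipschitzness, and substitute into Theorem~\ref{thm:contractopt}. The only ingredient you leave implicit that the paper spells out is how to specify the pre-given distribution flow $\hbmu_t^{\calI}$ to the simulator when positions are unknown---the paper handles this by concatenating $\{\mu^{(\xi_i-1/N,\xi_i]}\}_{i=1}^N$ into a surrogate $\tilde{\mu}^{\calI}$ and proving a separate Corollary~\ref{coro:unknownfixest} for that estimator, after which the remaining steps are exactly those of Corollary~\ref{coro:optknownfix}.
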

Similar to Corollaries~\ref{coro:optknownfix} and \ref{coro:optknownrand}, the learning error in Corollary~\ref{coro:optunknownfix} consists of the estimation error and the optimization error. To learn a \ac{ne} with error $\varepsilon$ measured according to $D(\cdot,\cdot)$ and $d(\cdot,\cdot)$, we can run Algorithms~\ref{algo:MDcontract} and \ref{algo:gammaqest} with $T=\tilO(\varepsilon^{-3})$ iterations, $N=O(\varepsilon^{-4})$ sampled agents, and $L=O(\varepsilon^{-12})$ episodes.

\section{Experiments}\label{sec:expts}
In this section, we utilize simulations to demonstrate the importance of learning the underlying graphons, thus corroborating our theoretical results. We simulate our algorithm on the \ac{sis} problem, which is widely adopted in  previous \ac{gmfg} and \ac{mfg} works~\citep{cui2021learning,cui2021approximately}. The underlying graphons take the form of exp-graphons and \ac{sbm} graphons. The exp-graphons are defined as
\begin{align*}
    W_{\theta}^{\rm exp}(\alpha,\beta)=\frac{2\exp(\theta\cdot \alpha\beta)}{1+\exp(\theta\cdot \alpha\beta)}-1,
\end{align*}
which is parameterized by $\theta>0$. The details of the experiments are provided in Appendix~\ref{app:experiment}. Since we do not know the nominal value of the \ac{ne}, we adopt the notion of \emph{exploitability} to measure the closeness between a policy and the \ac{ne}. For a policy $\pi^{\calI}$ and its induced distribution flow $\mu^{\calI}$, the exploitability is defined as~\citep{fabian2022learning}
\begin{align*}
    \Delta(\pi^{\calI})=\int_{0}^{1}\sup_{\tilde{\pi}^{\alpha}\in\Pi^{H}}J^{\lambda,\alpha}(\tilde{\pi}^{\alpha},\mu^{\calI},W^{*})-J^{\lambda,\alpha}(\pi^{\alpha},\mu^{\calI},W^{*})\, \rmd\alpha.
\end{align*}

If we do not learn the underlying graphons, reasonable guesses for them would be constant graphons $W(\alpha,\beta)=p$ for all $\alpha,\beta\in\calI$, corresponding to the \ac{mfg}. In the simulations, we choose the constant $p$ to be  $0,0.5$ and $1$. These values model the cases from the independent agents to the most intensely interacting agents. To learn the system model, we sample $N=7$ agents with known positions. The number of episodes for data collection $L$ is set to $125$ and $500$. 
\begin{figure}[t]
	\centering
	\subfigure[\ac{sis} problem with exp-graphons]{
	\begin{minipage}[t]{0.4\linewidth}
	\centering
	\includegraphics[width=2.9in]{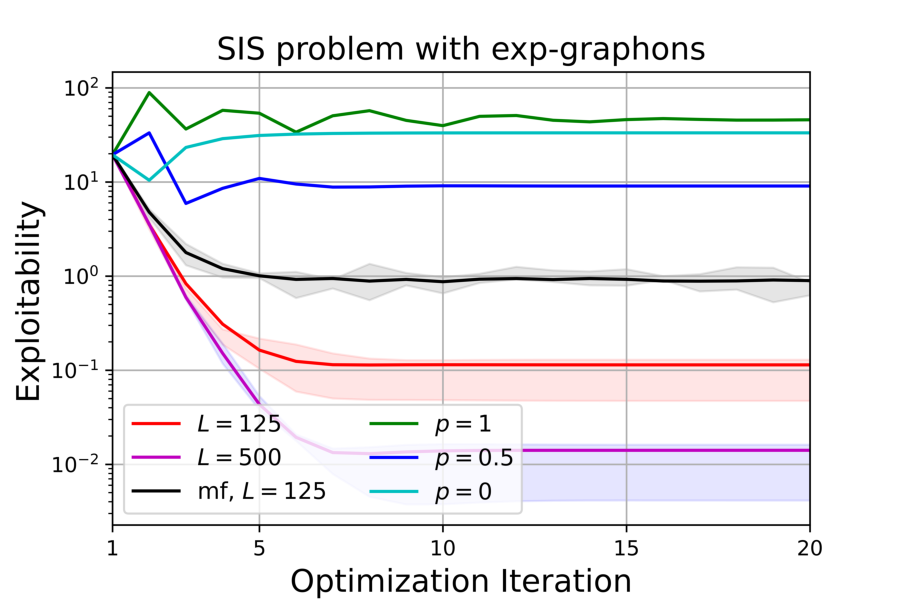}
	%\caption{fig2}
	\end{minipage}%
	\label{fig:simu_exp}
	}%
	\hspace{1cm}
	\centering
	\subfigure[\ac{sis} problem with \ac{sbm} graphons.]{
	\begin{minipage}[t]{0.4\linewidth}
	\centering
	\includegraphics[width=2.9in]{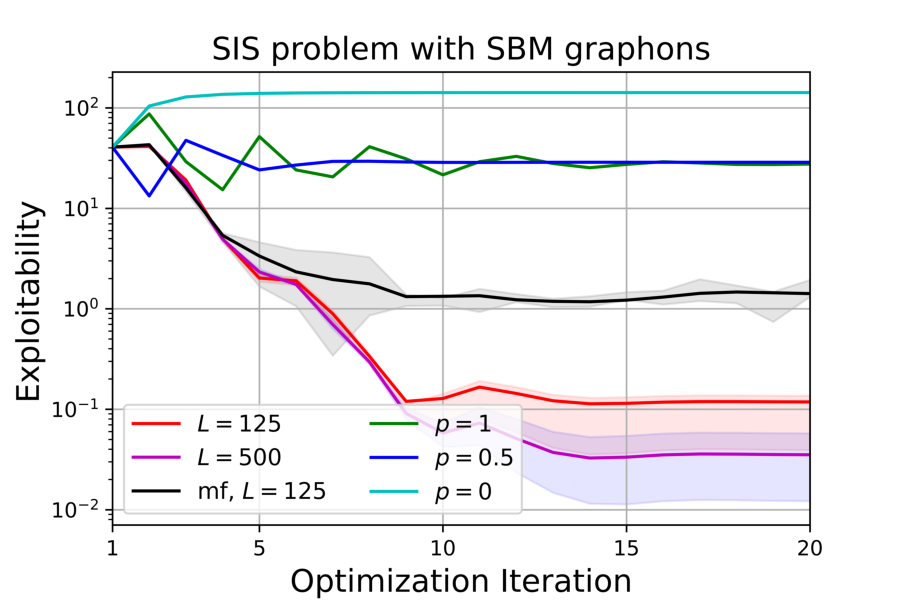}
	%\caption{fig2}
	\end{minipage}%
	\label{fig:simu_sbm}
	}%
	\vspace{-0.3cm}
	\centering
	\caption{Simulation results for \ac{sis} problem with \ac{sbm} and exp-graphons.}
	\label{fig:simu}
	%\vspace{-0.5cm}
\end{figure}

Figure~\ref{fig:simu} displays the exploitability for the algorithms in the \ac{sis} problem with different graphons. The line ``mf, $L=125$'' refers to the model-free algorithm in \cite{cui2021learning} that uses 125 trajectories for distribution flow and value function estimation in each round. The lines ``$L=125$'' and ``$L=500$'' refer to our algorithms that estimate with 125 and 500 samples in each round. Figure~\ref{fig:simu} demonstrates that our model-based algorithm achieves lower exploitability than the model-free algorithm. The reason is that the estimation error of model-based algorithm is smaller, as mentioned in Section~\ref{sec:combine}. Figure~\ref{fig:simu} also shows that when we assume that the heterogeneous agents are homogeneous, the learning algorithm for \ac{ne} will suffer from a large error (large exploitability). In contrast, learning the graphons will enable us to learn the \ac{ne} more accurately. These results demonstrate the necessity of our model learning algorithm in Algorithm~\ref{algo:gammaqest}. We can also observe that the learning error for $L=500$ is less than that for $L=125$, which corroborates Corollary~\ref{coro:optknownfix}.

\section{Conclusion}\label{sec:conclusion}
In this paper, we investigated learning the \ac{ne} of \ac{gmfg} in the graphons incognizant case. Provably efficient optimization algorithms were designed and analyzed with an estimation oracle, which improved on the previous works in convergence rate. In addition, adopting the mean-embedding ideas, we designed and analyzed the model-based estimation algorithms with sampled agents. Here, the sampled agents have known or unknown positions. These estimation algorithms feature as the first model-based algorithms in \ac{gmfg} without the distribution flow information. We leave the analysis of more complex agent sampling schemes for future works.

\bibliographystyle{ims}
\vskip 0.2in
\bibliography{ref}

\clearpage
\appendix

\begin{center}
{\Large {\bf Appendix for\\ ``Learning Graphon Mean-Field Games with Unknown Graphons''}}
\end{center}

\section{Experiment Details}\label{app:experiment}
In this section, we provide the details of our experiments shown in Section~\ref{sec:expts} of the main paper. We first define the susceptible–infected–susceptible (\ac{sis}) problem. The state space of this problem $\calS=\{S,I\}$ consists of the states $S$ (susceptible) and $I$ (infected). The action space $\calA=\{U,D\}$ consists of the actions  $U$ (going out) and $D$ (keeping distance). The horizon is $H=50$. The reward functions are defined as $r_{h}^{*}(s,a,z)=-10\cdot\mathbb{I}_{s=I}-2.5\cdot\mathbb{I}_{s=D}$ for all $h\in[H]$. The transition kernels are defined as
\begin{align*}
    P_{h}^{*}(S\,|\,I,\cdot,\cdot)=0.2,\quad P_{h}^{*}(I\,|\,S,U,z)=0.8\cdot z(I),\quad P_{h}^{*}(I\,|\,S,D,\cdot)=0 \text{ for all }h\in[H].
\end{align*}
We set the regularization parameter as $\lambda=1$. We set $\theta=3$ for exp-graphons in our simulation. For the \ac{sbm} graphon, we set the community number as $2$ and set the intra-community and inter-community interaction strengths as $0.9$ and $0.3$, respectively. We adopt exploitability as the performance metric. For a policy $\pi^{\calI}$ and its induced distribution flow $\mu^{\calI}$, the exploitability is defined as~\citep{fabian2022learning}
\begin{align}
\Delta(\pi^{\calI})=\int_{0}^{1}\sup_{\tilde{\pi}^{\alpha}\in\Pi^{H}}J^{\lambda,\alpha}(\tilde{\pi}^{\alpha},\mu^{\calI},W^{*})-J^{\lambda,\alpha}(\pi^{\alpha},\mu^{\calI},W^{*})\, \rmd\alpha. \label{eqn:exploit}
\end{align}

To shorten the simulation time and convey the main massage, we only estimate the model in the beginning of the first iteration round and reuse this estimate in the following iterations to generate action-value function estimates. Figure~\ref{fig:simu} is derived from five Monte-Carlo implementations of the algorithms. The error bar indicates the 25\% and the 75\% quantile of the errors. When simulating the cases with constant graphons, we implement the fixed point iteration to find the \ac{ne}, and the calculations of the optimal policy and the induced distribution flow are implemented via the dynamical programming and direct calculation with the nominal transition kernels and reward functions. Thus, there is no error bar for these cases.

The code used in our simulations uses the code in \cite{fabian2022learning} and \cite{cui2021learning} for building the simulation environment. We run our simulations on Intel(R) Core(TM) i5-8257U CPU @ 1.40GHz, and each Monte-Carlo experiment takes about ninety minutes.

\section{Proof of Proposition~\ref{prop:equivari}}
\begin{proof}[Proof of Proposition~\ref{prop:equivari}]
    We prove the desired results by induction on $h\in[H]$. When $h=1$, $\mu_{1}^{\phi(\alpha)}=\mu_{1}^{\phi,\alpha}$ holds trivially for all $\alpha\in\calI$. Assume that $\mu_{h}^{\phi(\alpha)}=\mu_{h}^{\phi,\alpha}$ holds for all $\alpha\in\calI$, then for $h+1$ and any $\alpha\in\calI$ we have that
    \begin{align*}
        \mu_{h+1}^{\phi(\alpha)}(s^{\prime})&=\sum_{a\in\calA}\int_{\calS}\mu_{h}^{\phi(\alpha)}(s)\pi_{h}^{\phi(\alpha)}(a\,|\,s)P_{h}^{*}(s^{\prime}\,|\,s,a,z_{h}^{\phi(\alpha)}(\mu_{h}^{\calI},W_{h}^{*}))\, \rmd s\qquad\mbox{and}\\
        \mu_{h+1}^{\phi,\alpha}(s^{\prime})&=\sum_{a\in\calA}\int_{\calS}\mu_{h}^{\phi,\alpha}(s)\pi_{h}^{\phi,\alpha}(a\,|\,s)P_{h}^{*}(s^{\prime}\,|\,s,a,z_{h}^{\alpha}(\mu_{h}^{\phi,\calI},W_{h}^{\phi,*}))\, \rmd s\\
        &=\sum_{a\in\calA}\int_{\calS}\mu_{h}^{\phi(\alpha)}(s)\pi_{h}^{\phi(\alpha)}(a\,|\,s)P_{h}^{*}(s^{\prime}\,|\,s,a,z_{h}^{\alpha}(\mu_{h}^{\phi,\calI},W_{h}^{\phi,*}))\, \rmd s,
    \end{align*}
    where the last equality results from the definition of $\pi^{\phi,\calI}$ and the hypothesis. To show that $\mu_{h+1}^{\phi(\alpha)}=\mu_{h+1}^{\phi,\alpha}$, it remains to show $z_{h}^{\phi(\alpha)}(\mu_{h}^{\calI},W_{h}^{*})=z_{h}^{\alpha}(\mu_{h}^{\phi,\calI},W_{h}^{\phi,*})$. In fact, we have that
    \begin{align*}
        z_{h}^{\alpha}(\mu_{h}^{\phi,\calI},W_{h}^{\phi,*})&=\int_{0}^{1}W_{h}^{*}(\phi(\alpha),\phi(\beta))\mu_{h}^{\phi(\beta)}\, \rmd\beta
        =\int_{0}^{1}W_{h}^{*}(\phi(\alpha),\gamma)\mu_{h}^{\gamma}\, \rmd\gamma,
    \end{align*}
    where the last equality results from setting $\gamma=\phi(\beta)$. Thus, we conclude the proof of Proposition~\ref{prop:equivari}.
\end{proof}

\section{Proof of Theorem~\ref{thm:contractopt}}
\begin{proof}[Proof of Theorem~\ref{thm:contractopt}]
    For the analysis of the Algorithm~\ref{algo:MDcontract}, we define the nominal distribution flows as
    \begin{align*}
        \mu_{t}^{\calI}=\Gamma_{2}(\pi_{t}^{\calI},W^{*}),\quad        \barmu_{t+1}^{\calI}=(1-\alpha_{t})\barmu_{t}^{\calI}+\alpha_{t}\mu_{t}^{\calI} \text{ for all }t\in[T].
    \end{align*}
    
    Our proof of Theorem~\ref{thm:contractopt} involves four distinct steps:
    \begin{itemize}
        \item First, we derive the first-order optimality condition of the policy $\hpi_{t+1}^{\calI}$ derived in Line~6 of Algorithm~\ref{algo:MDcontract}.
        \item Second, we derive the recurrence relationship of the policy learning error from the relationship the second step.
        \item Third, we derive the convergence rate of the learned mean-field.
        \item Finally, we obtain the desired result by combining step 2 and step 3.
    \end{itemize}
    
    \textbf{Step 1: Analyze the property of the policy $\hpi_{t+1}^{\calI}$ derived in Line 6 of Algorithm~\ref{algo:MDcontract} }
    
    We first note that the update of $\hpi_{t+1,h}^{\alpha}(\cdot\,|\,s)$ in Line 6 of Algorithm~\ref{algo:MDcontract} can be equivalently defined as
    \begin{align}
        \hpi_{t+1,h}^{\alpha}(\cdot\,|\,s)=\argmax_{p\in\Delta(\calA)}\eta_{t+1}\Big[\big\langle \hatQ_{h}^{\lambda,\alpha}(s,\cdot,\pi_{t}^{\alpha},\hbmu_{t}^{\calI},\hatW),p \big\rangle-\lambda \barh(p)\Big]-\kl\big(p\|\pi_{t,h}^{\alpha}(\cdot\,|\,s)\big),\label{eq:16}
    \end{align}
    which can be proved using Lagrangian multipliers.
    \begin{proposition}\label{prop:firstordopt}
        For the policy $\hpi_{t+1,h}^{\alpha}(\cdot\,|\,s)$, which is defined in Eqn.~\eqref{eq:16}, we have that for all $s\in\calS$, $p\in\Delta(\calA)$, and $h\in[H]$
        \begin{align*}
            &\eta_{t+1}\big\langle \hatQ_{h}^{\lambda,\alpha}(s,\cdot,\pi_{t}^{\alpha},\hbmu_{t}^{\calI},\hatW),p-\hpi_{t+1,h}^{\alpha}(\cdot\,|\,s) \big\rangle+\lambda \eta_{t+1}\Big[ R\big(\hpi_{t+1,h}^{\alpha}(\cdot\,|\,s)\big)-\barh(p)\Big]\\
            &\quad\leq \kl\big(p\|\pi_{t,h}^{\alpha}(\cdot\,|\,s)\big)-(1+\lambda\eta_{t+1})\kl\big(p\|\hpi_{t+1,h}^{\alpha}(\cdot\,|\,s)\big)-\kl\big(\hpi_{t+1,h}^{\alpha}(\cdot\,|\,s)\|\pi_{t,h}^{\alpha}(\cdot\,|\,s)\big).
        \end{align*}
    \end{proposition}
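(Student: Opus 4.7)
The proposition is a three-point lemma for the regularized mirror-descent step, so the plan is to derive it by combining the first-order optimality condition of the closed-form update with two standard entropy identities. Throughout I will abbreviate $\hat p = \hpi_{t+1,h}^\alpha(\cdot\,|\,s)$, $p_0 = \pi_{t,h}^\alpha(\cdot\,|\,s)$, $\eta = \eta_{t+1}$, and $q = \hatQ_h^{\lambda,\alpha}(s,\cdot,\pi_t^\alpha,\hbmu_t^\calI,\hatW)$.

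First, I would verify that the objective in the definition~\eqref{eq:16},
$F(p) = \eta\langle q,p\rangle - \eta\lambda\barh(p) - \kl(p\,\|\,p_0)$,
is concave on the open simplex (both $\barh$ and $\kl(\cdot\,\|\,p_0)$ are strictly convex). Hence for the unique maximizer $\hat p\in\mathrm{int}(\Delta(\calA))$ the first-order condition
\begin{align*}
\big\langle \eta q - \eta\lambda\nabla\barh(\hat p) - \nabla_p\kl(p\,\|\,p_0)\big|_{p=\hat p},\ p-\hat p\big\rangle \le 0
\end{align*}
holds for every $p\in\Delta(\calA)$. Using $\nabla\barh(\hat p) = \log\hat p + \mathbf 1$ and $\nabla_p\kl(p\,\|\,p_0)|_{\hat p} = \log\hat p - \log p_0 + \mathbf 1$, and noting that $\sum_a (p(a)-\hat p(a)) = 0$ kills the constant-vector contributions, this simplifies to
\begin{align*}
\eta\langle q,\ p-\hat p\rangle \;\le\; \eta\lambda\,\langle \log\hat p,\ p-\hat p\rangle + \langle \log\hat p - \log p_0,\ p-\hat p\rangle.
\end{align*}

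Second, I would invoke two elementary identities on the simplex. The three-point identity for KL,
\begin{align*}
\langle \log\hat p - \log p_0,\ p-\hat p\rangle = \kl(p\,\|\,p_0) - \kl(p\,\|\,\hat p) - \kl(\hat p\,\|\,p_0),
\end{align*}
is obtained by expanding all three KL terms. The entropy identity
\begin{align*}
\langle \log\hat p,\ p - \hat p\rangle = \barh(p) - \barh(\hat p) - \kl(p\,\|\,\hat p)
\end{align*}
follows from $\kl(p\,\|\,\hat p) = \sum p\log p - \sum p\log\hat p$ together with $\sum p\log\hat p = \barh(\hat p) + \langle \log\hat p, p-\hat p\rangle$.

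Third, I would plug both identities into the first-order inequality, move the $\barh$ terms to the left-hand side, and collect the two $\kl(p\,\|\,\hat p)$ contributions into the single term $(1+\eta\lambda)\kl(p\,\|\,\hat p)$. This produces exactly the claimed bound
\begin{align*}
\eta\langle q,\ p-\hat p\rangle + \eta\lambda\big[\barh(\hat p) - \barh(p)\big] \le \kl(p\,\|\,p_0) - (1+\eta\lambda)\kl(p\,\|\,\hat p) - \kl(\hat p\,\|\,p_0),
\end{align*}
which matches the statement after substituting $q$, $p_0$, $\hat p$, $\eta$ back (reading $R$ in the statement as the negative entropy $\barh$, consistent with the preceding text). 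I do not expect a genuine obstacle here; the only care points are (i) sign-tracking when translating the $\arg\max$ optimality condition into inequalities, and (ii) making sure the constant-$\mathbf 1$ terms from the gradients of $\barh$ and $\kl$ are correctly eliminated by the simplex constraint before applying the two identities.
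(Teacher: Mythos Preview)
Your proposal is correct and follows essentially the same approach as the paper: both derive the first-order optimality condition for the maximizer~$\hat p$, then apply the Bregman three-point identity for $\kl$ together with the identity relating $\kl(p\|\hat p)$ to $\barh(p)-\barh(\hat p)$ and $\langle\log\hat p,\,p-\hat p\rangle$, and finally collect terms. The paper's presentation is slightly terser (it states the two identities abstractly as $\kl(p_1\|p_2)=\kl(p_3\|p_2)+\langle\nabla_{p_3}\kl(p_3\|p_2),p_1-p_3\rangle+\kl(p_1\|p_3)$ and $\kl(p_1\|p_2)=R(p_1)-R(p_2)+\langle\nabla R(p_2),p_2-p_1\rangle$), but the substance and the order of steps are the same as yours.
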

    \begin{proof}[Proof of Proposition~\ref{prop:firstordopt}]
        See Appendix~\ref{app:firstordopt}.
    \end{proof}
    Proposition~\ref{prop:firstordopt} shows that
    \begin{align}
        &\eta_{t+1}\big\langle Q_{h}^{\lambda,\alpha}(s_{h},\cdot,\pi_{t}^{\alpha},\barmu_{t}^{\calI},W^{*}),p-\pi_{t+1,h}^{\alpha}(\cdot\,|\,s_{h}) \big\rangle+\lambda \eta_{t+1}\Big[ R\big(\pi_{t+1,h}^{\alpha}(\cdot\,|\,s_{h})\big)-\barh(p)\Big]\nonumber\\
        &\quad\leq \kl\big(p\|\pi_{t,h}^{\alpha}(\cdot\,|\,s_{h})\big)-(1+\lambda\eta_{t+1})\kl\big(p\|\pi_{t+1,h}^{\alpha}(\cdot\,|\,s_{h})\big)-\kl\big(\pi_{t+1,h}^{\alpha}(\cdot\,|\,s_{h})\|\pi_{t,h}^{\alpha}(\cdot\,|\,s_{h})\big)\nonumber\\
        &\quad\qquad+\text{(I)}+\text{(II)}+\text{(III)}+\text{(IV)},\label{ieq:55}
    \end{align}
    where the term (I) is the combination of the action-value function estimation error and the difference between $\hpi_{t+1}^{\calI}$ and $\pi_{t+1}^{\calI}$ that is defined as
    \begin{align*}
        \text{(I)}=\eta_{t+1}\Big[\big\langle Q_{h}^{\lambda,\alpha}(s_{h},\cdot,\pi_{t}^{\alpha},\barmu_{t}^{\calI},W^{*}),p-\pi_{t+1,h}^{\alpha}(\cdot\,|\,s_{h}) \big\rangle-\big\langle \hatQ_{h}^{\lambda,\alpha}(s_{h},\cdot,\pi_{t}^{\alpha},\hbmu_{t}^{\calI},\hatW),p-\hpi_{t+1,h}^{\alpha}(\cdot\,|\,s_{h}) \big\rangle\Big].
    \end{align*}
    The term (II) is the entropy difference between $\hpi_{t+1}^{\calI}$ and $\pi_{t+1}^{\calI}$ that is defined as
    \begin{align*}
        \text{(II)}=\lambda\eta_{t+1}\Big(R\big(\pi_{t+1,h}^{\alpha}(\cdot\,|\,s_{h})\big)-R\big(\hpi_{t+1,h}^{\alpha}(\cdot\,|\,s_{h})\big)\Big).
    \end{align*}
    The term (III) is the KL divergence difference between $\hpi_{t+1}^{\calI}$ and $\pi_{t+1}^{\calI}$ that is defined as
    \begin{align*}
        \text{(III)}=\kl\big(\pi_{t+1,h}^{\alpha}(\cdot\,|\,s_{h})\|\pi_{t,h}^{\alpha}(\cdot\,|\,s_{h})\big)-\kl\big(\hpi_{t+1,h}^{\alpha}(\cdot\,|\,s_{h})\|\pi_{t,h}^{\alpha}(\cdot\,|\,s_{h})\big).
    \end{align*}
    The term (IV) is also the KL divergence difference between $\hpi_{t+1}^{\calI}$ and $\pi_{t+1}^{\calI}$ that is defined as
    \begin{align*}
        \text{(IV)}=(1+\lambda\eta_{t+1})\Big[\kl\big(p\|\pi_{t+1,h}^{\alpha}(\cdot\,|\,s_{h})\big)-\kl\big(p\|\hpi_{t+1,h}^{\alpha}(\cdot\,|\,s_{h})\big)\Big].
    \end{align*}
    We define
    \begin{align}
        \Lambda_{t+1,h}^{\alpha}&=2\eta_{t+1}\big\|Q_{h}^{\lambda,\alpha}(s_{h},\cdot,\pi_{t}^{\alpha},\hbmu_{t}^{\calI},W^{*})-\hatQ_{h}^{\lambda,\alpha}(s_{h},\cdot,\pi_{t}^{\alpha},\hbmu_{t}^{\calI},\hatW)\big\|_{\infty}+2\eta_{t+1}H(1+\lambda\log|\calA|)\beta_{t+1}\nonumber\\
        &\qquad+2\eta_{t+1}\big[L_{r}+H(1+\lambda\log|\calA|)L_{P}\big]\varepsilon_{\mu}+2\beta_{t+1}\log \frac{|\calA|}{\beta_{t}}+2(1+\lambda\eta_{t+1})\beta_{t+1},\label{eq:17}
    \end{align}
    Then we can show the following bound.
    \begin{proposition}\label{prop:esterrsum}
            Under assumptions in Theorem~\ref{thm:contractopt}, $\text{(I)}+\text{(II)}+\text{(III)}+\text{(IV)}\leq\Lambda_{t+1,h}^{\alpha}$.
    \end{proposition}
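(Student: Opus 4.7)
The plan is to bound each of the four correction terms $\text{(I)}$--$\text{(IV)}$ individually by the corresponding summands in $\Lambda_{t+1,h}^{\alpha}$, leveraging the explicit mixture $\pi_{t+1,h}^{\alpha}=(1-\beta_{t+1})\hpi_{t+1,h}^{\alpha}+\beta_{t+1}\unif(\calA)$ introduced in Line~7 of Algorithm~\ref{algo:MDcontract}. Two immediate consequences will be used repeatedly: the total variation bound $\|\pi_{t+1,h}^{\alpha}(\cdot|s)-\hpi_{t+1,h}^{\alpha}(\cdot|s)\|_{1}\leq 2\beta_{t+1}$, and the pointwise lower bound $\pi_{t+1,h}^{\alpha}(a|s)\geq (1-\beta_{t+1})\hpi_{t+1,h}^{\alpha}(a|s)$. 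A short auxiliary induction will also show $d(\bar{\mu}_{t},\hbmu_{t})\leq \varepsilon_{\mu}$ for every $t$, since both flows are produced by the same fictitious-play convex combination with rate $\alpha_t$ while $d(\mu_t,\hmu_t)\leq \varepsilon_\mu$ by Assumption~\ref{assump:est}.

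For (I), I would apply a two-step telescoping decomposition
\begin{align*}
\text{(I)} &= \eta_{t+1}\bigl\langle Q^{*}(\bar{\mu}_{t},W^{*})-Q^{*}(\hbmu_{t},W^{*}),\, p-\hpi_{t+1}\bigr\rangle + \eta_{t+1}\bigl\langle Q^{*}(\hbmu_{t},W^{*})-\hatQ(\hbmu_{t},\hatW),\, p-\hpi_{t+1}\bigr\rangle \\
&\quad + \eta_{t+1}\bigl\langle Q^{*}(\bar{\mu}_{t},W^{*}),\, \hpi_{t+1}-\pi_{t+1}\bigr\rangle.
\end{align*}
The second inner product is handled directly by Assumption~\ref{assump:est} together with $\|p-\hpi_{t+1}\|_1\leq 2$, yielding $2\eta_{t+1}\|Q^{*}-\hatQ\|_{\infty}$. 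The third is bounded via the uniform estimate $\|Q^{*}\|_{\infty}\leq H(1+\lambda\log|\calA|)$ (from bounded rewards and bounded regularizer) together with the $2\beta_{t+1}$ policy gap, yielding $2\eta_{t+1}H(1+\lambda\log|\calA|)\beta_{t+1}$. The first inner product is the most delicate: I would establish a standard simulation-lemma-type bound showing that $Q^{*}$ is Lipschitz in the mean-field with constant $L_r+HL_P(1+\lambda\log|\calA|)$ under Assumption~\ref{assump:lipconti}, using $\|W^*\|_{\infty}\leq 1$ so that $\|z_h^{\alpha}(\bar{\mu}_t)-z_h^{\alpha}(\hbmu_t)\|_1\leq d(\bar{\mu}_t,\hbmu_t)\leq\varepsilon_\mu$, and then invoking the induction above.

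For (II), I would use convexity of the negative entropy $\barh$ (which coincides with $R$ here) to get $\barh(\pi_{t+1})-\barh(\hpi_{t+1})\leq \beta_{t+1}\bigl(\barh(\unif)-\barh(\hpi_{t+1})\bigr)\leq 0$, since $\unif$ minimizes $\barh$ on the simplex; hence (II) is nonpositive and can be dropped. For (III), the same convexity trick applied to $\kl(\cdot\|\pi_{t,h}^{\alpha})$ gives $\text{(III)}\leq \beta_{t+1}\kl(\unif\|\pi_{t,h}^{\alpha})$, and because Line~7 at the previous round guarantees $\pi_{t,h}^{\alpha}(a|s)\geq \beta_t/|\calA|$, a direct computation yields $\kl(\unif\|\pi_{t,h}^{\alpha})\leq \log(|\calA|/\beta_t)$, matching (up to a factor of $2$ absorbing other constants) the $2\beta_{t+1}\log(|\calA|/\beta_t)$ summand. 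For (IV), the pointwise lower bound $\pi_{t+1}\geq (1-\beta_{t+1})\hpi_{t+1}$ yields $\log(\hpi_{t+1}(a|s)/\pi_{t+1}(a|s))\leq \log(1/(1-\beta_{t+1}))\leq 2\beta_{t+1}$ whenever $\beta_{t+1}\leq 1/2$, which holds since $\beta_t=O(T^{-1})$; averaging against $p$ and multiplying by $(1+\lambda\eta_{t+1})$ delivers $2(1+\lambda\eta_{t+1})\beta_{t+1}$.

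Summing the four bounds gives exactly $\Lambda_{t+1,h}^{\alpha}$. The principal obstacle is the mean-field Lipschitzness needed in (I): one must propagate the per-stage $L_r$ and $L_P$ Lipschitz constants across the $H$-step horizon, taking into account that the mean-field enters only through the graphon integral $z_h^{\alpha}(\bar{\mu}_t)=\int W_h^{*}(\alpha,\beta)\bar{\mu}_t^{\beta}\,\rmd\beta$, and simultaneously establish by induction on $t$ that $d(\bar{\mu}_t,\hbmu_t)\leq \varepsilon_\mu$. Everything else reduces to one-line applications of convexity or elementary log-ratio estimates.
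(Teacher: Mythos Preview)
Your proposal is correct and follows essentially the same route as the paper: bound each of (I)--(IV) separately by the corresponding summand of $\Lambda_{t+1,h}^{\alpha}$, using a telescoping split for (I) combined with the mean-field Lipschitzness of $Q^{\lambda,\alpha}$ (the paper's Proposition~\ref{prop:difdsamepv}) and the convex-combination identity $d(\bar\mu_t,\hat{\bar\mu}_t)\le\varepsilon_\mu$, then elementary entropy/KL estimates for (II)--(IV). Your convexity arguments for (II) and (III) are slightly slicker than the paper's (which invokes a calculus lemma, Lemma~\ref{lem:monoentro}, for (II) and a direct expansion for (III)), and your choice to place $Q^{*}(\bar\mu_t)$ rather than $\hatQ$ in the $\hpi_{t+1}-\pi_{t+1}$ piece of (I) avoids having to bound $\|\hatQ\|_\infty$; otherwise the two proofs coincide.
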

    \begin{proof}
        See Appendix~\ref{app:esterrsum}.
    \end{proof}
    Then inequality~\eqref{ieq:55} shows that
    \begin{align}
        &\eta_{t+1}\big\langle Q_{h}^{\lambda,\alpha}(s_{h},\cdot,\pi_{t}^{\alpha},\barmu_{t}^{\calI},W^{*}),p-\pi_{t+1,h}^{\alpha}(\cdot\,|\,s_{h}) \big\rangle+\lambda \eta_{t+1}\Big[ R\big(\pi_{t+1,h}^{\alpha}(\cdot\,|\,s_{h})\big)-\barh(p)\Big]\nonumber\\
        &\quad\qquad+\kl\big(\pi_{t+1,h}^{\alpha}(\cdot\,|\,s_{h})\|\pi_{t,h}^{\alpha}(\cdot\,|\,s_{h})\big)\nonumber\\
        &\quad\leq \kl\big(p\|\pi_{t,h}^{\alpha}(\cdot\,|\,s_{h})\big)-(1+\lambda\eta_{t+1})\kl\big(p\|\pi_{t+1,h}^{\alpha}(\cdot\,|\,s_{h})\big)+\Lambda_{t+1,h}^{\alpha}.\label{ieq:56}
    \end{align}
    
    \textbf{Step 2: Derive the recurrence relationship of the policy learning error from the relationship the second step, and bound the dynamical error in such recurrence relationship.}
    
    Inequality~\eqref{ieq:56} implies that the improvement of $\pi_{t+1}^{\calI}$ of the \ac{mdp} induced by $\barmu_{t}^{\calI}$ over $\pi_{t}^{\calI}$ can be lower bounded as
    \begin{align}
        &V_{m}^{\lambda,\alpha}(s,\pi_{t+1}^{\alpha},\barmu_{t}^{\calI},W^{*})-V_{m}^{\lambda,\alpha}(s,\pi_{t}^{\alpha},\barmu_{t}^{\calI},W^{*})\nonumber\\
        &\quad= \bbE_{\pi_{t+1}^{\alpha},\barmu_{t}^{\calI}}\bigg[\sum_{h=m}^{H}\big\langle Q_{h}^{\lambda,\alpha}(s_{h},\cdot,\pi_{t}^{\alpha},\barmu_{t}^{\calI},W^{*}),\pi_{t+1,h}^{\alpha}(\cdot\,|\,s_{h})-\pi_{t,h}^{\alpha}(\cdot\,|\,s_{h})\big\rangle\nonumber\\
        &\quad\qquad+\lambda\Big[R\big(\pi_{t,h}^{\alpha}(\cdot\,|\,s_{h})\big)- R\big(\pi_{t+1,h}^{\alpha}(\cdot\,|\,s_{h})\big)\Big] \,\bigg|\,s_{m}=s\bigg]\nonumber\\
        &\quad\geq \big\langle Q_{m}^{\lambda,\alpha}(s,\cdot,\pi_{t}^{\alpha},\barmu_{t}^{\calI},W^{*}),\pi_{t+1,m}^{\alpha}(\cdot\,|\,s)-\pi_{t,m}^{\alpha}(\cdot\,|\,s)\big\rangle+\lambda\Big[R\big(\pi_{t,h}^{\alpha}(\cdot\,|\,s)\big)- R\big(\pi_{t+1,h}^{\alpha}(\cdot\,|\,s)\big)\Big]\nonumber\\
        &\quad\qquad-\frac{1}{\eta_{t+1}}\bbE_{\pi_{t+1}^{\alpha},\barmu_{t}^{\calI}}\bigg[\sum_{h=m}^{H}\Lambda_{t+1,h}^{\alpha}\,\bigg|\,s_{m}=s\bigg],\label{ieq:57}
    \end{align}
    where the equality results from Lemma~\ref{lem:pdl}, and the inequality results from inequality~\eqref{ieq:56} and that KL divergence is non-negative.
    
    We denote the optimal policy on the \ac{mdp} induced by $\barmu_{t}^{\calI}$ as $\barpi_{t}^{*,\calI}=\Gamma_{1}^{\lambda}(\barmu_{t}^{\calI},W^{*})$. Then Lemma~\ref{lem:pdl} and inequality~\eqref{ieq:57} implies that
    \begin{align}
        &\eta_{t+1}\bbE_{\barpi_{t}^{*,\alpha},\barmu_{t}^{\calI}}\bigg[\sum_{h=1}^{H}\big\langle Q_{h}^{\lambda,\alpha}(s_{h},\cdot,\pi_{t}^{\alpha},\barmu_{t}^{\calI},W^{*}),\barpi_{t,h}^{*,\alpha}(\cdot\,|\,s_{h})-\pi_{t+1,h}^{\alpha}(\cdot\,|\,s_{h})\big\rangle\nonumber\\
        &\qquad\qquad\qquad\qquad\qquad\qquad\qquad\qquad\qquad\qquad +\lambda\Big[R\big(\pi_{t+1,h}^{\alpha}(\cdot\,|\,s_{h})\big)-R\big(\barpi_{t,h}^{*,\alpha}(\cdot\,|\,s_{h})\big)\Big]\bigg]\nonumber\\
        &\quad\geq\eta_{t+1}\bbE_{\barpi_{t}^{*,\alpha},\barmu_{t}^{\calI}}\bigg[\sum_{h=1}^{H}V_{h}^{\lambda,\alpha}(s_{h},\pi_{t}^{\alpha},\barmu_{t}^{\calI},W^{*})-V_{h}^{\lambda,\alpha}(s_{h},\pi_{t+1}^{\alpha},\barmu_{t}^{\calI},W^{*})\bigg]\nonumber\\
        &\quad\qquad-\bbE_{\barpi_{t}^{*,\alpha},\barmu_{t}^{\calI}}\bigg[\sum_{h=1}^{H}\bbE_{\pi_{t+1}^{\alpha},\barmu_{t}^{\calI}}\bigg[\sum_{m=h}^{H}\Lambda_{t+1,m}^{\alpha}\,\bigg|\,s_{h}\bigg]\bigg]\nonumber\\
        &\quad\qquad+\eta_{t+1}\bbE_{\mu_{1}^{\alpha}}\big[V_{1}^{\lambda,\alpha}(s_{1},\barpi_{t}^{*,\alpha},\barmu_{t}^{\calI},W^{*})-V_{1}^{\lambda,\alpha}(s_{1},\pi_{t}^{\alpha},\barmu_{t}^{\calI},W^{*})\big].\label{ieq:58}
    \end{align}
    Applying inequality~\eqref{ieq:56} with $p=\barpi_{t,h}^{*,\alpha}(\cdot\,|\,s_{h})$ to the left-hand side of inequality~\eqref{ieq:58} and rearranging the terms, we have that
    \begin{align}
        &\eta_{t+1}\bbE_{\barpi_{t}^{*,\alpha},\barmu_{t}^{\calI}}\bigg[\sum_{h=1}^{H}V_{h}^{\lambda,\alpha}(s_{h},\barpi_{t}^{*,\alpha},\barmu_{t}^{\calI},W^{*})-V_{h}^{\lambda,\alpha}(s_{h},\pi_{t+1}^{\alpha},\barmu_{t}^{\calI},W^{*})\bigg]\nonumber\\
        &\quad\qquad+(1+\lambda\eta_{t+1})\bbE_{\barpi_{t}^{*,\alpha},\barmu_{t}^{\calI}}\bigg[\sum_{h=1}^{H}\kl\big(\barpi_{t,h}^{*,\alpha}(\cdot\,|\,s_{h})\|\pi_{t+1,h}^{\alpha}(\cdot\,|\,s_{h})\big)\bigg]\nonumber\\
        &\leq \eta_{t+1}\bbE_{\barpi_{t}^{*,\alpha},\barmu_{t}^{\calI}}\bigg[\sum_{h=1}^{H}V_{h}^{\lambda,\alpha}(s_{h},\barpi_{t}^{*,\alpha},\barmu_{t}^{\calI},W^{*})-V_{h}^{\lambda,\alpha}(s_{h},\pi_{t}^{\alpha},\barmu_{t}^{\calI},W^{*})\bigg]-\eta_{t+1}\bbE_{\mu_{1}^{\alpha}}\big[V_{1}^{\lambda,\alpha}(s_{1},\barpi_{t}^{*,\alpha},\barmu_{t}^{\calI},W^{*})\nonumber\\
        &\quad\qquad-V_{1}^{\lambda,\alpha}(s_{1},\pi_{t}^{\alpha},\barmu_{t}^{\calI},W^{*})\big]+\bbE_{\barpi_{t}^{*,\alpha},\barmu_{t}^{\calI}}\bigg[\sum_{h=1}^{H}\kl\big(\barpi_{t,h}^{*,\alpha}(\cdot\,|\,s_{h})\|\pi_{t,h}^{\alpha}(\cdot\,|\,s_{h})\big)\bigg]\nonumber\\
        &\quad\qquad+\bbE_{\barpi_{t}^{*,\alpha},\barmu_{t}^{\calI}}\bigg[\sum_{h=1}^{H}\bbE_{\pi_{t+1}^{\alpha},\barmu_{t}^{\calI}}\bigg[\sum_{m=h}^{H}\Lambda_{t+1,m}^{\alpha}\,\bigg|\,s_{h}\bigg]\bigg]+\bbE_{\barpi_{t}^{*,\alpha},\barmu_{t}^{\calI}}\bigg[\sum_{h=1}^{H}\Lambda_{t+1,h}^{\alpha}\bigg].\label{ieq:76}
    \end{align}
    To handle the right-hand side of this inequality, we utilize the following proposition.
    \begin{proposition}\label{prop:valuefraction}
        For a $\lambda$-regularized finite-horizon \ac{mdp} $(\calS,\calA,H,\{r_{h}\}_{h=1}^{H},\{P_{h}\}_{h=1}^{H})$ with $|r_{h}|\leq 1$ for all $h\in[H]$, we denote the optimal policy as $\pi^{*}=\{\pi^{*}_{h}\}_{h=1}^{H}$. Then for any policy $\pi$, we have that
        \begin{align*}
            \bbE_{\pi^{*}}\big[V_{1}^{\lambda}(s_{1},\pi^{*})-V_{1}^{\lambda}(s_{1},\pi)\big]\geq \beta^{*}\bbE_{\pi^{*}}\bigg[\sum_{h=2}^{H}V_{h}^{\lambda}(s_{h},\pi^{*})-V_{h}^{\lambda}(s_{h},\pi)\bigg],
        \end{align*}
        where the expectation is taken with respect to the state distribution induced by $\pi^{*}$, and  $\beta^{*}>0$ is a constant that only depends on $\lambda,H$ and $|\calA|$.
    \end{proposition}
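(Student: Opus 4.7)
The plan is to exploit the softmax form of the entropy-regularized optimal policy together with a concentrability--telescoping argument. I would first derive the entropy-regularized performance-difference identity: since $\pi^{*}_h(a|s)\propto\exp(Q^{\lambda}_h(s,a,\pi^{*})/\lambda)$, substituting $p=\pi_h(\cdot|s)$ into the identity $\langle p,Q^{\lambda}_h(s,\cdot,\pi^{*})-\lambda\log p\rangle=V^{\lambda}_h(s,\pi^{*})-\lambda\,\kl\bigl(p\,\big\|\,\pi^{*}_h(\cdot|s)\bigr)$ yields
\[
V^{\lambda}_h(s,\pi^{*})-V^{\lambda}_h(s,\pi)=\lambda\,\kl\bigl(\pi_h(\cdot|s)\,\big\|\,\pi^{*}_h(\cdot|s)\bigr)+\bbE_{a\sim\pi_h(\cdot|s),\,s'\sim P_h(\cdot|s,a)}\bigl[V^{\lambda}_{h+1}(s',\pi^{*})-V^{\lambda}_{h+1}(s',\pi)\bigr].
\]
In particular $\phi_h(s):=V^{\lambda}_h(s,\pi^{*})-V^{\lambda}_h(s,\pi)\geq 0$ pointwise, and unrolling gives $\phi_h(s)=\bbE^{\pi,s_h=s}\bigl[\sum_{t\geq h}\lambda\,\kl(\pi_t\|\pi^{*}_t)(s_t)\bigr]$. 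From $|r_h|\leq 1$ and the finite horizon, $|Q^{\lambda}_h(\cdot,\cdot,\pi^{*})|\leq M:=H(1+\lambda\log|\calA|)$, so the softmax form supplies the uniform floor $\pi^{*}_h(a|s)\geq \pi^{*}_{\min}:=|\calA|^{-1}e^{-2M/\lambda}$, a constant depending only on $\lambda,H,|\calA|$.

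Next, the pointwise bound $\pi_h(a|s)\leq 1\leq \pi^{*}_h(a|s)/\pi^{*}_{\min}$ yields the one-step concentrability estimate $(\mu_h^{\pi^{*}}P^\pi_h)(s')\leq (\pi^{*}_{\min})^{-1}\mu_{h+1}^{\pi^{*}}(s')$. Setting $D_h:=\bbE_{\mu_h^{\pi^{*}}}[\phi_h]$ and $\bar K_t:=\lambda\,\bbE_{\mu_t^{\pi^{*}}}[\kl(\pi_t(\cdot|s_t)\,\|\,\pi^{*}_t(\cdot|s_t))]$, the recursion $D_h=\bar K_h+\bbE_{\mu_h^{\pi^{*}}P^\pi_h}[\phi_{h+1}]$ combined with this concentrability gives $D_h\leq \bar K_h+(\pi^{*}_{\min})^{-1}D_{h+1}$, and unrolling and summing deliver $\sum_{h=2}^H D_h\leq (H-1)(\pi^{*}_{\min})^{-H}\sum_{t=1}^H \bar K_t$. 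Unrolling the alternative PDL along the $\pi$-trajectory yields $D_1=\sum_{t=1}^H\bbE_{\mu_t^\pi}[\lambda\,\kl(\pi_t\|\pi^{*}_t)]$, so the task reduces to bounding $\sum_t\bar K_t$ by $D_1$ up to a multiplicative constant in $(\lambda,H,|\calA|)$. For this I plan to use a hybrid-policy telescoping with the interpolants $\tilde\pi^{(k)}:=(\pi^{*}_1,\ldots,\pi^{*}_{k-1},\pi_k,\ldots,\pi_H)$ and the trajectory functional $F(\tau):=\sum_t\lambda\,\kl(\pi_t\|\pi^{*}_t)(s_t)$, writing $\sum_t\bar K_t-D_1=\sum_{k}\bigl(\bbE^{\tilde\pi^{(k+1)}}[F]-\bbE^{\tilde\pi^{(k)}}[F]\bigr)$ and bounding each increment by $\|\phi\|_\infty\,\bbE_{\mu_k^{\pi^{*}}}[\|\pi^{*}_k-\pi_k\|_1]\leq 2M\sqrt{2\bar K_k/\lambda}$ via Pinsker, then resolving the resulting quadratic inequality $S\leq D_1+C\sqrt{HS}$ for $S:=\sum_t\bar K_t$.

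The main obstacle is this last step: Pinsker's inequality produces a square-root dependence on $\bar K_k$, so the naive quadratic resolution gives only $S\leq 2D_1+O_{\lambda,H,|\calA|}(1)$, carrying an unwanted additive constant rather than a purely linear bound. To remove it, I would combine the above with the deterministic pointwise bound $\bar K_t\leq \lambda\log(1/\pi^{*}_{\min})$ and exploit the strong-convexity inequality $\lambda\,\kl(p\,\|\,\pi^{*}_h(\cdot|s))\geq (\lambda\pi^{*}_{\min}/2)\,\|p-\pi^{*}_h(\cdot|s)\|_1^2$ (valid because $\pi^{*}_h$ is bounded below by $\pi^{*}_{\min}$) to upgrade the Pinsker square-root step into a linear one, thereby absorbing the additive constant. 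Combining all estimates then yields the desired inequality with $\beta^{*}=\bigl(2(H-1)C'(\pi^{*}_{\min})^{-H}\bigr)^{-1}$ for an explicit $C'=C'(\lambda,H,|\calA|)$.
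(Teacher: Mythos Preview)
Your approach has a genuine gap at the final step. The inequality $\lambda\,\kl(p\,\|\,\pi^{*}_h)\geq(\lambda\pi^{*}_{\min}/2)\|p-\pi^{*}_h\|_{1}^{2}$ you invoke points the \emph{same} way as Pinsker (a lower bound on KL by a squared $\|\cdot\|_1$-distance), so it cannot convert the increment bound $C\sqrt{\bar K_k}$ into one linear in $\bar K_k$. What that would require is $\|p-\pi^{*}_h\|_{1}\leq C\,\kl(p\,\|\,\pi^{*}_h)$, which is false in any neighborhood of $p=\pi^{*}_h$: the left side is first-order while the right side is second-order in $p-\pi^{*}_h$. Consequently the quadratic resolution leaves you with $S\leq 2D_1+O_{\lambda,H,|\calA|}(1)$, and the additive constant is fatal because the target inequality is homogeneous---both $D_1$ and $\sum_{h\geq 2}D_h$ vanish as $\pi\to\pi^{*}$, so any positive additive slack destroys the bound for near-optimal policies. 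Your concentrability estimate also only goes one way (from $\pi_h\leq 1\leq\pi^{*}_h/\pi^{*}_{\min}$); since $\pi$ is arbitrary with no uniform lower bound, the reverse change of measure from $\mu_t^{\pi^{*}}$ to $\mu_t^{\pi}$ cannot be controlled multiplicatively, which is precisely why the telescoping cannot close.

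The paper proves instead the stronger \emph{per-step} inequality $D_h\geq\gamma^{*}D_{h+1}$ for a constant $\gamma^{*}>0$ depending only on $(\lambda,H,|\calA|)$; summing a geometric series then gives the claim. Writing $y(s,a)=r_h(s,a)+\int P_h(s'\,|\,s,a)V_{h+1}^{\lambda}(s',\pi)\,\rmd s'$ and $y^{*}$ analogously with $V_{h+1}^{\lambda}(\cdot,\pi^{*})$, one has $D_h=\bbE_{\mu_h^{\pi^{*}}}\big[\langle y,\,p^{*}-\pi_h\rangle+\lambda\big(R(\pi_h)-R(p^{*})\big)+\langle y^{*}-y,\,p^{*}\rangle\big]$ and $D_{h+1}=\bbE_{\mu_h^{\pi^{*}}}\big[\langle y^{*}-y,\,p^{*}\rangle\big]$ with $p^{*}=\pi^{*}_h(\cdot\,|\,s)$. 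Replacing $\pi_h(\cdot\,|\,s)$ by the entropy-greedy $p:=\argmax_{q}\{\langle y,q\rangle-\lambda R(q)\}$ can only decrease the first bracket, so it suffices to show pointwise $\langle y,\,p^{*}-p\rangle+\lambda\big(R(p)-R(p^{*})\big)\geq(\gamma^{*}-1)\langle y^{*}-y,\,p^{*}\rangle$. Now both $p$ and $p^{*}$ are explicit softmaxes, the left side equals $-\lambda\,\kl(p^{*}\|p)=\lambda\log(Z^{*}/Z)-\langle y^{*}-y,\,p^{*}\rangle$, and the task reduces to $Z^{*}\log(Z^{*}/Z)\geq(\gamma^{*}/\lambda)\langle e^{y^{*}/\lambda},\,y^{*}-y\rangle$. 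Since $V_{h+1}^{\lambda}(\cdot,\pi^{*})\geq V_{h+1}^{\lambda}(\cdot,\pi)$ one has $y^{*}\geq y$ pointwise, hence $Z^{*}/Z\in[1,B]$ and $e^{y^{*}/\lambda}\leq B\,e^{y/\lambda}$ with $B:=\exp\big(H(1+\lambda\log|\calA|)/\lambda\big)$; the elementary bounds $\log x\geq\frac{\log B}{B-1}(x-1)$ on $[1,B]$ and $e^{u}-1\geq u$ then deliver any $\gamma^{*}\leq\log B/\big(B(B-1)\big)$. This pointwise softmax algebra exploiting $y^{*}\geq y$ is the missing ingredient.
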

    \begin{proof}[Proof of Proposition~\ref{prop:valuefraction}]
        See Appendix~\ref{app:valuefraction}.
    \end{proof}
    Define $\theta^{*}=1/(1+\beta^{*})<1$ and let $\eta_{t}=\eta$, where $1+\lambda\eta=1/\theta^{*}$. Proposition~\ref{prop:valuefraction} shows that
    \begin{align}
        &\bbE_{\barpi_{t}^{*,\alpha},\barmu_{t}^{\calI}}\bigg[\sum_{h=1}^{H}V_{h}^{\lambda,\alpha}(s_{h},\barpi_{t}^{*,\alpha},\barmu_{t}^{\calI},W^{*})-V_{h}^{\lambda,\alpha}(s_{h},\pi_{t+1}^{\alpha},\barmu_{t}^{\calI},W^{*})\bigg]\nonumber\\
        &\quad\qquad+\frac{1}{\eta\theta^{*}}\bbE_{\barpi_{t}^{*,\alpha},\barmu_{t}^{\calI}}\bigg[\sum_{h=1}^{H}\kl\big(\barpi_{t,h}^{*,\alpha}(\cdot\,|\,s_{h})\|\pi_{t+1,h}^{\alpha}(\cdot\,|\,s_{h})\big)\bigg]\nonumber\\
        &\quad\leq \theta^{*}\bigg\{\bbE_{\barpi_{t}^{*,\alpha},\barmu_{t}^{\calI}}\bigg[\sum_{h=1}^{H}V_{h}^{\lambda,\alpha}(s_{h},\barpi_{t}^{*,\alpha},\barmu_{t}^{\calI},W^{*})-V_{h}^{\lambda,\alpha}(s_{h},\pi_{t}^{\alpha},\barmu_{t}^{\calI},W^{*})\bigg]\nonumber\\
        &\quad\qquad+\frac{1}{\eta\theta^{*}}\bbE_{\barpi_{t}^{*,\alpha},\barmu_{t}^{\calI}}\bigg[\sum_{h=1}^{H}\kl\big(\barpi_{t,h}^{*,\alpha}(\cdot\,|\,s_{h})\|\pi_{t,h}^{\alpha}(\cdot\,|\,s_{h})\big)\bigg]\bigg\}\nonumber\\
        &\quad\qquad+\frac{1}{\eta}\bbE_{\barpi_{t}^{*,\alpha},\barmu_{t}^{\calI}}\bigg[\sum_{h=1}^{H}\bbE_{\pi_{t+1}^{\alpha},\barmu_{t}^{\calI}}\bigg[\sum_{m=h}^{H}\Lambda_{t+1,m}^{\alpha}\,\bigg|\,s_{h}\bigg]\bigg]\frac{1}{\eta}\bbE_{\barpi_{t}^{*,\alpha},\barmu_{t}^{\calI}}\bigg[\sum_{h=1}^{H}\Lambda_{t+1,h}^{\alpha}\bigg].\label{ieq:59}
    \end{align}
    In the following, we will derive the rate of convergence of the following term
    \begin{align}
        X_{t}^{\alpha}&=\bbE_{\barpi_{t}^{*,\alpha},\barmu_{t}^{\calI}}\bigg[\sum_{h=1}^{H}V_{h}^{\lambda,\alpha}(s_{h},\barpi_{t}^{*,\alpha},\barmu_{t}^{\calI},W^{*})-V_{h}^{\lambda,\alpha}(s_{h},\pi_{t}^{\alpha},\barmu_{t}^{\calI},W^{*})\bigg]\nonumber\\
        &\qquad+\frac{1}{\eta\theta^{*}}\bbE_{\barpi_{t}^{*,\alpha},\barmu_{t}^{\calI}}\bigg[\sum_{h=1}^{H}\kl\big(\barpi_{t,h}^{*,\alpha}(\cdot\,|\,s_{h})\|\pi_{t,h}^{\alpha}(\cdot\,|\,s_{h})\big)\bigg].\label{eq:defx}
    \end{align}
    We note that $X_{t}^{\calI}$ is a good quantity to measure the ``distance'' between $\pi_{t}^{\calI}$ and \ac{ne}. For \ac{ne}, $\pi^{*,\calI}$ is the optimal policy on the \ac{mdp} induced by the distribution flow $\mu^{*,\calI}$ of itself. Since $\barmu_{t}^{\calI}$ is close to $\mu_{t}^{\calI}$, we expect that $\pi_{t}^{\calI}$ achieves high rewards on the \ac{mdp} induced by $\barmu_{t}^{\calI}$ if it is close to the \ac{ne}. Inequality~\eqref{ieq:59} shows that the recurrence relationship of $X_{t}^{\alpha}$ is
    \begin{align}
        X_{t+1}^{\alpha}\leq \theta^{*}X_{t}^{\alpha}+\frac{1}{\eta}\bbE_{\barpi_{t}^{*,\alpha},\barmu_{t}^{\calI}}\bigg[\sum_{h=1}^{H}\bbE_{\pi_{t+1}^{\alpha},\barmu_{t}^{\calI}}\bigg[\sum_{m=h}^{H}\Lambda_{t+1,m}^{\alpha}\,\bigg|\,s_{h}\bigg]\bigg]+\frac{1}{\eta}\bbE_{\barpi_{t}^{*,\alpha},\barmu_{t}^{\calI}}\bigg[\sum_{h=1}^{H}\Lambda_{t+1,h}^{\alpha}\bigg]+\Delta_{t+1}^{\alpha},\label{ieq:69}
    \end{align}
    where $\Delta_{t+1}^{\alpha}$ is the error introduced by the change of the environment, which is also called the dynamical error, and it is defined as
    \begin{align}
        \Delta_{t+1}^{\alpha}&=X_{t+1}^{\alpha}-\bbE_{\barpi_{t}^{*,\alpha},\barmu_{t}^{\calI}}\bigg[\sum_{h=1}^{H}V_{h}^{\lambda,\alpha}(s_{h},\barpi_{t}^{*,\alpha},\barmu_{t}^{\calI},W^{*})-V_{h}^{\lambda,\alpha}(s_{h},\pi_{t+1}^{\alpha},\barmu_{t}^{\calI},W^{*})\bigg]\nonumber\\
        &\qquad-\frac{1}{\eta\theta^{*}}\bbE_{\barpi_{t}^{*,\alpha},\barmu_{t}^{\calI}}\bigg[\sum_{h=1}^{H}\kl\big(\barpi_{t,h}^{*,\alpha}(\cdot\,|\,s_{h})\|\pi_{t+1,h}^{\alpha}(\cdot\,|\,s_{h})\big)\bigg]\nonumber.
    \end{align}
    \begin{proposition}\label{prop:dynamerr}
        Under assumptions in Theorem~\ref{thm:contractopt}, we have
        \begin{align*}
         \Delta_{t+1}^{\alpha}
         &\leq\bigg[H\bigg(2H(1+\lambda\log|\calA|)+\lambda L_{R}+\frac{1}{\eta\theta^{*}}\log\frac{|\calA|^{2}}{\beta_{t+1}}\bigg)+\frac{2}{\eta\theta^{*}}\max\bigg\{\log\frac{|\calA|}{\beta_{t+1}},L_{R}\bigg\}\bigg]\\
         &\quad\quad\qquad\cdot\bbE_{\barpi_{t+1}^{*,\alpha},\barmu_{t+1}^{\calI}}\bigg[\sum_{m=1}^{H}\big\|\barpi_{t+1,m}^{*,\alpha}(\cdot\,|\,s_{m})-\barpi_{t,m}^{*,\alpha}(\cdot\,|\,s_{m})\big\|_{1}\bigg]\\
         &\quad\qquad+\bigg[H\bigg(H\big(1+\lambda\log|\calA|\big)+\frac{1}{\eta\theta^{*}}\log\frac{|\calA|^{2}}{\beta_{t+1}}\bigg)L_{P}+2H\big[L_{r}+H(1+\lambda\log|\calA|)L_{P}\big]\bigg]\\
         &\quad\quad\qquad\cdot\sum_{m=1}^{H}\int_{0}^{1}\|\barmu_{t+1,m}^{\beta}-\barmu_{t,m}^{\beta}\|_{1}\rmd \beta\\
         &= C_{1}(\eta,\beta_{t+1})\bbE_{\barpi_{t+1}^{*,\alpha},\barmu_{t+1}^{\calI}}\bigg[\sum_{m=1}^{H}\big\|\barpi_{t+1,m}^{*,\alpha}(\cdot\,|\,s_{m})-\barpi_{t,m}^{*,\alpha}(\cdot\,|\,s_{m})\big\|_{1}\bigg]\nonumber\\
         &\quad\qquad+C_{2}(\eta,\beta_{t+1})\sum_{m=1}^{H}\int_{0}^{1}\|\barmu_{t+1,m}^{\beta}-\barmu_{t,m}^{\beta}\|_{1}\rmd \beta.
    \end{align*}
    In the above, we defined $C_{1}(\eta,\beta_{t+1})$ and $C_{2}(\eta,\beta_{t+1})$ for  ease of notation subsequently.
    \end{proposition}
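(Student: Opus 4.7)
The quantity $\Delta_{t+1}^{\alpha}$ is exactly the discrepancy between $X_{t+1}^{\alpha}$ written at its natural reference point $(\barpi_{t+1}^{*,\alpha},\barmu_{t+1}^{\calI})$ and the same expression re-anchored at the ``previous'' reference $(\barpi_{t}^{*,\alpha},\barmu_{t}^{\calI})$. My plan is to split $\Delta_{t+1}^{\alpha}$ into a value-function part and a $\frac{1}{\eta\theta^{*}}$-scaled KL part, and within each part to further decompose the error into two sources: (i) the change of the \emph{integrating measure} when moving from $(\barpi_{t+1}^{*,\alpha},\barmu_{t+1}^{\calI})$ to $(\barpi_{t}^{*,\alpha},\barmu_{t}^{\calI})$, and (ii) the change of the \emph{integrand} itself (a value function whose induced MDP has shifted, or a KL divergence whose first argument has shifted). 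Each source will be controlled by either $\|\barpi_{t+1,m}^{*,\alpha}(\cdot\,|\,s_{m})-\barpi_{t,m}^{*,\alpha}(\cdot\,|\,s_{m})\|_{1}$ or $\int_{0}^{1}\|\barmu_{t+1,m}^{\beta}-\barmu_{t,m}^{\beta}\|_{1}\,\rmd\beta$, which are exactly the quantities appearing in the conclusion.

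For the value-function part I would invoke the performance-difference lemma (Lemma~\ref{lem:pdl}) together with Assumption~\ref{assump:lipconti}, using that $\|z_{h}^{\alpha}(\mu^{\calI},W_{h}^{*})-z_{h}^{\alpha}(\tilde{\mu}^{\calI},W_{h}^{*})\|_{1}\le \int_{0}^{1}\|\mu_{h}^{\beta}-\tilde{\mu}_{h}^{\beta}\|_{1}\,\rmd\beta$ since $\|W^{*}\|_{\infty}\le 1$. A one-step swap of $\barmu_{t+1}^{\calI}$ for $\barmu_{t}^{\calI}$ inside the MDP then costs at most $L_{r}+H(1+\lambda\log|\calA|)L_{P}$ per step times this aggregate flow change, which produces the $C_{2}$ contribution. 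The swap of the integrating state law is handled via a standard one-step coupling controlled by $\|\barpi_{t+1,h}^{*,\alpha}-\barpi_{t,h}^{*,\alpha}\|_{1}$ and the distribution-flow change; the uniform bound $|V_{h}^{\lambda,\alpha}|\le H(1+\lambda\log|\calA|)$ and the $L_{R}$-Lipschitzness of the entropy-regularized objective on a bounded-below simplex then supply the remaining prefactors in $C_{1}$.

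For the KL part the critical observation is that Line~7 of Algorithm~\ref{algo:MDcontract} mixes $\hpi_{t+1}^{\alpha}$ with $\unif(\calA)$ at weight $\beta_{t+1}$, so every coordinate of $\pi_{t+1,h}^{\alpha}$ is at least $\beta_{t+1}/|\calA|$. This gives $\|-\log \pi_{t+1,h}^{\alpha}(\cdot\,|\,s)\|_{\infty}\le \log(|\calA|/\beta_{t+1})$, and hence the map $p\mapsto \kl\bigl(p\,\|\,\pi_{t+1,h}^{\alpha}(\cdot\,|\,s)\bigr)$ is $\log(|\calA|^{2}/\beta_{t+1})$-Lipschitz in the $\ell_{1}$ metric on $\Delta(\calA)$. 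Swapping $\barpi_{t+1,h}^{*,\alpha}$ for $\barpi_{t,h}^{*,\alpha}$ inside the KL thus costs an $\ell_{1}$ policy-difference term with the announced $\frac{1}{\eta\theta^{*}}\log(|\calA|^{2}/\beta_{t+1})$ prefactor, and swapping the state law is handled by the same one-step coupling as above.

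The main obstacle is horizon-wide bookkeeping: I must propagate the one-step coupling bound on the state law across all $H$ steps without introducing spurious contraction factors, and simultaneously align the resulting state measures with the reference measure $\bbE_{\barpi_{t+1}^{*,\alpha},\barmu_{t+1}^{\calI}}[\cdot]$ that appears in the conclusion. This is carried out by an induction on $h$ in which each step's Lipschitz perturbation of $P_{h}$ contributes to a telescoping sum bounded by $H\int_{0}^{1}\|\barmu_{t+1,h}^{\beta}-\barmu_{t,h}^{\beta}\|_{1}\,\rmd\beta$ plus $H\,\bbE[\|\barpi_{t+1,h}^{*,\alpha}-\barpi_{t,h}^{*,\alpha}\|_{1}]$; collecting the resulting prefactors then yields the explicit expressions for $C_{1}(\eta,\beta_{t+1})$ and $C_{2}(\eta,\beta_{t+1})$ stated in the proposition.
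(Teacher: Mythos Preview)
Your overall strategy matches the paper's proof: decompose $\Delta_{t+1}^{\alpha}$ by telescoping from the reference $(\barpi_{t+1}^{*,\alpha},\barmu_{t+1}^{\calI})$ to $(\barpi_{t}^{*,\alpha},\barmu_{t}^{\calI})$, separating changes of the integrating state law from changes of the integrand. The paper organizes this into five terms (V)--(IX) in Appendix~\ref{app:dynamerr}, bounded respectively via Lemma~\ref{lem:pdl}, Propositions~\ref{prop:difpsamedd}, \ref{prop:difdsamepd}, \ref{prop:difdsamepv}, and Lemma~\ref{lem:kllip}; your outline hits all of these ingredients.

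There is, however, a concrete error in your KL argument. The implication ``$\|{-}\log\pi_{t+1,h}^{\alpha}(\cdot\,|\,s)\|_{\infty}\le \log(|\calA|/\beta_{t+1})$, hence $p\mapsto\kl(p\,\|\,\pi_{t+1,h}^{\alpha}(\cdot\,|\,s))$ is $\log(|\calA|^{2}/\beta_{t+1})$-Lipschitz on $\Delta(\calA)$'' is false: the entropy contribution $R(p)=\sum_{a}p(a)\log p(a)$ has unbounded gradient near the boundary of the simplex, so a lower bound on the \emph{second} argument alone is not enough. The paper closes this gap by invoking Proposition~\ref{prop:optpolicybound}, which shows that the regularized optimal policies $\barpi_{t}^{*,\alpha}$ and $\barpi_{t+1}^{*,\alpha}$ are themselves uniformly bounded below (this is the origin of $L_{R}$), and then applies Lemma~\ref{lem:kllip}; that produces the $\frac{2}{\eta\theta^{*}}\max\{\log(|\calA|/\beta_{t+1}),L_{R}\}$ contribution to $C_{1}$ (their term (IX)). You have also swapped the roles of the two KL-related constants: $\log(|\calA|^{2}/\beta_{t+1})$ in the statement is not a Lipschitz constant for the first argument of $\kl$ but the \emph{uniform upper bound} $\kl(\barpi_{t,h}^{*,\alpha}\|\pi_{t+1,h}^{\alpha})\le \log|\calA|+\log(|\calA|/\beta_{t+1})$, which is what multiplies the total-variation change of the integrating state law in terms (VI) and (VII). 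Once you route the $L_{R}$ lower bound through the KL swap as well, your argument goes through and coincides with the paper's.
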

    \begin{proof}
        See Appendix~\ref{app:dynamerr}.
    \end{proof}

    We need the following proposition to relate the difference between the optimal policies $\barpi_{t+1,m}^{*,\alpha}(\cdot\,|\,s_{m})$ and $\barpi_{t,m}^{*,\alpha}(\cdot\,|\,s_{m})$ in Proposition~\ref{prop:dynamerr} to the distribution flows $\barmu_{t+1}^{\calI}$ and $\barmu_{t}^{\calI}$.
    \begin{proposition}\label{prop:optpolicylip}
        For any two distribution flows $\mu^{\calI}$ and $\tilde{\mu}^{\calI}$, we define the optimal policies $\pi^{*,\calI}=\Gamma_{1}^{\lambda}(\mu^{\calI},W^{*})$ and $\tilde{\pi}^{*,\calI}=\Gamma_{1}^{\lambda}(\tilde{\mu}^{\calI},W^{*})$. Under Assumption~\ref{assump:lipconti}, we have that for any $h\in[H]$ and $\alpha\in[0,1]$
        \begin{align*}
            &\max_{s\in\calS}\big|V_{h}^{\lambda,\alpha}(s,\pi^{*,\calI},\mu^{\calI},W^{*})-V_{h}^{\lambda,\alpha}(s,\tilde{\pi}^{*,\calI},\tilde{\mu}^{\calI},W^{*})\big|\\
            &\quad\leq \big(H(1+\lambda\log|\calA|)L_{P}+L_{r}\big)\sum_{m=h}^{H}\int_{0}^{1}\|\mu_{m}^{\beta}-\tilde{\mu}_{m}^{\beta}\|_{1}\rmd\beta,\\
            &\max_{s\in\calS}\big\|\pi_{h}^{*,\alpha}(\cdot\,|\,s)-\tilde{\pi}_{h}^{*,\alpha}(\cdot\,|\,s)\big\|_{1}\\
            &\quad\leq 2\big(H(1+\lambda\log|\calA|)L_{P}+L_{r}\big)\sum_{m=h}^{H}\int_{0}^{1}\|\mu_{m}^{\beta}-\tilde{\mu}_{m}^{\beta}\|_{1}\rmd\beta.
        \end{align*}
    \end{proposition}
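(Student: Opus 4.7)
The plan is to prove both inequalities by backward induction on $h$, with base case $h=H+1$ where $V_{H+1}^{\lambda,\alpha}\equiv 0$ (so both sides trivially vanish). For the inductive step on the value-function bound, I exploit the log-sum-exp identity for the entropy-regularized optimal value,
\begin{align*}
V_h^{\lambda,\alpha}(s,\pi^{*,\calI},\mu^{\calI},W^{*})=\lambda\log\sum_{a\in\calA}\exp\bigl(Q_h^{\lambda,\alpha}(s,a,\pi^{*,\calI},\mu^{\calI},W^{*})/\lambda\bigr),
\end{align*}
and its analogue for $(\tilde\pi^{*,\calI},\tilde\mu^{\calI})$. Since $\lambda\log\sum_a\exp(\cdot/\lambda)$ is $1$-Lipschitz in $\|\cdot\|_{\infty}$, the $V_h$ difference is dominated by $\max_{a}|Q_h^{\lambda,\alpha}(s,a,\pi^{*,\calI},\mu^{\calI},W^{*})-Q_h^{\lambda,\alpha}(s,a,\tilde\pi^{*,\calI},\tilde\mu^{\calI},W^{*})|$.

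I then decompose this $Q_h$ difference via the Bellman relation $Q_h=r_h+\bbE_{P_h}[V_{h+1}]$. The reward piece contributes $L_r\|z_h^\alpha(\mu_h^{\calI},W_h^{*})-z_h^\alpha(\tilde\mu_h^{\calI},W_h^{*})\|_1$ by Assumption~\ref{assump:lipconti}. The transition piece splits into (i) a ``different kernel, same value'' term, bounded by $\|V_{h+1}^{\lambda,\alpha}\|_\infty\cdot 2\,\mathrm{TV}\le H(1+\lambda\log|\calA|)\,L_P\|z-\tilde z\|_1$, where $|r|\le 1$ and the entropy bound $\lambda\log|\calA|$ control $\|V_{h+1}^{\lambda,\alpha}\|_\infty$; and (ii) a ``same kernel, different value'' term, bounded by $\max_{s'}|V_{h+1}^{\lambda,\alpha}(s',\pi^{*,\calI},\mu^{\calI},W^{*})-V_{h+1}^{\lambda,\alpha}(s',\tilde\pi^{*,\calI},\tilde\mu^{\calI},W^{*})|$, which is the induction hypothesis. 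The graphon bound $W_h^{*}\le 1$ gives
\begin{align*}
\|z_h^\alpha(\mu_h^{\calI},W_h^{*})-z_h^\alpha(\tilde\mu_h^{\calI},W_h^{*})\|_1\le\int_0^1 W_h^{*}(\alpha,\beta)\|\mu_h^\beta-\tilde\mu_h^\beta\|_1\,\rmd\beta\le\int_0^1\|\mu_h^\beta-\tilde\mu_h^\beta\|_1\,\rmd\beta,
\end{align*}
and unrolling the recursion from $H$ down to $h$ yields the first claim with constant $L_r+H(1+\lambda\log|\calA|)L_P$.

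For the policy bound I invoke the KKT characterization of the entropy-regularized maximizer, namely $\lambda\log\pi_h^{*,\alpha}(a|s)=Q_h^{\lambda,\alpha}(s,a,\pi^{*,\calI},\mu^{\calI},W^{*})-V_h^{\lambda,\alpha}(s,\pi^{*,\calI},\mu^{\calI},W^{*})$ and its tilded counterpart. Subtracting these two identities and applying a soft-max perturbation estimate of the form $\|\mathrm{softmax}(u)-\mathrm{softmax}(v)\|_1\le 2\|u-v\|_\infty$ on the centered log-policy vectors, then plugging in the Q-function bound just derived, produces the stated factor-$2$ inequality.

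The main obstacle will be the constant bookkeeping in the second bound: a naive soft-max Lipschitz step introduces a $1/\lambda$, so to land exactly on $2\bigl(H(1+\lambda\log|\calA|)L_P+L_r\bigr)$ without a residual $\lambda^{-1}$ one has to exploit that the argument of the soft-max is already $Q^\lambda/\lambda$, so that the $1/\lambda$ inflation cancels against the factor of $\lambda$ in the $V$-bound that appears after multiplying both sides of the KKT identity by $\lambda$; equivalently, one applies the soft-max estimate in the log-policy (not the policy) domain. By contrast, the value-function induction itself is routine, and the telescoping sum across $m=h,\dots,H$ follows directly from propagating the ``same kernel, different value'' recursion.
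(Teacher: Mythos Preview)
Your plan is essentially the paper's own proof. For the value-function bound the paper writes the optimal $V_h$ as a maximum over $p\in\Delta(\calA)$ and uses $|\max_x f(x)-\max_x g(x)|\le\max_x|f(x)-g(x)|$; your log-sum-exp form is the closed-form evaluation of that maximum, and the $1$-Lipschitz property of $\lambda\log\sum_a\exp(\cdot/\lambda)$ is the same inequality. The Bellman decomposition into a reward piece, a ``different kernel, same value'' piece, and a ``same kernel, different value'' piece, together with the bound $\|z_h^\alpha(\mu)-z_h^\alpha(\tilde\mu)\|_1\le\int_0^1\|\mu_h^\beta-\tilde\mu_h^\beta\|_1\,\rmd\beta$ and backward induction, are identical to the paper. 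For the policy bound the paper invokes Lemma~\ref{lem:optapprox} (the argmax of $\langle\cdot,y\rangle-f(\cdot)$ is $1/k$-Lipschitz in $y$ when $f$ is $k$-strongly convex), which is precisely the dual formulation of your softmax-perturbation estimate.

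Your proposed cancellation of the $1/\lambda$ factor, however, does not work. Writing $\pi=\mathrm{softmax}(Q/\lambda)$, any Lipschitz estimate on softmax gives $\|\pi-\tilde\pi\|_1\le (c/\lambda)\|Q-\tilde Q\|_\infty$; the KKT identity $\lambda\log\pi(a\,|\,s)=Q(s,a)-V(s)$ is merely a rewriting of the softmax formula and supplies no extra factor of $\lambda$ to absorb the $1/\lambda$. Working ``in the log-policy domain'' does not help either, since $\log\pi-\log\tilde\pi=(Q-\tilde Q)/\lambda-(V-\tilde V)/\lambda$ still carries the $1/\lambda$. In fact the paper's own proof contains the same slip: it applies Lemma~\ref{lem:optapprox} with regularizer $\lambda R(\cdot)$, which is $\lambda$-strongly convex in $\|\cdot\|_1$ by Pinsker, so the lemma yields $\|\pi_h^{*,\alpha}(\cdot\,|\,s)-\tilde\pi_h^{*,\alpha}(\cdot\,|\,s)\|_1\le\|y_h^\alpha(s,\cdot)-\tilde y_h^\alpha(s,\cdot)\|_\infty/\lambda$, yet the paper writes the bound without the $1/\lambda$ (compare the analogous step in Proposition~\ref{prop:nelip}, where the $1/\lambda$ is retained). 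This missing constant is harmless for the downstream analysis since everything is absorbed into $O(\cdot)$ terms depending on $\lambda,H,|\calA|$, but you should not expect the stated factor-$2$ bound to emerge cleanly.
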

    \begin{proof}[Proof of Proposition~\ref{prop:optpolicylip}]
        See Appendix~\ref{app:optpolicylip}.
    \end{proof}
    Propositions~\ref{prop:dynamerr} and \ref{prop:optpolicylip} shows that
    \begin{align}
        \Delta_{t+1}^{\alpha}&\leq \Big(2H\big(H(1+\lambda\log|\calA|)L_{P}+L_{r}\big)C_{1}(\eta,\beta_{t+1})+C_{2}(\eta,\beta_{t+1})\Big)\sum_{m=1}^{H}\int_{0}^{1}\|\barmu_{t+1,m}^{\beta}-\barmu_{t,m}^{\beta}\|_{1}\rmd \beta\nonumber\\
        &\leq 2H\Big(2H\big(H(1+\lambda\log|\calA|)L_{P}+L_{r}\big)C_{1}(\eta,\beta_{t+1})+C_{2}(\eta,\beta_{t+1})\Big)\alpha_{t},\label{ieq:70}
    \end{align}
    where the inequality results from the definition of $\barmu_{t+1}^{\calI}$. 
    
    Next, we will combine Eqn.~\eqref{eq:17} and inequalities~\eqref{ieq:69} and \eqref{ieq:70} to derive a relationship between $X_{t+1}^{\alpha}$ and $X_{t}^{\alpha}$. Adopting Assumption~\ref{assump:est} to control the estimation error of the action-value functions in inequality~\eqref{ieq:69}, we have that
    \begin{align}
        \int_{0}^{1}X_{t+1}^{\alpha}\rmd\alpha&\leq \theta^{*}\int_{0}^{1}X_{t}^{\alpha}\rmd\alpha+\frac{H(H+1)}{\eta}\bigg[2\eta\varepsilon_{Q}+2\eta H(1+\lambda\log|\calA|)\beta_{t+1}+2\beta_{t+1}\log \frac{|\calA|}{\beta_{t}}\nonumber\\
        &\qquad+2\eta\big[L_{r}+H(1+\lambda\log|\calA|)L_{P}\big]\varepsilon_{\mu}+2(1+\lambda\eta)\beta_{t+1}\bigg]\nonumber\\
        &\qquad+2H\Big(2H\big(H(1+\lambda\log|\calA|)L_{P}+L_{r}\big)C_{1}(\eta,\beta_{t+1})+C_{2}(\eta,\beta_{t+1})\Big)\alpha_{t},\label{ieq:78}
    \end{align}
    where the inequality results from Assumption~\ref{assump:est}.
    
    We set $\alpha_{t}=O(T^{-2/3})$ and $\beta_{t}=O(T^{-1})$ for all $t\in[T]$. Lemma~\ref{lem:seqconv} shows that
    \begin{align*}
        \int_{0}^{1}X_{t}^{\alpha}\rmd\alpha=O\bigg((\theta^{*})^{t}+(\varepsilon_{Q}+\varepsilon_{\mu})(\theta^{*})^{t/2}+\frac{\log T}{T^{2/3}}\bigg)+O(\varepsilon_{Q}+\varepsilon_{\mu}).
    \end{align*}
    Thus, we have
    \begin{align}
        \frac{1}{T}\sum_{t=1}^{T}\int_{0}^{1}X_{t}^{\alpha}\rmd\alpha=O\bigg(\frac{\log T}{T^{2/3}}\bigg)+O(\varepsilon_{Q}+\varepsilon_{\mu}).\label{eq:18}
    \end{align}

    \textbf{Step 3: Derive the convergence rate of the learned mean-field.}
    
    To derive the convergence behavior of $\hbmu_{t}^{\calI}$, we define the distribution flow induced by $\barpi_{t}^{*,\calI}$ as $\barmu_{t}^{*,\calI}=\Gamma_{2}(\barpi_{t}^{*,\calI},W^{*})$. Then we have that
    \begin{align}
        d(\hbmu_{t+1}^{\calI},\mu^{*,\calI}) &=d\big((1-\alpha_{t})\hbmu_{t}^{\calI}+\alpha_{t}\hat{\mu}_{t}^{\calI},\mu^{*,\calI}\big)\nonumber\\
        &\leq (1-\alpha_{t})d(\hbmu_{t}^{\calI},\mu^{*,\calI})+\alpha_{t}d(\hat{\mu}_{t}^{\calI},\mu_{t}^{\calI})+\alpha_{t}d(\mu_{t}^{\calI},\barmu_{t}^{*,\calI})+\alpha_{t}d(\barmu_{t}^{*,\calI},\mu^{*,\calI}),\label{ieq:66}
    \end{align}
    where the equality results from the definition of $\hbmu_{t+1}^{\calI}$, and the inequality results from the triangle inequality. For the fourth term in the right-hand side of inequality~\eqref{ieq:66}, we have that
    \begin{align}
        d(\barmu_{t}^{*,\calI},\mu^{*,\calI})&=d\Big(\Gamma_{2}\big(\Gamma_{1}^{\lambda}(\barmu_{t}^{\calI},W^{*}),W^{*}\big),\Gamma_{2}\big(\Gamma_{1}^{\lambda}(\mu^{*,\calI},W^{*}),W^{*}\big)\Big)\nonumber\\
        &\leq d_{1}d_{2}d(\barmu_{t}^{\calI},\mu^{*,\calI})\nonumber\\
        & \leq d_{1}d_{2}\big(d(\barmu_{t}^{\calI},\hbmu_{t}^{\calI})+d(\hbmu_{t}^{\calI},\mu^{*,\calI})\big),\label{ieq:67}
    \end{align}
    where the equality results from the definitions of $\barmu_{t}^{*,\calI}$ and $\mu^{*,\calI}$, the first inequality results from Assumption~\ref{assump:contract}, and the last inequality results from the triangle inequality. We then define $\tilde{\mu}_{t}^{*,\calI}=\Gamma_{3}(\barpi_{t}^{*,\calI},\barmu_{t}^{\calI},W^{*})$. For the third term in the right-hand side of inequality~\eqref{ieq:66}, we have that
    \begin{align}
        d(\mu_{t}^{\calI},\barmu_{t}^{*,\calI}) 
        &=d\big(\Gamma_{2}(\pi_{t}^{\calI},W^{*}),\Gamma_{2}(\barpi_{t}^{*,\calI},W^{*})\big)\nonumber\\
        &\leq d_{2}\int_{0}^{1}\sum_{h=1}^{H}\bbE_{\mu_{h}^{*,\alpha}}\Big[\big\|\pi_{t,h}^{\alpha}(\cdot\,|\,s)-\barpi_{t,h}^{*,\alpha}(\cdot\,|\,s)\big\|_{1}\Big]\rmd\alpha \nonumber\\
        &= d_{2}\int_{0}^{1}\sum_{h=1}^{H}\bbE_{\tilde{\mu}_{t,h}^{*,\alpha}}\bigg[\frac{\mu_{h}^{*,\alpha}(s)}{\tilde{\mu}_{t,h}^{*,\alpha}(s)}\big\|\pi_{t,h}^{\alpha}(\cdot\,|\,s)-\barpi_{t,h}^{*,\alpha}(\cdot\,|\,s)\big\|_{1}\bigg]\rmd\alpha\nonumber\\
        &\leq d_{2}C_{\mu}\sqrt{H}\sqrt{2\int_{0}^{1}\sum_{h=1}^{H}\bbE_{\tilde{\mu}_{t,h}^{*,\alpha}}\Big[\kl\big(\barpi_{t,h}^{*,\alpha}(\cdot\,|\,s)\|\pi_{t,h}^{\alpha}(\cdot\,|\,s)\big)\Big]\rmd\alpha},\label{ieq:68}
    \end{align}
    where the first inequality results from Assumption~\ref{assump:contract}, and the second inequality results from Assumption~\ref{assump:optconcen} and the Cauchy--Schwarz inequality. Define $Y_{t}=d(\hbmu_{t}^{\calI},\mu^{*,\calI})$. Combining inequalities~\eqref{ieq:66}, \eqref{ieq:67}, and \eqref{ieq:68}, we have that
    \begin{align*}
        Y_{t+1}\leq (1-\alpha_{t}\bard)Y_{t}+\alpha_{t}d(\hat{\mu}_{t}^{\calI},\mu_{t}^{\calI})+\alpha_{t}d_{1}d_{2}d(\barmu_{t}^{\calI},\hbmu_{t}^{\calI})+\alpha_{t}d_{2}C_{\mu}\sqrt{H}\sqrt{2\eta\theta^{*}\int_{0}^{1}X_{t}^{\alpha}\rmd\alpha},
    \end{align*}
    where $\bard=1-d_{1}d_{2}$. 
    
    Recall the expressions of $\barmu_{t}^{\calI}$ and $\hbmu_{t}^{\calI}$ in Eqn.~\eqref{eq:19}, we have that
    \begin{align*}
        d(\barmu_{t}^{\calI},\hbmu_{t}^{\calI})\leq \sum_{m=1}^{t-1}\alpha_{m,t-1}d(\hat{\mu}_{m}^{\calI},\mu_{m}^{\calI})\leq \varepsilon_{\mu},
    \end{align*}
    where the first inequality results from the triangle inequality, and the second inequality results from Assumption~\ref{assump:est}. Take $\alpha_{t}=\alpha$. we have that
    \begin{align*}
        \frac{1}{T}\sum_{t=1}^{T}Y_{t}&\leq \frac{1}{\bard \alpha T}Y_{1}+\frac{(1+d_{1}d_{2})}{\bard}\varepsilon_{\mu}+\frac{d_{2}C_{\mu}\sqrt{H}}{\bard}\cdot\frac{1}{T}\sum_{t=1}^{T}\sqrt{2\eta\theta^{*}\int_{0}^{1}X_{t}^{\alpha}\rmd\alpha}\\
        &\leq \frac{1}{\bard \alpha T}Y_{1}+\frac{(1+d_{1}d_{2})}{\bard}\varepsilon_{\mu}+\frac{d_{2}C_{\mu}\sqrt{H}}{\bard}\cdot\sqrt{2\eta\theta^{*}\frac{1}{T}\sum_{t=1}^{T}\int_{0}^{1}X_{t}^{\alpha}\rmd\alpha},
    \end{align*}
    where the last inequality results from Eqn.~\eqref{eq:18}. Thus, we have
    \begin{align*}
        \frac{1}{T}\sum_{t=1}^{T}Y_{t}=O\bigg(\frac{\sqrt{\log T}}{T^{1/3}}\bigg)+O(\varepsilon_{\mu}+\sqrt{\varepsilon_{Q}+\varepsilon_{\mu}}).
    \end{align*}
    
    \textbf{Step 4: Build the desired result from step 2 and step 3.}
    
    From the definition of $X_{t}$, i.e., Eqn.~\eqref{eq:defx}, and Eqn.~\eqref{eq:18}, we have that
    \begin{align*}
        \frac{1}{T}\sum_{t=1}^{T}\int_{0}^{1}\bbE_{\barpi_{t}^{*,\alpha},\barmu_{t}^{\calI}}\bigg[\sum_{h=1}^{H}\kl\big(\barpi_{t,h}^{*,\alpha}(\cdot\,|\,s_{h})\|\pi_{t,h}^{\alpha}(\cdot\,|\,s_{h})\big)\bigg]\rmd\alpha =O\bigg(\frac{\log T}{T^{2/3}}\bigg)+O(\varepsilon_{Q}+\varepsilon_{\mu}).
    \end{align*}
    Recall that we defined $\tilde{\mu}_{t}^{*,\calI}=\Gamma_{3}(\barpi_{t}^{*,\calI},\barmu_{t}^{\calI},W^{*})$. Then we  bound  $D(\cdot,\cdot)$ as follows
    \begin{align*}
        \frac{1}{T}\sum_{t=1}^{T}D(\pi_{t}^{\calI},\barpi_{t}^{*,\calI})
        &=\frac{1}{T}\sum_{t=1}^{T}\int_{0}^{1}\sum_{h=1}^{H}\bbE_{\tilde{\mu}_{t,h}^{*,\alpha}}\bigg[\frac{\mu_{h}^{*,\alpha}(s)}{\tilde{\mu}_{t,h}^{*,\alpha}(s)}\big\|\pi_{t,h}^{\alpha}(\cdot\,|\,s)-\barpi_{t,h}^{*,\alpha}(\cdot\,|\,s)\big\|_{1}\bigg]\rmd\alpha\\
        &\leq C_{\mu}\sqrt{H}\sqrt{\frac{2}{T}\sum_{t=1}^{T}\int_{0}^{1}\sum_{h=1}^{H}\bbE_{\tilde{\mu}_{t,h}^{*,\alpha}}\Big[\kl\big(\barpi_{t,h}^{*,\alpha}(\cdot\,|\,s)\|\pi_{t,h}^{\alpha}(\cdot\,|\,s)\big)\Big]\rmd\alpha}\\
        &\leq O\bigg(\frac{\sqrt{\log T}}{T^{1/3}}\bigg)+O(\sqrt{\varepsilon_{Q}+\varepsilon_{\mu}}),
    \end{align*}
    where the first inequality results from the same arguments in inequality~\eqref{ieq:68}. To bound the distance between $\pi_{t}^{\calI}$ and $\pi^{*,\calI}$, we adopt the triangle inequality as
    \begin{align*}
        D(\pi_{t}^{\calI},\pi^{*,\calI})\leq D(\pi_{t}^{\calI},\barpi_{t}^{*,\calI})+D(\barpi_{t}^{*,\calI},\pi^{*,\calI})\leq D(\pi_{t}^{\calI},\barpi_{t}^{*,\calI})+d_{1}d(\barmu_{t}^{\calI},\mu^{*,\calI}).
    \end{align*}
    Thus, we have that
    \begin{align*}
        &D\bigg(\frac{1}{T}\sum_{t=1}^{T}\pi_{t}^{\calI},\pi^{*,\calI}\bigg)+d\bigg(\frac{1}{T}\sum_{t=1}^{T}\hbmu_{t}^{\calI},\mu^{*,\calI}\bigg)\\
        &\quad\leq \frac{1}{T}\sum_{t=1}^{T}D(\pi_{t}^{\calI},\barpi_{t}^{*,\calI})+d_{1}d(\barmu_{t}^{\calI},\mu^{*,\calI})+d(\hbmu_{t}^{\calI},\mu^{*,\calI})\\
        &\quad\leq \frac{1}{T}\sum_{t=1}^{T}D(\pi_{t}^{\calI},\barpi_{t}^{*,\calI})+d_{1}d(\hbmu_{t}^{\calI},\mu^{*,\calI})+d(\hbmu_{t}^{\calI},\mu^{*,\calI})+d_{1}d(\barmu_{t}^{\calI},\hbmu_{t}^{\calI})\\
        &\quad=O\bigg(\frac{\sqrt{\log T}}{T^{1/3}}\bigg)+O(\varepsilon_{\mu}+\sqrt{\varepsilon_{Q}+\varepsilon_{\mu}}),
    \end{align*}
    where the first inequality results from Jensen's inequality, and the second inequality results from the triangle inequality. Thus, we conclude the proof of Theorem~\ref{thm:contractopt}.
\end{proof}
\section{Corollary for Single-Agent \ac{mdp}}\label{app:singlepmd}
In this section, we state and prove our corollary for the policy mirror descent algorithm on single-agent \ac{mdp}. A single-agent \ac{mdp} is defined through a tuple $(\calS,\calA,\mu_{1},P,r,H)$. The state space and the action space are denoted respectively as $\calS$ and $\calA$. The initial state distribution $\mu_{1}\in\Delta(\calS)$ is state distribution at time $h=1$. The transition kernels $P_{h}:\calS\times\calA\rightarrow\Delta(\calS))$ and reward functions $r_{h}:\calS\times\calA\rightarrow[0,1]$ for $h\in[H]$ describe the state transition behavior and the reward generation process. A policy $\pi=\{\pi_{h}\}_{h=1}^{H}$ is the set of mappings $\pi_{h}:\calS\rightarrow\calA$ for $h\in[H]$. Similar as the value function defined in Section~\ref{sec:prelim}, the value function and the action-value function of a policy $\pi$ on a $\lambda$-regularized \ac{mdp} are defined as
\begin{align*}
    V_{h}^{\lambda}(s,\pi)&=\bbE^{\pi}\bigg[\sum_{t=h}^{H}r_{t}(s_{t},a_{t})-\lambda\log\pi_{t}(a_{t}\,|\,s_{t})\,\bigg|\, s_{h}=s\bigg],\\
    Q_{h}^{\lambda}(s,a,\pi)&=r_{h}(s,a)+\bbE\big[V_{h+1}^{\lambda}(s_{h+1},\pi)\,|\, s_{h}=s,a_{h}=a\big].
\end{align*}
The cumulative reward function is $J^{\lambda}(\pi)=\bbE_{\mu_{1}}[V_{1}^{\lambda}(s,\pi)]$, where the expectation is taken with respect to $s\sim \mu_{1}$. We denote the optimal policy as $\pi^{*}=\argmax_{\pi\in\Pi^{H}}J^{\lambda}(\pi)$. The policy mirror descent algorithm is implementing
\begin{align*}
    \pi_{t+1,h}(\cdot\,|\,s)\propto \big(\pi_{t,h}(\cdot\,|\,s)\big)^{1-\frac{\lambda\eta_{t+1}}{1+\lambda\eta_{t+1}}}\exp\Big(\frac{\eta_{t+1}}{1+\lambda\eta_{t+1}} Q_{h}^{\lambda}(s,\cdot,\pi_{t})\Big) \text{ for all }h\in[H].
\end{align*}
for $t\in[T]$, and we set $\pi_{1,h}(\cdot\,|\,s)=\unif(\calA)$ for all $s\in\calS$. Then the convergence result of this algorithm is
\begin{corollary}\label{coro:singlepmd}
    Suppose that $\eta_{t}=\eta>0$ for all $t\in[T]$, and we set this as some function of $\lambda$, $H$ and $|\calA|$. Then we have
    \begin{align*}
        &\bbE_{\pi^{*}}\bigg[\sum_{h=1}^{H}V_{h}^{\lambda}(s_{h},\pi^{*})-V_{h}^{\lambda}(s_{h},\pi_{t+1})\bigg]+\frac{1}{\eta\theta^{*}}\bbE_{\pi^{*}}\bigg[\sum_{h=1}^{H}\kl\big(\pi_{h}^{*}(\cdot\,|\,s_{h})\|\pi_{t+1,h}(\cdot\,|\,s_{h})\big)\bigg]\nonumber\\
        &\quad\leq \theta^{*}\bigg\{\bbE_{\pi^{*}}\bigg[\sum_{h=1}^{H}V_{h}^{\lambda}(s_{h},\pi^{*})-V_{h}^{\lambda}(s_{h},\pi_{t})\bigg]+\frac{1}{\eta\theta^{*}}\bbE_{\pi_{t}^{*}}\bigg[\sum_{h=1}^{H}\kl\big(\pi_{t,h}^{*}(\cdot\,|\,s_{h})\|\pi_{t,h}(\cdot\,|\,s_{h})\big)\bigg]\bigg\},
    \end{align*}
    where $0<\theta^{*}<1$ is a function of $\lambda$, $H$ and $|\calA|$, and $\bbE_{\pi^{*}}$ refers to the expectation with respect to the state distribution induced by $\pi^{*}$.
\end{corollary}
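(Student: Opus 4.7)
The plan is to specialize Steps~1--2 of the proof of Theorem~\ref{thm:contractopt} to the single-agent setting, where the environment is fixed across iterations (so the dynamical-error term $\Delta_{t+1}^{\alpha}$ vanishes) and the action-value function is evaluated exactly (so the estimation-error term $\Lambda_{t+1,h}^{\alpha}$ vanishes). Accordingly, the proof reduces to three moves: a three-point mirror-descent inequality, the performance difference lemma, and the value-fraction bound of Proposition~\ref{prop:valuefraction}, with no error bookkeeping to carry along.

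I would first establish the single-agent analog of Proposition~\ref{prop:firstordopt}. The update is the closed-form maximizer of $\eta[\langle Q_h^\lambda(s,\cdot,\pi_t),p\rangle - \lambda\barh(p)] - \kl(p\|\pi_{t,h}(\cdot|s))$, so the first-order optimality condition together with the three-point identity for the KL Bregman divergence yields, for every $p\in\Delta(\calA)$,
\[
\eta\langle Q_h^\lambda(s,\cdot,\pi_t), p-\pi_{t+1,h}(\cdot|s)\rangle + \lambda\eta\big[R(\pi_{t+1,h}(\cdot|s))-\barh(p)\big] \leq \kl\big(p\,\|\,\pi_{t,h}(\cdot|s)\big) - (1+\lambda\eta)\kl\big(p\,\|\,\pi_{t+1,h}(\cdot|s)\big) - \kl\big(\pi_{t+1,h}(\cdot|s)\,\|\,\pi_{t,h}(\cdot|s)\big).
\]
The derivation is identical to that of Proposition~\ref{prop:firstordopt} once $\hatQ$ is replaced by the exact $Q$ and $\hbmu$ by the true MDP data.

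Next, I would substitute $p=\pi_h^*(\cdot|s_h)$, take the expectation along the $\pi^*$-induced trajectory, and sum over $h\in[H]$. Combining with the finite-horizon performance difference lemma (Lemma~\ref{lem:pdl}) applied between $\pi_{t+1}$ and $\pi_t$, in the same manner that produces inequality~\eqref{ieq:76}, gives
\begin{align*}
&\bbE_{\pi^*}\!\bigg[\sum_{h=1}^H V_h^\lambda(s_h,\pi^*) - V_h^\lambda(s_h,\pi_{t+1})\bigg] + \frac{1+\lambda\eta}{\eta}\bbE_{\pi^*}\!\bigg[\sum_{h=1}^H \kl\big(\pi_h^*(\cdot|s_h)\,\|\,\pi_{t+1,h}(\cdot|s_h)\big)\bigg] \\
&\quad\leq \bbE_{\pi^*}\!\bigg[\sum_{h=1}^H V_h^\lambda(s_h,\pi^*) - V_h^\lambda(s_h,\pi_t)\bigg] - \bbE_{\mu_1}\!\big[V_1^\lambda(s_1,\pi^*) - V_1^\lambda(s_1,\pi_t)\big] \\
&\qquad + \frac{1}{\eta}\bbE_{\pi^*}\!\bigg[\sum_{h=1}^H \kl\big(\pi_h^*(\cdot|s_h)\,\|\,\pi_{t,h}(\cdot|s_h)\big)\bigg].
\end{align*}

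Finally, I would invoke Proposition~\ref{prop:valuefraction}, which after a one-step rearrangement delivers $\bbE_{\mu_1}[V_1^\lambda(s_1,\pi^*)-V_1^\lambda(s_1,\pi_t)] \geq (1-\theta^*)\bbE_{\pi^*}[\sum_{h=1}^H V_h^\lambda(s_h,\pi^*)-V_h^\lambda(s_h,\pi_t)]$ with $\theta^*=1/(1+\beta^*)$; consequently the first two terms on the right-hand side together are at most $\theta^*\bbE_{\pi^*}[\sum_{h=1}^H V_h^\lambda(s_h,\pi^*)-V_h^\lambda(s_h,\pi_t)]$. Choosing $\eta$ so that $1+\lambda\eta=1/\theta^*$ makes the coefficient $(1+\lambda\eta)/\eta$ on the left equal to $1/(\eta\theta^*)$, and the coefficient $1/\eta$ on the right equal to $\theta^*/(\eta\theta^*)$, which is exactly the claimed one-step contraction; $\eta$ and $\theta^*$ depend only on $\lambda,H,|\calA|$ by inspection of Proposition~\ref{prop:valuefraction}. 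No serious obstacle is anticipated, as the argument is a strict simplification of the GMFG proof already in place; the only delicate point is the algebraic repackaging of Proposition~\ref{prop:valuefraction} into the $(1-\theta^*)$-form and verifying that the two KL coefficients align under the chosen $\eta$.
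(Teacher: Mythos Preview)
Your proposal is correct and follows essentially the same approach as the paper's proof: both specialize Steps~1--2 of the Theorem~\ref{thm:contractopt} argument to the single-agent fixed-MDP setting, combining the three-point inequality of Proposition~\ref{prop:firstordopt}, the performance difference lemma, and Proposition~\ref{prop:valuefraction} with the choice $1+\lambda\eta=1/\theta^{*}$, while dropping the estimation and dynamical error terms. The paper's own proof is simply a terser pointer to the pipeline leading to inequality~\eqref{ieq:59}; your explicit algebraic check that Proposition~\ref{prop:valuefraction} rearranges to $\bbE_{\mu_1}[V_1^\lambda(s_1,\pi^*)-V_1^\lambda(s_1,\pi_t)]\geq(1-\theta^*)\sum_h\bbE_{\pi^*}[V_h^\lambda(s_h,\pi^*)-V_h^\lambda(s_h,\pi_t)]$ and that the KL coefficients align is a faithful unpacking of that step.
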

\begin{proof}[Proof of Corollary~\ref{coro:singlepmd}]
    Similarly as Step 1 of the proof of Theorem~\ref{thm:contractopt}, we have
    \begin{align*}
        &\eta_{t+1}\big\langle Q_{h}^{\lambda}(s_{h},\cdot,\pi_{t}),p-\pi_{t+1,h}(\cdot\,|\,s_{h}) \big\rangle+\lambda \eta_{t+1}\Big[ R\big(\pi_{t+1,h}(\cdot\,|\,s_{h})\big)-R(p)\Big]+\kl\big(\pi_{t+1,h}(\cdot\,|\,s_{h})\|\pi_{t,h}(\cdot\,|\,s_{h})\big)\nonumber\\
        &\quad\leq \kl\big(p\,\|\,\pi_{t,h}(\cdot\,|\,s_{h})\big)-(1+\lambda\eta_{t+1})\kl\big(p\,\|\,\pi_{t+1,h}(\cdot\,|\,s_{h})\big)
    \end{align*}
    for any $p\in\Delta(\calA)$. Following the same pipeline to inequality~\eqref{ieq:59}, we have that
    \begin{align*}
        &\bbE_{\pi^{*}}\bigg[\sum_{h=1}^{H}V_{h}^{\lambda}(s_{h},\pi^{*})-V_{h}^{\lambda}(s_{h},\pi_{t+1})\bigg]+\frac{1}{\eta\theta^{*}}\bbE_{\pi^{*}}\bigg[\sum_{h=1}^{H}\kl\big(\pi_{h}^{*}(\cdot\,|\,s_{h})\|\pi_{t+1,h}(\cdot\,|\,s_{h})\big)\bigg]\nonumber\\
        &\quad\leq \theta^{*}\bigg\{\bbE_{\pi^{*}}\bigg[\sum_{h=1}^{H}V_{h}^{\lambda}(s_{h},\pi^{*})-V_{h}^{\lambda}(s_{h},\pi_{t})\bigg]+\frac{1}{\eta\theta^{*}}\bbE_{\pi_{t}^{*}}\bigg[\sum_{h=1}^{H}\kl\big(\pi_{t,h}^{*}(\cdot\,|\,s_{h})\|\pi_{t,h}(\cdot\,|\,s_{h})\big)\bigg]\bigg\},
    \end{align*}
    where $\beta^{*}$ is defined in Proposition~\ref{prop:valuefraction}, $\theta^{*}=1/(1+\beta^{*})<1$, $\eta_{t}=\eta$ is defined through $1+\lambda\eta=1/\theta^{*}$. We note that in this single-agent setting, we do not have the $\barmu_{t}^{\calI}$, which is adopted to calculate the influence from others. Thus, the optimal policy $\pi^{*}$ is not changed over iterations. At the same time, we do not include the estimation error in the above algorithm. We conclude the proof of Corollary~\ref{coro:singlepmd}.
\end{proof}

\section{Proof of Theorem~\ref{thm:fixest}}
\begin{proof}[Proof of Theorem~\ref{thm:fixest}]
We first decompose the difference between the risk as the sum of the generalization error of risk, the Estimation Error of Mean-embedding, and the empirical risk difference. Given the fact that the empirical risk difference is equal and less to zero, Our proof involves two steps:
\begin{itemize}
    \item Bound the Estimation Error of Mean-embedding.
    \item Bound the generalization error of risk.
\end{itemize}

\begin{align*}
    &\calR_{\bar{\xi}}(\hatf_{h},\hatg_{h},\hatW_{h})-\calR_{\bar{\xi}}(f_{h}^{*},g_{h}^{*},W_{h}^{*})\nonumber\\
    &\quad=\text{Generalization Error of Risk}+\text{Estimation Error of Mean-embedding}+\text{Empirical Risk Difference},
\end{align*}
where each term is defined as
\begin{align*}
    &\text{Generalization Error of Risk}\\
    &\quad=\frac{1}{NL}\sum_{\tau=1}^{L}\sum_{i=1}^{N}\bbE_{\rho_{\tau,h}^{i}}\bigg[\Big(s_{\tau,h+1}^{i}-\hatf_{h}\big(\omega_{\tau,h}^{i}(\hatW_{h})\big)\Big)^{2}-\Big(s_{\tau,h+1}^{i}-f_{h}^{*}\big(\omega_{\tau,h}^{i}(W_{h}^{*})\big)\Big)^{2}\bigg]\nonumber\\
    &\quad\qquad -2\frac{1}{NL}\sum_{\tau=1}^{L}\sum_{i=1}^{N}\Big(s_{\tau,h+1}^{i}-\hatf_{h}\big(\omega_{\tau,h}^{i}(\hatW_{h})\big)\Big)^{2}-\Big(s_{\tau,h+1}^{i}-f_{h}^{*}\big(\omega_{\tau,h}^{i}(W_{h}^{*})\big)\Big)^{2}\nonumber\\
    &\quad\qquad+\frac{1}{NL}\sum_{\tau=1}^{L}\sum_{i=1}^{N}\bbE_{\rho_{\tau,h}^{i}}\bigg[\Big(r_{\tau,h}^{i}-\hatg_{h}\big(\omega_{\tau,h}^{i}(\hatW_{h})\big)\Big)^{2}-\Big(r_{\tau,h}^{i}-g_{h}^{*}\big(\omega_{\tau,h}^{i}(W_{h}^{*})\big)\Big)^{2}\bigg]\nonumber\\
    &\quad\qquad -2\frac{1}{NL}\sum_{\tau=1}^{L}\sum_{i=1}^{N}\Big(r_{\tau,h}^{i}-\hatg_{h}\big(\omega_{\tau,h}^{i}(\hatW_{h})\big)\Big)^{2}-\Big(r_{\tau,h}^{i}-g_{h}^{*}\big(\omega_{\tau,h}^{i}(W_{h}^{*})\big)\Big)^{2}.
\end{align*}
This generalization error of risk represents the error due to the fact that we optimize over the empirical estimation of the risk not the population risk.
\begin{align*}
    &\text{Estimation Error of Mean-embedding}\\
    &\quad=2\frac{1}{NL}\sum_{\tau=1}^{L}\sum_{i=1}^{N}\Big(s_{\tau,h+1}^{i}-\hatf_{h}\big(\omega_{\tau,h}^{i}(\hatW_{h})\big)\Big)^{2}-\Big(s_{\tau,h+1}^{i}-\hatf_{h}\big(\homega_{\tau,h}^{i}(\hatW_{h})\big)\Big)^{2}\nonumber\\
    &\quad\qquad+2\frac{1}{NL}\sum_{\tau=1}^{L}\sum_{i=1}^{N}\Big(s_{\tau,h+1}^{i}-f_{h}^{*}\big(\homega_{\tau,h}^{i}(W_{h}^{*})\big)\Big)^{2}-\Big(s_{\tau,h+1}^{i}-f_{h}^{*}\big(\omega_{\tau,h}^{i}(W_{h}^{*})\big)\Big)^{2}\nonumber\\
    &\quad\qquad+2\frac{1}{NL}\sum_{\tau=1}^{L}\sum_{i=1}^{N}\Big(r_{\tau,h}^{i}-\hatg_{h}\big(\omega_{\tau,h}^{i}(\hatW_{h})\big)\Big)^{2}-\Big(r_{\tau,h}^{i}-\hatg_{h}\big(\homega_{\tau,h}^{i}(\hatW_{h})\big)\Big)^{2}\nonumber\\
    &\quad\qquad+2\frac{1}{NL}\sum_{\tau=1}^{L}\sum_{i=1}^{N}\Big(r_{\tau,h}^{i}-g_{h}^{*}\big(\homega_{\tau,h}^{i}(W_{h}^{*})\big)\Big)^{2}-\Big(r_{\tau,h}^{i}-g_{h}^{*}\big(\omega_{\tau,h}^{i}(W_{h}^{*})\big)\Big)^{2}.
\end{align*}
Estimation error of mean-embedding represents the error due to the fact that we cannot observe the value of $\homega_{\tau,h}^{i}(\hatW_{h})$. Instead, we can only estimate the value of it through the states of sampled agents.
\begin{align*}
    &\text{Empirical Risk Difference}\\
    &\quad=2\frac{1}{NL}\sum_{\tau=1}^{L}\sum_{i=1}^{N}\Big(s_{\tau,h+1}^{i}-\hatf_{h}\big(\homega_{\tau,h}^{i}(\hatW_{h})\big)\Big)^{2}-\Big(s_{\tau,h+1}^{i}-f_{h}^{*}\big(\homega_{\tau,h}^{i}(W_{h}^{*})\big)\Big)^{2}\\
    &\quad\qquad+2\frac{1}{NL}\sum_{\tau=1}^{L}\sum_{i=1}^{N}\Big(r_{\tau,h}^{i}-\hatg_{h}\big(\homega_{\tau,h}^{i}(\hatW_{h})\big)\Big)^{2}-\Big(r_{\tau,h}^{i}-g_{h}^{*}\big(\homega_{\tau,h}^{i}(W_{h}^{*})\big)\Big)^{2}.
\end{align*}
Empirical risk difference represents the error from that fact that we choose $(\hatf_{h},\hatg_{h},\hatW_{h})$ not $(f_{h}^{*},g_{h}^{*},W_{h}^{*})$ by minimizing the empirical risk. From the procedure of Algorithm~\eqref{algo:estalgo}, we have
\begin{align*}
    \text{Empirical Risk Difference}\leq 0.
\end{align*}
Thus, we have that
\begin{align}
    &\calR_{\bar{\xi}}(\hatf_{h},\hatg_{h},\hatW_{h})-\calR_{\bar{\xi}}(f_{h}^{*},g_{h}^{*},W_{h}^{*})\nonumber\\
    &\quad\leq\bigg\{\frac{1}{NL}\sum_{\tau=1}^{L}\sum_{i=1}^{N}\bbE_{\rho_{\tau,h}^{i}}\bigg[\Big(s_{\tau,h+1}^{i}-\hatf_{h}\big(\omega_{\tau,h}^{i}(\hatW_{h})\big)\Big)^{2}-\Big(s_{\tau,h+1}^{i}-f_{h}^{*}\big(\omega_{\tau,h}^{i}(W_{h}^{*})\big)\Big)^{2}\bigg]\nonumber\\
    &\quad\qquad -2\frac{1}{NL}\sum_{\tau=1}^{L}\sum_{i=1}^{N}\Big(s_{\tau,h+1}^{i}-\hatf_{h}\big(\omega_{\tau,h}^{i}(\hatW_{h})\big)\Big)^{2}-\Big(s_{\tau,h+1}^{i}-f_{h}^{*}\big(\omega_{\tau,h}^{i}(W_{h}^{*})\big)\Big)^{2}\nonumber\\
    &\quad\qquad+\frac{1}{NL}\sum_{\tau=1}^{L}\sum_{i=1}^{N}\bbE_{\rho_{\tau,h}^{i}}\bigg[\Big(r_{\tau,h}^{i}-\hatg_{h}\big(\omega_{\tau,h}^{i}(\hatW_{h})\big)\Big)^{2}-\Big(r_{\tau,h}^{i}-g_{h}^{*}\big(\omega_{\tau,h}^{i}(W_{h}^{*})\big)\Big)^{2}\bigg]\nonumber\\
    &\quad\qquad -2\frac{1}{NL}\sum_{\tau=1}^{L}\sum_{i=1}^{N}\Big(r_{\tau,h}^{i}-\hatg_{h}\big(\omega_{\tau,h}^{i}(\hatW_{h})\big)\Big)^{2}-\Big(r_{\tau,h}^{i}-g_{h}^{*}\big(\omega_{\tau,h}^{i}(W_{h}^{*})\big)\Big)^{2}\bigg\}\nonumber\\
    &\quad\qquad+2\sup_{f\in\bbB(r,\bar{\calH}),W\in\tilde{\calW}}\bigg|\frac{1}{NL}\sum_{\tau=1}^{L}\sum_{i=1}^{N}\Big(s_{\tau,h+1}^{i}-f\big(\homega_{\tau,h}^{i}(W)\big)\Big)^{2}-\Big(s_{\tau,h+1}^{i}-f\big(\omega_{\tau,h}^{i}(W)\big)\Big)^{2}\bigg|\nonumber\\
    &\quad\qquad+2\sup_{g\in\bbB(\tilr,\tilde{\calH}),W\in\tilde{\calW}}\bigg|\frac{1}{NL}\sum_{\tau=1}^{L}\sum_{i=1}^{N}\Big(r_{\tau,h}^{i}-g\big(\homega_{\tau,h}^{i}(W)\big)\Big)^{2}-\Big(r_{\tau,h}^{i}-g\big(\omega_{\tau,h}^{i}(W)\big)\Big)^{2}\bigg|\nonumber\\
    &\quad=\text{(I)}+\text{(II)},\label{ieq:8}
\end{align}
We note that the terms related to the transition kernels and reward functions are similar. In the following, we will only present the bounds for the terms related to the transition kernels, and the bounds for the reward functions can be similarly derived.

\textbf{Step 1: Bound the Estimation Error of Mean-embedding.}

Considering term (II), we have that 
\begin{align}
    &\sup_{f\in\bbB(r,\bar{\calH}),W\in\tilde{\calW}}\bigg|\frac{1}{NL}\sum_{\tau=1}^{L}\sum_{i=1}^{N}\Big(s_{\tau,h+1}^{i}-f\big(\homega_{\tau,h}^{i}(W)\big)\Big)^{2}-\Big(s_{\tau,h+1}^{i}-f\big(\omega_{\tau,h}^{i}(W)\big)\Big)^{2}\bigg|\nonumber\\
    &\quad\leq \sup_{f\in\bbB(r,\bar{\calH}),W\in\tilde{\calW}}\frac{1}{NL}\sum_{\tau=1}^{L}\sum_{i=1}^{N}\Big|f\big(\homega_{\tau,h}^{i}(W)\big)-f\big(\omega_{\tau,h}^{i}(W)\big)\Big|\cdot\Big|2s_{\tau,h+1}^{i}-f\big(\homega_{\tau,h}^{i}(W)-f\big(\omega_{\tau,h}^{i}(W)\big)\Big|\nonumber\\
    &\quad \leq 2(B_{S}+rB_{K})rL_{K}\sup_{W\in\tilde{\calW}}\frac{1}{NL}\sum_{\tau=1}^{L}\sum_{i=1}^{N}\big\|\homega_{\tau,h}^{i}(W)-\omega_{\tau,h}^{i}(W)\big\|_{\calH},\label{ieq:9}
\end{align}
where the first inequality results from the triangle inequality, and the second inequality results from Assumption~\ref{assump:kernel} and Lemma~\ref{lem:rkhs}. Recall the definitions of $\homega_{\tau,h}^{i}(W)$ and $\omega_{\tau,h}^{i}(W)$ are
\begin{align*}
    \omega_{\tau,h}^{i}(W)&=\int_{0}^{1}\int_{\calS}W(\xi_{i},\beta)k\big(\cdot,(s_{\tau,h}^{i},a_{\tau,h}^{i},s)\big)\mu_{\tau,h}^{\beta}(s)\, \rmd s\, \rmd\beta,\\
    \homega_{\tau,h}^{i}(W)&=\frac{1}{N-1}\sum_{j\neq i}W(\xi_{i},\xi_{j})k\big(\cdot,(s_{\tau,h}^{i},a_{\tau,h}^{i},s_{\tau,h}^{j})\big),
\end{align*}
respectively. We decompose the error between them as
\begin{align}
    \sup_{W\in\tilde{\calW}}\big\|\homega_{\tau,h}^{i}(W)-\omega_{\tau,h}^{i}(W)\big\|_{\calH}&\leq \sup_{W\in\tilde{\calW}}\big\|\bar{\omega}_{\tau,h}^{i}(W)-\omega_{\tau,h}^{i}(W)\big\|_{\calH}+\sup_{W\in\tilde{\calW}}\big\|\homega_{\tau,h}^{i}(W)-\bar{\omega}_{\tau,h}^{i}(W)\big\|_{\calH}\nonumber\\
    &=\text{(III)}+\text{(IV)},\label{ieq:16}
\end{align}
where
\begin{align*}
    \bar{\omega}_{\tau,h}^{i}(W)&=\frac{1}{N-1}\sum_{j\neq i}W(\xi_{i},\xi_{j})\int_{\calS}k\big(\cdot,(s_{\tau,h}^{i},a_{\tau,h}^{i},s)\big)\mu_{\tau,h}^{j}(s)\rmd s.
\end{align*}

For term (III)$\;=\sup_{W\in\tilde{\calW}}\big\|\omega_{\tau,h}^{i}(W)-\bar{\omega}_{\tau,h}^{i}(W)\big\|_{\calH}$, we have that
\begin{align}
    \big\|\omega_{\tau,h}^{i}(W)-\bar{\omega}_{\tau,h}^{i}(W)\big\|_{\calH}
    &\leq\bigg\|\int_{0}^{1}\int_{\calS}W(\xi_{i},\beta)k\big(\cdot,(s_{\tau,h}^{i},a_{\tau,h}^{i},s)\big)\mu_{\tau,h}^{\beta}(s)\, \rmd s\, \rmd\beta\nonumber\\
    &\quad\qquad-\frac{1}{N-1}\sum_{j=1}^{N-1}W\bigg(\xi_{i},\frac{j}{N-1}\bigg)\int_{\calS}k\big(\cdot,(s_{\tau,h}^{i},a_{\tau,h}^{i},s)\big)\mu_{\tau,h}^{\frac{j}{N-1}}(s)\, \rmd s\bigg\|_{\calH}\nonumber\\
    &\quad\qquad+\bigg\|\frac{1}{N-1}\sum_{j=1}^{N-1}W\bigg(\xi_{i},\frac{j}{N-1}\bigg)\int_{\calS}k\big(\cdot,(s_{\tau,h}^{i},a_{\tau,h}^{i},s)\big)\mu_{\tau,h}^{\frac{j}{N-1}}(s)\, \rmd s\nonumber\\
    &\quad\qquad-\frac{1}{N-1}\sum_{j\neq i}W(\xi_{i},\xi_{j})\int_{\calS}k\big(\cdot,(s_{\tau,h}^{i},a_{\tau,h}^{i},s)\big)\mu_{\tau,h}^{j}(s)\, \rmd s\bigg\|_{\calH}\nonumber\\
    & =\text{(V)}+\text{(VI)}\label{ieq:14}
\end{align}

For term (V), we have that
\begin{align}
    &\bigg\|\int_{0}^{1}\int_{\calS}W(\xi_{i},\beta)k\big(\cdot,(s_{\tau,h}^{i},a_{\tau,h}^{i},s)\big)\mu_{\tau,h}^{\beta}(s)\rmd s\rmd\beta\nonumber\\
    &\quad\qquad-\frac{1}{N-1}\sum_{j=1}^{N-1}W\bigg(\xi_{i},\frac{j}{N-1}\bigg)\int_{\calS}k\big(\cdot,(s_{\tau,h}^{i},a_{\tau,h}^{i},s)\big)\mu_{\tau,h}^{\frac{j}{N-1}}(s)\, \rmd s\bigg\|_{\calH}\nonumber\\
    &\quad\leq \sum_{j=1}^{N-1}\int_{\frac{j-1}{N-1}}^{\frac{j}{N-1}}\bigg\|\int_{\calS}W(\xi_{i},\beta)k\big(\cdot,(s_{\tau,h}^{i},a_{\tau,h}^{i},s)\big)\mu_{\tau,h}^{\beta}(s)\, \rmd s\nonumber\\
    &\quad\qquad-W\bigg(\xi_{i},\frac{j}{N-1}\bigg)\int_{\calS}k\big(\cdot,(s_{\tau,h}^{i},a_{\tau,h}^{i},s)\big)\mu_{\tau,h}^{\frac{j}{N-1}}(s)\rmd s\bigg\|_{\calH}\, \rmd\beta,\label{ieq:11}
\end{align}
where the inequality results from the triangle inequality. For each term in the sum, we have that
\begin{align}
    &\bigg\|\int_{\calS}W(\xi_{i},\beta)k\big(\cdot,(s_{\tau,h}^{i},a_{\tau,h}^{i},s)\big)\mu_{\tau,h}^{\beta}(s)\rmd s\nonumber\\
    &\quad\qquad-W\bigg(\xi_{i},\frac{j}{N-1}\bigg)\int_{\calS}k\big(\cdot,(s_{\tau,h}^{i},a_{\tau,h}^{i},s)\big)\mu_{\tau,h}^{\frac{j}{N-1}}(s)\rmd s\bigg\|_{\calH}\nonumber\\
    &\quad\leq \bigg\|\bigg(W(\xi_{i},\beta)-W\bigg(\xi_{i},\frac{j}{N-1}\bigg)\bigg)\int_{\calS}k\big(\cdot,(s_{\tau,h}^{i},a_{\tau,h}^{i},s)\big)\mu_{\tau,h}^{\beta}(s)\rmd s\bigg\|_{\calH}\nonumber\\
    &\quad\qquad+\bigg\|W\bigg(\xi_{i},\frac{j}{N-1}\bigg)\int_{\calS}k\big(\cdot,(s_{\tau,h}^{i},a_{\tau,h}^{i},s)\big)\big(\mu_{\tau,h}^{\beta}(s)-\mu_{\tau,h}^{\frac{j}{N-1}}(s)\big)\rmd s\bigg\|_{\calH}\nonumber\\
    &\quad \leq B_{k}L_{\bar{\calW}}\bigg|\beta-\frac{j}{N-1}\bigg|+B_{k}\Big\|\mu_{\tau,h}^{\beta}-\mu_{\tau,h}^{\frac{j}{N-1}}\Big\|_{1},\label{ieq:12}
\end{align}
where the first inequality results from the triangle inequality, and the second results from Assumptions~\ref{assump:graphon} and \ref{assump:kernel}.

\begin{proposition}\label{prop:agentlip}
    Under Assumptions~\ref{assump:graphon} and \ref{assump:lipconti}, we have that
    \begin{align*}
        \|\mu_{h}^{\alpha}-\mu_{h}^{\beta}\|_{1}\leq (h-1)L_{P}L_{\bar{\calW}}|\alpha-\beta|+\sum_{t=1}^{h-1}\sup_{s\in\calS}\big\|\pi_{t}^{\alpha}(\cdot\,|\,s)-\pi_{t}^{\beta}(\cdot\,|\,s)\big\|_{1} \text{ for all }h\in[H].
    \end{align*}
\end{proposition}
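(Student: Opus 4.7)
The plan is to prove the bound by induction on $h$. The base case $h=1$ is immediate because $\mu_1^\alpha=\mu_1^\beta=\mu_1$ by assumption, so both sides are zero. For the inductive step, let $T_h^\alpha$ denote the linear operator acting on state distributions defined by
\begin{align*}
    (T_h^\alpha \mu)(s')=\sum_{a\in\calA}\int_\calS \mu(s)\pi_h^\alpha(a\,|\,s)P_h^*(s'\,|\,s,a,z_h^\alpha)\,\rmd s,
\end{align*}
so that $\mu_{h+1}^\alpha=T_h^\alpha\mu_h^\alpha$ and similarly for $\beta$. I would then split
\begin{align*}
    \|\mu_{h+1}^\alpha-\mu_{h+1}^\beta\|_1\leq\|T_h^\alpha\mu_h^\alpha-T_h^\alpha\mu_h^\beta\|_1+\|T_h^\alpha\mu_h^\beta-T_h^\beta\mu_h^\beta\|_1.
\end{align*}

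The first term is controlled by the standard fact that a Markov kernel is nonexpansive on $L_1$, giving $\|T_h^\alpha\mu_h^\alpha-T_h^\alpha\mu_h^\beta\|_1\leq\|\mu_h^\alpha-\mu_h^\beta\|_1$, which is exactly the shape needed so that the inductive hypothesis telescopes without producing extra multiplicative factors. For the second term I would further decompose the integrand as $\pi_h^\alpha P_h^{*,\alpha}-\pi_h^\beta P_h^{*,\beta}=(\pi_h^\alpha-\pi_h^\beta)P_h^{*,\alpha}+\pi_h^\beta(P_h^{*,\alpha}-P_h^{*,\beta})$. Integrating $\mu_h^\beta$ against the first piece and then summing over $a$ and integrating over $s'$ uses that $P_h^*$ is a probability measure and yields a bound by $\sup_{s}\|\pi_h^\alpha(\cdot\,|\,s)-\pi_h^\beta(\cdot\,|\,s)\|_1$. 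For the second piece, Assumption~\ref{assump:lipconti} (TV-Lipschitzness of $P_h^*$ in $z$) gives a bound by $L_P\|z_h^\alpha-z_h^\beta\|_1$.

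It remains to bound the aggregate gap $\|z_h^\alpha-z_h^\beta\|_1$, and this is where I would use Assumption~\ref{assump:graphon}: writing
\begin{align*}
    z_h^\alpha-z_h^\beta=\int_0^1[W_h^*(\alpha,\gamma)-W_h^*(\beta,\gamma)]\mu_h^\gamma\,\rmd\gamma
\end{align*}
and taking the $L_1$ norm yields $\|z_h^\alpha-z_h^\beta\|_1\leq L_{W^*}|\alpha-\beta|\leq L_{\bar\calW}|\alpha-\beta|$, since each $\mu_h^\gamma$ is a probability measure. Combining these pieces gives the recursion
\begin{align*}
    \|\mu_{h+1}^\alpha-\mu_{h+1}^\beta\|_1\leq\|\mu_h^\alpha-\mu_h^\beta\|_1+L_PL_{\bar\calW}|\alpha-\beta|+\sup_s\|\pi_h^\alpha(\cdot\,|\,s)-\pi_h^\beta(\cdot\,|\,s)\|_1,
\end{align*}
and iterating this recursion from $h=1$ produces exactly the claimed bound.

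The only subtle step is the interplay of conventions between total variation distance and the $L_1$ norm on densities; I would carefully match the definition $\tv(P,Q)=\sup_A|P(A)-Q(A)|$ used in the paper with the $L_1$ bookkeeping, ensuring that the transition term collects $L_P\|z_h^\alpha-z_h^\beta\|_1$ as stated rather than picking up spurious factors of~$2$. Beyond that, everything is a careful but routine triangle-inequality argument, and the induction is clean because the operator norm of $T_h^\alpha$ on $L_1$ is exactly one.
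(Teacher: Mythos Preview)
Your proposal is correct and matches the paper's own proof essentially step for step: both establish the one-step recursion $\|\mu_{h+1}^\alpha-\mu_{h+1}^\beta\|_1\leq\|\mu_h^\alpha-\mu_h^\beta\|_1+L_PL_{\bar\calW}|\alpha-\beta|+\sup_s\|\pi_h^\alpha(\cdot\,|\,s)-\pi_h^\beta(\cdot\,|\,s)\|_1$ via the same triangle-inequality decomposition (splitting by $\mu$, then by $\pi$, then by $P_h^*$, and bounding $\|z_h^\alpha-z_h^\beta\|_1\leq L_{\bar\calW}|\alpha-\beta|$ from Assumption~\ref{assump:graphon}), and then sum from $h=1$. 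Your operator notation $T_h^\alpha$ is a slight cosmetic difference, and your remark about the TV-versus-$L_1$ factor of~$2$ is a valid caveat that the paper likewise does not dwell on.
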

\begin{proof}[Proof of Proposition~\ref{prop:agentlip}]
    See Appendix~\ref{app:agentlip}.
\end{proof}
Thus, we bound the second term of inequality~\eqref{ieq:12} as
\begin{align}
    \Big\|\mu_{\tau,h}^{\beta}-\mu_{\tau,h}^{\frac{j}{N-1}}\Big\|_{1}
    &\leq H L_{P}L_{\bar{\calW}}\bigg|\beta-\frac{j}{N-1}\bigg|+\sum_{t=1}^{h-1}\sup_{s\in\calS}\big\|\pi_{t}^{\beta}(\cdot\,|\,s)-\pi_{t}^{\frac{j}{N-1}}(\cdot\,|\,s)\big\|_{1}\nonumber\\
    &  \leq  (H L_{P}L_{\bar{\calW}}+HL_{\pi})\bigg|\beta-\frac{j}{N-1}\bigg|,\label{ieq:13}
\end{align}
where the first inequality results from Proposition~\ref{prop:agentlip}, and the second inequality results from the Lipschitzness of behavior policies. Substituting inequalities~\eqref{ieq:12} and \eqref{ieq:13} into inequality~\eqref{ieq:11}, we have that
\begin{align*}
    \text{(V)}&=\bigg\|\int_{0}^{1}\int_{\calS}W(\xi_{i},\beta)k\big(\cdot,(s_{\tau,h}^{i},a_{\tau,h}^{i},s)\big)\mu_{\tau,h}^{\beta}(s)\rmd s\rmd\beta\nonumber\\
    &\quad\qquad-\frac{1}{N-1}\sum_{j=1}^{N-1}W\bigg(\xi_{i},\frac{j}{N-1}\bigg)\int_{\calS}k\big(\cdot,(s_{\tau,h}^{i},a_{\tau,h}^{i},s)\big)\mu_{\tau,h}^{\frac{j}{N-1}}(s)\rmd s\bigg\|_{\calH}\\
    &\leq\sum_{j=1}^{N-1}\int_{\frac{j-1}{N-1}}^{\frac{j}{N-1}}B_{k}(L_{\bar{\calW}}+H L_{P}L_{\bar{\calW}}+HL_{\pi})\bigg|\beta-\frac{j}{N-1}\bigg|\rmd\beta\\
    &=\frac{1}{2(N-1)}B_{k}(L_{\bar{\calW}}+H L_{P}L_{\bar{\calW}}+HL_{\pi}).
\end{align*}

For term (VI), we have that
\begin{align*}
    \text{(VI)}&=\bigg\|\frac{1}{N-1}\sum_{j=1}^{N-1}W\bigg(\xi_{i},\frac{j}{N-1}\bigg)\int_{\calS}k\big(\cdot,(s_{\tau,h}^{i},a_{\tau,h}^{i},s)\big)\mu_{\tau,h}^{\frac{j}{N-1}}(s)\rmd s\nonumber\\
    &\quad\qquad-\frac{1}{N-1}\sum_{j\neq i}W(\xi_{i},\xi_{j})\int_{\calS}k\big(\cdot,(s_{\tau,h}^{i},a_{\tau,h}^{i},s)\big)\mu_{\tau,h}^{j}(s)\rmd s\bigg\|_{\calH}\\
    &\leq \frac{1}{N-1}\sum_{j\neq 1}\bigg[B_{k}L_{\bar{\calW}}\bigg(\bigg|\xi_{j}-\frac{j}{N-1}\bigg|+\bigg|\xi_{j}-\frac{j-1}{N-1}\bigg|\bigg)+B_{k}\big\|\mu_{\tau,h}^{\frac{j}{N-1}}-\mu_{\tau,h}^{j}\big\|_{1}\bigg]\\
    &\leq \frac{1}{N-1}\bigg[3+2B_{k}(L_{\bar{\calW}}+HL_{\pi}+HL_{p}L_{\bar{\calW}})\sum_{i=1}^{N}\bigg|\xi_{i}-\frac{i}{N}\bigg|\bigg],
\end{align*}
where the first results from triangle inequality, and the second inequality results from Proposition~\ref{prop:agentlip}. Substituting the bounds for terms (V) and (VI) into inequality~\eqref{ieq:14}, we have that
\begin{align}
    \text{(III)}&=\sup_{W\in\tilde{\calW}}\big\|\omega_{\tau,h}^{i}(W)-\bar{\omega}_{\tau,h}^{i}(W)\big\|_{\calH}\nonumber\\
    &\leq \frac{1}{2(N-1)}B_{k}(L_{\bar{\calW}}+H L_{P}L_{\bar{\calW}}+HL_{\pi})+\frac{1}{N-1}\bigg[3+2B_{k}(L_{\bar{\calW}}+HL_{\pi}+HL_{p}L_{\bar{\calW}})\sum_{i=1}^{N}\bigg|\xi_{i}-\frac{i}{N}\bigg|\bigg]\nonumber\\
    &=\frac{1}{2(N-1)}B_{k}(L_{\bar{\calW}}+H L_{P}L_{\bar{\calW}}+HL_{\pi})+\frac{3}{N-1}.\label{ieq:17}
\end{align}

For term (IV), we have that
\begin{align*}
    &\big\|\homega_{\tau,h}^{i}(W)-\bar{\omega}_{\tau,h}^{i}(W)\big\|_{\calH}\\
    &\quad=\bigg\|\frac{1}{N-1}\sum_{j\neq i}W(\xi_{i},\xi_{j})k\big(\cdot,(s_{\tau,h}^{i},a_{\tau,h}^{i},s_{\tau,h}^{j})\big)-\frac{1}{N-1}\sum_{j\neq i}W(\xi_{i},\xi_{j})\int_{\calS}k\big(\cdot,(s_{\tau,h}^{i},a_{\tau,h}^{i},s)\big)\mu_{\tau,h}^{j}(s)\rmd s\bigg\|_{\calH}.
\end{align*}
To derive a concentration inequality for term (IV), we first construct the minimal $\varepsilon-$cover of $\tilde{\calW}$ with respect to $\|\cdot\|_{\infty}$. The covering number is denoted as $\calN_{\infty}(\varepsilon,\tilde{\calW})$. Then for any $W\in\tilde{\calW}$, there exists a graphon $W_{i}$ for $i\in \{1,\cdots,\calN_{\infty}(\varepsilon,\tilde{\calW})\}$ such that $\|W-W_{i}\|_{\infty}\leq\varepsilon$. Then we have that
\begin{align*}
    \big\|\homega_{\tau,h}^{i}(W)-\bar{\omega}_{\tau,h}^{i}(W)\big\|_{\calH}\leq \big\|\homega_{\tau,h}^{i}(W_{i})-\bar{\omega}_{\tau,h}^{i}(W_{i})\big\|_{\calH}+2\varepsilon B_{k},
\end{align*}
where the inequality results from the triangle inequality. In the following, we set $\varepsilon=t/(4B_{k})$. Then the concentration inequality for term (IV) can be derived as
\begin{align}
    &\bbP\Big(\exists W\in\tilde{\calW} ,i\in[N],\tau\in[L], \big\|\homega_{\tau,h}^{i}(W)-\bar{\omega}_{\tau,h}^{i}(W)\big\|_{\calH}\geq t)\nonumber\\
    &\quad \leq \bbP\Big(\exists j\in \big[\calN_{\infty}(\varepsilon,\tilde{\calW})\big],i\in[N],\tau\in[L], \big\|\homega_{\tau,h}^{i}(W_{j})-\bar{\omega}_{\tau,h}^{i}(W_{j})\big\|_{\calH}\geq t-2\varepsilon B_{k}\Big)\nonumber\\
    &\quad\leq NL\calN_{\infty}(t/(4B_{k}),\tilde{\calW})\max_{j\in[\calN_{\infty}],i\in[N],\tau\in[L]}\bbP\Big( \big\|\homega_{\tau,h}^{i}(W_{j})-\bar{\omega}_{\tau,h}^{i}(W_{j})\big\|_{\calH}\geq t/2\Big)\nonumber\\
    &\quad\leq 2NL\calN_{\infty}(t/(4B_{k}),\tilde{\calW})\exp\bigg(-\frac{(N-1)t^{2}}{32B_{k}^{2}}\bigg),\nonumber
\end{align}
where the first inequality results from the construction of the cover, the second inequality results from the union bound, and the last inequality results from Lemma~\ref{lem:hilconcen} and that $\|W(\xi_{i},\xi_{j})k\big(\cdot,(s_{\tau,h}^{i},a_{\tau,h}^{i},s_{\tau,h}^{j})\|\leq B_{k}$ for any $W\in\tilde{\calW}$. For $t\geq 4B_{k}/\sqrt{N}$, we have that
\begin{align*}
    \bbP\Big(\exists W\in\tilde{\calW} ,i\in[N],\tau\in[L], \big\|\homega_{\tau,h}^{i}(W)-\bar{\omega}_{\tau,h}^{i}(W)\big\|_{\calH}\geq t)\leq 2NL\calN_{\infty}(1/\sqrt{N},\tilde{\calW})\exp\bigg(-\frac{(N-1)t^{2}}{32B_{k}^{2}}\bigg).
\end{align*}
Thus, term (IV) can be bounded as
\begin{align}
    \text{(IV)}=\sup_{W\in\tilde{\calW}}\big\|\homega_{\tau,h}^{i}(W)-\bar{\omega}_{\tau,h}^{i}(W)\big\|_{\calH}\leq \frac{4\sqrt{2}B_{k}}{\sqrt{N-1}}\log\frac{2NL\calN_{\infty}(1/\sqrt{N},\tilde{\calW})}{\delta},\label{ieq:15}
\end{align}
with probability at least $1-\delta$. Substituting inequalities~\eqref{ieq:15} and \eqref{ieq:17} into inequalities~\eqref{ieq:9} and \eqref{ieq:16}, we have that
\begin{align}
    &\sup_{f\in\bbB(r,\bar{\calH}),W\in\tilde{\calW}}\bigg|\frac{1}{NL}\sum_{\tau=1}^{L}\sum_{i=1}^{N}\Big(s_{\tau,h+1}^{i}-f\big(\homega_{\tau,h}^{i}(W)\big)\Big)^{2}-\Big(s_{\tau,h+1}^{i}-f\big(\omega_{\tau,h}^{i}(W)\big)\Big)^{2}\bigg|\nonumber\\
    &\quad\leq O\bigg(\frac{(B_{S}+rB_{K})rL_{K}B_{k}}{\sqrt{N}}\log\frac{NL\calN_{\infty}(1/\sqrt{N},\tilde{\calW})}{\delta}\bigg),\label{ieq:33}
\end{align}
with probability at least $1-\delta$.

\textbf{Step 2: Bound the generalization error of risk.}

Considering term (I), for ease of notation, we denote the quadruple $(s_{\tau,h}^{i},a_{\tau,h}^{i},\mu_{\tau,h}^{\calI},s_{\tau,h+1}^{i})$ as $e_{\tau,h}^{i}$. We define the function $f_{W}$ as 
\begin{align*}
    f_{W}(e_{\tau,h}^{i})=\Big(s_{\tau,h+1}^{i}-f\big(\omega_{\tau,h}^{i}(W)\big)\Big)^{2}-\Big(s_{\tau,h+1}^{i}-f_{h}^{*}\big(\omega_{\tau,h}^{i}(W_{h}^{*})\big)\Big)^{2}.
\end{align*}
The correspond function class is defined as $\calF_{\tilde{\calW}}=\{f_{W}\,|\, f\in\bbB(r,\bar{\calH}),W\in\tilde{\calW}\}$. Then we have that
\begin{align*}
    \text{(I)}=\frac{1}{NL}\sum_{\tau=1}^{L}\sum_{i=1}^{N}\bbE_{\rho_{\tau,h}^{i}}\big[\hatf_{\hatW}(e_{\tau,h}^{i})\big]-2\frac{1}{NL}\sum_{\tau=1}^{L}\sum_{i=1}^{N}\hatf_{\hatW}(e_{\tau,h}^{i}).
\end{align*}

\begin{proposition}\label{prop:frconcen}
    With Assumption~\ref{assump:kernel}, we have that
    \begin{align*}
        &\bbP\bigg(\exists f_{W}\in\calF_{\tilde{\calW}}, \frac{1}{NL}\sum_{\tau=1}^{L}\sum_{i=1}^{N}\bbE_{\rho_{\tau,h}^{i}}\big[f_{W}(e_{\tau,h}^{i})\big]-\frac{1}{NL}\sum_{\tau=1}^{L}\sum_{i=1}^{N}f_{W}(e_{\tau,h}^{i})\\
        &\quad\qquad\geq \varepsilon\bigg(\alpha+\beta+\frac{1}{NL}\sum_{\tau=1}^{L}\sum_{i=1}^{N}\bbE_{\rho_{\tau,h}^{i}}\big[f_{W}(e_{\tau,h}^{i})\big]\bigg)\bigg)\\
        &\quad\leq 14\calN_{\bar{\calH}}\bigg(\frac{\varepsilon\beta}{40(B_{S}+rB_{K})^{3}B_{K}},\bbB(r,\bar{\calH})\bigg)\cdot \calN_{\infty}\bigg(\frac{\varepsilon\beta}{40(B_{S}+rB_{K})^{3}rL_{K}B_{k}},\tilde{\calW}\bigg)\\
        &\quad\qquad\cdot\exp\bigg(-\frac{\varepsilon^{2}(1-\varepsilon)\alpha NL}{20(B_{S}+rB_{K})^{4}(1+\varepsilon)}\bigg),
    \end{align*}
    where $\alpha,\beta>0$ and $0<\varepsilon\leq 1/2$.
\end{proposition}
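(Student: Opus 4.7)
The bound has the classical Vapnik relative-deviation form with an additive slack $\beta$ and a variance-proportional term, so the plan is to combine a Bernstein inequality for a single element of $\calF_{\tilde\calW}$ with a covering argument over the product class $\bbB(r,\bar\calH)\times\tilde\calW$.

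First I would unpack the structure of the loss. Writing $X=f_h^*(\omega_{\tau,h}^i(W_h^*))-f(\omega_{\tau,h}^i(W))$ and using Eqn.~\eqref{eq:sd} so that $s_{\tau,h+1}^i=f_h^*(\omega_{\tau,h}^i(W_h^*))+\varepsilon_h$ with $\varepsilon_h$ zero-mean and independent of $X$, a short algebraic expansion yields $f_W(e_{\tau,h}^i)=X^2+2\varepsilon_h X$. This identity drives three facts: (i) $\bbE_{\rho_{\tau,h}^i}[f_W(e_{\tau,h}^i)]=\bbE[X^2]\ge 0$; (ii) $|f_W(e_{\tau,h}^i)|\le C(B_S+rB_K)^2$, using $|X|\le B_S+rB_K$ from Assumptions~\ref{assump:kernel} and~\ref{assump:real} together with $|\varepsilon_h|\le 2B_S$; and (iii) the Bernstein-ready variance bound $\mathrm{Var}(f_W(e_{\tau,h}^i))\le C(B_S+rB_K)^2\,\bbE[f_W(e_{\tau,h}^i)]$, since $X^4\le (B_S+rB_K)^2 X^2$ and $\varepsilon_h^2$ is bounded. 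By the discussion in Section~\ref{sec:datacollect}, the $NL$ samples $\{e_{\tau,h}^i\}_{\tau,i}$ are mutually independent, so Bernstein applies cleanly.

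Next, for a fixed $f_W$, I would apply Bernstein's inequality with $t=\varepsilon(\alpha+\bbE[f_W])$. Plugging (ii) and (iii) into the Bernstein denominator and using $(\alpha+\bbE[f_W])^2\ge \alpha\bbE[f_W]+\tfrac12\alpha^2+\tfrac12\bbE[f_W]^2$ together with $0<\varepsilon\le 1/2$, the analysis splits into the regimes $\bbE[f_W]\ge\alpha$ and $\bbE[f_W]<\alpha$; in both regimes the resulting exponent is bounded below by $\varepsilon^2(1-\varepsilon)\alpha NL/\bigl[20(B_S+rB_K)^4(1+\varepsilon)\bigr]$, matching the exponent in the statement. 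To control the supremum over $\calF_{\tilde\calW}$, I would then cover $\bbB(r,\bar\calH)$ under $\|\cdot\|_{\bar\calH}$ and $\tilde\calW$ under $\|\cdot\|_\infty$. For any $(f,W)$ with closest net point $(f_0,W_0)$, the reproducing property, the Lipschitzness of $K$ in Assumption~\ref{assump:kernel}, and the estimate $\|\omega_{\tau,h}^i(W)-\omega_{\tau,h}^i(W_0)\|_\calH\le B_k\|W-W_0\|_\infty$ (immediate from Eqn.~\eqref{eq:meanembed}), combined with the factorization $f_W-f_{0,W_0}=(X+X_0+2\varepsilon_h)(X-X_0)$, give the uniform bound $|f_W(e)-f_{0,W_0}(e)|\le C(B_S+rB_K)\bigl(B_K\|f-f_0\|_{\bar\calH}+rL_KB_k\|W-W_0\|_\infty\bigr)$.

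Choosing the two net resolutions to be $\varepsilon\beta/[40(B_S+rB_K)^3 B_K]$ and $\varepsilon\beta/[40(B_S+rB_K)^3 rL_KB_k]$ makes this uniform error fit inside the $\varepsilon\beta$ slack on \emph{both} $\bbE[f_W]$ and its empirical counterpart; the extra $(B_S+rB_K)^2$ in the radii accounts for the natural scale of the loss and is precisely what must be divided out so that the variance-expectation relation from step (iii) is preserved along the approximation. A union bound over the product net, whose size is the product of the two covering numbers, combined with the pointwise Bernstein bound above, then delivers the claimed inequality, with the absolute constant $14$ absorbing standard factors from transferring the net estimate to the full supremum. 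The main obstacle is precisely Step~3: the two net approximation errors must feed only into the additive $\varepsilon\beta$ slack and must not contaminate the multiplicative $\varepsilon\bbE[f_W]$ term on which the whole generalization argument in Theorem~\ref{thm:fixest} depends, and handling this balance carefully is exactly what forces the cubic scale $(B_S+rB_K)^3$ in the denominators of the covering radii.
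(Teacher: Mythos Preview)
Your approach is correct but genuinely different from the paper's. The paper follows the classical route of Theorem~11.4 in \citet{gyorfi2002distribution}: it first symmetrizes with a ghost sample, then introduces an auxiliary event on $f_W^2$ via a separate relative-deviation result (Proposition~\ref{prop:sqreplace}) in order to replace the population mean $\bbE[f_W]$ by empirical quantities, and finally randomizes with Rademacher signs before applying Hoeffding conditionally and covering. Your plan bypasses all of this machinery by applying a Bernstein-type inequality directly to each net point, using the key relation $\mathrm{Var}(f_W)\le C(B_S+rB_K)^2\,\bbE[f_W]$ to get a uniform lower bound on the exponent, and then transferring to the full class through the $\varepsilon\beta$ slack. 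This is shorter and in fact yields an exponent with $(B_S+rB_K)^2$ rather than $(B_S+rB_K)^4$; the fourth power in the paper arises solely because Proposition~\ref{prop:sqreplace} is applied to $f_W^2$, whose range is $(B_S+rB_K)^4$. What the paper's route buys is a mechanical reduction to a well-known template that handles non-identically distributed samples without re-deriving Bernstein, at the cost of extra steps and a looser constant.

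One remark on your covering step: your stated reason for the cubic scale $(B_S+rB_K)^3$ in the net radii---that it is needed ``so that the variance--expectation relation from step~(iii) is preserved along the approximation''---is not quite right. That relation holds for every element of $\calF_{\tilde\calW}$, including net points, so nothing needs to be preserved. In your direct argument the natural radius is only $\sim\varepsilon\beta/[(B_S+rB_K)B_K]$; the extra $(B_S+rB_K)^2$ appears in the paper because their covering is ultimately of $\{f_W^2\}$, not $\{f_W\}$. Using the finer radii you wrote is harmless (it only increases the covering numbers, so the stated bound still follows), but you should be aware that your own argument does not force them.
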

\begin{proof}[Proof of Proposition~\ref{prop:frconcen}]
    See Appendix~\ref{app:frconcen}.
\end{proof}
Now consider,
\begin{align*}
    &\bbP\bigg(\frac{1}{NL}\sum_{\tau=1}^{L}\sum_{i=1}^{N}\bbE_{\rho_{\tau,h}^{i}}\big[\hatf_{\hatW}(e_{\tau,h}^{i})\big]-2\frac{1}{NL}\sum_{\tau=1}^{L}\sum_{i=1}^{N}\hatf_{\hatW}(e_{\tau,h}^{i})\geq t\bigg)\\
    &\quad\leq \bbP\bigg(\exists f_{W}\in\calF_{\tilde{\calW}},\frac{1}{NL}\sum_{\tau=1}^{L}\sum_{i=1}^{N}\bbE_{\rho_{\tau,h}^{i}}\big[f_{W}(e_{\tau,h}^{i})\big]-2\frac{1}{NL}\sum_{\tau=1}^{L}\sum_{i=1}^{N}f_{W}(e_{\tau,h}^{i})\geq t\bigg)\\
    &\quad=\bbP\bigg(\exists f_{W}\in\calF_{\tilde{\calW}}, \frac{1}{NL}\sum_{\tau=1}^{L}\sum_{i=1}^{N}\bbE_{\rho_{\tau,h}^{i}}\big[f_{W}(e_{\tau,h}^{i})\big]-\frac{1}{NL}\sum_{\tau=1}^{L}\sum_{i=1}^{N}f_{W}(e_{\tau,h}^{i})\\
    &\quad\qquad\geq \frac{1}{2}\bigg(t+\frac{1}{NL}\sum_{\tau=1}^{L}\sum_{i=1}^{N}\bbE_{\rho_{\tau,h}^{i}}\big[f_{W}(e_{\tau,h}^{i})\big]\bigg)\bigg)\\
    &\quad\leq 14\calN_{\bar{\calH}}\bigg(\frac{t}{160(B_{S}+rB_{K})^{3}B_{K}},\bbB(r,\bar{\calH})\bigg)\cdot \calN_{\infty}\bigg(\frac{t}{160(B_{S}+rB_{K})^{3}rL_{K}B_{k}},\tilde{\calW}\bigg)\\
    &\quad\qquad\cdot\exp\bigg(-\frac{tNL}{480(B_{S}+rB_{K})^{4}}\bigg),
\end{align*}
where the last inequality results from Proposition~\ref{prop:frconcen}. We define that
\begin{align*}
    N_{\bbB_{r}}=\calN_{\bar{\calH}}\bigg(\frac{3}{NL},\bbB(r,\bar{\calH})\bigg), N_{\tilde{\calW}}=\calN_{\infty}\bigg(\frac{3}{L_{K}NL},\tilde{\calW}\bigg).
\end{align*}
For $\delta>0$, we set
\begin{align*}
    t=\frac{480(B_{S}+rB_{K})^{4}}{NL}\log\frac{14N_{\bbB_{r}}N_{\tilde{\calW}}}{\delta},
\end{align*}
then we have that
\begin{align}
    \bbP\bigg(\frac{1}{NL}\sum_{\tau=1}^{L}\sum_{i=1}^{N}\bbE_{\rho_{\tau,h}^{i}}\big[\hatf_{\hatW}(e_{\tau,h}^{i})\big]-2\frac{1}{NL}\sum_{\tau=1}^{L}\sum_{i=1}^{N}\hatf_{\hatW}(e_{\tau,h}^{i})\geq t\bigg)\leq \delta.\label{ieq:32}
\end{align}
Combining inequalities~\eqref{ieq:8}, \eqref{ieq:33}, and \eqref{ieq:32}, we have that the following holds with probability at least $1-\delta$
\begin{align*}
    &\calR_{\bar{\xi}}(\hatf_{h},\hatg_{h},\hatW_{h})-\calR_{\bar{\xi}}(f_{h}^{*},g_{h}^{*},W_{h}^{*})\\
    &\quad\leq O\bigg(\frac{(B_{S}+rB_{K})^{4}}{NL}\log\frac{N_{\bbB_{r}}N_{\tilde{\bbB}_{\tilr}}N_{\tilde{\calW}}}{\delta}+\frac{(B_{S}+rB_{K})rL_{K}B_{k}}{\sqrt{N}}\log\frac{NL\calN_{\infty}(1/\sqrt{N},\tilde{\calW})}{\delta}\bigg),
\end{align*}
where
\begin{align*}
    N_{\tilde{\bbB}_{\tilr}}=\calN_{\bar{\calH}}\bigg(\frac{3}{NL},\bbB(\tilr,\tilde{\calH})\bigg).
\end{align*}
Thus, we conclude the proof of Theorem~\ref{thm:fixest}.
\end{proof}

\section{Proof of Theorem~\ref{thm:randest}}
\begin{proof}[Proof of Theorem~\ref{thm:randest}]
We first decompose the difference between the risk as the sum of the generalization error of risk from position sampling and the difference between the risk given the positions. Our proof involves two steps:
\begin{itemize}
    \item Bound the generalization error of risk from position sampling.
    \item Bound the difference between the risk given positions.
\end{itemize}
\begin{align}
    &\calR(\hatf_{h},\hatg_{h},\hatW_{h})-\calR(f_{h}^{*},g_{h}^{*},W_{h}^{*})\nonumber\\
    &\quad=\calR(\hatf_{h},\hatg_{h},\hatW_{h})-\calR_{\bar{\xi}}(\hatf_{h},\hatg_{h},\hatW_{h})-\big(\calR(f_{h}^{*},g_{h}^{*},W_{h}^{*})-\calR_{\bar{\xi}}(f_{h}^{*},g_{h}^{*},W_{h}^{*})\big)\nonumber\\
    &\quad\qquad+\calR_{\bar{\xi}}(\hatf_{h},\hatg_{h},\hatW_{h})-\calR_{\bar{\xi}}(f_{h}^{*},g_{h}^{*},W_{h}^{*})\nonumber\\
    &\quad\leq 2\sup_{f\in\bbB(r,\bar{\calH}),g\in\bbB(\tilr,\tilde{\calH}),W\in\tilde{\calW}}\big|\calR(f,g,W)-\calR_{\bar{\xi}}(f,g,W)\big|+\calR_{\bar{\xi}}(\hatf_{h},\hatg_{h},\hatW_{h})-\calR_{\bar{\xi}}(f_{h}^{*},g_{h}^{*},W_{h}^{*})\nonumber\\
    &\quad=\text{(IX)}+\text{(X)},\label{ieq:34}
\end{align}
where (IX) is the generalization error of risk from position sampling, and (X) is the difference between the risk given positions. Similar as the proof of Theorem~\ref{thm:fixest}, the terms related to the transition kernels and reward functions in inequality~\eqref{ieq:34} are analogous. In the following, we will only present the proof for the terms related to the transition kernel, and the results for the terms related to the reward functions can be similarly derived.

\textbf{Step 1: Bound the generalization error of risk from position sampling.}

We first define that
\begin{align*}
    g_{f,W}(\alpha)=\frac{1}{L}\sum_{\tau=1}^{L}\bbE_{\rho_{\tau,h}^{\alpha}}\bigg[\Big(s_{\tau,h+1}^{\alpha}-f\big(\omega_{\tau,h}^{\alpha}(W)\big)\Big)^{2}\bigg].
\end{align*}
The correspond function class for $g_{f,W}$ is $\calG_{\calF,\tilde{\calW}}=\{g_{f,W}\,|\,f\in\bbB(r,\bar{\calH}),W\in\tilde{\calW}\}$. Then term in (IX) that is related to the transition kernels can be expressed as
\begin{align*}
    2\sup_{g_{f,W}\in\calG_{\calF,\tilde{\calW}}}\bigg|\int_{0}^{1}g_{f,W}(\alpha)\rmd\alpha-\frac{1}{N}\sum_{i=1}^{N} g_{f,W}(\xi_{i})\bigg|.
\end{align*}
Let $\delta>0$, $\calG_{\delta}$ be a minimal $L_{\infty}$ $\delta-$cover of $\calG_{\calF,\tilde{\calW}}$. Then for any $g_{f,W}\in\calG_{\calF,\tilde{\calW}}$, there exists $\barg_{f,W}\in\calG_{\delta}$ such that $|g_{f,W}(\alpha)-\barg_{f,W}(\alpha)|\leq\delta$ for all $\alpha\in\calI$. For any $t>0$, we set $\delta=t/4$. Then we have that
\begin{align}
    &\bbP\bigg(\sup_{g_{f,W}\in\calG_{\calF,\tilde{\calW}}}\bigg|\int_{0}^{1}g_{f,W}(\alpha)\rmd\alpha-\frac{1}{N}\sum_{i=1}^{N} g_{f,W}(\xi_{i})\bigg|\geq t\bigg)\nonumber\\
    &\quad\leq \calN_{\infty}\bigg(\frac{t}{4},\calG_{\calF,\tilde{\calW}}\bigg)\max_{g_{f,W}\in\calG_{\frac{t}{4}}}\bbP\bigg(\bigg|\int_{0}^{1}g_{f,W}(\alpha)\rmd\alpha-\frac{1}{N}\sum_{i=1}^{N} g_{f,W}(\xi_{i})\bigg|\geq \frac{t}{2}\bigg)\nonumber\\
    &\quad\leq 2\calN_{\infty}\bigg(\frac{t}{4},\calG_{\calF,\tilde{\calW}}\bigg)\exp\bigg(-\frac{Nt^{2}}{2(B_{S}+rB_{K})^{4}}\bigg),\label{ieq:35}
\end{align}
where the first inequality results from the union bound, and the second inequality results from that $0\leq g_{f,W}(\alpha)\leq (B_{S}+rB_{K})^{2}$ and Hoeffding inequality. To upper bound the covering number in the tail probability, we note that
\begin{align*}
    \big|g_{f,W}(\alpha)-\barg_{f,W}(\alpha)\big|
    & \leq 2(B_{S}+rB_{K})\frac{1}{L}\sum_{\tau=1}^{L}\bbE_{\rho_{\tau,h}^{\alpha}}\bigg[\Big|f\big(\omega_{\tau,h}^{\alpha}(W)\big)-\barf\big(\omega_{\tau,h}^{\alpha}(\barW)\big)\Big|\bigg]\\
    & \leq 2(B_{S}+rB_{K})\big(B_{K}\|f-\barf\|_{\bar{\calH}}+rL_{K}B_{k}\|W-\barW\|_{\infty}\big),
\end{align*}
where the first inequality results from the definition of $g_{f,W}$, and the second inequality results from Lemma~\ref{lem:rkhs} and the triangle inequality. This inequality implies that
\begin{align*}
    \calN_{\infty}\bigg(\frac{t}{4},\calG_{\calF,\tilde{\calW}}\bigg)\leq \calN_{\bar{\calH}}\bigg(\frac{t}{16(B_{S}+rB_{K})B_{K}},\bbB(r,\bar{\calH})\bigg)\cdot \calN_{\infty}\bigg(\frac{t}{16(B_{S}+rB_{K})rL_{K}B_{k}},\tilde{\calW}\bigg).
\end{align*}
For $1>\delta>0$, we take
\begin{align*}
    t=\frac{\sqrt{2}(B_{S}+rB_{K})^{2}}{\sqrt{N}}\log\frac{2\calN_{\bar{\calH}}\bigg(\frac{1}{16\sqrt{N}},\bbB(r,\bar{\calH})\bigg)\cdot \calN_{\infty}\bigg(\frac{1}{16rL_{K}B_{k}\sqrt{N}},\tilde{\calW}\bigg)}{\delta}.
\end{align*}
Then inequality~\eqref{ieq:35} shows that
\begin{align}
    &\sup_{g_{f,W}\in\calG_{\calF,\tilde{\calW}}}\bigg|\int_{0}^{1}g_{f,W}(\alpha)\rmd\alpha-\frac{1}{N}\sum_{i=1}^{N} g_{f,W}(\xi_{i})\bigg|\nonumber\\
    &\quad= O\bigg(\frac{(B_{S}+rB_{K})^{2}}{\sqrt{N}}\log\frac{\calN_{\bar{\calH}}\bigg(\frac{1}{16\sqrt{N}},\bbB(r,\bar{\calH})\bigg)\cdot \calN_{\infty}\bigg(\frac{1}{16rL_{K}B_{k}\sqrt{N}},\tilde{\calW}\bigg)}{\delta}\bigg),\nonumber
\end{align}
with probability at least $1-\delta$. Thus, we have that
\begin{align}
    \text{(IX)}&=O\bigg(\frac{(B_{S}+rB_{K})^{2}}{\sqrt{N}}\log\frac{\calN_{\bar{\calH}}\bigg(\frac{1}{16\sqrt{N}},\bbB(r,\bar{\calH})\bigg)\cdot \calN_{\infty}\bigg(\frac{1}{16rL_{K}B_{k}\sqrt{N}},\tilde{\calW}\bigg)}{\delta}\nonumber\\
    &\qquad+\frac{(B_{S}+rB_{\tilK})^{2}}{\sqrt{N}}\log\frac{\calN_{\tilde{\calH}}\bigg(\frac{1}{16\sqrt{N}},\bbB(\tilr,\tilde{\calH})\bigg)\cdot \calN_{\infty}\bigg(\frac{1}{16rL_{K}B_{k}\sqrt{N}},\tilde{\calW}\bigg)}{\delta}\bigg)\label{ieq:36}
\end{align}

\textbf{Step 2: Bound the difference between the risk given positions.}

We adopt the similar procedures as the proof of Theorem~\ref{thm:fixest}. From inequalities~\eqref{ieq:8}, \eqref{ieq:9},and \eqref{ieq:16}, we define that
\begin{align}
    \text{(XI)}&=\frac{1}{NL}\sum_{\tau=1}^{L}\sum_{i=1}^{N}\bbE_{\rho_{\tau,h}^{i}}\bigg[\Big(s_{\tau,h+1}^{i}-\hatf_{h}\big(\omega_{\tau,h}^{i}(\hatW_{h})\big)\Big)^{2}-\Big(s_{\tau,h+1}^{i}-f_{h}^{*}\big(\omega_{\tau,h}^{i}(W_{h}^{*})\big)\Big)^{2}\bigg]\nonumber\\
    &\quad\qquad -2\frac{1}{NL}\sum_{\tau=1}^{L}\sum_{i=1}^{N}\Big(s_{\tau,h+1}^{i}-\hatf_{h}\big(\omega_{\tau,h}^{i}(\hatW_{h})\big)\Big)^{2}-\Big(s_{\tau,h+1}^{i}-f_{h}^{*}\big(\omega_{\tau,h}^{i}(W_{h}^{*})\big)\Big)^{2}\nonumber\\
    \text{(XII)}&=4(B_{S}+rB_{K})rL_{K}\sup_{W\in\tilde{\calW}}\frac{1}{NL}\sum_{\tau=1}^{L}\sum_{i=1}^{N}\big\|\homega_{\tau,h}^{i}(W)-\bar{\omega}_{\tau,h}^{i}(W)\big\|_{\calH}\nonumber\\
    \text{(XIII)}&=4(B_{S}+rB_{K})rL_{K}\sup_{W\in\tilde{\calW}}\frac{1}{NL}\sum_{\tau=1}^{L}\sum_{i=1}^{N}\big\|\bar{\omega}_{\tau,h}^{i}(W)-\omega_{\tau,h}^{i}(W)\big\|_{\calH}.\nonumber
\end{align}

For term (XIII), we adopt a different method with the proof of Theorem~\ref{thm:fixest}. Let $\varepsilon>0$, $\tilde{\calW}_{\varepsilon}$ be a $L_{\infty}$ $\varepsilon-$cover of $\tilde{\calW}$. Then for any $W\in\tilde{\calW}$, there exists $\barW\in\tilde{\calW}_{\varepsilon}$ such that $\|\barW-W\|_{\infty}\leq\varepsilon$. Then we have
\begin{align*}
     \|\omega_{\tau,h}^{i}(W)-\omega_{\tau,h}^{i}(\barW)\|_{\calH}
    & =\bigg\|\int_{0}^{1}\int_{\calS}\big(W(\xi_{i},\beta)-\barW(\xi_{i},\beta)\big)k\big(\cdot,(s_{\tau,h}^{i},a_{\tau,h}^{i},s)\big)\mu_{\tau,h}^{\beta}(s)\rmd s\rmd\beta\bigg\|_{\calH}\\
    & \leq \varepsilon B_{k},
\end{align*}
where the inequality results from the triangle inequality. Similarly, we have that $\|\bar{\omega}_{\tau,h}^{i}(W)-\bar{\omega}_{\tau,h}^{i}(\barW)\|_{\calH}\leq\varepsilon B_{k}$. For any $t>0$, we will set $\varepsilon=t/(4B_{k})$. Then the tail probability for (XIII) can be bounded as
\begin{align*}
    &\bbP\bigg(\sup_{W\in\tilde{\calW}}\frac{1}{NL}\sum_{\tau=1}^{L}\sum_{i=1}^{N}\big\|\bar{\omega}_{\tau,h}^{i}(W)-\omega_{\tau,h}^{i}(W)\big\|_{\calH}\geq t\bigg)\\
    &\quad\leq \bbP\bigg(\exists W\in\tilde{\calW}, \tau\in[L],i\in[N],\big\|\bar{\omega}_{\tau,h}^{i}(W)-\omega_{\tau,h}^{i}(W)\big\|_{\calH}\geq t\bigg)\\
    &\quad\leq NL\calN_{\infty}\bigg(\frac{t}{4B_{k}},\tilde{\calW}\bigg)\max_{W\in\tilde{\calW}_{\frac{t}{4B_{k}}}, \tau\in[L],i\in[N]}\bbP\bigg(\big\|\bar{\omega}_{\tau,h}^{i}(W)-\omega_{\tau,h}^{i}(W)\big\|_{\calH}\geq \frac{t}{2}\bigg)\\
    &\quad\leq 2NL\calN_{\infty}\bigg(\frac{t}{4B_{k}},\tilde{\calW}\bigg)\exp\bigg(-\frac{(N-1)t^{2}}{8B_{k}^{2}}\bigg),
\end{align*}
where the second inequality results from the union bound, and the last inequality resutls from Lemma~\ref{lem:hilconcen}. For any $0<\delta<1$, we set
\begin{align*}
    t=\frac{2\sqrt{2}B_{k}}{\sqrt{N-1}}\log\frac{2NL\calN_{\infty}\bigg(\frac{1}{\sqrt{N}},\tilde{\calW}\bigg)}{\delta}.
\end{align*}
Then we have that
\begin{align}
    \text{(XIII)}\leq O\bigg(\frac{(B_{S}+rB_{K})rL_{K}B_{k}}{\sqrt{N}}\log\frac{NL\calN_{\infty}\bigg(\frac{1}{\sqrt{N}},\tilde{\calW}\bigg)}{\delta}\bigg)\label{ieq:38}
\end{align}
with probability at least $1-\delta$.

For term (XII), we follow the proof of Theorem~\ref{thm:fixest} and condition on the values of $\bar{\xi}$ to bound the tail probability. We have that
\begin{align*}
    &\bbP\bigg(\sup_{W\in\tilde{\calW}}\frac{1}{NL}\sum_{\tau=1}^{L}\sum_{i=1}^{N}\big\|\homega_{\tau,h}^{i}(W)-\bar{\omega}_{\tau,h}^{i}(W)\big\|_{\calH}\geq t\bigg)\\
    &\quad=\bbE_{\bar{\xi}}\bigg[\bbP\bigg(\sup_{W\in\tilde{\calW}}\frac{1}{NL}\sum_{\tau=1}^{L}\sum_{i=1}^{N}\big\|\homega_{\tau,h}^{i}(W)-\bar{\omega}_{\tau,h}^{i}(W)\big\|_{\calH}\geq t\,\bigg|\,\bar{\xi}\bigg)\bigg]\\
    &\leq 2NL\calN_{\infty}(1/\sqrt{N},\tilde{\calW})\exp\bigg(-\frac{(N-1)t^{2}}{32B_{k}^{2}}\bigg),
\end{align*}
where we condition on the values of $\bar{\xi}$ in the first equality, and the inequality results from inequality~\eqref{ieq:15}. Thus, we have that
\begin{align}
    \text{(XII)}\leq O\bigg(\frac{(B_{S}+rB_{K})rL_{K}B_{k}}{\sqrt{N}}\log\frac{NL\calN_{\infty}(1/\sqrt{N},\tilde{\calW})}{\delta}\bigg)\label{ieq:39}
\end{align}
with probability at least $1-\delta$.

For term (XI), we just adopt the same conditional probability trick as shown in the bound of (XII). From inequality~\eqref{ieq:32}, we have that
\begin{align}
    \text{(XI)}\leq O\bigg(\frac{(B_{S}+rB_{K})^{4}}{NL}\log\frac{N_{\bbB_{r}}N_{\tilde{\calW}}}{\delta}\bigg)\label{ieq:40}
\end{align}
with probability at least $1-\delta$.

Combining the inequalities~\eqref{ieq:34}, \eqref{ieq:36}, \eqref{ieq:38}, \eqref{ieq:39}, and \eqref{ieq:40}, we have that
\begin{align*}
    &\calR(\hatf_{h},\hatg_{h},\hatW_{h})-\calR(f_{h}^{*},g_{h}^{*},W_{h}^{*})\\
    &\quad\leq O\bigg(\frac{\big(B_{S}+r\max\{B_{K},B_{\tilK}\}\big)^{2}}{\sqrt{N}}\log\frac{\calN_{\bar{\calH}}\bigg(\frac{1}{16\sqrt{N}},\bbB(r,\bar{\calH})\bigg)\cdot\calN_{\tilde{\calH}}\bigg(\!\!\frac{1}{16\sqrt{N}},\bbB(\tilr,\tilde{\calH})\!\!\bigg)\cdot \calN_{\infty}\bigg(\!\!\frac{1}{16rL_{K}B_{k}\sqrt{N}},\tilde{\calW}\!\!\bigg)}{\delta}\bigg)\\
    &\quad\qquad+\frac{\big(B_{S}+r\max\{B_{K},B_{\tilK}\}\big)r\max\{L_{K},L_{\tilK}B_{K}\}}{\sqrt{N}}\log\frac{NL\calN_{\infty}\bigg(\frac{1}{\sqrt{N}},\tilde{\calW}\bigg)}{\delta}\\
    &\quad\qquad+\frac{(B_{S}+r\max\{B_{K},B_{\tilK})^{4}}{NL}\log\frac{N_{\bbB_{r}}N_{\tilde{\bbB}_{\tilr}}N_{\tilde{\calW}}}{\delta}\bigg).
\end{align*}
Thus, we conclude the proof of Theorem~\ref{thm:randest}.
\end{proof}

\section{Proof of Theorem~\ref{thm:ufixest}}
\begin{proof}[Proof of Theorem~\ref{thm:ufixest}]
We first decompose the difference between the permutation-invariant risk as the sum of the generalization error of risk, the Estimation Error of Mean-embedding, and the empirical risk difference. Given the fact that the empirical risk difference is equal and less to zero, Our proof involves two steps:
\begin{itemize}
    \item Bound the estimation error of mean-embedding.
    \item Bound the generalization error of the risk.
\end{itemize}

\begin{align*}
    &\bar{\calR}_{\bar{\xi}}(\hatf_{h},\hatg_{h},\hatW_{h})-\bar{\calR}_{\bar{\xi}}(f_{h}^{*},g_{h}^{*},W_{h}^{*})\nonumber\\
    &\quad =\inf_{\phi\in\calB_{[0,1]}}\frac{1}{NL}\sum_{\tau=1}^{L}\sum_{i=1}^{N}\bbE_{\rho_{\tau,h}^{i}}\bigg[\Big(s_{\tau,h+1}^{i}-\hatf_{h}\big(\omega_{\tau,h}^{i}(\hatW_{h}^{\phi})\big)\Big)^{2}+\Big(r_{\tau,h}^{i}-\hatg_{h}\big(\omega_{\tau,h}^{i}(\hatW_{h}^{\phi})\big)\Big)^{2}\\
    &\quad\qquad-\Big(s_{\tau,h+1}^{i}-f_{h}^{*}\big(\omega_{\tau,h}^{i}(W_{h}^{*})\big)\Big)^{2}-\Big(r_{\tau,h}^{i}-g_{h}^{*}\big(\omega_{\tau,h}^{i}(W_{h}^{*})\big)\Big)^{2}\bigg]\nonumber\\
    &\quad\leq\inf_{\phi\in\calC_{[0,1]}^{N}}\frac{1}{NL}\sum_{\tau=1}^{L}\sum_{i=1}^{N}\bbE_{\rho_{\tau,h}^{i}}\bigg[\Big(s_{\tau,h+1}^{i}-\hatf_{h}\big(\omega_{\tau,h}^{i}(\hatW_{h}^{\phi})\big)\Big)^{2}+\Big(r_{\tau,h}^{i}-\hatg_{h}\big(\omega_{\tau,h}^{i}(\hatW_{h}^{\phi})\big)\Big)^{2}\\
    &\quad\qquad-\Big(s_{\tau,h+1}^{i}-f_{h}^{*}\big(\omega_{\tau,h}^{i}(W_{h}^{*})\big)\Big)^{2}-\Big(r_{\tau,h}^{i}-g_{h}^{*}\big(\omega_{\tau,h}^{i}(W_{h}^{*})\big)\Big)^{2}\bigg]\nonumber\\
    &=\text{Generalization Error of Risk}+\text{Estimation Error of Mean-embedding}+\text{Empirical Risk Difference},
\end{align*}
where each term is defined as
\begin{align*}
    &\text{Generalization Error of Risk}\\
    &\quad=\inf_{\phi\in\calC_{[0,1]}^{N}}\frac{1}{NL}\sum_{\tau=1}^{L}\sum_{i=1}^{N}\bbE_{\rho_{\tau,h}^{i}}\bigg[\Big(s_{\tau,h+1}^{i}-\hatf_{h}\big(\omega_{\tau,h}^{i}(\hatW_{h}^{\phi})\big)\Big)^{2}+\Big(r_{\tau,h}^{i}-\hatg_{h}\big(\omega_{\tau,h}^{i}(\hatW_{h}^{\phi})\big)\Big)^{2}\\
    &\quad\qquad-\Big(s_{\tau,h+1}^{i}-f_{h}^{*}\big(\omega_{\tau,h}^{i}(W_{h}^{*})\big)\Big)^{2}-\Big(r_{\tau,h}^{i}-g_{h}^{*}\big(\omega_{\tau,h}^{i}(W_{h}^{*})\big)\Big)^{2}\bigg]\nonumber\\
    &\quad\qquad -2\inf_{\phi\in\calC_{[0,1]}^{N}}\frac{1}{NL}\sum_{\tau=1}^{L}\sum_{i=1}^{N}\bigg[\Big(s_{\tau,h+1}^{i}-\hatf_{h}\big(\omega_{\tau,h}^{i}(\hatW_{h}^{\phi})\big)\Big)^{2}+\Big(r_{\tau,h}^{i}-\hatg_{h}\big(\omega_{\tau,h}^{i}(\hatW_{h}^{\phi})\big)\Big)^{2}\\
    &\quad\qquad-\Big(s_{\tau,h+1}^{i}-f_{h}^{*}\big(\omega_{\tau,h}^{i}(W_{h}^{*})\big)\Big)^{2}-\Big(r_{\tau,h}^{i}-g_{h}^{*}\big(\omega_{\tau,h}^{i}(W_{h}^{*})\big)\Big)^{2}\bigg].
\end{align*}
This generalization error of risk represents the error due to the fact that we optimize over the empirical estimation of the risk not the population risk.
\begin{align*}
    &\text{Estimation Error of Mean-embedding}\\
    &\quad=2\inf_{\phi\in\calC_{[0,1]}^{N}}\frac{1}{NL}\sum_{\tau=1}^{L}\sum_{i=1}^{N}\Big(s_{\tau,h+1}^{i}-\hatf_{h}\big(\omega_{\tau,h}^{i}(\hatW_{h}^{\phi})\big)\Big)^{2}+\Big(r_{\tau,h}^{i}-\hatg_{h}\big(\omega_{\tau,h}^{i}(\hatW_{h}^{\phi})\big)\Big)^{2}\\
    &\quad\qquad-2\inf_{\phi\in\calC_{[0,1]}^{N}}\frac{1}{NL}\sum_{\tau=1}^{L}\sum_{i=1}^{N}\Big(s_{\tau,h+1}^{i}-\hatf_{h}\big(\hbomega_{\tau,h}^{i}(\hatW_{h}^{\phi})\big)\Big)^{2}+\Big(r_{\tau,h}^{i}-\hatg_{h}\big(\hbomega_{\tau,h}^{i}(\hatW_{h}^{\phi})\big)\Big)^{2}\nonumber\\
    &\quad\qquad+2\frac{1}{NL}\sum_{\tau=1}^{L}\sum_{i=1}^{N}\Big(s_{\tau,h+1}^{i}-f_{h}^{*}\big(\hbomega_{\tau,h}^{i}(W_{h}^{*,\phi^{*}})\big)\Big)^{2}+\Big(r_{\tau,h}^{i}-g_{h}^{*}\big(\hbomega_{\tau,h}^{i}(W_{h}^{*,\phi^{*}})\big)\Big)^{2}\\
    &\quad\qquad-2\frac{1}{NL}\sum_{\tau=1}^{L}\sum_{i=1}^{N}\Big(s_{\tau,h+1}^{i}-f_{h}^{*}\big(\omega_{\tau,h}^{i}(W_{h}^{*})\big)\Big)^{2}+\Big(r_{\tau,h}^{i}-g_{h}^{*}\big(\omega_{\tau,h}^{i}(W_{h}^{*})\big)\Big)^{2}.
\end{align*}
Estimation error of mean-embedding represents the error due to the fact that we cannot observe the value of $\homega_{\tau,h}^{i}(\hatW_{h})$. Instead, we can only estimate the value of it through the states of sampled agents.
\begin{align*}
    &\text{Empirical Risk Difference}\\
    &\quad =2\inf_{\phi\in\calC_{[0,1]}^{N}}\frac{1}{NL}\sum_{\tau=1}^{L}\sum_{i=1}^{N}\Big(s_{\tau,h+1}^{i}-\hatf_{h}\big(\hbomega_{\tau,h}^{i}(\hatW_{h}^{\phi})\big)\Big)^{2}+\Big(r_{\tau,h}^{i}-\hatg_{h}\big(\hbomega_{\tau,h}^{i}(\hatW_{h}^{\phi})\big)\Big)^{2}\\
    &\quad\qquad-2\frac{1}{NL}\sum_{\tau=1}^{L}\sum_{i=1}^{N}\Big(s_{\tau,h+1}^{i}-f_{h}^{*}\big(\hbomega_{\tau,h}^{i}(W_{h}^{*,\phi^{*}})\big)\Big)^{2}+\Big(r_{\tau,h}^{i}-g_{h}^{*}\big(\hbomega_{\tau,h}^{i}(W_{h}^{*,\phi^{*}})\big)\Big)^{2},\nonumber
\end{align*}
where $\phi^{*}\in\calC_{[0,1]}^{N}$ is a permutation of $((i-1)/N,i/N]$ for $i\in[N]$ such that $\phi^{*}(i/N)=\xi_{i}$. From the estimation procedure of Algorithm~\eqref{algo:inestalgo2}, we have that
\begin{align*}
    \text{Empirical Risk Difference}\leq 0.
\end{align*}
Thus, we have that
\begin{align*}
    &\bar{\calR}_{\bar{\xi}}(\hatf_{h},\hatg_{h},\hatW_{h})-\bar{\calR}_{\bar{\xi}}(f_{h}^{*},g_{h}^{*},W_{h}^{*})\nonumber\\
    &\quad\leq \text{Generalization Error of Risk}+\text{Estimation Error of Mean-embedding}.
\end{align*}

\textbf{Step 1: Bound the Estimation Error of Mean-embedding.}

From the definition of the generalization error, we bound two terms separately
\begin{align}
    &\text{Estimation Error of Mean-embedding}\nonumber\\
    &\quad\leq 2\sup_{f,g,W}\bigg|\inf_{\phi\in\calC_{[0,1]}^{N}}\frac{1}{NL}\sum_{\tau=1}^{L}\sum_{i=1}^{N}\Big(s_{\tau,h+1}^{i}-f\big(\hbomega_{\tau,h}^{i}(W^{\phi})\big)\Big)^{2}+\Big(r_{\tau,h}^{i}-g\big(\hbomega_{\tau,h}^{i}(W^{\phi})\big)\Big)^{2}\nonumber\\
    &\quad\qquad-\inf_{\phi\in\calC_{[0,1]}^{N}}\frac{1}{NL}\sum_{\tau=1}^{L}\sum_{i=1}^{N}\Big(s_{\tau,h+1}^{i}-f\big(\omega_{\tau,h}^{i}(W^{\phi})\big)\Big)^{2}+\Big(r_{\tau,h}^{i}-g\big(\omega_{\tau,h}^{i}(W^{\phi})\big)\Big)^{2}\bigg|\nonumber\\
    &\quad\qquad+2\frac{1}{NL}\sum_{\tau=1}^{L}\sum_{i=1}^{N}\Big(s_{\tau,h+1}^{i}-f_{h}^{*}\big(\hbomega_{\tau,h}^{i}(W_{h}^{*,\phi^{*}})\big)\Big)^{2}+\Big(r_{\tau,h}^{i}-g_{h}^{*}\big(\hbomega_{\tau,h}^{i}(W_{h}^{*,\phi^{*}})\big)\Big)^{2}\nonumber\\
    &\quad\qquad-2\frac{1}{NL}\sum_{\tau=1}^{L}\sum_{i=1}^{N}\Big(s_{\tau,h+1}^{i}-f_{h}^{*}\big(\omega_{\tau,h}^{i}(W_{h}^{*})\big)\Big)^{2}+\Big(r_{\tau,h}^{i}-g_{h}^{*}\big(\omega_{\tau,h}^{i}(W_{h}^{*})\big)\Big)^{2}\nonumber\\
    &\quad= \text{(XIV)}+\text{(XV)}\label{ieq:41}
\end{align}
We first denote the composition of two measure-preserving bijections $\phi$ and $\psi$ as $\phi\circ\psi$. When appiled to a graphon $W$, the composition of bijections maps the values of the graphon as
\begin{align*}
    W^{\phi\circ\psi}(x,y)=W\Big(\phi\big(\psi(x)\big),\phi\big(\psi(y)\big)\Big).
\end{align*}

Then we bound the term (XIV) as
\begin{align}
    &\sup_{f,g,W}\bigg|\inf_{\phi\in\calC_{[0,1]}^{N}}\frac{1}{NL}\sum_{\tau=1}^{L}\sum_{i=1}^{N}\Big(s_{\tau,h+1}^{i}-f\big(\hbomega_{\tau,h}^{i}(W^{\phi\circ\phi^{*}})\big)\Big)^{2}+\Big(r_{\tau,h}^{i}-g\big(\hbomega_{\tau,h}^{i}(W^{\phi\circ\phi^{*}})\big)\Big)^{2}\nonumber\\
    &\quad\qquad-\inf_{\phi\in\calC_{[0,1]}^{N}}\frac{1}{NL}\sum_{\tau=1}^{L}\sum_{i=1}^{N}\Big(s_{\tau,h+1}^{i}-f\big(\omega_{\tau,h}^{i}(W^{\phi})\big)\Big)^{2}+\Big(r_{\tau,h}^{i}-g\big(\omega_{\tau,h}^{i}(W^{\phi})\big)\Big)^{2}\bigg|\nonumber\\
    &\quad=\sup_{f,W,\phi}\bigg|\frac{1}{NL}\sum_{\tau=1}^{L}\sum_{i=1}^{N}\Big(s_{\tau,h+1}^{i}-f\big(\hbomega_{\tau,h}^{i}(W^{\phi\circ\phi^{*}})\big)\Big)^{2}+\Big(r_{\tau,h}^{i}-g\big(\hbomega_{\tau,h}^{i}(W^{\phi\circ\phi^{*}})\big)\Big)^{2}\nonumber\\
    &\quad\qquad-\frac{1}{NL}\sum_{\tau=1}^{L}\sum_{i=1}^{N}\Big(s_{\tau,h+1}^{i}-f\big(\omega_{\tau,h}^{i}(W^{\phi})\big)\Big)^{2}+\Big(r_{\tau,h}^{i}-g\big(\omega_{\tau,h}^{i}(W^{\phi})\big)\Big)^{2}\bigg|\nonumber\\
    &\quad\leq 4(B_{S}+\barr\barB_{K})\barr\barL_{K}\sup_{W\in\tilde{\calW},\phi\in\calC_{[0,1]}^{N}}\frac{1}{NL}\sum_{\tau=1}^{L}\sum_{i=1}^{N}\big\|\hbomega_{\tau,h}^{i}(W^{\phi\circ\phi^{*}})-\omega_{\tau,h}^{i}(W^{\phi})\big\|_{\calH},\label{ieq:45}
\end{align}
where the equality results from the fact that $\phi^{*}$ is a measure-preserving bijection, and the inequality results from the same arguments in inequality~\eqref{ieq:9}.

We decompose the error as
\begin{align*}
    \sup_{W,\phi}\big\|\hbomega_{\tau,h}^{i}(W^{\phi\circ\phi^{*}})-\omega_{\tau,h}^{i}(W^{\phi})\big\|_{\calH}&\leq \sup_{W,\phi}\big\|\bar{\omega}_{\tau,h}^{i}(W^{\phi})-\omega_{\tau,h}^{i}(W^{\phi})\big\|_{\calH}+\sup_{W,\phi}\big\|\hbomega_{\tau,h}^{i}(W^{\phi\circ\phi^{*}})-\bar{\omega}_{\tau,h}^{i}(W^{\phi})\big\|_{\calH}\nonumber
\end{align*}
where
\begin{align*}
    \omega_{\tau,h}^{i}(W^{\phi})&=\int_{0}^{1}\int_{\calS}W\big(\phi(\xi_{i}),\phi(\beta)\big)k\big(\cdot,(s_{\tau,h}^{i},a_{\tau,h}^{i},s)\big)\mu_{\tau,h}^{\beta}(s)\rmd s\rmd\beta,\\
    \bar{\omega}_{\tau,h}^{i}(W^{\phi})&=\frac{1}{N-1}\sum_{j\neq i}W\big(\phi(\xi_{i}),\phi(\xi_{j})\big)\int_{\calS}k\big(\cdot,(s_{\tau,h}^{i},a_{\tau,h}^{i},s)\big)\mu_{\tau,h}^{j}(s)\rmd s,\\
    \hbomega_{\tau,h}^{i}(W^{\phi\circ\phi^{*}})&=\frac{1}{(N-1)L}\sum_{j\neq i}\sum_{\tau^{\prime}=1}^{L}W\big(\phi(\xi_{i}),\phi(\xi_{j})\big)k\big(\cdot,(s_{\tau,h}^{i},a_{\tau,h}^{i},s_{\tau^{\prime},h}^{j})\big).
\end{align*}

For term $\sup_{W,\phi}\big\|\bar{\omega}_{\tau,h}^{i}(W^{\phi})-\omega_{\tau,h}^{i}(W^{\phi})\big\|_{\calH}$, we define the interval $\calI_{i}=((i-1)/N,i/N]$ for $i\in[N]$. Then we have that
\begin{align*}
    &\big\|\bar{\omega}_{\tau,h}^{i}(W^{\phi})-\omega_{\tau,h}^{i}(W^{\phi})\big\|_{\calH}\\
    &\quad\leq \frac{2}{N}B_{k}+\sum_{j\neq i}\bigg\|\int_{\xi_{j}-\frac{1}{N}}^{\xi_{j}}\int_{\calS}W\big(\phi(\xi_{i}),\phi(\beta)\big)k\big(\cdot,(s_{\tau,h}^{i},a_{\tau,h}^{i},s)\big)\mu_{\tau,h}^{\beta}(s)\rmd s\rmd\beta\\
    &\quad\qquad\qquad\qquad\qquad -\frac{1}{N}\sum_{j\neq i}W\big(\phi(\xi_{i}),\phi(\xi_{j})\big)\int_{\calS}k\big(\cdot,(s_{\tau,h}^{i},a_{\tau,h}^{i},s)\big)\mu_{\tau,h}^{j}(s)\rmd s\bigg\|_{\calH},
\end{align*}
where the inequality results from the triangle inequality. For each term in the sum, we bound it as
\begin{align*}
    &\bigg\|\int_{\xi_{j}-\frac{1}{N}}^{\xi_{j}}\int_{\calS}W\big(\phi(\xi_{i}),\phi(\beta)\big)k\big(\cdot,(s_{\tau,h}^{i},a_{\tau,h}^{i},s)\big)\mu_{\tau,h}^{\beta}(s)\rmd s\rmd\beta\\
    &\qquad -\frac{1}{N}\sum_{j\neq i}W\big(\phi(\xi_{i}),\phi(\xi_{j})\big)\int_{\calS}k\big(\cdot,(s_{\tau,h}^{i},a_{\tau,h}^{i},s)\big)\mu_{\tau,h}^{j}(s)\rmd s\bigg\|_{\calH}\\
    &\quad\leq \bigg\|\int_{\xi_{j}-\frac{1}{N}}^{\xi_{j}}\int_{\calS}W\big(\phi(\xi_{i}),\phi(\beta)\big)k\big(\cdot,(s_{\tau,h}^{i},a_{\tau,h}^{i},s)\big)\big(\mu_{\tau,h}^{\beta}(s)-\mu_{\tau,h}^{j}(s)\big)\rmd s\rmd\beta\bigg\|_{\calH}\\
    &\quad\qquad +\bigg\|\int_{\xi_{j}-\frac{1}{N}}^{\xi_{j}}\int_{\calS}\Big(W\big(\phi(\xi_{i}),\phi(\beta)\big)-W\big(\phi(\xi_{i}),\phi(\xi_{j})\Big)k\big(\cdot,(s_{\tau,h}^{i},a_{\tau,h}^{i},s)\big)\mu_{\tau,h}^{j}(s)\rmd s\rmd\beta\bigg\|_{\calH}\\
    &\quad=O\bigg( \frac{B_{k}}{N^{2}}\bigg),
\end{align*}
where the first inequality results from the triangle inequality, and the second inequality results from the same argument in inequality~\eqref{ieq:13} and the fact that $\beta$ and $\xi_{j}$ are always in the same interval for any $\phi\in\calC_{[0,1]}^{N}$. Thus, we have that
\begin{align}
    \sup_{W,\phi}\big\|\bar{\omega}_{\tau,h}^{i}(W^{\phi})-\omega_{\tau,h}^{i}(W^{\phi})\big\|_{\calH}=O\bigg(\frac{B_{k}}{N}\bigg).\label{ieq:42}
\end{align}

For $\sup_{W,\phi}\big\|\hbomega_{\tau,h}^{i}(W^{\phi\circ\phi^{*}})-\bar{\omega}_{\tau,h}^{i}(W^{\phi})\big\|_{\calH}$, we adopt the similar procedure in the proof of inequality~\eqref{ieq:15}.
\begin{align*}
    &\bbP\Big(\sup_{W,\phi}\big\|\hbomega_{\tau,h}^{i}(W^{\phi\circ\phi^{*}})-\bar{\omega}_{\tau,h}^{i}(W^{\phi})\big\|_{\calH}\geq t\Big)\\
    &\quad\leq N!NL\calN_{\infty}(t/(4B_{k}),\tilde{\calW})\max_{j\in[\calN_{\infty}],i\in[N],\tau\in[L]}\bbP\Big( \big\|\hbomega_{\tau,h}^{i}(W_{j}^{\phi\circ\phi^{*}})-\bar{\omega}_{\tau,h}^{i}(W_{j}^{\phi})\big\|_{\calH}\geq t/2\Big)\\
    &\quad\leq 2N!NL\calN_{\infty}(t/(4B_{k}),\tilde{\calW})\exp\bigg(-\frac{NLt^{2}}{16B_{k}^{2}}\bigg),
\end{align*}
where the first inequality results from the proof of inequality~\eqref{ieq:15}, and the last inequality results from Lemma~\ref{lem:hilconcen}. Thus, we have that with probability at least $1-\delta$
\begin{align}
    \sup_{W,\phi}\big\|\hbomega_{\tau,h}^{i}(W^{\phi\circ\phi^{*}})-\bar{\omega}_{\tau,h}^{i}(W^{\phi})\big\|_{\calH}=O\bigg(B_{k}\sqrt{\frac{N}{L}}\log\frac{NL \calN_{\infty}(\sqrt{N/L},\tilde{\calW})}{\delta}\bigg).\label{ieq:43}
\end{align}

Combining inequalities~\eqref{ieq:45}, \eqref{ieq:42} and \eqref{ieq:43}, we have that
\begin{align}
    \text{(XIV)}=O\bigg(\frac{B_{k}\barr\barL_{K}(B_{S}+\barr\barB_{K})}{N}+(B_{S}+\barr\barB_{K})\barr\barL_{K}B_{k}\sqrt{\frac{N}{L}}\log\frac{NL \calN_{\infty}(\sqrt{N/L},\tilde{\calW})}{\delta}\bigg)\label{ieq:46}.
\end{align}

Following the similar arguments, we can derive that
\begin{align*}
    \text{(XV)}=O\bigg(\frac{B_{k}\barr\barL_{K}(B_{S}+\barr\barB_{K})}{N}+(B_{S}+\barr\barB_{K})\barr\barL_{K}B_{k}\frac{1}{\sqrt{NL}}\log\frac{NL \calN_{\infty}(\sqrt{N/L},\tilde{\calW})}{\delta}\bigg).
\end{align*}

\textbf{Step 2: Bound the generalization error of risk}

We follow the similar procedures in Step 2 of the proof of Theorem~\ref{thm:fixest}. We denote the quadruple $(s_{\tau,h}^{i},a_{\tau,h}^{i},\mu_{\tau,h}^{\calI},s_{\tau,h+1}^{i})$ as $e_{\tau,h}^{i}$. We define the function $f_{W}$ as 
\begin{align*}
    f(e_{\tau,h}^{i},W,\phi)=\Big(s_{\tau,h+1}^{i}-f\big(\omega_{\tau,h}^{i}(W^{\phi})\big)\Big)^{2}-\Big(s_{\tau,h+1}^{i}-f_{h}^{*}\big(\omega_{\tau,h}^{i}(W_{h}^{*})\big)\Big)^{2}.
\end{align*}
Then we have that
\begin{align*}
    &\bbP\bigg(\inf_{\phi\in\calC_{[0,1]}^{N}}\frac{1}{NL}\sum_{\tau=1}^{L}\sum_{i=1}^{N}\bbE_{\rho_{\tau,h}^{i}}\big[\hatf_{h}(e_{\tau,h}^{i},\hatW_{h},\phi)\big] -2\inf_{\phi\in\calC_{[0,1]}^{N}}\frac{1}{NL}\sum_{\tau=1}^{L}\hatf_{h}(e_{\tau,h}^{i},\hatW_{h},\phi)\geq t\bigg)\\
    &\quad\leq \bbP\bigg(\exists f\in\bbB(r,\bar{\calH}),W\in\tilde{\calW}, \max_{\phi\in\calC_{[0,1]}^{N}}\Big[\frac{1}{NL}\sum_{\tau=1}^{L}\sum_{i=1}^{N}\bbE_{\rho_{\tau,h}^{i}}\big[f(e_{\tau,h}^{i},W,\phi)\big] -2\frac{1}{NL}\sum_{\tau=1}^{L}f(e_{\tau,h}^{i},W,\phi)\Big]\geq t\bigg)\\
    &\quad\leq N!\max_{\phi\in\calC_{[0,1]}^{N}}\bbP\bigg(\exists f\in\bbB(r,\bar{\calH}),W\in\tilde{\calW}, \frac{1}{NL}\sum_{\tau=1}^{L}\sum_{i=1}^{N}\bbE_{\rho_{\tau,h}^{i}}\big[f(e_{\tau,h}^{i},W,\phi)\big] -2\frac{1}{NL}\sum_{\tau=1}^{L}f(e_{\tau,h}^{i},W,\phi)\geq t\bigg)\\
    &\quad\leq 14N!\calN_{\bar{\calH}}\bigg(\frac{t}{160(B_{S}+rB_{K})^{3}B_{K}},\bbB(r,\bar{\calH})\bigg)\cdot \calN_{\infty}\bigg(\frac{t}{160(B_{S}+rB_{K})^{3}rL_{K}B_{k}},\tilde{\calW}\bigg),
\end{align*}
where the second inequality results from the union bound and the fact that $\min_{x}f(x)-\min_{x}g(x)\leq \max_{x}f(x)-g(x)$, and the final inequality results from Proposition~\ref{prop:frconcen}. Thus, we have that with probability at least $1-\delta$
\begin{align}
    &\inf_{\phi\in\calC_{[0,1]}^{N}}\frac{1}{NL}\sum_{\tau=1}^{L}\sum_{i=1}^{N}\bbE_{\rho_{\tau,h}^{i}}\big[\hatf_{h}(e_{\tau,h}^{i},\hatW_{h},\phi)\big] -2\inf_{\phi\in\calC_{[0,1]}^{N}}\frac{1}{NL}\sum_{\tau=1}^{L}\hatf_{h}(e_{\tau,h}^{i},\hatW_{h},\phi)\nonumber\\
    &\quad=O\bigg(\frac{(B_{S}+rB_{K})^{4}}{L}\log\frac{N\tilN_{\bbB_{r}}\tilN_{\infty}}{\delta}\bigg),\label{ieq:44}
\end{align}
where
\begin{align*}
    \tilN_{\bbB_{r}}=\calN_{\bar{\calH}}\bigg(\frac{3}{L},\bbB(r,\bar{\calH})\bigg), \tilN_{\tilde{\calW}}=\calN_{\infty}\bigg(\frac{3}{L_{K}L},\tilde{\calW}\bigg).
\end{align*}

Combining inequalities~\eqref{ieq:44} and \eqref{ieq:46}, we have that
\begin{align*}
    &\bar{\calR}_{\bar{\xi}}(\hatf_{h},\hatg_{h},\hatW_{h})-\bar{\calR}_{\bar{\xi}}(f_{h}^{*},g_{h}^{*},W_{h}^{*})\nonumber\\
    &\quad=O\bigg(\frac{B_{k}\barr\barL_{K}(B_{S}+\barr\barB_{K})}{N}+(B_{S}+\barr\barB_{K})\barr\barL_{K}\barB_{k}\sqrt{\frac{N}{L}}\log\frac{NL \calN_{\infty}(\sqrt{N/L},\tilde{\calW})}{\delta}\\
    &\quad\qquad+\frac{(B_{S}+\barr\barB_{K})^{4}}{L}\log\frac{N\tilN_{\bbB_{r}}\tilN_{\tilde{\bbB}_{\tilr}}\tilN_{\infty}}{\delta}\bigg),
\end{align*}
where
\begin{align*}
    \tilN_{\tilde{\bbB}_{\tilr}}=\calN_{\tilde{\calH}}\bigg(\frac{3}{L},\bbB(\tilr,\tilde{\calH})\bigg).
\end{align*}

Thus, we conclude the proof of Theorem~\ref{thm:ufixest}.

\end{proof}

\section{Proof of Corollary~\ref{coro:fixest}}
\begin{proof}[Proof of Corollary~\ref{coro:fixest}]
    The proof of Corollary~\ref{coro:fixest} follows the same procedures as the proof of Theorem~\ref{thm:fixest}. The only difference is that inequality~\eqref{ieq:15} in the proof of Theorem~\ref{thm:fixest} is replaced by
    \begin{align*}
        \text{(IV)}=\sup_{W\in\tilde{\calW}}\big\|\homega_{\tau,h}^{i}(W)-\bar{\omega}_{\tau,h}^{i}(W)\big\|_{\calH}\leq \frac{4\sqrt{2}B_{k}}{\sqrt{(N-1)L}}\log\frac{2NL\calN_{\infty}(1/\sqrt{NL},\tilde{\calW})}{\delta}.
    \end{align*}
\end{proof}

\section{Proof of Corollary~\ref{coro:optknownfix}}
\begin{proof}[Proof of Corollary~\ref{coro:optknownfix}]
    Our proof mainly involves four steps
    \begin{itemize}
        \item Derive the performance guarantee of Algorithm~\eqref{eq:fixdistest}.
        \item Generalize the performance guarantee from $\{\xi_{i}\}_{i}^{N}$ to $[0,1]$ by lipschitzness.
        \item Bound the estimation error of distribution flow and action-value function estimate.
        \item Conclude the final result.
    \end{itemize}
    
    \textbf{Step 1: Derive the performance guarantee of Algorithm~\eqref{eq:fixdistest}.}
    
    We first derive the performance guarantee of Algorithm~\eqref{eq:fixdistest} when we we sample agents with known grid positions. In such setting, we implement $\pi^{\calI}$ for $L$ times on the \ac{mdp} induced by $\mu^{\calI}$ to collect the dataset $\calD_{\tau}=\{(s_{\tau,h}^{[N]},a_{\tau,h}^{[N]},r_{\tau,h}^{[N]},s_{\tau,h+1}^{[N]})\}_{h=1}^{H}$ for $\tau\in[L]$. We define $\mu^{+,\calI}=\Gamma_{3}(\pi^{\calI},\mu^{\calI},W^{*})$ as the distribution flow of implementing $\pi^{\calI}$ on the \ac{mdp} induced by $\mu^{\calI}$. Then the joint distribution of $(s_{\tau,h}^{i},a_{\tau,h}^{i},r_{\tau,h}^{i},s_{\tau,h+1}^{i})_{i=1}^{N}$ is $\prod_{i=1}^{N}\rho_{\tau,h}^{+,i}$, where $\rho_{\tau,h}^{+,i}=\mu_{\tau,h}^{+,i}\times\pi_{\tau,h}^{i}\times \delta_{r_{h}^{*}}\times P_{h}^{*}$. With a little abuse of notation, we define the risk of $(f,g,W)$ given $\bar{\xi}$ as
    \begin{align*}
        \calR_{\bar{\xi}}(f,g,W)&=\frac{1}{NL}\sum_{\tau=1}^{L}\sum_{i=1}^{N}\bbE_{\rho_{\tau,h}^{+,i}}\bigg[\Big(s_{\tau,h+1}^{i}-f\big(\omega_{\tau,h}^{i}(W)\big)\Big)^{2}+\Big(r_{\tau,h}^{i}-g\big(\omega_{\tau,h}^{i}(W)\big)\Big)^{2}\bigg]\\
        &=\frac{1}{N}\sum_{i=1}^{N}\bbE_{\rho_{h}^{+,i}}\bigg[\Big(s_{h+1}^{i}-f\big(\omega_{h}^{i}(W)\big)\Big)^{2}+\Big(r_{h+1}^{i}-g\big(\omega_{h}^{i}(W)\big)\Big)^{2}\bigg],
    \end{align*}
    where the second equality results from that we implement the same policy for $L$ times. The difference between this definition and Eqn.~\eqref{eq:condirisk} is that we take expectation with respect to $\rho_{\tau,h}^{+,i}$ instead of $\rho_{\tau,h}^{i}$. The reason is that in the setting where we specify Eqn.~\eqref{eq:condirisk}, the \ac{mdp} is induced by the distribution flow of the policy itself, not by a pre-specified distribution flow. We state the performance guarantee as
    \begin{corollary}\label{coro:knownfixest}
        Under Assumptions~\ref{assump:graphon}, \ref{assump:kernel}, \ref{assump:real}, and \ref{assump:lipconti}, if $\xi_{i}=i/N$ for $i\in[N]$, then the risk of estimate derived in Algorithm~\eqref{eq:fixdistest} can be bounded as
        \begin{align*}
            &\calR_{\bar{\xi}}(\hatf_{h},\hatg_{h},\hatW_{h})-\calR_{\bar{\xi}}(f_{h}^{*},g_{h}^{*},W_{h}^{*})\\
            &\quad\leq O\bigg(\frac{(B_{S}+\barr\barB_{K})^{4}}{NL}\log\frac{N_{\bbB_{r}}N_{\tilde{\bbB}_{\tilr}}N_{\tilde{\calW}}}{\delta}\bigg)
        \end{align*}
        with probability at least $1-\delta$, where $N_{\bbB_{r}}$, $N_{\tilde{\bbB}_{\tilr}}$, and $N_{\tilde{\calW}}$ are defined in Theorem~\ref{thm:fixest}.
    \end{corollary}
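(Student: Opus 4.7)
The plan is to adapt the proof of Theorem~\ref{thm:fixest} by stripping away everything that dealt with the mean-embedding estimation error. In the setting of Corollary~\ref{coro:knownfixest}, the distribution flow $\mu^{\calI}$ is given exactly, so the input $\omega_{\tau,h}^{i}(W)$ to $f$ and $g$ can be evaluated directly for any candidate graphon $W \in \tilde{\calW}$ (via Eqn.~\eqref{eq:meanembed}). Consequently, Algorithm~\eqref{eq:fixdistest} is a standard empirical risk minimization procedure over $\bbB(r,\bar{\calH}) \times \bbB(\tilr,\tilde{\calH}) \times \tilde{\calW}$, with no need to approximate the covariates from samples of other agents.

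First I would perform the analogue of the decomposition leading to inequality~\eqref{ieq:8}, writing
\begin{align*}
\calR_{\bar{\xi}}(\hatf_h,\hatg_h,\hatW_h) - \calR_{\bar{\xi}}(f_h^*,g_h^*,W_h^*)
&= \text{Generalization Error of Risk} + \text{Empirical Risk Difference},
\end{align*}
where the mean-embedding estimation error block is absent since $\homega_{\tau,h}^{i}(W) = \omega_{\tau,h}^{i}(W)$ is now exact. The empirical risk difference is $\leq 0$ because $(\hatf_h,\hatg_h,\hatW_h)$ is the ERM solution of Eqn.~\eqref{eq:fixdistest} under the modified sampling distribution $\rho_{\tau,h}^{+,i}$, and $(f_h^*,g_h^*,W_h^*)$ is feasible. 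So it suffices to control only the generalization error.

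Next I would apply Proposition~\ref{prop:frconcen} verbatim, with the $f_W$-class defined via $(s_{\tau,h+1}^{i} - f(\omega_{\tau,h}^{i}(W)))^{2} - (s_{\tau,h+1}^{i} - f_h^*(\omega_{\tau,h}^{i}(W_h^*)))^{2}$ (and the analogous class for rewards). Exactly as in the step leading to inequality~\eqref{ieq:32}, choosing
\begin{align*}
t = \frac{480(B_S + \barr\barB_K)^4}{NL} \log\frac{14 N_{\bbB_r} N_{\tilde{\bbB}_{\tilr}} N_{\tilde{\calW}}}{\delta}
\end{align*}
converts the Bernstein-type tail bound into the stated $O\big((NL)^{-1} \log(N_{\bbB_r} N_{\tilde{\bbB}_{\tilr}} N_{\tilde{\calW}}/\delta)\big)$ rate, with covering numbers $N_{\bbB_r}, N_{\tilde{\bbB}_{\tilr}}, N_{\tilde{\calW}}$ defined as in Theorem~\ref{thm:fixest}. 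Summing the contributions from the transition-kernel and reward-function pieces and applying a union bound over the two yields the claimed bound.

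There is no substantive obstacle here: Proposition~\ref{prop:frconcen} does all the heavy lifting, and the simplification relative to Theorem~\ref{thm:fixest} removes precisely the $O(1/\sqrt{N})$ mean-embedding term, leaving only the faster $O(1/(NL))$ generalization term. The only minor care needed is in verifying that the sampling distribution $\rho_{\tau,h}^{+,i}$ (rather than $\rho_{\tau,h}^{i}$ used in Theorem~\ref{thm:fixest}) does not affect the applicability of Proposition~\ref{prop:frconcen}, which it does not since the proposition only uses boundedness of the function class and independence across $(i,\tau)$.
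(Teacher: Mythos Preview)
Your proposal is correct and matches the paper's own proof essentially line by line: both decompose the risk difference into a generalization error term and an empirical risk difference, note the latter is nonpositive by ERM optimality, and invoke Proposition~\ref{prop:frconcen} (via inequality~\eqref{ieq:32}) to bound the former, with the mean-embedding estimation error absent because $\omega_{\tau,h}^{i}(W)$ is computed exactly. Your remark about the change from $\rho_{\tau,h}^{i}$ to $\rho_{\tau,h}^{+,i}$ being immaterial for Proposition~\ref{prop:frconcen} is a valid observation that the paper leaves implicit.
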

    \begin{proof}[Proof of Corollary~\ref{coro:knownfixest}]
        See Appendix~\ref{app:knownfixest}.
    \end{proof}
    
    \textbf{Step 2: Generalize the performance guarantee from $\{\xi_{i}\}_{i=1}^{N}$ to $[0,1]$ by lipschitzness.}
    
    Intuitively, when the implemented policy is lipschitz, we can generalize the performance guarantee of $\calR_{\bar{\xi}}(f,g,W)$ to that of $\calR(f,g,W)$. Here we consider the case where the \ac{mdp} is induced by the distribution flow of the policy itself, i.e., the case specified in Section~\ref{sec:modellearn}. The results for the case where the \ac{mdp} is induced by a pre-specified distribution flow can be similarly derived. We note that
    \begin{align}
        &\calR(\hatf_{h},\hatg_{h},\hatW_{h})-\calR(f_{h}^{*},g_{h}^{*},W_{h}^{*})\nonumber\\
        &\quad=\calR(\hatf_{h},\hatg_{h},\hatW_{h})-\calR_{\bar{\xi}}(\hatf_{h},\hatg_{h},\hatW_{h})-\big(\calR(f_{h}^{*},g_{h}^{*},W_{h}^{*})-\calR_{\bar{\xi}}(f_{h}^{*},g_{h}^{*},W_{h}^{*})\big)\nonumber\\
        &\quad\qquad+\calR_{\bar{\xi}}(\hatf_{h},\hatg_{h},\hatW_{h})-\calR_{\bar{\xi}}(f_{h}^{*},g_{h}^{*},W_{h}^{*})\nonumber\\
        &\quad\leq 2\sup_{f\in\bbB(r,\bar{\calH}),g\in\bbB(\tilr,\tilde{\calH}),W\in\tilde{\calW}}\big|\calR(f,g,W)-\calR_{\bar{\xi}}(f,g,W)\big|+\calR_{\bar{\xi}}(\hatf_{h},\hatg_{h},\hatW_{h})-\calR_{\bar{\xi}}(f_{h}^{*},g_{h}^{*},W_{h}^{*}).\label{ieq:72}
    \end{align}
    Then we attempt to bound the first term of the right-hand side of inequality~\eqref{ieq:72}. For any two positions $\alpha,\beta\in\calI$ and $f\in\bbB(r,\bar{\calH})$, we have
    \begin{align}
        &\bigg|\bbE_{\rho_{\tau,h}^{\alpha}}\bigg[\Big(s_{\tau,h+1}-f\big(\omega_{\tau,h}^{\alpha}(W)\big)\Big)^{2}\bigg]-\bbE_{\rho_{\tau,h}^{\beta}}\bigg[\Big(s_{\tau,h+1}-f\big(\omega_{\tau,h}^{\beta}(W)\big)\Big)^{2}\bigg]\bigg|\nonumber\\
        &\quad\leq \bigg|\bbE_{\rho_{\tau,h}^{\alpha}}\bigg[\Big(s_{\tau,h+1}-f\big(\omega_{\tau,h}^{\alpha}(W)\big)\Big)^{2}\bigg]-\bbE_{\rho_{\tau,h}^{\beta}}\bigg[\Big(s_{\tau,h+1}-f\big(\omega_{\tau,h}^{\alpha}(W)\big)\Big)^{2}\bigg]\bigg|\nonumber\\
        &\quad\qquad +\bigg|\bbE_{\rho_{\tau,h}^{\beta}}\bigg[\Big(s_{\tau,h+1}-f\big(\omega_{\tau,h}^{\alpha}(W)\big)\Big)^{2}\bigg]-\bbE_{\rho_{\tau,h}^{\beta}}\bigg[\Big(s_{\tau,h+1}-f\big(\omega_{\tau,h}^{\beta}(W)\big)\Big)^{2}\bigg]\bigg|,\label{ieq:73}
    \end{align}
    where the inequality results from the triangle inequality. For the first term in the right-hand side of inequality~\eqref{ieq:73}, we have that
    \begin{align*}
        &\bigg|\bbE_{\rho_{\tau,h}^{\alpha}}\bigg[\Big(s_{\tau,h+1}-f\big(\omega_{\tau,h}^{\alpha}(W)\big)\Big)^{2}\bigg]-\bbE_{\rho_{\tau,h}^{\beta}}\bigg[\Big(s_{\tau,h+1}-f\big(\omega_{\tau,h}^{\alpha}(W)\big)\Big)^{2}\bigg]\bigg|\\
        &\quad \leq (B_{S}+rB_{K})^{2}\Big[\|\mu_{\tau,h}^{\alpha}-\mu_{\tau,h}^{\beta}\|_{1}+\bbE_{\mu_{\tau,h}^{\alpha}}\big[\|\pi_{\tau,h}^{\alpha}(\cdot\,|\,s)-\pi_{\tau,h}^{\beta}(\cdot\,|\,s)\|_{1}\big]\\
        &\quad\qquad+L_{P}\big\|z_{h}^{\alpha}(\mu_{\tau,h}^{\calI},W_{h}^{*})-z_{h}^{\beta}(\mu_{\tau,h}^{\calI},W_{h}^{*})\big\|_{1}\Big]\\
        &\quad \leq C(B_{S}+rB_{K})^{2}\cdot|\alpha-\beta|,
    \end{align*}
    where $C>0$ is a constant, the first inequality results from the definition of $\rho_{\tau,h}^{\calI}$, and the last inequality adopts Proposition~\ref{prop:agentlip} and Assumption~\ref{assump:graphon} to bound these three terms. The second term in the right-hand side of inequality~\eqref{ieq:73} can be bounded as
    \begin{align*}
        &\bigg|\bbE_{\rho_{\tau,h}^{\beta}}\bigg[\Big(s_{\tau,h+1}-f\big(\omega_{\tau,h}^{\alpha}(W)\big)\Big)^{2}\bigg]-\bbE_{\rho_{\tau,h}^{\beta}}\bigg[\Big(s_{\tau,h+1}-f\big(\omega_{\tau,h}^{\beta}(W)\big)\Big)^{2}\bigg]\bigg|\\
        &\quad\leq 2(B_{S}+rB_{K})rL_{K}L_{\bar{\calW}}B_{k}|\alpha-\beta|,
    \end{align*}
    where the inequality results from Lemma~\ref{lem:rkhs} and Assumption~\ref{assump:graphon}. Thus, we conclude that
    \begin{align*}
        &\bigg|\bbE_{\rho_{\tau,h}^{\alpha}}\bigg[\Big(s_{\tau,h+1}-f\big(\omega_{\tau,h}^{\alpha}(W)\big)\Big)^{2}\bigg]-\bbE_{\rho_{\tau,h}^{\beta}}\bigg[\Big(s_{\tau,h+1}-f\big(\omega_{\tau,h}^{\beta}(W)\big)\Big)^{2}\bigg]\bigg|\nonumber\\
        &\quad=O\big((B_{S}+rB_{K})(B_{S}+rB_{K}+rL_{k}B_{k})|\alpha-\beta|\big).
    \end{align*}
    By decomposing the interval $[0,1]$ into the disjoint union of intervals $((i-1)/N,i/N]$ for $i\in[N]$ and using this result, we can bound the first term of the right-hand side of inequality~\eqref{ieq:72} as
    \begin{align}
        \sup_{f\in\bbB(r,\bar{\calH}),g\in\bbB(\tilr,\tilde{\calH}),W\in\tilde{\calW}}\big|\calR(f,g,W)-\calR_{\bar{\xi}}(f,g,W)\big|=O\bigg(\frac{(B_{S}+\barr\barB_{K})(B_{S}+\barr\barB_{K}+\barr\barL_{K}B_{k})}{N}\bigg).\label{eq:20}
    \end{align}
    Eqn.~\eqref{eq:20} implies that we can transfer the results in Corollary~\ref{coro:fixest} and Corollary~\ref{coro:knownfixest} to $\calR(\hatf_{h},\hatg_{h},\hatW_{h})$ with an additional term shown in Eqn.~\eqref{eq:20}. Thus, for the case where the \ac{mdp} is induced by the distribution flow of the policy itself, we have that
    \begin{align}
        &\calR(\hatf_{h},\hatg_{h},\hatW_{h})-\calR(f_{h}^{*},g_{h}^{*},W_{h}^{*})\nonumber\\
        &\quad= O\bigg(\frac{(B_{S}+\barr\barB_{K})(B_{S}+\barr\barB_{K}+\barr\barL_{K}B_{k})}{N}+\frac{(B_{S}+\barr\barB_{K})^{4}}{NL}\log\frac{N_{\bbB_{r}}N_{\tilde{\bbB}_{\tilr}}N_{\tilde{\calW}}}{\delta}.\label{eq:21}
    \end{align}
    For the case where the \ac{mdp} is induced by a pre-specified distribution flow, we have that
    \begin{align}
        &\calR(\hatf_{h}^{\prime},\hatg_{h}^{\prime},\hatW_{h}^{\prime})-\calR(f_{h}^{*},g_{h}^{*},W_{h}^{*})\nonumber\\
        &\quad= O\bigg(\frac{(B_{S}+\barr\barB_{K})(B_{S}+\barr\barB_{K}+\barr\barL_{K}B_{k})}{N}+\frac{(B_{S}+\barr\barB_{K})^{4}}{NL}\log\frac{N_{\bbB_{r}}N_{\tilde{\bbB}_{\tilr}}N_{\tilde{\calW}}}{\delta}\bigg).\label{eq:23}
    \end{align}
    
    \textbf{Step 3: Bound the estimation error of distribution flow and action-value function estimate.}
    
    For the estimation error of the distribution flow $\hmu_{t}^{\calI}$, we have the following proposition
    \begin{proposition}\label{prop:muerr}
        Given two \ac{gmfg}s $(P^{*},r^{*},W^{*})$ and $(\hatP,\hatr,\hatW)$, for a policy $\pi^{\calI}\in \tilde{\Pi}$, we define the distribution flows induced by this policy as $\mu^{\calI}=\Gamma_{2}(\pi^{\calI},W^{*})$ and $\hmu^{\calI}=\hat{\Gamma}_{2}(\pi^{\calI},\hatW)$. Assume that the transition kernels $P^{*}$ and $\hatP$ are equivalently defined by $f^{*}$ and $\hatf\in\bbB(r,\bar{\calH})$ from Eqn.~\eqref{eq:sd}. Under Assumption~\ref{assump:noise}, we have that
        \begin{align*}
            \|\hmu_{h}^{\alpha}-\mu_{h}^{\alpha}\|_{1}\leq H(1+rL_{K}L_{\varepsilon}B_{k})^{H}\sum_{m=1}^{H}\int_{0}^{1}e_{m}^{\pi,\beta}\rmd\beta+\sum_{m=1}^{H}e_{m}^{\pi,\alpha},
        \end{align*}
        where $e_{h}^{\pi,\alpha}$ is defined as
        \begin{align*}
            e_{h}^{\pi,\alpha}&=L_{\varepsilon}\sqrt{\bbE_{\rho_{h}^{\alpha}}\Big[\big(\hatf_{h}\big(\omega_{h}^{\alpha}(\hatW_{h})-f_{h}^{*}\big(\omega_{h}^{\alpha}(W_{h}^{*})\big)\big)^{2}\Big]},\\
            \omega_{h}^{\alpha}(W)&=\int_{0}^{1}\int_{\calS}W(\alpha,\beta)k\big(\cdot,(s_{\tau,h}^{i},a_{\tau,h}^{i},s)\big)\mu_{h}^{\beta}(s)\rmd s\rmd\beta,\\
            \rho_{h}^{\alpha}&=\mu_{h}^{\alpha}\times\pi_{h}^{\alpha} \text{ for }\alpha\in\calI.
        \end{align*}
    \end{proposition}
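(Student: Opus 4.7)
\textbf{Proof proposal for Proposition~\ref{prop:muerr}.} The plan is to set up a one-step recursion for $\|\hmu_{h+1}^\alpha-\mu_h^\alpha\|_1$ and then unroll it over the horizon. Writing out the Bellman-like updates for both distribution flows and applying the triangle inequality yields
\begin{align*}
\|\hmu_{h+1}^\alpha-\mu_{h+1}^\alpha\|_1\leq \|\hmu_h^\alpha-\mu_h^\alpha\|_1+\mathbb{E}_{s\sim\mu_h^\alpha,a\sim\pi_h^\alpha}\Big[\tv\big(\hatP_h(\cdot\,|\,s,a,\hat z_h^\alpha(\hmu_h^\calI,\hatW_h)),\,P_h^*(\cdot\,|\,s,a,z_h^\alpha(\mu_h^\calI,W_h^*))\big)\Big].
\end{align*}
Here the first term is from perturbing the ``input'' state distribution, and the second from perturbing the kernel itself. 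Since both kernels are of the form $f(\omega)+\varepsilon_h$ with common noise $\varepsilon_h$, Assumption~\ref{assump:noise} converts the $\tv$ bound into a pointwise bound: $\tv(\hatP_h,P_h^*)\leq L_\varepsilon\big|\hatf_h\big(\hat\omega_h^\alpha(\hatW_h)\big)-f_h^*\big(\omega_h^\alpha(W_h^*)\big)\big|$, where $\hat\omega$ denotes the mean-embedding formed with $\hmu$ and $\hatW$, and $\omega$ the one formed with $\mu$ and $W^*$.

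Next I would insert the intermediate quantity $\hatf_h\big(\omega_h^\alpha(\hatW_h)\big)$ (i.e., $\hatf_h$ evaluated at the embedding formed with the \emph{true} $\mu$ but the estimated graphon $\hatW_h$) to split the pointwise error into two pieces. The first piece, $|\hatf_h(\hat\omega_h^\alpha(\hatW_h))-\hatf_h(\omega_h^\alpha(\hatW_h))|$, is controlled by the RKHS Lipschitz property (Lemma~\ref{lem:rkhs} together with Assumption~\ref{assump:kernel}) by $rL_K\|\hat\omega_h^\alpha(\hatW_h)-\omega_h^\alpha(\hatW_h)\|_\calH$; since the two embeddings share $\hatW_h$ and differ only in $\hmu$ vs.\ $\mu$, and $\|k(\cdot,x)\|_\calH\leq B_k$, $|\hatW_h|\leq 1$, this is bounded by $rL_KB_k\int_0^1\|\hmu_h^\beta-\mu_h^\beta\|_1\,\rmd\beta$. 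The second piece, $|\hatf_h(\omega_h^\alpha(\hatW_h))-f_h^*(\omega_h^\alpha(W_h^*))|$, is the integrand appearing inside $e_h^{\pi,\alpha}$; taking the expectation over $\rho_h^\alpha=\mu_h^\alpha\times\pi_h^\alpha$ and applying Jensen/Cauchy--Schwarz produces exactly $e_h^{\pi,\alpha}/L_\varepsilon$ (up to the factor already absorbed). Combining these gives the recursion
\begin{align*}
\|\hmu_{h+1}^\alpha-\mu_{h+1}^\alpha\|_1\leq \|\hmu_h^\alpha-\mu_h^\alpha\|_1+rL_KL_\varepsilon B_k\!\int_0^1\|\hmu_h^\beta-\mu_h^\beta\|_1\,\rmd\beta+e_h^{\pi,\alpha}.
\end{align*}

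Finally I would unroll this recursion. Define $D_h=\int_0^1\|\hmu_h^\alpha-\mu_h^\alpha\|_1\,\rmd\alpha$; integrating the recursion over $\alpha$ gives $D_{h+1}\leq (1+rL_KL_\varepsilon B_k)D_h+\int_0^1 e_h^{\pi,\beta}\,\rmd\beta$, which, together with the identity $\hmu_1^\calI=\mu_1^\calI$ (so $D_1=0$), yields $D_m\leq (1+rL_KL_\varepsilon B_k)^H\sum_{\ell=1}^H\int_0^1 e_\ell^{\pi,\beta}\,\rmd\beta$. Re-injecting this aggregate bound into the per-$\alpha$ recursion and telescoping over $m\leq h$ produces the pointwise estimate $\|\hmu_h^\alpha-\mu_h^\alpha\|_1\leq H(1+rL_KL_\varepsilon B_k)^H\sum_{m=1}^H\int_0^1 e_m^{\pi,\beta}\,\rmd\beta+\sum_{m=1}^H e_m^{\pi,\alpha}$, as claimed. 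The main obstacle is the careful accounting in the middle step: the definition of $e_h^{\pi,\alpha}$ uses mean-embeddings formed with the \emph{true} $\mu$, whereas the kernels actually used in Algorithm~\ref{algo:MDcontract} involve the estimated $\hmu$, so the decomposition must cleanly isolate the ``pure $f$-estimation'' error from the ``propagated $\mu$-error'' that is then handled by the Grönwall-style unrolling.
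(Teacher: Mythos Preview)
Your proposal is correct and follows essentially the same approach as the paper: the paper derives the identical one-step recursion $\|\hmu_{h+1}^{\alpha}-\mu_{h+1}^{\alpha}\|_{1}\leq \|\hmu_{h}^{\alpha}-\mu_{h}^{\alpha}\|_{1}+rL_{K}L_{\varepsilon}B_{k}\int_{0}^{1}\|\hmu_{h}^{\beta}-\mu_{h}^{\beta}\|_{1}\,\rmd\beta+e_{h}^{\pi,\alpha}$ via the same intermediate-point decomposition (inserting $\hatf_{h}(\omega_{h}^{\alpha}(\hatW_{h}))$ and using Assumption~\ref{assump:noise} plus Lemma~\ref{lem:rkhs}), then unrolls it by induction, which is exactly your Gr\"onwall-style argument of first integrating over $\alpha$ and then telescoping.
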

    \begin{proof}[Proof of Proposition~\ref{prop:muerr}]
        See Appendix~\ref{app:muerr}.
    \end{proof}
    From the definition of risk in Eqn.~\eqref{eq:popurisk}, we have that
    \begin{align*}
        &\calR(\hatf_{h},\hatg_{h},\hatW_{h})-\calR(f_{h}^{*},g_{h}^{*},W_{h}^{*})\\
        &\quad=\frac{1}{L}\sum_{\tau=1}^{L}\int_{0}^{1}\bbE_{\rho_{\tau,h}^{\alpha}}\bigg[\Big(f_{h}^{*}\big(\omega_{\tau,h}^{\alpha}(W_{h}^{*})\big)-\hatf_{h}\big(\omega_{\tau,h}^{\alpha}(\hatW_{h})\big)\Big)^{2}+\Big(g_{h}^{*}\big(\omega_{\tau,h}^{\alpha}(W_{h}^{*})\big)-\hatg_{h}\big(\omega_{\tau,h}^{\alpha}(\hatW_{h})\big)\Big)^{2}\bigg]\rmd \alpha.
    \end{align*}
    Since we implement the same policy $\pi_{t}^{\calI}$ for $L$ times in Step 1 of Algorithm~\ref{algo:gammaqest}, $\rho_{\tau,h}^{\alpha}$ for $\tau\in[L]$ are the same. Thus, we have
    \begin{align*}
        d(\hmu_{t}^{\calI},\mu_{t}^{\calI})=\sum_{h=1}^{H}\int_{0}^{1}\|\hmu_{t,h}^{\alpha}-\mu_{t,h}^{\alpha}\|_{1}\rmd\alpha\leq C\sum_{h=1}^{H}\sqrt{\calR(\hatf_{h},\hatg_{h},\hatW_{h})-\calR(f_{h}^{*},g_{h}^{*},W_{h}^{*})},
    \end{align*}
    where $C>0$ is a constant, and the inequality results from Proposition~\ref{prop:muerr} and H\"{o}lder inequality. The right-hande side of this inequality will play the role of $\varepsilon_{\mu}$ in the proof of Theorem~\ref{thm:contractopt}, which is bounded in Eqn.~\eqref{eq:21}.
    
    Next, we bound the estimation error of the action-value function. 
    \begin{proposition}\label{prop:qerr}
        Assume that we have two \ac{gmfg}s $(P^{*},r^{*},W^{*})$ and $(\hatP,\hatr,\hatW)$. For a policy $\pi^{\calI}\in \tilde{\Pi}$, a behavior policy $\pi^{\rmb,\calI}\in \tilde{\Pi}$, and a distribution flow $\mu^{\calI}\in\tilde{\Delta}$, we define the distribution flows induced by the behavior policy on the \ac{gmfg}  $(P^{*},r^{*},W^{*})$ with underlying distribution flow $\mu^{\calI}$ as $\mu^{\rmb,\calI}=\Gamma_{3}(\pi^{\rmb,\calI},\mu^{\calI},W^{*})$. Assume that the transition kernels $P^{*}$ and $\hatP$ are equivalently defined by $f^{*}$ and $\hatf\in\bbB(r,\bar{\calH})$ from Eqn.~\eqref{eq:sd}, and reward functions $r^{*}$ and $\hatr$ are equivalently defined by $g^{*}$ and $\hatg\in\bbB(\tilr,\tilde{\calH})$ from Eqn.~\eqref{eq:sd}. Assume that
        $\sup_{s\in\calS,a\in\calA,\alpha\in\calI,h\in[H]}\pi_{h}^{\alpha}(a\,|\,s)/\pi_{h}^{\rmb,\alpha}(a\,|\,s)\leq C$. Under Assumption~\ref{assump:noise}, we have that
        \begin{align*}
            &\bbE_{\rho_{h}^{\rmb,\alpha}}\Big[\Big|\hatQ_{h}^{\lambda,\alpha}(s,a,\pi^{\alpha},\mu^{\calI},\hatW)-Q_{h}^{\lambda,\alpha}(s,a,\pi^{\alpha},\mu^{\calI},W^{*})\Big|\Big]\\
            &\quad\leq C^{H}\sum_{m=h}^{H}\sqrt{\bbE_{\rho_{m}^{\rmb,\alpha}}\Big[\Big(\hatg_{m}\big(\omega_{h}^{\alpha}(\hatW_{m})\big)-g_{m}^{*}\big(\omega_{h}^{\alpha}(W_{m}^{*})\big)\Big)^{2}\Big]}\\
            &\quad\qquad+L_{\varepsilon}H(1+\lambda\log|\calA|)C^{H}\sum_{m=h}^{H}\sqrt{\bbE_{\rho_{m}^{\rmb,\alpha}}\Big[\Big(\hatf_{m}\big(\omega_{h}^{\alpha}(\hatW_{m})\big)-f_{m}^{*}\big(\omega_{h}^{\alpha}(W_{m}^{*})\big)\Big)^{2}\Big]},
        \end{align*}
        where $\rho_{h}^{\rmb,\alpha}$ is defined as $\rho_{h}^{\rmb,\alpha}=\mu_{h}^{\rmb,\alpha}\cdot\pi_{h}^{\rmb,\alpha}$, $\hatQ_{h}^{\lambda,\alpha}(s,a,\pi^{\alpha},\mu^{\calI},\hatW)$ and $Q_{h}^{\lambda,\alpha}(s,a,\pi^{\alpha},\mu^{\calI},W^{*})$ are the action-value functions of policy $\pi^{\calI}$ on two \ac{gmfg}s, and 
        \begin{align*}
             \omega_{h}^{\alpha}(W)&=\int_{0}^{1}\int_{\calS}W(\alpha,\beta)k\big(\cdot,(s_{\tau,h}^{i},a_{\tau,h}^{i},s)\big)\mu_{h}^{\beta}(s)\rmd s\rmd\beta.
        \end{align*}
    \end{proposition}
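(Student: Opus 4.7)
\medskip

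\noindent\textbf{Proof proposal for Proposition~\ref{prop:qerr}.} The plan is a standard simulation‐lemma style telescoping, combined with an importance reweighting from the target policy $\pi^{\alpha}$ to the behavior policy $\pi^{\rmb,\alpha}$. The regularization term $-\lambda\log\pi_h^{\alpha}(a|s)$ is a function of $\pi$ alone and cancels when comparing $\hat Q$ and $Q$, so only the model mismatch contributes. Using $|V_h^{\lambda,\alpha}|\leq H(1+\lambda\log|\calA|)$ throughout.

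\medskip

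\noindent First I would unroll the Bellman equations under the \emph{true} transition $P^{*}$ to obtain the simulation-lemma identity
\begin{align*}
    \hatQ_{h}^{\lambda,\alpha}(s,a,\pi,\mu,\hatW)-Q_{h}^{\lambda,\alpha}(s,a,\pi,\mu,W^{*})
    &=\bbE^{\pi,P^{*}}\!\!\left[\sum_{m=h}^{H}\!\Big(\hatr_{m}-r_{m}^{*}\Big)(s_{m},a_{m})\right.\\
    &\qquad\qquad\left.+\!\Big((\hatP_{m}-P_{m}^{*})\hat V_{m+1}^{\lambda,\alpha}\Big)(s_{m},a_{m})\,\Big|\,s_{h}=s,a_{h}=a\right],
\end{align*}
where $\bbE^{\pi,P^{*}}$ denotes expectation along the trajectory with actions from $\pi$ and transitions from $P^{*}$ (both driven by the fixed distribution flow $\mu^{\calI}$). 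I would then control the transition mismatch term pointwise: since $(\hat P-P^{*})(\cdot\,|\,s,a,z)$ is the total‐variation difference of two shift-perturbations of the noise $\varepsilon_h$, Assumption~\ref{assump:noise} yields
\begin{align*}
    \big|(\hatP_{h}-P_{h}^{*})V_{h+1}\big|(s,a)\leq H(1+\lambda\log|\calA|)\,L_{\varepsilon}\big|\hatf_{h}(\omega_{h}^{\alpha}(\hatW_{h}))-f_{h}^{*}(\omega_{h}^{\alpha}(W_{h}^{*}))\big|,
\end{align*}
and the reward mismatch is directly $|\hat g_{h}(\omega_{h}^{\alpha}(\hatW_{h}))-g_{h}^{*}(\omega_{h}^{\alpha}(W_{h}^{*}))|$.

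\medskip

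\noindent Next I would take the outer expectation $\bbE_{\rho_{h}^{\rmb,\alpha}}$ and perform a step-by-step change of measure from the trajectory law of ``$(s_{h},a_{h})\sim\rho_{h}^{\rmb,\alpha}$ then $\pi$ on $P^{*}$'' to the law of ``$(s_{\ell},a_{\ell})\sim\rho_{\ell}^{\rmb,\alpha}$''. Because both laws have the same state transitions $P^{*}$ (with the same $\mu$-aggregates) and the same initial state marginal $\mu_{h}^{\rmb,\alpha}$, the Radon--Nikodym derivative factorizes across time as $\prod_{i=h+1}^{\ell}\pi_{i}^{\alpha}(a_{i}|s_{i})/\pi_{i}^{\rmb,\alpha}(a_{i}|s_{i})$, which is bounded by $C^{\ell-h}\leq C^{H}$ by hypothesis. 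This gives
\begin{align*}
    \bbE_{\rho_{h}^{\rmb,\alpha}}\bbE^{\pi,P^{*}}\big[\varphi(s_{\ell},a_{\ell})\,|\,s_{h},a_{h}\big]\leq C^{H}\,\bbE_{\rho_{\ell}^{\rmb,\alpha}}\big[\varphi(s_{\ell},a_{\ell})\big]
\end{align*}
for any nonnegative $\varphi$. Applying this to $\varphi=|\hat g_{m}-g_{m}^{*}|$ and $\varphi=|\hat f_{m}-f_{m}^{*}|$, followed by Cauchy--Schwarz to turn $L^{1}$ norms under $\rho_{m}^{\rmb,\alpha}$ into the $L^{2}$ norms in the claim, yields the stated bound.

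\medskip

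\noindent The main obstacle is the change of measure in the previous paragraph: one must verify that the importance weight really telescopes into a product of \emph{policy} ratios only, with no residual transition-ratio factors. This relies crucially on the fact that in Algorithm~\ref{algo:gammaqest} the \ac{mdp} used to collect data for $\hat Q$ is the one induced by the \emph{same} pre-specified distribution flow $\mu^{\calI}$ (hence the aggregates $z_{m}^{\alpha}(\mu_m^{\calI},W_m^*)$ are common to both trajectory laws and the state kernels $P_m^*(\cdot\,|\,s,a,z_m^\alpha(\mu,W^*))$ agree). Once this observation is in place, the argument becomes routine; the remaining work is just bookkeeping of the horizon-dependent constants and an application of Jensen/Cauchy--Schwarz.
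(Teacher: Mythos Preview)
Your proposal is correct and follows essentially the same route as the paper: Bellman expansion of $\hat Q-Q$, use of Assumption~\ref{assump:noise} to convert the transition mismatch into $|\hat f-f^{*}|$, a per-step importance reweighting via $\pi/\pi^{\rmb}\le C$ to shift the expectation to $\rho_{m}^{\rmb,\alpha}$, and a final Cauchy--Schwarz. The only cosmetic difference is that you unroll the recursion all at once via the simulation-lemma identity, whereas the paper sets up a one-step inequality $\bbE_{\rho_{h}^{\rmb,\alpha}}[|\hat Q_{h}-Q_{h}|]\le(\text{reward err})+(\text{transition err})+C\,\bbE_{\rho_{h+1}^{\rmb,\alpha}}[|\hat Q_{h+1}-Q_{h+1}|]$ and iterates; the two arguments are equivalent.
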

    \begin{proof}[Proof of Proposition~\ref{prop:qerr}]
        See Appendix~\ref{app:qerr}.
    \end{proof}
    
    Next, we will make use of Proposition~\ref{prop:qerr} to bound the estimation error of action-value function. Here, we adopt a different method to bound term (I) defined in Step 1 of the proof of Theorem~\ref{thm:contractopt}. From inequality~\eqref{ieq:71}, we have
    \begin{align}
        \text{(I)}
        &\leq \eta_{t+1}\Big|\big\langle Q_{h}^{\lambda,\alpha}(s_{h},\cdot,\pi_{t}^{\alpha},\barmu_{t}^{\calI},W^{*})-\hatQ_{h}^{\lambda,\alpha}(s_{h},\cdot,\pi_{t}^{\alpha},\hbmu_{t}^{\calI},\hatW),p-\pi_{t+1,h}^{\alpha}(\cdot\,|\,s_{h}) \big\rangle\Big|\nonumber\\
        &\qquad +\eta_{t+1}\Big|\big\langle \hatQ_{h}^{\lambda,\alpha}(s_{h},\cdot,\pi_{t}^{\alpha},\hbmu_{t}^{\calI},\hatW),\hpi_{t+1,h}^{\alpha}(\cdot\,|\,s_{h})-\pi_{t+1,h}^{\alpha}(\cdot\,|\,s_{h}) \big\rangle\Big|\nonumber\\
        &\leq 2\eta_{t+1}\big\|Q_{h}^{\lambda,\alpha}(s_{h},\cdot,\pi_{t}^{\alpha},\barmu_{t}^{\calI},W^{*})-Q_{h}^{\lambda,\alpha}(s_{h},\cdot,\pi_{t}^{\alpha},\hbmu_{t}^{\calI},W^{*})\big\|_{\infty}+2\eta_{t+1}H(1+\lambda\log|\calA|)\beta_{t+1}\nonumber\\
        &\qquad +\eta_{t+1}\Big|\big\langle Q_{h}^{\lambda,\alpha}(s_{h},\cdot,\pi_{t}^{\alpha},\hbmu_{t}^{\calI},W^{*})-\hatQ_{h}^{\lambda,\alpha}(s_{h},\cdot,\pi_{t}^{\alpha},\hbmu_{t}^{\calI},\hatW),p-\pi_{t+1,h}^{\alpha}(\cdot\,|\,s_{h})\big\rangle\Big|.\label{ieq:75}
    \end{align}
    For the third term in the right-hand side of inequality~\eqref{ieq:75}, if $p=\bar{\pi}_{t,h}^{*,\calI}(\cdot\,|\,s_{h})$, we have that
    \begin{align}
        &\Big|\big\langle Q_{h}^{\lambda,\alpha}(s_{h},\cdot,\pi_{t}^{\alpha},\hbmu_{t}^{\calI},W^{*})-\hatQ_{h}^{\lambda,\alpha}(s_{h},\cdot,\pi_{t}^{\alpha},\hbmu_{t}^{\calI},\hatW),\bar{\pi}_{t,h}^{*,\calI}(\cdot\,|\,s_{h})-\pi_{t+1,h}^{\alpha}(\cdot\,|\,s_{h})\big\rangle\Big|\nonumber\\
        &\quad=\bigg|\sum_{a\in\calA}\big[Q_{h}^{\lambda,\alpha}(s_{h},a_{h},\pi_{t}^{\alpha},\hbmu_{t}^{\calI},W^{*})-\hatQ_{h}^{\lambda,\alpha}(s_{h},a_{h},\pi_{t}^{\alpha},\hbmu_{t}^{\calI},\hatW)\big]\pi_{t,h}^{\rmb,\alpha}(a_{h}\,|\,s_{h})\cdot\frac{\bar{\pi}_{t,h}^{*,\calI}(a_{h}\,|\,s_{h})-\pi_{t+1,h}^{\alpha}(a_{h}\,|\,s_{h})}{\pi_{t,h}^{\rmb,\alpha}(a_{h}\,|\,s_{h})}\bigg|\nonumber\\
        &\quad\leq (C_{\pi}+C_{\pi}^{\prime})\sum_{a\in\calA}\big|Q_{h}^{\lambda,\alpha}(s_{h},a_{h},\pi_{t}^{\alpha},\hbmu_{t}^{\calI},W^{*})-\hatQ_{h}^{\lambda,\alpha}(s_{h},a_{h},\pi_{t}^{\alpha},\hbmu_{t}^{\calI},\hatW)\big|\cdot\pi_{t,h}^{\rmb,\alpha}(a_{h}\,|\,s_{h}),\label{ieq:77}
    \end{align}
    where the inequality results from Assumption~\ref{assump:policycover1}. We note that we can let $p=\bar{\pi}_{t,h}^{*,\calI}(\cdot\,|\,s_{h})$ in our whole proof, because we will use such bound to upper bound the right-hand side of inequality~\eqref{ieq:76}, which we take $p=\bar{\pi}_{t,h}^{*,\calI}(\cdot\,|\,s_{h})$ to prove. Now we can define a new $\Lambda_{t+1,h}^{\alpha}$ with the terms in inequalities~\eqref{ieq:75} and \eqref{ieq:77} replacing the original upper bound of term (I). In such case, the term $\varepsilon_{Q}$ in inequality~\eqref{ieq:78} can be replaced by the upper bound of the expectation of the third term in right-hand side in inequality~\eqref{ieq:75}.
    \begin{align*}
        &\bbE_{\barpi_{t}^{*,\alpha},\barmu_{t}^{\calI}}\bigg[\sum_{h=1}^{H}(C_{\pi}+C_{\pi}^{\prime})\sum_{a\in\calA}\big|Q_{h}^{\lambda,\alpha}(s_{h},a_{h},\pi_{t}^{\alpha},\hbmu_{t}^{\calI},W^{*})-\hatQ_{h}^{\lambda,\alpha}(s_{h},a_{h},\pi_{t}^{\alpha},\hbmu_{t}^{\calI},\hatW)\big|\cdot\pi_{t,h}^{\rmb,\alpha}(a_{h}\,|\,s_{h})\bigg]\\
        &\quad\leq (C_{\pi}+C_{\pi}^{\prime})C_{\pi}^{\prime\prime}\big(1+L_{\varepsilon}H(1+\lambda\log|\calA|)\big)C_{\pi}^{H}H\sum_{h=1}^{H}\sqrt{\calR(\hatf_{h}^{\prime},\hatg_{h}^{\prime},\hatW_{h}^{\prime})-\calR(f_{h}^{*},g_{h}^{*},W_{h}^{*})},
    \end{align*}
    where the inequality results from Propositions~\ref{prop:difdsamepv} and \ref{prop:qerr} and Assumption~\ref{assump:policycover2}. The right-hand side of this inequality can be further bounded with Eqn.~\eqref{eq:23} 
    
    \textbf{Step 4: Conclude the final result.}
    
    Replacing $\varepsilon_{\mu}$ and $\varepsilon_{Q}$ with the derived new bounds and using the union bound, we have that
    \begin{align*}
        &D\bigg(\frac{1}{T}\sum_{t=1}^{T}\pi_{t}^{\calI},\pi^{*,\calI}\bigg)+d\bigg(\frac{1}{T}\sum_{t=1}^{T}\hbmu_{t}^{\calI},\mu^{*,\calI}\bigg)\\
        &\quad=O\bigg(\frac{\sqrt{\log T}}{T^{1/3}}\bigg)+O\bigg(\frac{(B_{S}+\barr\barB_{K})^{1/4}(\barr\barL_{K}B_{k})^{1/4}}{(NL)^{1/8}}\log^{1/4}\frac{TNL\calN_{\infty}(1/\sqrt{N},\tilde{\calW})}{\delta}\\
        &\quad+\frac{B_{S}+\barr\barB_{K}}{(NL)^{1/4}}\log^{1/4}\frac{TN_{\bbB_{r}}N_{\tilde{\bbB}_{\tilr}}N_{\tilde{\calW}}}{\delta}+\frac{(B_{S}+\barr\barB_{K})^{1/4}(B_{S}+\barr\barB_{K}+\barr\barL_{K}B_{k})^{1/4}}{N^{1/4}}\bigg).
    \end{align*}
    Thus, we conclude the proof of Corollary~\ref{coro:optknownfix}.
\end{proof}
\section{Proof of Corollary~\ref{coro:optknownrand}}
\begin{proof}[Proof of Corollary~\ref{coro:optknownrand}]
    The proof of Corollary~\ref{coro:optknownrand} is same as the proof of Corollary~\ref{coro:optknownfix}, except that we use the bound in Theorem~\ref{thm:randest} instead of Theorem~\ref{thm:fixest}.
\end{proof}

\section{Proof of Corollary~\ref{coro:learnequi}}
\begin{proof}[Proof of Corollary~\ref{coro:learnequi}]
    We first define the inverse function of $\psi^{*}$ as $\phi^{*}$, i.e., $\phi^{*}(\psi^{*}(\alpha))=\alpha$ for all $\alpha\in\calI$. Similar to the proof of Theorem~\ref{thm:ufixest}, we can decompose the risk difference as
    \begin{align*}
        &\calR_{\bar{\xi}}(\hatf_{h},\hatg_{h},\hatW_{h}^{\hphi_{h}\circ\psi^{*}})-\calR_{\bar{\xi}}(f_{h}^{*},g_{h}^{*},W_{h}^{*})\nonumber\\
        &\quad=\text{Generalization Error of Risk}+\text{Estimation Error of Mean-embedding}+\text{Empirical Risk Difference}.
    \end{align*}
    For ease of notation, we only write the definition of each term for the transition kernel. The term for the reward functions can be easily derived.
    \begin{align*}
        &\text{Generalization Error of Risk}\\
        &\quad=\frac{1}{NL}\sum_{\tau=1}^{L}\sum_{i=1}^{N}\bbE_{\rho_{\tau,h}^{i}}\bigg[\Big(s_{\tau,h+1}^{i}-\hatf_{h}\big(\omega_{\tau,h}^{i}(\hatW_{h}^{\hphi_{h}\circ\psi^{*}})\big)\Big)^{2}-\Big(s_{\tau,h+1}^{i}-f_{h}^{*}\big(\omega_{\tau,h}^{i}(W_{h}^{*})\big)\Big)^{2}\bigg]\nonumber\\
        &\quad\qquad -2\frac{1}{NL}\sum_{\tau=1}^{L}\sum_{i=1}^{N}\Big(s_{\tau,h+1}^{i}-\hatf_{h}\big(\omega_{\tau,h}^{i}(\hatW_{h}^{\hphi_{h}\circ\psi^{*}})\big)\Big)^{2}-\Big(s_{\tau,h+1}^{i}-f_{h}^{*}\big(\omega_{\tau,h}^{i}(W_{h}^{*})\big)\Big)^{2}\nonumber\\
        &\text{Estimation Error of Mean-embedding}\\
        &\quad=2\frac{1}{NL}\sum_{\tau=1}^{L}\sum_{i=1}^{N}\Big(s_{\tau,h+1}^{i}-\hatf_{h}\big(\omega_{\tau,h}^{i}(\hatW_{h}^{\hphi_{h}\circ\psi^{*}})\big)\Big)^{2}-\Big(s_{\tau,h+1}^{i}-\hatf_{h}\big(\hbomega_{\tau,h}^{i}(\hatW_{h}^{\hphi_{h}})\big)\Big)^{2}\nonumber\\
        &\quad\qquad+2\frac{1}{NL}\sum_{\tau=1}^{L}\sum_{i=1}^{N}\Big(s_{\tau,h+1}^{i}-f_{h}^{*}\big(\hbomega_{\tau,h}^{i}(W_{h}^{*,\phi^{*}})\big)\Big)^{2}-\Big(s_{\tau,h+1}^{i}-f_{h}^{*}\big(\omega_{\tau,h}^{i}(W_{h}^{*})\big)\Big)^{2}\nonumber\\
        &\text{Empirical Risk Difference}\\
        &\quad=2\frac{1}{NL}\sum_{\tau=1}^{L}\sum_{i=1}^{N}\Big(s_{\tau,h+1}^{i}-\hatf_{h}\big(\hbomega_{\tau,h}^{i}(\hatW_{h}^{\hphi_{h}})\big)\Big)^{2}-\Big(s_{\tau,h+1}^{i}-f_{h}^{*}\big(\hbomega_{\tau,h}^{i}(W_{h}^{*,\phi^{*}})\big)\Big)^{2}.
    \end{align*}
    From the estimation procedure of Algorithm~\ref{algo:inestalgo2}, we have that
    \begin{align*}
        \text{Empirical Risk Difference}\leq 0.
    \end{align*}
    For Estimation Error of Mean-embedding, we can use the bound in inequality~\eqref{ieq:45} in the proof of Theorem~\ref{thm:ufixest} to bound it. In fact, since $\psi^{*}$ is the inverse function of $\phi^{*}$, the expression of the Estimation Error of Mean-embedding here is same as the term in inequality~\eqref{ieq:45}. For generalization error of risk, we can use inequality~\eqref{ieq:44} in the proof of Theorem~\ref{thm:ufixest} to bound it. Thus, we conclude the proof of Corollary~\ref{coro:learnequi}.
\end{proof}

\section{Proof of Corollary~\ref{coro:optunknownfix}}
\begin{proof}[Proof of Corollary~\ref{coro:optunknownfix}]
    We note that Line 6 of Algorithm~\ref{algo:gammaqest} involves the estimation of \ac{mdp} when the underlying distribution flow is given. However, different from the setting in Section~\ref{sec:knownopt}, here we can only specify the distribution flow through $\{\mu^{(\xi_{i}-1/N,\xi_{i}]}\}_{i=1}^{N}$. We concatenate these distribution flows to form $\tilde{\mu}^{\calI}$ that is defined as $\tilde{\mu}^{\alpha}=\mu^{\xi_{i}+\alpha-i/N}$ if $\alpha\in((i-1)/N,i/N]$. That is, we assume that $\xi_{i}=i/N$. Then we define the mean-embedding induced by $\tilde{\mu}^{\calI}$ as
    \begin{align*}
        \tilomega_{\tau,h}^{i}(W)=\int_{0}^{1}\int_{\calS}W(i/N,\beta)k\big(\cdot,(s_{\tau,h}^{i},a_{\tau,h}^{i},s)\big)\tilmu_{\tau,h}^{\beta}(s)\rmd s\rmd\beta.
    \end{align*}
    Then we estimate the transition kernels, reward functions, and graphons as
    \begin{align}(\hatf_{h},\hatg_{h},\hatW_{h},\hphi_{h})=\argmin_{\substack{f\in\bbB(r,\bar{\calH}), 
 \\ g\in\bbB(\tilr,\tilde{\calH}),\\ W\in\tilde{\calW},\\ \phi\in\calC_{[0,1]}^{N}} }\frac{1}{NL}\sum_{\tau=1}^{L}\sum_{i=1}^{N}\Big(s_{\tau,h+1}^{i}-f\big(\tilomega_{\tau,h}^{i}(W^{\phi})\big)\Big)^{2}+\Big(r_{\tau,h}^{i}-g\big(\tilomega_{\tau,h}^{i}(W^{\phi})\big)\Big)^{2}.\label{eq:ufixdistest}
    \end{align}

    Here we adopt the similar steps as the proof of Corollary~\ref{coro:optknownfix}. We note that the only different procedure is the first step. Next, we will derive the performance guarantee of Algorithm~\eqref{eq:ufixdistest}.
    
    In such setting, we implement $\{\pi^{(\xi_{i}-1/N,\xi_{i}]}\}_{i=1}^{N}$ for $L$ times on the \ac{mdp} induced by $\{\mu^{(\xi_{i}-1/N,\xi_{i}]}\}_{i=1}^{N}$ to collect the dataset $\calD_{\tau}=\{(s_{\tau,h}^{[N]},a_{\tau,h}^{[N]},r_{\tau,h}^{[N]},s_{\tau,h+1}^{[N]})\}_{h=1}^{H}$ for $\tau\in[L]$. We define $\mu^{+,\calI}=\Gamma_{3}(\pi^{\calI},\mu^{\calI},W^{*})$ as the distribution flow of implementing $\pi^{\calI}$ on the \ac{mdp} induced by $\mu^{\calI}$. We highlight that we will not use quantity in the estimation procedure, but use it only in the analysis. The joint distribution of $(s_{\tau,h}^{i},a_{\tau,h}^{i},r_{\tau,h}^{i},s_{\tau,h+1}^{i})_{i=1}^{N}$ is $\prod_{i=1}^{N}\rho_{\tau,h}^{+,i}$, where $\rho_{\tau,h}^{+,i}=\mu_{\tau,h}^{+,i}\times\pi_{\tau,h}^{i}\times \delta_{r_{h}^{*}}\times P_{h}^{*}$.  Same as the proof of Corollary~\ref{coro:learnequi}, we define two bijections $\psi^{*},\phi^{*}\in\calC_{[0,1]}^{N}$ as $\psi^{*}(\xi_{i})=i/N$ for all $i\in[N]$, and $\phi^{*}\circ\psi^{*}(\alpha)=\phi^{*}(\psi^{*}(\alpha))=\alpha$ for all $\alpha\in\calI$.
    
    With a little abuse of notation, we define the risk of $(f,g,W)$ given $\bar{\xi}$ as
    \begin{align*}
        \calR_{\bar{\xi}}(f,g,W)&=\frac{1}{N}\sum_{i=1}^{N}\bbE_{\rho_{h}^{+,i}}\bigg[\Big(s_{h+1}^{i}-f\big(\omega_{h}^{i}(W)\big)\Big)^{2}+\Big(r_{h+1}^{i}-g\big(\omega_{h}^{i}(W)\big)\Big)^{2}\bigg].
    \end{align*}
    \begin{corollary}\label{coro:unknownfixest}
        Under Assumptions~\ref{assump:graphon}, \ref{assump:kernel}, \ref{assump:real}, and \ref{assump:lipconti}, if $\{\xi_{i}\}_{i=1}^{N}=\{i/N\}_{i=1}^{N}$, then the risk of estimate derived in Algorithm~\eqref{eq:ufixdistest} can be bounded as
        \begin{align*}
            &\calR_{\bar{\xi}}(\hatf_{h},\hatg_{h},\hatW_{h}^{\hphi_{h}\circ\psi^{*}})-\calR_{\bar{\xi}}(f_{h}^{*},g_{h}^{*},W_{h}^{*})\leq O\bigg(\frac{(B_{S}+rB_{K})^{4}}{L}\log\frac{N\tilN_{\bbB_{r}}\tilN_{\infty}}{\delta}\bigg).
        \end{align*}
        with probability at least $1-\delta$.
    \end{corollary}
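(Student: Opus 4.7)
The plan is to mirror the two-step decomposition used in Theorem~\ref{thm:ufixest} and Corollary~\ref{coro:knownfixest}, exploiting the key simplification that in this setting the underlying distribution flow $\tilde{\mu}^{\calI}$ is specified exactly to the simulator in Step~5 of Algorithm~\ref{algo:gammaqest}. Consequently the mean-embedding $\tilomega_{\tau,h}^{i}(W)$ used in Algorithm~\eqref{eq:ufixdistest} is computed in closed form rather than estimated from sampled states, so the entire ``Estimation Error of Mean-embedding'' bucket that appears in the proof of Theorem~\ref{thm:ufixest} drops out. What remains is a generalization-error term plus a nonpositive empirical-risk-difference term.

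The first step is to show that the empirical-risk-difference is nonpositive. Define $\phi^{*}\in\calC_{[0,1]}^{N}$ by $\phi^{*}(i/N)=\xi_{i}$, so that $\psi^{*}$ is its inverse, exactly as in Corollary~\ref{coro:learnequi}. Then, by the equivariance encoded in Proposition~\ref{prop:equivari} together with the construction $\tilde{\mu}^{\alpha}=\mu^{\xi_{i}+\alpha-i/N}$ on each interval $((i-1)/N,i/N]$, the transformed mean-embedding $\tilomega_{\tau,h}^{i}((W_{h}^{*})^{\phi^{*}})$ coincides with the genuine $\omega_{\tau,h}^{i}(W_{h}^{*})$ at the unknown position $\xi_{i}$. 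It follows that the empirical loss of $(f_{h}^{*},g_{h}^{*},W_{h}^{*},\phi^{*})$ is an unbiased empirical counterpart of $\calR_{\bar{\xi}}(f_{h}^{*},g_{h}^{*},W_{h}^{*})$, and because $(\hatf_{h},\hatg_{h},\hatW_{h},\hphi_{h})$ minimizes the same empirical objective, this yields the desired sign.

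The second step is to control the generalization error uniformly over $(f,g,W,\phi)\in\bbB(r,\bar{\calH})\times\bbB(\tilr,\tilde{\calH})\times\tilde{\calW}\times\calC_{[0,1]}^{N}$. Using the elementary inequality $\min_{\phi}f(\phi)-\min_{\phi}g(\phi)\le\max_{\phi}\bigl(f(\phi)-g(\phi)\bigr)$, as in the proof of Theorem~\ref{thm:ufixest}, the two $\inf_{\phi}$ are pulled outside and absorbed into a union bound over the $|\calC_{[0,1]}^{N}|=N!$ permutations. Proposition~\ref{prop:frconcen} then controls the deviation for each fixed $\phi$ through the covering-number factors $\tilN_{\bbB_{r}}$, $\tilN_{\tilde{\bbB}_{\tilr}}$, $\tilN_{\tilde{\calW}}$ at scale $3/L$. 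Inverting the resulting tail with budget $\delta$ gives the bound $\frac{(B_{S}+rB_{K})^{4}}{NL}\log\!\frac{N!\,\tilN_{\bbB_{r}}\tilN_{\tilde{\bbB}_{\tilr}}\tilN_{\tilde{\calW}}}{\delta}$, and the identity $\log N!=O(N\log N)$ turns the combinatorial overhead into the $\frac{\log N}{L}$ rate appearing in the statement, matching the form $O\!\big((B_{S}+rB_{K})^{4}L^{-1}\log(N\tilN_{\bbB_{r}}\tilN_{\infty}/\delta)\big)$.

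The main obstacle is cleanly handling the $N!$-sized union bound: the factor $N!$ naively looks catastrophic, but combined with the $1/(NL)$ from Proposition~\ref{prop:frconcen} it yields exactly the $\log N/L$ dependence in the statement, so the permutation ambiguity costs only a single $N$ in the logarithm rather than degrading the polynomial rate. Once this bookkeeping is in place, the argument is a direct adaptation of the proof of Corollary~\ref{coro:knownfixest}, with the additional permutation union bound playing the same role as in Theorem~\ref{thm:ufixest}, and with all ``state sampling error'' and ``agent sampling error'' contributions absent by construction.
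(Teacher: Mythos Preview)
Your proposal is correct and follows essentially the same approach as the paper: decompose into an empirical-risk-difference term (nonpositive by optimality once you identify $\tilomega_{\tau,h}^{i}(W^{\phi})$ with $\omega_{\tau,h}^{i}(W^{\phi\circ\psi^{*}})$) plus a generalization-error term controlled by Proposition~\ref{prop:frconcen} with a union bound over the $N!$ permutations, exactly as in Theorem~\ref{thm:ufixest}. The only minor difference is that the paper establishes the identification $\tilomega_{\tau,h}^{i}(W^{\phi})=\omega_{\tau,h}^{i}(W^{\phi\circ\psi^{*}})$ by a direct change of variables $\gamma=\phi^{*}(\beta)$ rather than by appealing to Proposition~\ref{prop:equivari}; your invocation of equivariance is a slight misattribution, since what is actually needed here is this elementary substitution, not the dynamical equivariance of the game.
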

    \begin{proof}[Proof of Corollary~\ref{coro:unknownfixest}]
        See Appendix~\ref{app:unknownfixest}.
    \end{proof}
    
    Then we only need to exactly follow the steps 2, 3, and 4 in the proof of Corollary~\ref{coro:optknownfix} to prove the desired results. Thus, we conclude the proof of Corollary~\ref{coro:optunknownfix}.
\end{proof}

\section{Uniqueness of \ac{ne} Under Assumption~\ref{assump:contract}}\label{app:uniquene}
In the following, we adopt proof by contradiction. In this case, we admit the existence of multiple different \ac{ne}s. Then the expectations in distances $D(\cdot,\cdot)$ and $d(\cdot,\cdot)$ defined in Section~\ref{sec:optanaly} can be taken with respect to any \ac{ne}.
\begin{proposition}\label{prop:unique}
    Under Assumption~\ref{assump:contract}, the $\lambda$-regularized \ac{gmfg} admits at most one $\ac{ne}$ up to a set of zero-measure agents with respect to the Lebesgue measure on $[0,1]$.
\end{proposition}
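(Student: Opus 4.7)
The plan is a standard Banach-fixed-point style argument applied to the composition $\Gamma_2 \circ \Gamma_1^\lambda$, exploiting the factor $d_1 d_2 < 1$ from Assumption~\ref{assump:contract}. Suppose for contradiction that there are two Nash equilibria $(\pi^{*,\calI},\mu^{*,\calI})$ and $(\tilde{\pi}^{*,\calI},\tilde{\mu}^{*,\calI})$ of the $\lambda$-regularized GMFG. First I would argue that the agent-rationality condition in Definition~\ref{def:regne}, combined with the strict concavity of the entropy-regularized objective in the per-agent policy, pins down the best response uniquely. Hence both equilibria must satisfy the fixed-point relations
\begin{align*}
\pi^{*,\calI} = \Gamma_1^\lambda(\mu^{*,\calI},W^*), \qquad \mu^{*,\calI} = \Gamma_2(\pi^{*,\calI},W^*),
\end{align*}
and analogously with tildes. (Strict concavity of the $\lambda$-regularized Bellman optimal policy is where the entropy regularization earns its keep; this is the only nontrivial input besides Assumption~\ref{assump:contract}.)

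Next I would chain the two contraction bounds from Assumption~\ref{assump:contract}. Fix the reference distribution flow $\mu^{*,\calI}$ in the definition of $D(\cdot,\cdot)$. Applying the second inequality of Assumption~\ref{assump:contract} gives $d(\mu^{*,\calI},\tilde{\mu}^{*,\calI}) \leq d_2\, D(\pi^{*,\calI},\tilde{\pi}^{*,\calI})$, and then applying the first inequality to $D(\pi^{*,\calI},\tilde{\pi}^{*,\calI}) = D(\Gamma_1^\lambda(\mu^{*,\calI},W^*), \Gamma_1^\lambda(\tilde{\mu}^{*,\calI},W^*)) \leq d_1\, d(\mu^{*,\calI},\tilde{\mu}^{*,\calI})$ yields
\begin{align*}
d(\mu^{*,\calI},\tilde{\mu}^{*,\calI}) \leq d_1 d_2 \cdot d(\mu^{*,\calI},\tilde{\mu}^{*,\calI}).
\end{align*}
Since $d_1 d_2 < 1$, this forces $d(\mu^{*,\calI},\tilde{\mu}^{*,\calI}) = 0$, and then $D(\pi^{*,\calI},\tilde{\pi}^{*,\calI}) = 0$ follows from the same inequality chain.

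Finally I would interpret what vanishing of $d$ and $D$ implies. Because both are integrals over $\alpha \in [0,1]$ of nonnegative quantities, $d(\mu^{*,\calI},\tilde{\mu}^{*,\calI}) = 0$ means $\mu_h^{*,\alpha} = \tilde{\mu}_h^{*,\alpha}$ for Lebesgue-a.e.\ $\alpha$ and every $h\in[H]$, and similarly $\pi_h^{*,\alpha}(\cdot\,|\,s) = \tilde{\pi}_h^{*,\alpha}(\cdot\,|\,s)$ for a.e.\ $\alpha$ and for $\mu_h^{*,\alpha}$-a.e.\ $s$. This is exactly the ``up to a zero-measure set of agents'' qualification in the statement.

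The main conceptual obstacle is the first step: justifying that in each equilibrium the policy is literally the operator image $\Gamma_1^\lambda(\mu^{*,\calI},W^*)$, as opposed to merely being an argmax of $J^{\lambda,\alpha}(\cdot,\mu^{*,\calI},W^*)$. This is where strict concavity of the $\lambda$-regularized action-value objective (equivalently, the softmax closed form of the regularized optimal policy) guarantees uniqueness of the per-agent best response so that $\Gamma_1^\lambda$ is a well-defined single-valued operator on $\tilde\Pi$; everything else is just the two-line contraction. No calculation beyond composing the two Lipschitz bounds is needed.
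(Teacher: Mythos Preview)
Your proposal is correct and follows essentially the same approach as the paper: assume two equilibria, identify each with the fixed-point relations $\pi^{*,\calI}=\Gamma_1^\lambda(\mu^{*,\calI},W^*)$ and $\mu^{*,\calI}=\Gamma_2(\pi^{*,\calI},W^*)$, compose the two Lipschitz bounds from Assumption~\ref{assump:contract} to get $d(\mu^{*,\calI},\tilde{\mu}^{*,\calI})\le d_1 d_2\, d(\mu^{*,\calI},\tilde{\mu}^{*,\calI})$, and conclude $d=0$. Your extra remark justifying single-valuedness of $\Gamma_1^\lambda$ via strict concavity of the entropy-regularized objective is more careful than the paper, which simply invokes Definition~\ref{def:regne} without elaboration.
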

\begin{proof}[Proof of Proposition~\ref{prop:unique}]
    Assume that there are two \ac{ne}s $(\pi^{*,\calI},\mu^{*,\calI})$ and $(\tilde{\pi}^{*,\calI},\tilde{\mu}^{*,\calI})$. From the Definition~\ref{def:regne} of the \ac{ne}, we have that
    \begin{align*}
        \pi^{*,\calI}=\Gamma_{1}^{\lambda}(\mu^{*,\calI},W^{*}),\quad \mu^{*,\calI}=\Gamma_{2}(\pi^{*,\calI},W^{*}),\quad \tilde{\pi}^{*,\calI}=\Gamma_{1}^{\lambda}(\tilde{\mu}^{*,\calI},W^{*}),\quad \tilde{\mu}^{*,\calI}=\Gamma_{2}(\tilde{\pi}^{*,\calI},W^{*}).
    \end{align*}
    Then Assumption~\ref{assump:contract} implies that
    \begin{align*}
        d(\mu^{*,\calI},\tilde{\mu}^{*,\calI})\leq d_{1}d_{2}d(\mu^{*,\calI},\tilde{\mu}^{*,\calI}).
    \end{align*}
    Thus, we have $d(\mu^{*,\calI},\tilde{\mu}^{*,\calI})=0$, which implies that they are different only on a set of zero-measure agents with respect to the Lebesgue measure on $[0,1]$. Thus, we conclude the proof of Proposition~\ref{prop:unique}.
\end{proof}
\section{Lipschitzness of \ac{ne}}\label{app:nelip}
\begin{proposition}\label{prop:nelip}
    Under Assumptions~\ref{assump:lipconti} and \ref{assump:graphon}, for any \ac{ne} of the $\lambda$-regularized \ac{gmfg} $(\pi^{\lambda,\calI},\mu^{\lambda,\calI})$ with $\lambda>0$, we have that
    \begin{align*}
        \big\|\pi_{h}^{\lambda,\alpha}(\cdot\,|\,s)-\pi_{h}^{\lambda,\beta}(\cdot\,|\,s)\big\|_{1}\leq \frac{2HL_{\bar{\calW}}\Big[L_{r}+H\big(1+\lambda\log|\calA|\big)L_{P}\Big]}{\lambda}|\alpha-\beta|\text{ for all }h\in[H],s\in\calS.
    \end{align*}
\end{proposition}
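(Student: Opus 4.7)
\textbf{Proof Proposal for Proposition~\ref{prop:nelip}.} The plan is to exploit the closed-form softmax representation of the entropy-regularized optimal policy, and then reduce Lipschitzness of the policy to Lipschitzness of the action-value function, which in turn is controlled via Lipschitzness of the aggregate $z_h^{\alpha}$ in $\alpha$ (through the graphon) combined with Assumption~\ref{assump:lipconti} on the dynamics.

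First, I would establish that the aggregate itself is Lipschitz in the agent index. Since $(\pi^{\lambda,\calI},\mu^{\lambda,\calI})$ is an NE, both positions $\alpha,\beta$ share the same underlying distribution flow $\mu^{\lambda,\calI}$, so
\begin{align*}
\|z_h^{\alpha}(\mu_h^{\lambda,\calI},W_h^{*})-z_h^{\beta}(\mu_h^{\lambda,\calI},W_h^{*})\|_{1}\leq \int_{0}^{1}|W_h^{*}(\alpha,\gamma)-W_h^{*}(\beta,\gamma)|\cdot\|\mu_h^{\lambda,\gamma}\|_{1}\,\rmd\gamma\leq L_{\bar{\calW}}|\alpha-\beta|,
\end{align*}
by Assumption~\ref{assump:graphon}. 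Combined with Assumption~\ref{assump:lipconti} this yields $|r_h^{*}(s,a,z_h^{\alpha})-r_h^{*}(s,a,z_h^{\beta})|\leq L_r L_{\bar{\calW}}|\alpha-\beta|$ and $\tv(P_h^{*}(\cdot\,|\,s,a,z_h^{\alpha}),P_h^{*}(\cdot\,|\,s,a,z_h^{\beta}))\leq L_P L_{\bar{\calW}}|\alpha-\beta|$.

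Next, by the NE condition $\pi^{\lambda,\alpha}=\Gamma_1^{\lambda}(\mu^{\lambda,\calI},W^{*})^{\alpha}$, the policy $\pi_h^{\lambda,\alpha}(\cdot\,|\,s)$ is the optimal entropy-regularized policy on the single-agent MDP with reward $r_h^{*}(\cdot,\cdot,z_h^{\alpha})$ and kernel $P_h^{*}(\cdot\,|\,\cdot,\cdot,z_h^{\alpha})$. A standard calculation gives the softmax form $\pi_h^{\lambda,\alpha}(a\,|\,s)\propto \exp(Q_h^{\lambda,\alpha}(s,a)/\lambda)$, and a direct computation (or a Lipschitz bound for the soft-max map on $\bbR^{|\calA|}$) shows that
\begin{align*}
\big\|\pi_h^{\lambda,\alpha}(\cdot\,|\,s)-\pi_h^{\lambda,\beta}(\cdot\,|\,s)\big\|_{1}\leq \frac{2}{\lambda}\big\|Q_h^{\lambda,\alpha}(s,\cdot)-Q_h^{\lambda,\beta}(s,\cdot)\big\|_{\infty}.
\end{align*}
This is the step that produces the $1/\lambda$ factor in the statement. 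It is essentially the same computation used to prove Proposition~\ref{prop:optpolicylip}, applied with the ``perturbation'' being the change of aggregate rather than the change of distribution flow.

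Finally, I would bound $\|Q_h^{\lambda,\alpha}(s,\cdot)-Q_h^{\lambda,\beta}(s,\cdot)\|_{\infty}$ by backward induction on $h\in[H]$, mirroring the proof of Proposition~\ref{prop:optpolicylip}. At each step, the reward perturbation contributes $L_r L_{\bar{\calW}}|\alpha-\beta|$, and the kernel perturbation contributes $L_P L_{\bar{\calW}}|\alpha-\beta|$ multiplied by the sup-norm bound on the next-step value function, which in the $\lambda$-regularized setting is $H(1+\lambda\log|\calA|)$ (using $|r_h^{*}|\leq 1$ and the uniform bound on the entropy). Summing over $h$ gives
\begin{align*}
\big\|Q_h^{\lambda,\alpha}(s,\cdot)-Q_h^{\lambda,\beta}(s,\cdot)\big\|_{\infty}\leq H\,L_{\bar{\calW}}\Big[L_r+H\big(1+\lambda\log|\calA|\big)L_P\Big]|\alpha-\beta|.
\end{align*}
Combining with the softmax Lipschitz bound yields the claim. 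The principal technical hurdle is carrying out the backward induction cleanly, in particular controlling how the bound on the soft-max value $V_{h+1}^{\lambda,\alpha}$ interacts with the TV perturbation of the transition kernel; once one has the uniform $H(1+\lambda\log|\calA|)$ bound on $V^{\lambda}$, the rest reduces to a telescoping sum over the horizon.
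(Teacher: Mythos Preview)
Your proposal is correct and matches the paper's proof essentially step for step: bound $\|z_h^{\alpha}-z_h^{\beta}\|_{1}\leq L_{\bar{\calW}}|\alpha-\beta|$ via Assumption~\ref{assump:graphon}, run a backward induction on the optimal value/action-value function using Assumption~\ref{assump:lipconti} and the uniform bound $\|V^{\lambda}\|_{\infty}\leq H(1+\lambda\log|\calA|)$, and then convert to a policy bound via the softmax Lipschitz estimate. The only cosmetic difference is that the paper phrases the induction on $V_h^{*,\alpha}$ and invokes Lemma~\ref{lem:optapprox} (strong convexity of $\lambda R$) to obtain the $1/\lambda$ factor, whereas you phrase the induction on $Q_h^{\lambda,\alpha}$ and cite the softmax map directly for a $2/\lambda$ factor; the resulting constant $2H L_{\bar{\calW}}[L_r+H(1+\lambda\log|\calA|)L_P]/\lambda$ is the same.
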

\begin{proof}[Proof of Proposition~\ref{prop:nelip}]
    For any distribution flow $\mu^{\calI}$, we denote the optimal value function in the $\lambda$-regularized \ac{mdp} induced by $\mu^{\calI}$ as $V^{*,\calI}=(V_{h}^{*,\calI})_{h=1}^{H}$. Then we prove the proposition in two steps:
    \begin{itemize}
        \item Given any distribution flow $\mu^{\calI}$, the optimal value function $V^{*,\calI}$ is Lipschitz in the positions of agents, i.e., $|V_{h}^{*,\alpha}(s)-V_{h}^{*,\beta}(s)|\leq H[L_{r}+H(1+\lambda\log|\calA|)L_{P}]L_{\bar{\calW}}|\alpha-\beta|$ for all $s\in\calS$ and $h\in[H]$.
        \item Any policy $\pi^{\calI}$ that achieves the optimal value function $V^{*,\calI}$ is Lipschitz in the positions of agents.
    \end{itemize}
    These two steps concludes the proof of Proposition~\ref{prop:nelip} by noting that for any $\lambda$-\ac{ne} $(\pi^{\lambda,\calI},\mu^{\lambda,\calI})$, the policy $\pi^{\lambda,\calI}$ achieves the maximal accumulative rewards in the \ac{mdp} induced by $\mu^{\lambda,\calI}$ according to Definition~\ref{def:regne}.
    
    \textbf{Step 1: Show the optimal value function $V^{*,\calI}$ is Lipschitz in the positions of agents.}
    
    For any distribution flow $\mu^{\calI}\in \Delta(\calS)^{\calI\times H}$, we define an operator acting on $\calS\rightarrow\bbR$ as
    \begin{align*}
        T_{h}^{\mu^{\calI},\alpha}u(s)&=\sup_{p\in\Delta(\calA)} \sum_{a\in\calA}p(a)r_{h}(s,a,z_{h}^{\alpha})-\lambda R(p)+\sum_{a\in\calA}\int_{\calS}p(a)P_{h}(s^{\prime}\,|\,s,a,z_{h}^{\alpha})u(s^{\prime})\rmd s^{\prime} \text{ for }h\in[H-1],\\
        T_{H}^{\mu^{\calI},\alpha}u(s)&=\sup_{p\in\Delta(\calA)} \sum_{a\in\calA}p(a)r_{H}(s,a,z_{H}^{\alpha})-\lambda R(p),
    \end{align*}
    where $R(\cdot)$ is the negative entropy function. Since $V^{*,\calI}$ is the optimal value function of the \ac{mdp} induced by $\mu^{\calI}$, we have that
    \begin{align*}
        T_{h}^{\mu^{\calI},\alpha}V_{h+1}^{*,\alpha}(s)=V_{h}^{*,\alpha}(s) \text{ and }V_{H+1}^{*,\alpha}(s)=0 \text{ for all }s\in\calS,h\in[H],\alpha\in\calI.
    \end{align*}
    For any $h\in[H-1]$, we have that
    \begin{align*}
        &\big|V_{h}^{*,\alpha}(s)-V_{h}^{*,\beta}(s)\big|\\
        &\quad\leq \sup_{p\in\Delta(\calA)} \bigg|\sum_{a\in\calA}p(a)\big(r_{h}(s,a,z_{h}^{\alpha})-r_{h}(s,a,z_{h}^{\beta})\big)\\
        &\quad\qquad+\sum_{a\in\calA}\int_{\calS}p(a)\big(P_{h}(s^{\prime}\,|\,s,a,z_{h}^{\alpha})V_{h+1}^{*,\alpha}(s^{\prime})-P_{h}(s^{\prime}\,|\,s,a,z_{h}^{\beta})V_{h+1}^{*,\beta}(s^{\prime})\big)\rmd s^{\prime}\bigg|\\
        &\quad\leq L_{r}\|z_{h}^{\alpha}-z_{h}^{\beta}\|_{1}+H(1+\lambda\log|\calA|)L_{P}\|z_{h}^{\alpha}-z_{h}^{\beta}\|_{1}+\sup_{s\in\calS}\big|V_{h+1}^{*,\alpha}(s)-V_{h+1}^{*,\beta}(s)\big|,
    \end{align*}
    where the first inequality results from Assumption~\ref{assump:lipconti}. Note that $\|z_{h}^{\alpha}-z_{h}^{\beta}\|_{1}\leq\|\int_{0}^{1}\big(W_{h}(\alpha,\gamma)-W_{h}(\beta,\gamma)\big)\mu_{h}^{\gamma}\rmd\gamma\|_{1}\leq L_{\bar{\calW}}|\alpha-\beta|$, we have
    \begin{align*}
        \sup_{s\in\calS}\big|V_{h}^{*,\alpha}(s)-V_{h}^{*,\beta}(s)\big|\leq \Big[L_{r}+H\big(1+\lambda\log|\calA|\big)L_{P}\Big]L_{\bar{\calW}}|\alpha-\beta|+\sup_{s\in\calS}\big|V_{h+1}^{*,\alpha}(s)-V_{h+1}^{*,\beta}(s)\big|.
    \end{align*}
    Summing this inequality for $t=h,\cdots,H$ and noting that $V_{H+1}^{*,\alpha}(s)=0$, we have
    \begin{align*}
        \sup_{s\in\calS}\big|V_{h}^{*,\alpha}(s)-V_{h}^{*,\beta}(s)\big|\leq H\Big[L_{r}+H\big(1+\lambda\log|\calA|\big)L_{P}\Big]L_{\bar{\calW}}|\alpha-\beta|.
    \end{align*}
    
    \textbf{Step 2: Any policy that achieves the optimal value function $V^{*,\calI}$ is Lipschitz in the positions of agents}
    
    Assume that policy $\pi^{\calI}$ achieves the optimal value function $V^{*,\calI}$. For any $\alpha,\beta\in\calI$, $s\in\calS$, and $h\in[H]$, we have that
    \begin{align*}
        \pi_{h}^{\alpha}(\cdot\,|\,s)&=\argmax_{p\in\Delta(\calA)}\sum_{a\in\calA}p(a)r_{h}(s,a,z_{h}^{\alpha})-\lambda R(p)+\sum_{a\in\calA}\int_{\calS}p(a)P_{h}(s^{\prime}\,|\,s,a,z_{h}^{\alpha})V_{h+1}^{*,\alpha}(s^{\prime})\rmd s^{\prime}\\
        \pi_{h}^{\beta}(\cdot\,|\,s)&=\argmax_{p\in\Delta(\calA)}\sum_{a\in\calA}p(a)r_{h}(s,a,z_{h}^{\beta})-\lambda R(p)+\sum_{a\in\calA}\int_{\calS}p(a)P_{h}(s^{\prime}\,|\,s,a,z_{h}^{\beta})V_{h+1}^{*,\beta}(s^{\prime})\rmd s^{\prime}
    \end{align*}
    Define $y^{\alpha}(s,a)=r_{h}(s,a,z_{h}^{\alpha})+\int_{\calS}p(a)P_{h}(s^{\prime}\,|\,s,a,z_{h}^{\alpha})V_{h+1}^{*,\alpha}(s^{\prime})\rmd s^{\prime}$ for all $\alpha\in\calI$. Lemma~\ref{lem:optapprox} shows that
    \begin{align*}
        \big\|\pi_{h}^{\alpha}(\cdot\,|\,s)-\pi_{h}^{\beta}(\cdot\,|\,s)\big\|_{1}\leq \frac{1}{\lambda}\|y^{\alpha}(s,\cdot)-y^{\beta}(s,\cdot)\|_{\infty}.
    \end{align*}
    Term $\|y^{\alpha}(s,\cdot)-y^{\beta}(s,\cdot)\|_{\infty}$ can be bounded as
    \begin{align*}
        &\|y^{\alpha}(s,\cdot)-y^{\beta}(s,\cdot)\|_{\infty}\\
        &\quad \leq\Big[L_{r}+H\big(1+\lambda\log|\calA|\big)L_{P}\Big]L_{\bar{\calW}}|\alpha-\beta|+H\Big[L_{r}+H\big(1+\lambda\log|\calA|\big)L_{P}\Big]L_{\bar{\calW}}|\alpha-\beta|\\
        &\quad\leq 2H\Big[L_{r}+H\big(1+\lambda\log|\calA|\big)L_{P}\Big]L_{\bar{\calW}}|\alpha-\beta|,
    \end{align*}
    where the first inequality results from the triangle inequality, and the second inequality results from Step 1. Thus, we conclude that
    \begin{align*}
        \big\|\pi_{h}^{\alpha}(\cdot\,|\,s)-\pi_{h}^{\beta}(\cdot\,|\,s)\big\|_{1}\leq \frac{2H\Big[L_{r}+H\big(1+\lambda\log|\calA|\big)L_{P}\Big]L_{\bar{\calW}}}{\lambda}|\alpha-\beta|,
    \end{align*}
    which proves the claim of Proposition~\ref{prop:nelip}.
\end{proof}

\section{Supporting Propositions and Lemmas}
\subsection{Propositions and Lemmas for Estimation}
\subsubsection{Proof of Proposition~\ref{prop:agentlip}}\label{app:agentlip}
\begin{proof}[Proof of Proposition~\ref{prop:agentlip}]
    For any $h\in[H-1]$, we have that
    \begin{align}
        &\|\mu_{h+1}^{\alpha}-\mu_{h+1}^{\beta}\|_{1}\nonumber\\
        &\quad=\int_{\calS}\bigg|\sum_{a\in\calA}\int_{\calS}P_{h}(s^{\prime}\,|\,s,a,z_{h}^{\alpha})\mu_{h}^{\alpha}(s)\pi_{h}^{\alpha}(a\,|\,s)\rmd s-\sum_{a\in\calA}\int_{\calS}P_{h}(s^{\prime}\,|\,s,a,z_{h}^{\beta})\mu_{h}^{\beta}(s)\pi_{h}^{\beta}(a\,|\,s)\rmd s\bigg|\rmd s^{\prime}\nonumber\\
        &\quad\leq L_{P}\|z_{h}^{\alpha}-z_{h}^{\beta}\|_{1}+\|\mu_{h}^{\alpha}-\mu_{h}^{\beta}\|_{1}+\sup_{s\in\calS}\big\|\pi_{h}^{\alpha}(\cdot\,|\,s)-\pi_{h}^{\beta}(\cdot\,|\,s)\big\|_{1},\label{ieq:10}
    \end{align}
    where the first inequality results from the triangle inequality, and the second inequality results from Assumptions~\ref{assump:graphon} and \ref{assump:lipconti}. We further bound the first term in the right-hand side of inequality~\eqref{ieq:10} as
    \begin{align*}
        \|z_{h}^{\alpha}-z_{h}^{\beta}\|_{1}&=\bigg\|\int_{0}^{1}W_{h}(\alpha,\gamma)\mu_{h}^{\gamma}\rmd\gamma-\int_{0}^{1}W_{h}(\beta,\gamma)\mu_{h}^{\gamma}\rmd\gamma\bigg\|_{1}\leq L_{\bar{\calW}}|\alpha-\beta|,
    \end{align*}
    where the inequality results from Assumption~\ref{assump:graphon}. Substituting this inequality to the right-hand side of inequality~\eqref{ieq:10}, we derive that
    \begin{align*}
        &\|\mu_{h+1}^{\alpha}-\mu_{h+1}^{\beta}\|_{1}\leq \|\mu_{h}^{\alpha}-\mu_{h}^{\beta}\|_{1}+L_{P}L_{\bar{\calW}}|\alpha-\beta|+\sup_{s\in\calS}\big\|\pi_{h}^{\alpha}(\cdot\,|\,s)-\pi_{h}^{\beta}(\cdot\,|\,s)\big\|_{1}.
    \end{align*}
    Summing these inequalities for $h=1,\cdots,t$, we have that
    \begin{align*}
        \|\mu_{t}^{\alpha}-\mu_{t}^{\beta}\|_{1}\leq (t-1)L_{P}L_{\bar{\calW}}|\alpha-\beta|+\sum_{h=1}^{t-1}\sup_{s\in\calS}\big\|\pi_{h}^{\alpha}(\cdot\,|\,s)-\pi_{h}^{\beta}(\cdot\,|\,s)\big\|_{1},
    \end{align*}
    which results from that $\mu_{1}^{\alpha}=\mu_{1}^{\beta}$. Thus, we concludes the proof of Proposition~\ref{prop:agentlip}.
\end{proof}

\subsubsection{Proof of Proposition~\ref{prop:frconcen}}\label{app:frconcen}
\begin{proof}[Proof of Proposition~\ref{prop:frconcen}]
    Our proof of Proposition~\ref{prop:frconcen} follows the pipeline of the proof of \citet[Theorem 11.4]{gyorfi2002distribution}. However, the random variables in our problem are not identically distributed, which requires additional techniques to control the tail probabilities. Our proof involves three steps:
    \begin{itemize}
        \item Symmetrization by a ghost sample.
        \item Additional randomization by random signs.
        \item Bounding the covering number
    \end{itemize}
    
    \textbf{Step 1: Symmetrization by a ghost sample.}
    
    We construct the random variables $\tilD_{h}=\{\tile_{\tau,h}^{i}\}_{\tau,i=1}^{L,N}$ that are independent of and identically distributed as $D_{h}=\{e_{\tau,h}^{i}\}_{\tau,i=1}^{L,N}$. It means that $\tile_{\tau,h}^{i}\overset{D}{=}e_{\tau,h}^{i}$ for all $\tau\in[L],i\in[N]$, and they are independent. For ease of notation, we write $\sum_{\tau=1}^{L}\sum_{i=1}^{N}$ as $\sum_{\tau,i}$ Choose a function $f_{W}$ that depends on $D_{h}$ such that
    \begin{align*}
        \frac{1}{NL}\sum_{\tau,i}\bbE_{\rho_{\tau,h}^{i}}\big[f_{W}(e_{\tau,h}^{i})\big]-\frac{1}{NL}\sum_{\tau,i}f_{W}(e_{\tau,h}^{i})\geq \varepsilon\bigg(\alpha+\beta+\frac{1}{NL}\sum_{\tau,i}\bbE_{\rho_{\tau,h}^{i}}\big[f_{W}(e_{\tau,h}^{i})\big]\bigg)
    \end{align*}
    holds. If such function does not exist, then $f_{W}$ is an arbitrary function in $\calF_{\tilde{\calW}}$. Then we have that
    \begin{align}
        &\bbE_{\rho_{\tau,h}^{i}}\bigg[\Big(f_{W}(\tile_{\tau,h}^{i})-\bbE_{\rho_{\tau,h}^{i}}\big[f_{W}(\tile_{\tau,h}^{i})\,\big|\,D_{h}\big]\Big)^{2}\,\bigg|\,D_{h}\bigg]\nonumber\\
        &\quad\leq \bbE_{\rho_{\tau,h}^{i}}\Big[\big(f_{W}(\tile_{\tau,h}^{i})\big)^{2}\,\Big|\,D_{h}\Big]\nonumber\\
        &\quad\leq 4(B_{S}+rB_{K})^{2}\bbE_{\rho_{\tau,h}^{i}}\bigg[\Big(f\big(\tilde{\omega}_{\tau,h}^{i}(W)\big)-f_{h}^{*}\big(\tilde{\omega}_{\tau,h}^{i}(W_{h}^{*})\big)\Big)^{2}\,\bigg|\,D_{h}\bigg]\nonumber\\
        &\quad= 4(B_{S}+rB_{K})^{2}\bbE_{\rho_{\tau,h}^{i}}\big[f_{W}(\tile_{\tau,h}^{i})\,\big|\,D_{h}\big],\label{ieq:18}
    \end{align}
    where the second inequality results from Lemma~\ref{lem:rkhs}, and the last equality results from that $\bbE_{\rho_{\tau,h}^{i}}[\tils_{\tau,h+1}^{i}\,|\,D_{h},\tilde{\omega}_{\tau,h}^{i}(W_{h}^{*})]=f_{h}^{*}(\tilde{\omega}_{\tau,h}^{i}(W_{h}^{*}))$. Then the tail probability for the ghost sample $\tilD_{h}$ is bouneded as
    \begin{align}
        &\bbP\bigg(\frac{1}{NL}\sum_{\tau,i}\bbE_{\rho_{\tau,h}^{i}}\big[f_{W}(\tile_{\tau,h}^{i})\,\big|\,D_{h}\big]-\frac{1}{NL}\sum_{\tau,i}f_{W}(\tile_{\tau,h}^{i})\geq \frac{\varepsilon}{2}\bigg(\alpha+\beta+\frac{1}{NL}\sum_{\tau,i}\bbE_{\rho_{\tau,h}^{i}}\big[f_{W}(\tile_{\tau,h}^{i})\,\big|\,D_{h}\big]\bigg)\bigg)\nonumber\\
        &\quad\leq \frac{\bbE\bigg[\bigg(\frac{1}{NL}\sum_{\tau,i}\bbE_{\rho_{\tau,h}^{i}}\big[f_{W}(\tile_{\tau,h}^{i})\,\big|\,D_{h}\big]-\frac{1}{NL}\sum_{\tau,i}f_{W}(\tile_{\tau,h}^{i})\bigg)^{2}\bigg]}{\bigg(\frac{\varepsilon(\alpha+\beta)}{2}+\frac{\varepsilon}{2NL}\sum_{\tau,i}\bbE_{\rho_{\tau,h}^{i}}\big[f_{W}(\tile_{\tau,h}^{i})\,\big|\,D_{h}\big]\bigg)^{2}}\nonumber\\
        &\quad\leq \frac{\frac{4(B_{S}+rB_{K})^{2}}{NL}\frac{1}{NL}\sum_{\tau,i}\bbE_{\rho_{\tau,h}^{i}}\big[f_{W}(\tile_{\tau,h}^{i})\,\big|\,D_{h}\big]}{\bigg(\frac{\varepsilon(\alpha+\beta)}{2}+\frac{\varepsilon}{2NL}\sum_{\tau,i}\bbE_{\rho_{\tau,h}^{i}}\big[f_{W}(\tile_{\tau,h}^{i})\,\big|\,D_{h}\big]\bigg)^{2}}\nonumber\\
        &\quad\leq \frac{4(B_{S}+rB_{K})^{2}}{(\alpha+\beta)NL\varepsilon^{2}},\nonumber
    \end{align}
    where the first inequality results from Chebyshev inequality, the second inequality results from inequality~\eqref{ieq:18}, and the last inequality results from $x/(a+x)^{2}\leq 1/(4a)$ for any $x,a>0$. When $NL\geq 32(B_{S}+rB_{K})^{2}/((\alpha+\beta)\varepsilon^{2})$, we have that
    \begin{align}
        &\bbP\bigg(\frac{1}{NL}\sum_{\tau,i}\bbE_{\rho_{\tau,h}^{i}}\big[f_{W}(\tile_{\tau,h}^{i})\,\big|\,D_{h}\big]-\frac{1}{NL}\sum_{\tau,i}f_{W}(\tile_{\tau,h}^{i})\nonumber\\
        &\quad\qquad\geq \frac{\varepsilon}{2}\bigg(\alpha+\beta+\frac{1}{NL}\sum_{\tau,i}\bbE_{\rho_{\tau,h}^{i}}\big[f_{W}(\tile_{\tau,h}^{i})\,\big|\,D_{h}\big]\bigg)\bigg)\leq\frac{1}{8}.\label{ieq:19}
    \end{align}
    Thus, we have that
    \begin{align}
        &\bbP\bigg(\exists f_{W}\in\calF_{\tilde{\calW}}, \frac{1}{NL}\sum_{\tau,i}\bbE_{\rho_{\tau,h}^{i}}\big[f_{W}(e_{\tau,h}^{i})\big]-\frac{1}{NL}\sum_{\tau,i}f_{W}(e_{\tau,h}^{i})\geq \varepsilon\bigg(\alpha+\beta+\frac{1}{NL}\sum_{\tau,i}\bbE_{\rho_{\tau,h}^{i}}\big[f_{W}(e_{\tau,h}^{i})\big]\bigg)\bigg)\nonumber\\
        &\quad\leq \frac{8}{7}\bbP\bigg(\exists f_{W}\in\calF_{\tilde{\calW}},\frac{1}{NL}\sum_{\tau,i}\bbE_{\rho_{\tau,h}^{i}}\big[f_{W}(\tile_{\tau,h}^{i})\big]-f_{W}(\tile_{\tau,h}^{i})< \frac{\varepsilon}{2}\bigg(\alpha+\beta+\frac{1}{NL}\sum_{\tau,i}\bbE_{\rho_{\tau,h}^{i}}\big[f_{W}(\tile_{\tau,h}^{i})\big]\bigg),\nonumber\\
        &\quad\qquad\frac{1}{NL}\sum_{\tau,i}\bbE_{\rho_{\tau,h}^{i}}\big[f_{W}(e_{\tau,h}^{i})\big]-f_{W}(e_{\tau,h}^{i})\geq \varepsilon\bigg(\alpha+\beta+\frac{1}{NL}\sum_{\tau,i}\bbE_{\rho_{\tau,h}^{i}}\big[f_{W}(e_{\tau,h}^{i})\big]\bigg)\bigg)\nonumber\\
        &\quad\leq \frac{8}{7}\bbP\bigg(\exists f_{W}\in\calF_{\tilde{\calW}},\frac{1}{NL}\sum_{\tau,i}f_{W}(\tile_{\tau,h}^{i})-f_{W}(e_{\tau,h}^{i})\geq \frac{\varepsilon}{2}\bigg(\alpha+\beta+\frac{1}{NL}\sum_{\tau,i}\bbE_{\rho_{\tau,h}^{i}}\big[f_{W}(e_{\tau,h}^{i})\big]\bigg)\bigg),\label{ieq:20}
    \end{align}
    where the first inequality results from inequality~\eqref{ieq:19}, the detailed proof of this step is in \citet[Theorem 11.4]{gyorfi2002distribution}. To derive the fast rate result, we want to replace the expectation of the function in the right-hand side of inequality~\eqref{ieq:20} by the expectation of the square of the function. Thus, We handle the right-hand side of inequality~\eqref{ieq:20} as
    \begin{align}
        &\bbP\bigg(\exists f_{W}\in\calF_{\tilde{\calW}},\frac{1}{NL}\sum_{\tau,i}f_{W}(\tile_{\tau,h}^{i})-\frac{1}{NL}\sum_{\tau,i}f_{W}(e_{\tau,h}^{i})\geq \frac{\varepsilon}{2}\bigg(\alpha+\beta+\frac{1}{NL}\sum_{\tau,i}\bbE_{\rho_{\tau,h}^{i}}\big[f_{W}(e_{\tau,h}^{i})\big]\bigg)\bigg)\nonumber\\
        &\quad\leq \bbP\bigg(\exists f_{W}\in\calF_{\tilde{\calW}},\frac{1}{NL}\sum_{\tau,i}f_{W}(\tile_{\tau,h}^{i})-\frac{1}{NL}\sum_{\tau,i}f_{W}(e_{\tau,h}^{i})\geq \frac{\varepsilon}{2}\bigg(\alpha+\beta+\frac{1}{NL}\sum_{\tau,i}\bbE_{\rho_{\tau,h}^{i}}\big[f_{W}(e_{\tau,h}^{i})\big]\bigg),\nonumber\\
        &\quad\qquad \frac{1}{NL}\sum_{\tau,i}f_{W}^{2}(e_{\tau,h}^{i})-\frac{1}{NL}\sum_{\tau,i}\bbE_{\rho_{\tau,h}^{i}}\big[f_{W}^{2}(e_{\tau,h}^{i})\big]\leq \varepsilon\bigg(\alpha+\beta+\frac{1}{NL}\sum_{\tau,i}\bbE_{\rho_{\tau,h}^{i}}\big[f_{W}^{2}(e_{\tau,h}^{i})\big]\bigg),\nonumber\\
        &\quad\qquad \frac{1}{NL}\sum_{\tau,i}f_{W}^{2}(\tile_{\tau,h}^{i})-\frac{1}{NL}\sum_{\tau,i}\bbE_{\rho_{\tau,h}^{i}}\big[f_{W}^{2}(\tile_{\tau,h}^{i})\big]\leq \varepsilon\bigg(\alpha+\beta+\frac{1}{NL}\sum_{\tau,i}\bbE_{\rho_{\tau,h}^{i}}\big[f_{W}^{2}(\tile_{\tau,h}^{i})\big]\bigg)\bigg)\nonumber\\
        &\quad\qquad+2\bbP\bigg(\exists f_{W}\in\calF_{\tilde{\calW}},\frac{1}{NL}\sum_{\tau,i}f_{W}^{2}(\tile_{\tau,h}^{i})-\bbE_{\rho_{\tau,h}^{i}}\big[f_{W}^{2}(\tile_{\tau,h}^{i})\big]\leq \varepsilon\bigg(\alpha+\beta+\frac{1}{NL}\sum_{\tau,i}\bbE_{\rho_{\tau,h}^{i}}\big[f_{W}^{2}(\tile_{\tau,h}^{i})\big]\bigg)\bigg)\nonumber\\
        &\quad=\text{(VII)}+\text{(VIII)},\label{ieq:30}
    \end{align}
    where the inequality follows from the union bound. For the term (VIII), Proposition~\ref{prop:sqreplace} shows that
    \begin{align}
        &\bbP\bigg(\exists f_{W}\in\calF_{\tilde{\calW}},\frac{1}{NL}\sum_{\tau,i}f_{W}^{2}(\tile_{\tau,h}^{i})-\frac{1}{NL}\sum_{\tau,i}\bbE_{\rho_{\tau,h}^{i}}\big[f_{W}^{2}(\tile_{\tau,h}^{i})\big]\leq \varepsilon\bigg(\alpha+\beta\nonumber\\
        &\quad\qquad+\frac{1}{NL}\sum_{\tau,i}\bbE_{\rho_{\tau,h}^{i}}\big[f_{W}^{2}(\tile_{\tau,h}^{i})\big]\bigg)\bigg)\nonumber\\
        &\quad\leq 2\bbE\bigg[\calN_{1}\bigg(\frac{\varepsilon(\alpha+\beta)}{5},\{f_{W}^{2}\,|\,f_{W}\in\calF_{\tilde{\calW}}\},\{\tile_{\tau,h}^{i}\}_{\tau,i=1}^{L,N}\bigg)\bigg]\exp\bigg(-\frac{3\varepsilon^{2}(\alpha+\beta)NL}{40(B_{S}+rB_{K})^{4}}\bigg)\nonumber\\
        &\quad\leq 2\bbE\bigg[\calN_{1}\bigg(\frac{\varepsilon(\alpha+\beta)}{10(B_{S}+rB_{K})^{2}},\calF_{\tilde{\calW}},\{\tile_{\tau,h}^{i}\}_{\tau,i=1}^{L,N}\bigg)\bigg]\exp\bigg(-\frac{3\varepsilon^{2}(\alpha+\beta)NL}{40(B_{S}+rB_{K})^{4}}\bigg).\label{ieq:31}
    \end{align}
    For term (VII), the last two events in (VII) are equivalent to that
    \begin{align}
        &(1+\varepsilon)\frac{1}{NL}\sum_{\tau,i}\bbE_{\rho_{\tau,h}^{i}}\big[f_{W}^{2}(e_{\tau,h}^{i})\big]\geq (1-\varepsilon)\frac{1}{NL}\sum_{\tau,i}f_{W}^{2}(e_{\tau,h}^{i})-\varepsilon(\alpha+\beta)\nonumber\\
        &(1+\varepsilon)\frac{1}{NL}\sum_{\tau,i}\bbE_{\rho_{\tau,h}^{i}}\big[f_{W}^{2}(\tile_{\tau,h}^{i})\big]\geq (1-\varepsilon)\frac{1}{NL}\sum_{\tau,i}f_{W}^{2}(\tile_{\tau,h}^{i})-\varepsilon(\alpha+\beta).\label{ieq:25}
    \end{align}
    Then term (VII) can be bounded as
    \begin{align}
        &\text{(VII)}\nonumber\\
        &\quad\leq \bbP\bigg(\exists f_{W}\in\calF_{\tilde{\calW}},\frac{1}{NL}\sum_{\tau,i}f_{W}(\tile_{\tau,h}^{i})-\frac{1}{NL}\sum_{\tau,i}f_{W}(e_{\tau,h}^{i})\geq\frac{\varepsilon(\alpha+\beta)}{2}\nonumber\\
        &\quad\qquad-\frac{\varepsilon^{2}(\alpha+\beta)}{4(B_{S}+rB_{K})^{2}(1+\varepsilon)}+\frac{\varepsilon(1-\varepsilon)}{8(B_{S}+rB_{K})^{2}(1+\varepsilon)}\frac{1}{NL}\sum_{\tau,i}f_{W}^{2}(e_{\tau,h}^{i})+f_{W}^{2}(\tile_{\tau,h}^{i})\bigg),\label{ieq:26}
    \end{align}
    where the inequality results from inequality~\eqref{ieq:25} and that $\bbE_{\rho_{\tau,h}^{i}}[f_{W}^{2}(e_{\tau,h}^{i})]\leq 4(B_{S}+rB_{K})^{2}\bbE_{\rho_{\tau,h}^{i}}[f_{W}(e_{\tau,h}^{i})]$.
    
\begin{proposition}\label{prop:sqreplace}
    Let $B\geq 1$, $\calG$ be a set of functions $g:\calX\rightarrow[0,B]$. Let $Z_{1},\cdots,Z_{n}$ be independent $\calX-$valued random variables that are distributed as $\rho_{1},\cdots,\rho_{n}$, respectively. Assume $\alpha>0$, $0<\varepsilon\leq 1$, $n\geq 1$. Then we have that
    \begin{align*}
        \bbP\bigg(\sup_{g\in\calG}\frac{\frac{1}{n}\sum_{i=1}^{n}g(Z_{i})-\frac{1}{n}\sum_{i=1}^{n}\bbE_{\rho_{i}}\big[g(Z)\big]}{\alpha+\frac{1}{n}\sum_{i=1}^{n}g(Z_{i})+\frac{1}{n}\sum_{i=1}^{n}\bbE_{\rho_{i}}\big[g(Z)\big]}>\varepsilon\bigg)\leq 2\bbE\bigg[\calN_{1}\bigg(\frac{\alpha\varepsilon}{5},\calG,Z_{1}^{n}\bigg)\bigg]\exp\bigg(-\frac{3\alpha n\varepsilon^{2}}{40B}\bigg)
    \end{align*}
    for $n\geq 16B/(\varepsilon^{2}\alpha)$.
\end{proposition}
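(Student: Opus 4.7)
The plan is to adapt the relative-deviation bound of Gy\"orfi {\em et al.}\ (2002, Ch.~19) from the i.i.d.\ setting to our case of independent but non-identically distributed $Z_1,\dots,Z_n$. Concretely I would proceed in four steps: (i) symmetrization by a ghost sample, (ii) Rademacher randomization, (iii) reduction to an $L_1$-cover of $\calG$, and (iv) a Bernstein-type tail bound on each element of the cover. The non-identical distributions only affect bookkeeping because independence is preserved at every step.

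First, introduce a ghost sample $Z_1',\dots,Z_n'$, independent of $Z_1,\dots,Z_n$ with $Z_i' \sim \rho_i$. Fix any $g\in\calG$ achieving the supremum; since the empirical mean $\frac{1}{n}\sum_i g(Z_i')$ concentrates around $\frac{1}{n}\sum_i\bbE_{\rho_i}[g(Z)]$ with variance bounded by $\frac{B}{n}\cdot\frac{1}{n}\sum_i\bbE_{\rho_i}[g(Z)]$ (using $g^2\le Bg$ because $g\in[0,B]$), a Chebyshev argument entirely analogous to that leading to inequality~\eqref{ieq:19}---which goes through verbatim for independent, non-identically distributed summands---shows that for $n\ge 16B/(\varepsilon^2\alpha)$ the ghost average stays close to its mean with probability at least $7/8$. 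This upgrades the event in the statement to the symmetrized event
\begin{align*}
    \bbP\bigg(\sup_{g\in\calG}\frac{\frac{1}{n}\sum_{i=1}^{n}g(Z_i)-\frac{1}{n}\sum_{i=1}^{n}g(Z_i')}{\alpha+\frac{1}{n}\sum_{i=1}^{n}g(Z_i)+\frac{1}{n}\sum_{i=1}^{n}g(Z_i')}>\frac{\varepsilon}{2}\bigg),
\end{align*}
at the cost of a factor of $8/7\le 2$. Here having both empirical means in the denominator is crucial: it makes the numerator and denominator simultaneously symmetric in $(Z_i,Z_i')$, so in the next step I can swap coordinates freely.

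Second, by the symmetry just established, I multiply each difference $g(Z_i)-g(Z_i')$ by an independent Rademacher sign $\sigma_i\in\{\pm1\}$ without changing the joint law. Third, I condition on $\calZ=(Z_1,\dots,Z_n,Z_1',\dots,Z_n')$ and build a minimal $L_1(\calZ)$-cover of $\calG$ at radius $\alpha\varepsilon/5$; if $g,\bar g$ are in the same cover cell then the numerator changes by at most $2\alpha\varepsilon/5$ and the denominator by at most $2\alpha\varepsilon/5$, so the ratio changes by at most a controlled amount and we may replace the supremum by a maximum over at most $\calN_1(\alpha\varepsilon/5,\calG,\calZ)$ representatives. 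Fourth, for a fixed $\bar g$ in the cover, I apply Hoeffding's inequality to the Rademacher sum $\frac{1}{n}\sum_i\sigma_i(\bar g(Z_i)-\bar g(Z_i'))$ conditional on $\calZ$. Using $(\bar g(Z_i)-\bar g(Z_i'))^2 \le B(\bar g(Z_i)+\bar g(Z_i'))$, the Hoeffding sub-Gaussian parameter is controlled by $B\cdot\bigl(\frac{1}{n}\sum_i \bar g(Z_i)+\frac{1}{n}\sum_i \bar g(Z_i')\bigr)$, which is exactly $B$ times the denominator minus $\alpha B$. This self-bounding structure yields, after elementary algebra, a tail of the form $\exp(-3\alpha n\varepsilon^2/(40B))$---the constants $3/40$ come from routinely optimizing the trade-off between the $\alpha$ contribution and the empirical contribution to the denominator. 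A union bound over the cover, followed by taking expectations over $\calZ$ to turn the conditional covering number into $\bbE[\calN_1(\alpha\varepsilon/5,\calG,Z_1^n)]$ (using that the cover on $2n$ points can be upper bounded by the cover on $n$ points for our purposes), gives the claimed bound.

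The main obstacle will be the fourth step: showing that Hoeffding's bound, applied to the Rademacher-symmetrized sum with a self-bounding variance proxy, actually yields an $\exp(-c\alpha n\varepsilon^2/B)$ decay against the ratio denominator. The subtlety is that Hoeffding alone gives a bound in terms of $\sum_i (\bar g(Z_i)-\bar g(Z_i'))^2$ rather than the first-moment quantity appearing in the denominator; the inequality $(u-v)^2\le B(u+v)$ for $u,v\in[0,B]$ is what bridges the gap, and one has to carefully verify that the resulting quadratic inequality in $\varepsilon$ produces the stated $3/40$ constant under the hypothesis $n\ge 16B/(\varepsilon^2\alpha)$. Everything else---symmetrization, Rademacher coupling, covering---is standard and transfers from the i.i.d.\ case verbatim because only independence, not identical distribution, is used.
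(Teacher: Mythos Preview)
Your outline follows the same skeleton as the paper's proof (which in turn adapts Theorem~11.6 of Gy\"orfi et al.): ghost-sample symmetrization via Chebyshev, Rademacher randomization, $L_1$-covering, and a Hoeffding tail bound with the self-bounding trick $g^2\le Bg$. In that sense the strategy is correct.

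There is one real gap. After Rademacher randomization you keep both samples and propose to cover $\calG$ on the $2n$ points $(Z_1,\dots,Z_n,Z_1',\dots,Z_n')$, then at the end assert that ``the cover on $2n$ points can be upper bounded by the cover on $n$ points for our purposes.'' That assertion is not justified: the empirical $L_1$ covering number on $2n$ points is in general \emph{larger} than on $n$ points, and the statement you are proving involves $\bbE[\calN_1(\alpha\varepsilon/5,\calG,Z_1^n)]$ on $n$ points. The paper avoids this by inserting an extra step between Rademacher randomization and covering: once the signs $U_i$ are introduced, the event
\[
\tfrac1n\sum_i U_i\bigl(g(Z_i)-g(Z_i')\bigr)\ \ge\ \tfrac{3\varepsilon}{8}\Bigl(2\alpha+\tfrac1n\sum_i\bigl(g(Z_i)+g(Z_i')\bigr)\Bigr)
\]
is split by a union bound into two events, each involving only one sample; since $Z$ and $Z'$ have the same marginal law, this reduces (at the cost of a factor $2$) to
\[
\bbP\Bigl(\exists g\in\calG:\ \tfrac1n\sum_i U_i g(Z_i)\ \ge\ \tfrac{3\varepsilon}{8}\bigl(\alpha+\tfrac1n\sum_i g(Z_i)\bigr)\ \Big|\ Z_1^n\Bigr),
\]
and now the covering is genuinely on $Z_1^n$. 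With this split in place, the Hoeffding bound uses $|U_ig(z_i)|\le g(z_i)$ together with $g(z_i)^2\le Bg(z_i)$ (rather than your $(u-v)^2\le B(u+v)$, which is the two-sample analogue), and the inequality $(a+y)^2/y\ge 4a$ delivers the constant $3/40$. A minor bookkeeping note: in this proposition the Chebyshev step yields probability $\le 1/2$ (hence a factor $2$), not $7/8$; you appear to have imported the $7/8$ from the analogous step in Proposition~\ref{prop:frconcen}.
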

\begin{proof}[Proof of Proposition~\ref{prop:sqreplace}]
    See Appendix~\ref{app:sqreplace}.
\end{proof}

    \textbf{Step 2: Additional randomization by random signs.}
    
    Let $\{U_{\tau,h}^{i}\}_{\tau,i=1}^{L,N}$ be independent and uniformly distributed over $\{+1,-1\}$ that are also independent of $D_{h}$ and $\tilD_{h}$. Then we have that
    \begin{align}
        &\bbP\bigg(\exists f_{W}\in\calF_{\tilde{\calW}},\frac{1}{NL}\sum_{\tau,i}f_{W}(\tile_{\tau,h}^{i})-\frac{1}{NL}\sum_{\tau,i}f_{W}(e_{\tau,h}^{i})\geq\frac{\varepsilon(\alpha+\beta)}{2}\nonumber\\
        &\quad\qquad-\frac{\varepsilon^{2}(\alpha+\beta)}{4(B_{S}+rB_{K})^{2}(1+\varepsilon)}+\frac{\varepsilon(1-\varepsilon)}{8(B_{S}+rB_{K})^{2}(1+\varepsilon)}\frac{1}{NL}\sum_{\tau,i}f_{W}^{2}(e_{\tau,h}^{i})+f_{W}^{2}(\tile_{\tau,h}^{i})\bigg)\nonumber\\
        &\quad\leq 2\bbE\bigg[\bbP\bigg(\exists f_{W}\in\calF_{\tilde{\calW}},\bigg|\frac{1}{NL}\sum_{\tau,i}U_{\tau,h}^{i}f_{W}(e_{\tau,h}^{i})\bigg|\geq\frac{\varepsilon(\alpha+\beta)}{4}\nonumber\\
        &\quad\qquad-\frac{\varepsilon^{2}(\alpha+\beta)}{8(B_{S}+rB_{K})^{2}(1+\varepsilon)}+\frac{\varepsilon(1-\varepsilon)}{8(B_{S}+rB_{K})^{2}(1+\varepsilon)}\frac{1}{NL}\sum_{\tau,i}f_{W}^{2}(e_{\tau,h}^{i})\,\bigg|\,\{e_{\tau,h}^{i}\}_{\tau,i=1}^{L,N}\bigg)\bigg]\label{ieq:27}
    \end{align}
    where the inequality results from the union bound. Let $\delta>0$, $\calF_{\delta}$ be a $L_{1}$ $\delta-$cover of $\calF_{\tilde{\calW}}$ on $\{e_{\tau,h}^{i}\}_{\tau,i=1}^{L,N}$. Then for any $f_{W}\in\calF_{\tilde{\calW}}$, there exists $\barf_{W}\in\calF_{\delta}$ such that
    \begin{align*}
        \frac{1}{NL}\sum_{\tau,i}\big|f_{W}(e_{\tau,h}^{i})-\barf_{W}(e_{\tau,h}^{i})\big|\leq\delta.
    \end{align*}
    This inequality implies that
    \begin{align*}
        \bigg|\frac{1}{NL}\sum_{\tau,i}U_{\tau,h}^{i}f_{W}(e_{\tau,h}^{i})\bigg|-\bigg|\frac{1}{NL}\sum_{\tau,i}U_{\tau,h}^{i}\barf_{W}(e_{\tau,h}^{i})\bigg|&\leq\delta\nonumber\\
        \frac{1}{NL}\sum_{\tau,i}f_{W}^{2}(e_{\tau,h}^{i})-\frac{1}{NL}\sum_{\tau,i}\barf_{W}^{2}(e_{\tau,h}^{i})&\geq -2(B_{S}+rB_{K})^{2}\delta,
    \end{align*}
    where these inequalities results from the triangle inequality. In the following, we take $\delta=\varepsilon\beta/5$. Thus, we can bound the right-hand side of inequality~\eqref{ieq:27} as
    \begin{align}
        &\bbP\bigg(\exists f_{W}\in\calF_{\tilde{\calW}},\bigg|\frac{1}{NL}\sum_{\tau,i}U_{\tau,h}^{i}f_{W}(e_{\tau,h}^{i})\bigg|\geq\frac{\varepsilon(\alpha+\beta)}{4}\nonumber\nonumber\\
        &\quad\qquad-\frac{\varepsilon^{2}(\alpha+\beta)}{8(B_{S}+rB_{K})^{2}(1+\varepsilon)}+\frac{\varepsilon(1-\varepsilon)}{8(B_{S}+rB_{K})^{2}(1+\varepsilon)}\frac{1}{NL}\sum_{\tau,i}f_{W}^{2}(e_{\tau,h}^{i})\,\bigg|\,\{e_{\tau,h}^{i}\}_{\tau,i=1}^{L,N}\bigg)\nonumber\\
        &\quad\leq \calN_{1}\bigg(\frac{\varepsilon\beta}{5},\calF_{\tilde{\calW}},\{e_{\tau,h}^{i}\}_{\tau,i=1}^{L,N}\bigg)\max_{f_{W}\in\calF_{\frac{\varepsilon\beta}{5}}}\bbP\bigg(\bigg|\frac{1}{NL}\sum_{\tau,i}U_{\tau,h}^{i}f_{W}(e_{\tau,h}^{i})\bigg|\geq\frac{\varepsilon\alpha}{4}\nonumber\\
        &\quad\qquad-\frac{\varepsilon^{2}\alpha}{8(B_{S}+rB_{K})^{2}(1+\varepsilon)}+\frac{\varepsilon(1-\varepsilon)}{8(B_{S}+rB_{K})^{2}(1+\varepsilon)}\frac{1}{NL}\sum_{\tau,i}f_{W}^{2}(e_{\tau,h}^{i})\,\bigg|\,\{e_{\tau,h}^{i}\}_{\tau,i=1}^{L,N}\bigg)\nonumber\\
        &\quad\leq 2\calN_{1}\bigg(\frac{\varepsilon\beta}{5},\calF_{\tilde{\calW}},\{e_{\tau,h}^{i}\}_{\tau,i=1}^{L,N}\bigg)\exp\bigg(-\frac{\varepsilon^{2}(1-\varepsilon)\alpha NL}{20(B_{S}+rB_{K})^{2}(1+\varepsilon)}\bigg)\label{ieq:28}
    \end{align}
    where the first inequality results from the union bound.
    
    \textbf{Step 3: Bounding the covering number.}
    
    In this step, we upper bound the covering number of $\calF_{\tilde{\calW}}$ by the covering numbers of $\bbB(r,\bar{\calH})$ and $\tilde{\calW}$ and conclude the tail probability. We note that
    \begin{align*}
        &\frac{1}{NL}\sum_{\tau,i}\big|f_{W}(e_{\tau,h}^{i})-\barf_{W}(e_{\tau,h}^{i})\big|\\
        &\quad\leq 2(B_{S}+rB_{K})\big[B_{K}\|f-\barf\|_{\bar{\calH}}+rL_{K}B_{k}\|W-\barW\|_{\infty}\big],
    \end{align*}
    where the inequality results from Lemma~\ref{lem:rkhs} and the triangle inequality. Thus, we have that
    \begin{align}
        \calN_{1}\big(\delta,\calF_{\tilde{\calW}},\{e_{\tau,h}^{i}\}_{\tau,i=1}^{L,N}\big)\leq \calN_{\bar{\calH}}\bigg(\frac{\delta}{4(B_{S}+rB_{K})B_{K}},\bbB(r,\bar{\calH})\bigg)\cdot \calN_{\infty}\bigg(\frac{\delta}{4(B_{S}+rB_{K})rL_{K}B_{k}},\tilde{\calW}\bigg)\label{ieq:29}
    \end{align}
    for any $\{e_{\tau,h}^{i}\}_{\tau,i=1}^{L,N}$. Combining the inequalities~\eqref{ieq:20}, \eqref{ieq:30}, \eqref{ieq:31}, \eqref{ieq:26}, \eqref{ieq:27}, and \eqref{ieq:28}, we have that for $NL\geq 32(B_{S}+rB_{K})^{2}/((\alpha+\beta)\varepsilon^{2})$
    \begin{align*}
        &\bbP\bigg(\exists f_{W}\in\calF_{\tilde{\calW}}, \frac{1}{NL}\sum_{\tau,i}\bbE_{\rho_{\tau,h}^{i}}\big[f_{W}(e_{\tau,h}^{i})\big]-\frac{1}{NL}\sum_{\tau,i}f_{W}(e_{\tau,h}^{i})\\
        &\quad\qquad\geq \varepsilon\bigg(\alpha+\beta+\frac{1}{NL}\sum_{\tau,i}\bbE_{\rho_{\tau,h}^{i}}\big[f_{W}(e_{\tau,h}^{i})\big]\bigg)\bigg)\\
        &\quad\leq\frac{64}{7}\bbE\bigg[\calN_{1}\bigg(\frac{\varepsilon(\alpha+\beta)}{10(B_{S}+rB_{K})^{2}},\calF_{\tilde{\calW}},\{e_{\tau,h}^{i}\}_{\tau,i=1}^{L,N}\bigg)\bigg]\exp\bigg(-\frac{3\varepsilon^{2}(\alpha+\beta)NL}{40(B_{S}+rB_{K})^{4}}\bigg)\\
        &\quad\qquad+\frac{32}{7}\bbE\bigg[\calN_{1}\bigg(\frac{\varepsilon\beta}{5},\calF_{\tilde{\calW}},\{e_{\tau,h}^{i}\}_{\tau,i=1}^{L,N}\bigg)\bigg]\exp\bigg(-\frac{\varepsilon^{2}(1-\varepsilon)\alpha NL}{20(B_{S}+rB_{K})^{2}(1+\varepsilon)}\bigg)\\
        &\quad\leq 14\bbE\bigg[\calN_{1}\bigg(\frac{\varepsilon\beta}{10(B_{S}+rB_{K})^{2}},\calF_{\tilde{\calW}},\{e_{\tau,h}^{i}\}_{\tau,i=1}^{L,N}\bigg)\bigg]\exp\bigg(-\frac{\varepsilon^{2}(1-\varepsilon)\alpha NL}{20(B_{S}+rB_{K})^{4}(1+\varepsilon)}\bigg)\\
        &\quad\leq 14\calN_{\bar{\calH}}\bigg(\frac{\varepsilon\beta}{40(B_{S}+rB_{K})^{3}B_{K}},\bbB(r,\bar{\calH})\bigg)\cdot \calN_{\infty}\bigg(\frac{\varepsilon\beta}{40(B_{S}+rB_{K})^{3}rL_{K}B_{k}},\tilde{\calW}\bigg)\\
        &\quad\qquad\cdot\exp\bigg(-\frac{\varepsilon^{2}(1-\varepsilon)\alpha NL}{20(B_{S}+rB_{K})^{4}(1+\varepsilon)}\bigg),
    \end{align*}
    where the last inequality results from inequality~\eqref{ieq:29}. For $NL\leq 32(B_{S}+rB_{K})^{2}/((\alpha+\beta)\varepsilon^{2})$, we have that
    \begin{align*}
        \exp\bigg(-\frac{\varepsilon^{2}(1-\varepsilon)\alpha NL}{20(B_{S}+rB_{K})^{4}(1+\varepsilon)}\bigg)\geq \exp\bigg(-\frac{32(1-\varepsilon)\alpha}{20(B_{S}+rB_{K})^{2}(1+\varepsilon)(\alpha+\beta)}\bigg)\geq \exp\bigg(-\frac{32}{80}\bigg)\geq\frac{1}{14}.
    \end{align*}
    Thus, we conclude the proof of Proposition~\ref{prop:frconcen}.
\end{proof}

\subsubsection{Proof of Proposition~\ref{prop:sqreplace}}\label{app:sqreplace}
\begin{proof}[Proof of Proposition~\ref{prop:sqreplace}]
    The proof of Proposition~\ref{prop:sqreplace} mainly follows the pipeline of the proof of \citet[Theorem 11.6]{gyorfi2002distribution}. However, the random variables in our problem are not identically distributed, which requires additional techniques to control the tail probabilities. Our proof involves two steps:
    \begin{itemize}
        \item Symmetrization by a ghost sample.
        \item Additional randomization by random signs
    \end{itemize}
    
    \textbf{Step 1: Symmetrization by a ghost sample.}
    
    We draw ghost samples $\tilZ_{1}^{n}=(\tilZ_{1},\cdots,\tilZ_{n})$ that are independent of and identically distributed as $Z_{1}^{n}=(Z_{1},\cdots,Z_{n})$. Then we have that
    \begin{align}
        &\bbP\bigg(\sup_{g\in\calG}\frac{\frac{1}{n}\sum_{i=1}^{n}g(\tilZ_{i})-\frac{1}{n}\sum_{i=1}^{n}\bbE_{\rho_{i}}\big[g(Z)\big]}{\alpha+\frac{1}{n}\sum_{i=1}^{n}g(\tilZ_{i})+\frac{1}{n}\sum_{i=1}^{n}\bbE_{\rho_{i}}\big[g(Z)\big]}>\beta\bigg)\nonumber\\
        &\quad\leq \frac{\bbE\Big[\Big(\sum_{i=1}^{n}g(\tilZ_{i})-\bbE_{\rho_{i}}\big[g(Z)\big]\Big)^{2}\Big]}{n^{2}\beta^{2}\Big(\alpha+\frac{1}{n}\sum_{i=1}^{n}\bbE_{\rho_{i}}\big[g(Z)\big]\Big)^{2}}\nonumber\\
        &\quad \leq \frac{\sum_{i=1}^{n}\Big(B-\bbE_{\rho_{i}}\big[g(Z)\big]\Big)\bbE_{\rho_{i}}\big[g(Z)\big]}{n^{2}\beta^{2}\Big(\alpha+\frac{1}{n}\sum_{i=1}^{n}\bbE_{\rho_{i}}\big[g(Z)\big]\Big)^{2}},\label{ieq:21}
    \end{align}
    where the first inequality results from Chebyshev inequality, and the last inequality results from that $g:\calX\rightarrow[0,B]$. For two constants $a,b>0$ and variables $0\leq x_{i}\leq b$ for $i\in[n]$, some basic calculus calculations show that
    \begin{align*}
        f(x_{1},\cdots,x_{n})=\frac{\sum_{i=1}^{n}(b-x_{i})x_{i}}{\big(a+\frac{1}{n}\sum_{i=1}^{n}x_{i}\big)}\leq \frac{nb}{2a}.
    \end{align*}
    Thus, inequality~\eqref{ieq:21} shows that
    \begin{align*}
        \bbP\bigg(\sup_{g\in\calG}\frac{\frac{1}{n}\sum_{i=1}^{n}g(\tilZ_{i})-\frac{1}{n}\sum_{i=1}^{n}\bbE_{\rho_{i}}\big[g(Z)\big]}{\alpha+\frac{1}{n}\sum_{i=1}^{n}g(\tilZ_{i})+\frac{1}{n}\sum_{i=1}^{n}\bbE_{\rho_{i}}\big[g(Z)\big]}>\beta\bigg)\leq \frac{B}{2\beta^{2}\alpha n}.
    \end{align*}
    We take $\beta=\varepsilon/4$. If $n\geq 16B/(\varepsilon^{2}\alpha)$, such probability is upper bounded by $1/2$. Then we have that 
    \begin{align}
        &\bbP\bigg(\sup_{g\in\calG}\frac{\frac{1}{n}\sum_{i=1}^{n}g(Z_{i})-\frac{1}{n}\sum_{i=1}^{n}\bbE_{\rho_{i}}\big[g(Z)\big]}{\alpha+\frac{1}{n}\sum_{i=1}^{n}g(Z_{i})+\frac{1}{n}\sum_{i=1}^{n}\bbE_{\rho_{i}}\big[g(Z)\big]}>\varepsilon\bigg)\nonumber\\
        &\quad\leq 2\bbP\bigg(\exists g\in\calG, \frac{1}{n}\sum_{i=1}^{n}\big(g(Z_{i})-g(\tilZ_{i})\big)\geq \frac{3\varepsilon}{8}\bigg(2\alpha+\frac{1}{n}\sum_{i=1}^{n}\big(g(Z_{i})+g(\tilZ_{i})\big)\bigg)\bigg),\label{ieq:22}
    \end{align}
    where the inequality results from the conditional probability trick. The detailed procedure can be found in~\citet[Theorem 11.6]{gyorfi2002distribution}.
    
    \textbf{Step 2: Additional randomization by random signs.}
    
    Let $\{U_{i}\}_{i=1}^{n}$ be independent and uniformly distributed random variables on $\{+1,1\}$ that are independent of $Z_{1}^{n}$ and $\tilZ_{1}^{n}$. Then we have that
    \begin{align}
        &\bbP\bigg(\exists g\in\calG, \frac{1}{n}\sum_{i=1}^{n}\big(g(Z_{i})-g(\tilZ_{i})\big)\geq \frac{3\varepsilon}{8}\bigg(2\alpha+\frac{1}{n}\sum_{i=1}^{n}\big(g(Z_{i})+g(\tilZ_{i})\big)\bigg)\bigg)\nonumber\\
        &\quad\leq 2\bbE\bigg[\bbP\bigg(\exists g\in\calG, \frac{1}{n}\sum_{i=1}^{n}U_{i}g(Z_{i})\geq \frac{3\varepsilon}{8}\bigg(\alpha+\frac{1}{n}\sum_{i=1}^{n}g(Z_{i})\bigg)\,\bigg|\,Z_{1}^{n}=z_{1}^{n}\bigg)\bigg],\label{ieq:23}
    \end{align}
    where the inequality results from the union bound. Let $\delta>0$, $\calG_{\delta}$ be a $L_{1}$ $\delta-$cover of $\calG$ on $z_{1}^{n}$. Then for any $g\in\calG$, there exists $\barg\in\calG_{\delta}$ such that $\sum_{i=1}^{n}|g(z_{i})-\barg(z_{i})|/n\leq \delta$. Thus, we have that
    \begin{align*}
        &\bbP\bigg(\exists g\in\calG, \frac{1}{n}\sum_{i=1}^{n}U_{i}g(Z_{i})\geq \frac{3\varepsilon}{8}\bigg(\alpha+\frac{1}{n}\sum_{i=1}^{n}g(Z_{i})\bigg)\,\bigg|\,Z_{1}^{n}=z_{1}^{n}\bigg)\\
        &\quad\leq \bbP\bigg(\exists g\in\calG_{\delta}, \delta+\frac{1}{n}\sum_{i=1}^{n}U_{i}g(Z_{i})\geq \frac{3\varepsilon}{8}\bigg(\alpha-\delta+\frac{1}{n}\sum_{i=1}^{n}g(Z_{i})\bigg)\,\bigg|\,Z_{1}^{n}=z_{1}^{n}\bigg)\\
        &\quad\leq |\calG_{\delta}|\max_{g\in\calG_{\delta}}\bbP\bigg(\frac{1}{n}\sum_{i=1}^{n}U_{i}g(Z_{i})\geq \frac{3\varepsilon\alpha}{8}-\frac{3\varepsilon\delta}{8}-\delta+\frac{3\varepsilon}{8}\frac{1}{n}\sum_{i=1}^{n}g(Z_{i})\,\bigg|\,Z_{1}^{n}=z_{1}^{n}\bigg),
    \end{align*}
    where the last inequality follows from the union bound. Take $\delta=\varepsilon\alpha/5$, then we have 
    \begin{align*}
        \frac{3\varepsilon\alpha}{8}-\frac{3\varepsilon\delta}{8}-\delta\geq \frac{\varepsilon\alpha}{10}.
    \end{align*}
    Thus, we can control the tail probability as
    \begin{align}
        &\bbP\bigg(\exists g\in\calG, \frac{1}{n}\sum_{i=1}^{n}U_{i}g(Z_{i})\geq \frac{3\varepsilon}{8}\bigg(\alpha+\frac{1}{n}\sum_{i=1}^{n}g(Z_{i})\bigg)\,\bigg|\,Z_{1}^{n}=z_{1}^{n}\bigg)\nonumber\\
        &\quad\leq \calN_{1}\bigg(\frac{\varepsilon\alpha}{5},\calG,z_{1}^{n}\bigg)\max_{g\in\calG_{\frac{\varepsilon\alpha}{5}}}\bbP\bigg(\frac{1}{n}\sum_{i=1}^{n}U_{i}g(Z_{i})\geq \frac{\varepsilon\alpha}{10}+\frac{3\varepsilon}{8}\frac{1}{n}\sum_{i=1}^{n}g(Z_{i})\,\bigg|\,Z_{1}^{n}=z_{1}^{n}\bigg)\nonumber\\
        &\quad\leq \calN_{1}\bigg(\frac{\varepsilon\alpha}{5},\calG,z_{1}^{n}\bigg)\exp\bigg(-\frac{9\varepsilon^{2}}{128B}\frac{\big(\frac{4}{15}n\alpha+\sum_{i=1}^{n}g(z_{i})\big)^{2}}{\sum_{i=1}^{n}g(z_{i})}\bigg)\nonumber\\
        &\quad\leq \calN_{1}\bigg(\frac{\varepsilon\alpha}{5},\calG,z_{1}^{n}\bigg)\exp\bigg(-\frac{3\alpha\varepsilon^{2}n}{40B}\bigg),\label{ieq:24}
    \end{align}
    where the second inequality results from the Hoeffding's inequality, and the last inequality results from that $(a+y)^{2}/y\geq 4a$ for any $a,y>0$. Combining the inequalities~\eqref{ieq:22}, \eqref{ieq:23} and \eqref{ieq:24}, we conclude the proof of Proposition~\ref{prop:sqreplace}.
\end{proof}

\subsection{Propositions and Lemmas for Optimization}
\subsubsection{Proof of Proposition~\ref{prop:firstordopt}}\label{app:firstordopt}
\begin{proof}[Proof of Proposition~\ref{prop:firstordopt}]
        From the definition of $R(\cdot)$ and $\kl(\cdot\|\cdot)$, we have that
        \begin{align*}
            \nabla_{p} R(p)=1+\log p\quad \nabla_{p}\kl(p\|q)=1+\log\frac{p}{q}.
        \end{align*}
        Then the first-order optimal condition of Eqn.~\eqref{eq:16} is that for any $p\in\Delta(\calA)$
        \begin{align*}
            \bigg\langle\eta_{t+1}\hatQ_{h}^{\lambda,\alpha}(s,\cdot,\pi_{t}^{\alpha},\hbmu_{t}^{\calI},\hatW)-\lambda\eta_{t+1}\log \hpi_{t+1,h}^{\alpha}(\cdot\,|\,s)-\log\frac{\hpi_{t+1,h}^{\alpha}(\cdot\,|\,s)}{\pi_{t,h}^{\alpha}(\cdot\,|\,s)},p- \hpi_{t+1,h}^{\alpha}(\cdot\,|\,s)\bigg\rangle\leq 0.
        \end{align*}
        Note that
        \begin{align*}
            \kl(p_{1}\|p_{2})&=\kl(p_{3}\|p_{2})+\big\langle\nabla_{p_{3}}\kl(p_{3}\|p_{2}),p_{1}-p_{3}\big\rangle+\kl(p_{1}\|p_{3})\\
            \kl(p_{1}\|p_{2})&=R(p_{1})-R(p_{2})+\big\langle\nabla R(p_{2}),p_{2}-p_{1}\big\rangle.
        \end{align*}
        Then we have
        \begin{align*}
            &\eta_{t+1}\big\langle \hatQ_{h}^{\lambda,\alpha}(s,\cdot,\pi_{t}^{\alpha},\hbmu_{t}^{\calI},\hatW),p-\hpi_{t+1,h}^{\alpha}(\cdot\,|\,s) \big\rangle+\lambda \eta_{t+1}\Big[ R\big(\hpi_{t+1,h}^{\alpha}(\cdot\,|\,s)\big)-R(p)\Big]+\kl\big(\hpi_{t+1,h}^{\alpha}(\cdot\,|\,s)\|\pi_{t,h}^{\alpha}(\cdot\,|\,s)\big)\\
            &\quad\leq \kl\big(p\|\pi_{t,h}^{\alpha}(\cdot\,|\,s)\big)-(1+\lambda\eta_{t+1})\kl\big(p\|\hpi_{t+1,h}^{\alpha}(\cdot\,|\,s)\big).
        \end{align*}
        Thus, we conclude the proof of Proposition~\ref{prop:firstordopt}.
    \end{proof}
\subsubsection{Proof of Proposition~\ref{prop:esterrsum}}\label{app:esterrsum}
\begin{proof}[Proof of Proposition~\ref{prop:esterrsum}]
    In the following, we upper bound these four terms separately. For term (I), we have that
    \begin{align}
        \text{(I)}&\leq \eta_{t+1}\Big|\big\langle Q_{h}^{\lambda,\alpha}(s_{h},\cdot,\pi_{t}^{\alpha},\barmu_{t}^{\calI},W^{*}),p-\pi_{t+1,h}^{\alpha}(\cdot\,|\,s_{h}) \big\rangle-\big\langle \hatQ_{h}^{\lambda,\alpha}(s_{h},\cdot,\pi_{t}^{\alpha},\hbmu_{t}^{\calI},\hatW),p-\hpi_{t+1,h}^{\alpha}(\cdot\,|\,s_{h}) \big\rangle\Big|\nonumber\\
        &\leq \eta_{t+1}\Big|\big\langle Q_{h}^{\lambda,\alpha}(s_{h},\cdot,\pi_{t}^{\alpha},\barmu_{t}^{\calI},W^{*})-\hatQ_{h}^{\lambda,\alpha}(s_{h},\cdot,\pi_{t}^{\alpha},\hbmu_{t}^{\calI},\hatW),p-\pi_{t+1,h}^{\alpha}(\cdot\,|\,s_{h}) \big\rangle\Big|\nonumber\\
        &\qquad +\eta_{t+1}\Big|\big\langle \hatQ_{h}^{\lambda,\alpha}(s_{h},\cdot,\pi_{t}^{\alpha},\hbmu_{t}^{\calI},\hatW),\hpi_{t+1,h}^{\alpha}(\cdot\,|\,s_{h})-\pi_{t+1,h}^{\alpha}(\cdot\,|\,s_{h}) \big\rangle\Big|\nonumber\\
        &\leq 2\eta_{t+1}\big\|Q_{h}^{\lambda,\alpha}(s_{h},\cdot,\pi_{t}^{\alpha},\barmu_{t}^{\calI},W^{*})-Q_{h}^{\lambda,\alpha}(s_{h},\cdot,\pi_{t}^{\alpha},\hbmu_{t}^{\calI},W^{*})\big\|_{\infty}\nonumber\\
        &\qquad +2\eta_{t+1}\big\|Q_{h}^{\lambda,\alpha}(s_{h},\cdot,\pi_{t}^{\alpha},\hbmu_{t}^{\calI},W^{*})-\hatQ_{h}^{\lambda,\alpha}(s_{h},\cdot,\pi_{t}^{\alpha},\hbmu_{t}^{\calI},\hatW)\big\|_{\infty}+2\eta_{t+1}H(1+\lambda\log|\calA|)\beta_{t+1},\label{ieq:71}
    \end{align}
    where the second inequality results from the triangle inequality, and the last inequality results from the H\"{o}lder inequality and the triangle inequality. To bound the second term in the right-hand side of inequality~\eqref{ieq:71}, we state the proposition
    \begin{proposition}\label{prop:difdsamepv}
        Under Assumption~\ref{assump:lipconti}, for any policy $\pi^{\calI}$ and two distribution flows $\mu^{\calI}$ and $\tilde{\mu}^{\calI}$, we have that
        \begin{align*}
            &\big|Q_{h}^{\lambda,\alpha}(s,a,\pi^{\alpha},\mu^{\calI},W^{*})-Q_{h}^{\lambda,\alpha}(s,a,\pi^{\alpha},\tilde{\mu}^{\calI},W^{*})\big|\leq \big[L_{r}+H(1+\lambda\log|\calA|)L_{P}\big]\!\sum_{m=h}^{H}\!\int_{0}^{1}\!\!\!\|\mu_{m}^{\beta}-\tilde{\mu}_{m}^{\beta}\|_{1}\rmd\beta,\\
            &\big|V_{h}^{\lambda,\alpha}(s,\pi^{\alpha},\mu^{\calI},W^{*})-V_{h}^{\lambda,\alpha}(s,\pi^{\alpha},\tilde{\mu}^{\calI},W^{*})\big|\leq \big[L_{r}+H(1+\lambda\log|\calA|)L_{P}\big]\!\sum_{m=h}^{H}\!\int_{0}^{1}\!\|\mu_{m}^{\beta}-\tilde{\mu}_{m}^{\beta}\|_{1}\rmd\beta
        \end{align*}
        for all $\alpha\in\calI$, $s\in\calS$, $a\in\calA$ and $h\in[H]$.
    \end{proposition}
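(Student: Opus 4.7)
My plan is to proceed by backward induction on $h \in [H]$, exploiting the fact that the value function depends on the distribution flow only through the aggregates $z_{h}^{\alpha}(\mu_{h}^{\calI},W_{h}^{*}) = \int_{0}^{1}W_{h}^{*}(\alpha,\beta)\mu_{h}^{\beta}\,\rmd\beta$, and then deducing the $Q$-bound from the Bellman relation $Q_{h}^{\lambda,\alpha}(s,a,\pi^{\alpha},\mu^{\calI},W^{*}) = r_{h}^{*}(s,a,z_{h}^{\alpha}(\mu_{h}^{\calI},W_{h}^{*})) + \bbE[V_{h+1}^{\lambda,\alpha}(s_{h+1},\pi^{\alpha},\mu^{\calI},W^{*}) \mid s,a]$. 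The key elementary estimate, obtained directly from $\|W_{h}^{*}\|_{\infty}\leq 1$ and the triangle inequality for the $L_{1}$ norm, is
\begin{align*}
\|z_{h}^{\alpha}(\mu_{h}^{\calI},W_{h}^{*})-z_{h}^{\alpha}(\tilde{\mu}_{h}^{\calI},W_{h}^{*})\|_{1}\leq \int_{0}^{1}\|\mu_{h}^{\beta}-\tilde{\mu}_{h}^{\beta}\|_{1}\,\rmd\beta.
\end{align*}

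The base case at $h=H+1$ is trivial since both value functions are zero. For the inductive step, I would expand
\begin{align*}
V_{h}^{\lambda,\alpha}(s,\pi^{\alpha},\mu^{\calI},W^{*}) = \sum_{a\in\calA}\pi_{h}^{\alpha}(a\,|\,s)\bigl[r_{h}^{*}(s,a,z_{h}^{\alpha}(\mu_{h}^{\calI},W_{h}^{*}))-\lambda\log\pi_{h}^{\alpha}(a\,|\,s) + \bbE_{s'\sim P_{h}^{*}(\cdot\,|\,s,a,z_{h}^{\alpha}(\mu_{h}^{\calI},W_{h}^{*}))}[V_{h+1}^{\lambda,\alpha}(s',\pi^{\alpha},\mu^{\calI},W^{*})]\bigr]
\end{align*}
and subtract the analogous expression for $\tilde{\mu}^{\calI}$. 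The entropy term cancels because $\pi^{\alpha}$ is the same on both sides. The reward difference is bounded by $L_{r}\int_{0}^{1}\|\mu_{h}^{\beta}-\tilde{\mu}_{h}^{\beta}\|_{1}\,\rmd\beta$ by Assumption~\ref{assump:lipconti} and the estimate above. For the transition-value term I would add and subtract the mixed term $\bbE_{s'\sim P_{h}^{*}(\cdot\,|\,s,a,z_{h}^{\alpha}(\tilde{\mu}_{h}^{\calI},W_{h}^{*}))}[V_{h+1}^{\lambda,\alpha}(s',\pi^{\alpha},\mu^{\calI},W^{*})]$ and split into (i) a kernel-difference part, bounded by the total variation part of Assumption~\ref{assump:lipconti} times $\|V_{h+1}^{\lambda,\alpha}\|_{\infty}$, and (ii) a value-difference part, bounded by $\sup_{s'}|V_{h+1}^{\lambda,\alpha}(s',\pi^{\alpha},\mu^{\calI},W^{*})-V_{h+1}^{\lambda,\alpha}(s',\pi^{\alpha},\tilde{\mu}^{\calI},W^{*})|$ to which the induction hypothesis applies.

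For part (i) I need the uniform bound $\|V_{h+1}^{\lambda,\alpha}\|_{\infty}\leq H(1+\lambda\log|\calA|)$, which follows because the per-step reward is at most $1$ in absolute value and the expected entropy bonus $-\lambda\sum_{a}\pi_{h}^{\alpha}(a\,|\,s)\log\pi_{h}^{\alpha}(a\,|\,s)$ is at most $\lambda\log|\calA|$ over $H$ steps. Combining the three pieces and telescoping gives
\begin{align*}
&\sup_{s\in\calS}\bigl|V_{h}^{\lambda,\alpha}(s,\pi^{\alpha},\mu^{\calI},W^{*})-V_{h}^{\lambda,\alpha}(s,\pi^{\alpha},\tilde{\mu}^{\calI},W^{*})\bigr| \\
&\quad\leq \bigl[L_{r}+H(1+\lambda\log|\calA|)L_{P}\bigr]\int_{0}^{1}\|\mu_{h}^{\beta}-\tilde{\mu}_{h}^{\beta}\|_{1}\,\rmd\beta + \sup_{s'}\bigl|V_{h+1}^{\lambda,\alpha}(s',\pi^{\alpha},\mu^{\calI},W^{*})-V_{h+1}^{\lambda,\alpha}(s',\pi^{\alpha},\tilde{\mu}^{\calI},W^{*})\bigr|,
\end{align*}
which with the inductive hypothesis closes the induction on $V$. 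The $Q$-bound follows immediately by one additional application of the Bellman identity, bounding the reward term by $L_{r}\int_{0}^{1}\|\mu_{h}^{\beta}-\tilde{\mu}_{h}^{\beta}\|_{1}\,\rmd\beta$ and the expectation term by the $V$-bound at step $h+1$, then using the same telescoping.

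I do not anticipate a serious obstacle: the argument is a standard simulation-lemma-style induction. The only subtle point is the kernel-difference step, where one must be careful to use the total-variation form of Assumption~\ref{assump:lipconti} and the crude $\ell_{\infty}$ bound on $V_{h+1}^{\lambda,\alpha}$; writing $|\int f\,\rmd(P-Q)|\leq \|f\|_{\infty}\cdot\mathrm{tv}(P,Q)$ (up to a factor of $2$ that can be absorbed into the constant or avoided by working with $f-\inf f$) is what produces the $H(1+\lambda\log|\calA|)L_{P}$ contribution in the final constant.
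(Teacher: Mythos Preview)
Your proposal is correct and follows essentially the same approach as the paper: a backward-induction simulation-lemma argument using the aggregate bound $\|z_{h}^{\alpha}(\mu)-z_{h}^{\alpha}(\tilde{\mu})\|_{1}\leq \int_{0}^{1}\|\mu_{h}^{\beta}-\tilde{\mu}_{h}^{\beta}\|_{1}\,\rmd\beta$, Assumption~\ref{assump:lipconti}, and the crude bound $\|V_{h+1}^{\lambda,\alpha}\|_{\infty}\leq H(1+\lambda\log|\calA|)$. The only cosmetic difference is that the paper runs the induction on $Q$ and then reads off the $V$-bound via $V_{h}=\langle Q_{h},\pi_{h}\rangle-\lambda R(\pi_{h})$, whereas you induct on $V$ and recover $Q$ from the Bellman relation; the two routes are equivalent.
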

    \begin{proof}[Proof of Proposition~\ref{prop:difdsamepv}]
        See Appendix~\ref{app:difdsamepv}.
    \end{proof}
    Thus, we have that
    \begin{align*}
        \text{(I)}&\leq 2\eta_{t+1}\big[L_{r}+H(1+\lambda\log|\calA|)L_{P}\big]\sum_{m=1}^{H}\int_{0}^{1}\|\barmu_{t,m}^{\beta}-\hbmu_{t,m}^{\beta}\|_{1}\rmd\beta\nonumber\\
        &\qquad +2\eta_{t+1}\big\|Q_{h}^{\lambda,\alpha}(s_{h},\cdot,\pi_{t}^{\alpha},\hbmu_{t}^{\calI},W^{*})-\hatQ_{h}^{\lambda,\alpha}(s_{h},\cdot,\pi_{t}^{\alpha},\hbmu_{t}^{\calI},\hatW)\big\|_{\infty}+2\eta_{t+1}H(1+\lambda\log|\calA|)\beta_{t+1}.
    \end{align*}
    Define $\alpha_{m,t}=\alpha_{m}^{\prime}\prod_{k=m+1}^{t-1}(1-\alpha_{k}^{\prime})$ for $m\in[t]$, where $\alpha_{m}^{\prime}=\alpha_{m}$ for $m\geq 2$ and $\alpha_{1}^{\prime}=1$ (since $\hbmu_{1}^{\calI}=\hat{\mu}_{1}^{\calI}$). Then it satisfies that $\sum_{m=1}^{t-1}\alpha_{m,t}=1$, and that 
    \begin{align}
        \barmu_{t}^{\calI}=\sum_{m=1}^{t-1}\alpha_{m,t}\cdot\mu_{m}^{\calI},\text{ and }\quad \hbmu_{t}^{\calI}=\sum_{m=1}^{t-1}\alpha_{m,t}\cdot\hat{\mu}_{m}^{\calI}.\label{eq:19}
    \end{align}
    Then we have that
    \begin{align*}
        \sum_{m=1}^{H}\int_{0}^{1}\|\barmu_{t,m}^{\beta}-\hbmu_{t,m}^{\beta}\|_{1}\rmd\beta=d(\barmu_{t}^{\calI},\hbmu_{t}^{\calI})\leq \sum_{m=1}^{t-1}\alpha_{m,t-1}d(\hat{\mu}_{m}^{\calI},\mu_{m}^{\calI})\leq \varepsilon_{\mu},
    \end{align*}
    where the inequality results from the triangle inequality. Thus, we have
    \begin{align*}
        \text{(I)}&\leq 2\eta_{t+1}\big\|Q_{h}^{\lambda,\alpha}(s_{h},\cdot,\pi_{t}^{\alpha},\hbmu_{t}^{\calI},W^{*})-\hatQ_{h}^{\lambda,\alpha}(s_{h},\cdot,\pi_{t}^{\alpha},\hbmu_{t}^{\calI},\hatW)\big\|_{\infty}\nonumber\\
        &\qquad +2\eta_{t+1}\big[L_{r}+H(1+\lambda\log|\calA|)L_{P}\big]\varepsilon_{\mu}+2\eta_{t+1}H(1+\lambda\log|\calA|)\beta_{t+1}.
    \end{align*}

    For term (II), Lemma~\ref{lem:monoentro} shows that $\text{(II)}\leq 0$.
    
    For term (III), we have that
    \begin{align*}
        \text{(III)}&=R\big(\pi_{t+1,h}^{\alpha}(\cdot\,|\,s_{h})\big)-R\big(\hpi_{t+1,h}^{\alpha}(\cdot\,|\,s_{h})\big)+\sum_{a\in\calA}\big(\pi_{t+1,h}^{\alpha}(a\,|\,s_{h})-\hpi_{t+1,h}^{\alpha}(a\,|\,s_{h})\big)\log \frac{1}{\pi_{t,h}^{\alpha}(a\,|\,s_{h})}\\
        &\leq \sum_{a\in\calA}\big|\pi_{t+1,h}^{\alpha}(a\,|\,s_{h})-\hpi_{t+1,h}^{\alpha}(a\,|\,s_{h})\big|\log \frac{|\calA|}{\beta_{t}}\\
        &\leq 2\beta_{t+1}\log \frac{|\calA|}{\beta_{t}},
    \end{align*}
    where the last inequality results from the definition of $\pi_{t+1,h}^{\alpha}$ and $\hpi_{t+1,h}^{\alpha}$.
    
    For term (IV), Lemma~\ref{lem:policyave} shows that for $\beta_{t+1}\leq 1/2$, we have that $\text{(IV)}\leq 2(1+\lambda\eta_{t+1})\beta_{t+1}.$ 
    
    Summing these four terms, we conclude the proof of the proposition.
\end{proof}
\subsubsection{Proof of Proposition~\ref{prop:valuefraction}}\label{app:valuefraction}
\begin{proof}[Proof of Proposition~\ref{prop:valuefraction}]
    Our proof involves two steps:
    \begin{itemize}
        \item Proof $\bbE_{\pi^{*}}[V_{h}^{\lambda}(s_{h},\pi^{*})-V_{h}^{\lambda}(s_{h},\pi)]\geq \gamma^{*}\bbE_{\pi^{*}}[V_{h+1}^{\lambda}(s_{h+1},\pi^{*})-V_{h+1}^{\lambda}(s_{h+1},\pi)]$ for all $h\in[H]$, where $\gamma^{*}>0$ is a constant.
        \item Proof the desired result from Step 1.
    \end{itemize}
    
    \textbf{Step 1: Proof $\bbE_{\pi^{*}}[V_{h}^{\lambda}(s_{h},\pi^{*})-V_{h}^{\lambda}(s_{h},\pi)]\geq \gamma^{*}\bbE_{\pi^{*}}[V_{h+1}^{\lambda}(s_{h+1},\pi^{*})-V_{h+1}^{\lambda}(s_{h+1},\pi)]$ for all $h\in[H]$.}
    
    If $\pi_{t}=\pi_{t}^{*}$ for all $t\geq h+1$, then the result trivially holds. In the following, we assume that $\pi_{t}\neq\pi_{t}^{*}$ for some $t\geq h+1$. This implies that
    \begin{align*}
        \bbE_{\pi^{*}}[V_{h+1}^{\lambda}(s_{h+1},\pi^{*})-V_{h+1}^{\lambda}(s_{h+1},\pi)]> 0.
    \end{align*}
    For ease of notation, we define that
    \begin{align*}
        y(s,a)&=r_{h}(s,a)+\int_{\calS}P_{h}(s^{\prime}\,|\,s,a)V_{h+1}(s^{\prime},\pi)\rmd s^{\prime}\\
        y^{*}(s,a)&=r_{h}(s,a)+\int_{\calS}P_{h}(s^{\prime}\,|\,s,a)V_{h+1}(s^{\prime},\pi^{*})\rmd s^{\prime}.
    \end{align*}
    Then we expand these two differences between value functions as
    \begin{align*}
        &\bbE_{\pi^{*}}\big[V_{h}^{\lambda}(s_{h},\pi^{*})-V_{h}^{\lambda}(s_{h},\pi)\big]\\
        &\quad=\bbE_{\pi^{*}}\big[\langle y(s_{h},\cdot),\pi_{h}^{*}(\cdot\,|\,s_{h})-\pi_{h}(\cdot\,|\,s_{h})\rangle+\lambda \Big[R\big(\pi_{h}(\cdot\,|\,s_{h})\big)-R\big(\pi_{h}^{*}(\cdot\,|\,s_{h})\big)\Big]+\langle y^{*}(s_{h},\cdot)-y(s_{h},\cdot),\pi_{h}^{*}(\cdot\,|\,s_{h}) \rangle\big]\\
        &\bbE_{\pi^{*}}\big[V_{h+1}^{\lambda}(s_{h+1},\pi^{*})-V_{h+1}^{\lambda}(s_{h+1},\pi)\big]\\
        &\quad=\bbE_{\pi^{*}}\big[\langle y^{*}(s_{h},\cdot)-y(s_{h},\cdot),\pi_{h}^{*}(\cdot\,|\,s_{h}) \rangle\big],
    \end{align*}
    where $R(p)=\langle p,\log p\rangle$. In the following, we will prove that for any $s\in\calS$
    \begin{align}
        &\langle y(s,\cdot),\pi_{h}^{*}(\cdot\,|\,s)-\pi_{h}(\cdot\,|\,s)\rangle+\lambda \Big[R\big(\pi_{h}(\cdot\,|\,s)\big)-R\big(\pi_{h}^{*}(\cdot\,|\,s)\big)\Big]+\langle y^{*}(s,\cdot)-y(s,\cdot),\pi_{h}^{*}(\cdot\,|\,s) \rangle\nonumber\\
        &\quad\geq \gamma^{*}\langle y^{*}(s,\cdot)-y(s,\cdot),\pi_{h}^{*}(\cdot\,|\,s) \rangle, \label{ieq:47}
    \end{align}
    and our desired result immediately follows from taking expectation on the both sides of inequality~\eqref{ieq:47}. For ease of notation, we define $p^{*}=\pi_{h}^{*}(\cdot\,|\,s)$. From the definition of the optimal policy, we have that
    \begin{align*}
        p^{*}=\argmax_{q\in\Delta(\calA)}\langle q,y^{*}(s,\cdot)\rangle-\lambda R(q),\quad p=\argmax_{q\in\Delta(\calA)}\langle q,y(s,\cdot)\rangle-\lambda R(q).
    \end{align*}
    They has the closed-form expression $p^{*}(a)=\exp(y^{*}(s,a)/\lambda)/Z^{*}(s)$ and $p(a)=\exp(y(s,a)/\lambda)/Z(s)$, where $Z^{*}(s)=\sum_{a}\exp(y^{*}(s,a)/\lambda)$ and $Z(s)=\sum_{a}\exp(y(s,a)/\lambda)$.
    To prove inequality~\eqref{ieq:47}, it suffices to prove that
    \begin{align}
        \langle y(s,\cdot),p^{*}-p\rangle+\lambda \big[R(p)-R(p^{*})\big]\geq (\gamma^{*}-1)\langle y^{*}(s,\cdot)-y(s,\cdot),p^{*} \rangle.\label{ieq:48}
    \end{align}
    The left-hand side the inequality~\eqref{ieq:48} is
    \begin{align}
        \langle y(s,\cdot),p^{*}-p\rangle+\lambda \big[R(p)-R(p^{*})\big]=\langle \lambda\log p,p^{*}-p\rangle+\lambda \big[R(p)-R(p^{*})\big]=-\lambda\bigg\langle p^{*},\log\frac{p^{*}}{p}\bigg\rangle,\label{ieq:49}
    \end{align}
    where the first equality results from the closed-form expression of $p$, and the second inequality results from the definition of $R(\cdot)$. We further expand this term as
    \begin{align}
        -\lambda\bigg\langle p^{*},\log\frac{p^{*}}{p}\bigg\rangle &=-\lambda \log\frac{Z(s)}{Z^{*}(s)}-\bigg\langle \frac{\exp\big(y^{*}(s,\cdot)/\lambda\big)}{Z^{*}(s)},y^{*}(s,\cdot)-y(s,\cdot)\bigg\rangle,\label{ieq:50}
    \end{align}
    where the equalitys result from the closed-form expressions of $p$ and $p^{*}$. The right-hand side of inequality~\eqref{ieq:48} is
    \begin{align}
        (\gamma^{*}-1)\langle y^{*}(s,\cdot)-y(s,\cdot),p^{*} \rangle=(\gamma^{*}-1)\lambda\bigg(\bigg\langle \log\frac{p^{*}}{p},p^{*} \bigg\rangle+\log\frac{Z^{*}(s)}{Z(s)}\bigg),\label{ieq:51}
    \end{align}
    where the equalitys result from the closed-form expressions of $p$ and $p^{*}$. Combining Eqn.~\eqref{ieq:49}, \eqref{ieq:50}, and \eqref{ieq:51}, we have
    \begin{align}
        &\langle y(s,\cdot),p^{*}-p\rangle+\lambda \big[R(p)-R(p^{*})\big]\geq (\gamma^{*}-1)\langle y^{*}(s,\cdot)-y(s,\cdot),p^{*} \rangle\nonumber\\
        &\quad \Leftrightarrow \frac{\gamma^{*}}{\lambda}\bigg\langle \exp\bigg(\frac{y^{*}(s,\cdot)}{\lambda}\bigg),y^{*}(s,\cdot)-y(s,\cdot)\bigg\rangle\leq Z^{*}(s)\log\frac{Z^{*}(s)}{Z(s)}.\label{ieq:52}
    \end{align}
    In the following, we prove inequality~\eqref{ieq:52}. The right-hand side of \eqref{ieq:52} can be lower-bounded as
    \begin{align}
        Z^{*}(s)\log\frac{Z^{*}(s)}{Z(s)}
        &\geq \frac{\log B}{B-1}\sum_{a\in\calA}\exp\big(y^{*}(s,a)/\lambda\big)\cdot\bigg[\frac{\sum_{a\in\calA}\exp\big(y^{*}(s,a)/\lambda\big)}{\sum_{a\in\calA}\exp\big(y(s,a)/\lambda\big)}-1\bigg]\nonumber\\
        & \geq \frac{\log B}{(B-1)\lambda}\cdot \sum_{a\in\calA}\exp\big(y(s,a)/\lambda\big)\big(y^{*}(s,a)-y(s,a)\big)\label{ieq:53}, 
    \end{align}
    where $B=\exp(H(1+\lambda\log|\calA|)/\lambda)$, the first inequality results from that $\log B/(B-1) \cdot (x-1)\leq \log x$ for $x\in [1,B]$ and the facts that $y^{*}(s,a)\geq y(s,a)$ and $|y^{*}(s,a)|\leq H(1+\lambda \log |\calA|)$ for all $s\in\calS$ and $a\in\calA$, and the second inequality results from that $\exp(x)-1\geq x$ and that $y^{*}(s,a)\geq y(s,a)$. The left-hand side of inequality~\eqref{ieq:52} can be upper bounded as
    \begin{align}
        &\frac{\gamma^{*}}{\lambda}\bigg\langle \exp\bigg(\frac{y^{*}(s,\cdot)}{\lambda}\bigg),y^{*}(s,\cdot)-y(s,\cdot)\bigg\rangle
        \leq \frac{\gamma^{*}}{\lambda} \cdot B\cdot \sum_{a\in\calA}\exp\big(y(s,a)/\lambda\big)\big(y^{*}(s,a)-y(s,a)\big),\label{ieq:54}
    \end{align}
    where the inequality results from that $y^{*}(s,a)/y(s,a)\leq B$ for all $s\in\calS$ and $a\in\calA$. Combining inequalities~\eqref{ieq:53} and \eqref{ieq:54}, we prove inequality~\eqref{ieq:52} given
    \begin{align*}
        0<\gamma^{*}\leq\frac{\log B}{B(B-1)}.
    \end{align*}
    
    \textbf{Step 2: Proof the desired result from Step 1.}
    
    We define that
    \begin{align*}
        D_{h}=\bbE_{\pi^{*}}[V_{h}^{\lambda}(s_{h},\pi^{*})-V_{h}^{\lambda}(s_{h},\pi)]
    \end{align*}
    for $h\in[H]$. Then Step 1 shows that $D_{h}\geq \gamma^{*}D_{h+1}$ for all $h\in[H]$. Thus, we have
    \begin{align*}
        \frac{\bbE_{\pi^{*}}\big[V_{1}^{\lambda}(s_{1},\pi^{*})-V_{1}^{\lambda}(s_{1},\pi)\big]}{\bbE_{\pi^{*}}\bigg[\sum_{h=2}^{H}V_{h}^{\lambda}(s_{h},\pi^{*})-V_{h}^{\lambda}(s_{h},\pi)\bigg]}=\frac{D_{1}}{\sum_{h=2}^{H}D_{h}}=\frac{1}{\sum_{h=2}^{H}D_{h}/D_{1}}.
    \end{align*}
    For each term, we have that
    \begin{align*}
        \frac{D_{h}}{D_{1}}=\frac{D_{h}}{D_{h-1}}\cdots \frac{D_{2}}{D_{1}}\leq \gamma^{* (2-h)}.
    \end{align*}
    Thus, we have that
    \begin{align*}
         \frac{\bbE_{\pi^{*}}\big[V_{1}^{\lambda}(s_{1},\pi^{*})-V_{1}^{\lambda}(s_{1},\pi)\big]}{\bbE_{\pi^{*}}\bigg[\sum_{h=2}^{H}V_{h}^{\lambda}(s_{h},\pi^{*})-V_{h}^{\lambda}(s_{h},\pi)\bigg]}\geq \frac{(1-\gamma^{*})\gamma^{*(H-2)}}{1-\gamma^{*(H-1)}}=\beta^{*}.
    \end{align*}
    The proof of Proposition~\ref{prop:valuefraction} is complete.
\end{proof}

\subsubsection{Proof of Proposition~\ref{prop:dynamerr}}\label{app:dynamerr}
\begin{proof}[Proof of Proposition~\ref{prop:dynamerr}]

    We first write
    \begin{align}
        \Delta_{t+1}^{\alpha}&=X_{t+1}^{\alpha}-\bbE_{\barpi_{t}^{*,\alpha},\barmu_{t}^{\calI}}\bigg[\sum_{h=1}^{H}V_{h}^{\lambda,\alpha}(s_{h},\barpi_{t}^{*,\alpha},\barmu_{t}^{\calI},W^{*})-V_{h}^{\lambda,\alpha}(s_{h},\pi_{t+1}^{\alpha},\barmu_{t}^{\calI},W^{*})\bigg]\nonumber\\
        &\qquad-\frac{1}{\eta\theta^{*}}\bbE_{\barpi_{t}^{*,\alpha},\barmu_{t}^{\calI}}\bigg[\sum_{h=1}^{H}\kl\big(\barpi_{t,h}^{*,\alpha}(\cdot\,|\,s_{h})\|\pi_{t+1,h}^{\alpha}(\cdot\,|\,s_{h})\big)\bigg]\nonumber\\
        & = \text{(V)}+\text{(VI)}+\text{(VII)}+\text{(VIII)}+\text{(IX)}.\label{ieq:61}
    \end{align}
    Term (V) is the error that measures the difference between the action-value function induced by the optimal policies of $\barmu_{t+1}^{\calI}$ and $\barmu_{t}^{\calI}$, which is defined as
    \begin{align*}
        \text{(V)}=\bbE_{\barpi_{t+1}^{*,\alpha},\barmu_{t+1}^{\calI}}\bigg[\sum_{h=1}^{H}V_{h}^{\lambda,\alpha}(s_{h},\barpi_{t+1}^{*,\alpha},\barmu_{t+1}^{\calI},W^{*})-V_{h}^{\lambda,\alpha}(s_{h},\barpi_{t}^{*,\alpha},\barmu_{t+1}^{\calI},W^{*})\bigg].
    \end{align*}
    
    To upper bound the term (V), we note that the optimal policies $\barmu_{t+1}^{\calI}$ and $\barmu_{t}^{\calI}$ satisfy the following property.
    \begin{proposition}\label{prop:optpolicybound}
        For a $\lambda$-regularized finite-horizon \ac{mdp} $(\calS,\calA,H,\{r_{h}\}_{h=1}^{H},\{P_{h}\}_{h=1}^{H})$ with $r_{h}\in[0,1]$ for all $h\in[H]$, we denote the optimal policy as $\pi^{*}=\{\pi^{*}_{h}\}_{h=1}^{H}$. Then we have that for any $s\in\calS$, and $h\in[H]$
        \begin{align*}
            \min_{a\in\calA}\pi_{h}^{*}(a\,|\,s)\geq \frac{1}{1+|\calA|\exp\big((H-h+1)(1+\lambda\log|\calA|)/\lambda\big)}.
        \end{align*}
    \end{proposition}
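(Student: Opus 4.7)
The plan is to exploit the closed-form expression for the optimal policy of a $\lambda$-regularized MDP. First, I would solve the one-step soft-Bellman maximization at step $h$: setting $y_h(s,a) = r_h(s,a) + \int_{\calS} P_h(s'|s,a) V_{h+1}^{\lambda,*}(s') \rmd s'$, where $V_{h+1}^{\lambda,*}$ is the optimal $\lambda$-regularized value function, the maximizer of $\langle p, y_h(s,\cdot)\rangle - \lambda \langle p, \log p\rangle$ over $p \in \Delta(\calA)$ has the Boltzmann form
\begin{align*}
    \pi_h^*(a|s) = \frac{\exp(y_h(s,a)/\lambda)}{\sum_{a' \in \calA} \exp(y_h(s,a')/\lambda)}.
\end{align*}
This is standard (Lagrangian duality as in the proof of Proposition~\ref{prop:firstordopt}), and reduces the problem to controlling the spread of $y_h(s,\cdot)$.

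Next, I would bound the range of $y_h(s,a)$. Since $r_h \geq 0$ and $-\lambda \log \pi_t(a_t|s_t) \geq 0$ for any policy $\pi_t$, the regularized value is non-negative, so $y_h(s,a) \geq 0$. For the upper bound, using $r_h \leq 1$ and the standard entropy bound $-\langle p, \log p\rangle \leq \log |\calA|$ at each step yields
\begin{align*}
    V_{h+1}^{\lambda,*}(s') \leq (H - h)(1 + \lambda \log |\calA|),
\end{align*}
and hence $y_h(s,a) \leq (H - h + 1)(1 + \lambda \log |\calA|) =: M_h$.

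Finally, I would combine these. Letting $y_{\min} = \min_{a} y_h(s,a)$ and $y_{\max} = \max_a y_h(s,a)$, I separate the $y_{\min}$ term in the softmax denominator:
\begin{align*}
    \sum_{a' \in \calA} \exp(y_h(s,a')/\lambda) \leq \exp(y_{\min}/\lambda) + (|\calA| - 1)\exp(y_{\max}/\lambda),
\end{align*}
so that
\begin{align*}
    \min_{a \in \calA} \pi_h^*(a|s) \geq \frac{1}{1 + (|\calA| - 1)\exp((y_{\max} - y_{\min})/\lambda)} \geq \frac{1}{1 + |\calA| \exp(M_h/\lambda)},
\end{align*}
using $y_{\max} - y_{\min} \leq M_h$ and monotonicity of $x \mapsto 1/(1+x)$. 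This is exactly the claimed bound.

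The argument is essentially routine, so there is no substantial obstacle; the only care required is in bounding the regularized value function using both sides of the entropy inequality (non-negativity from $-\log \pi \geq 0$ and the upper bound from $-\langle p, \log p\rangle \leq \log |\calA|$), and in keeping the $1/(1 + |\calA|e^{M_h/\lambda})$ form by not crudely upper-bounding $\exp(y_{\min}/\lambda)$ in the numerator by the maximum.
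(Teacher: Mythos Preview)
Your proposal is correct and follows essentially the same approach as the paper: derive the Boltzmann closed form for $\pi_h^*$ from the one-step soft-Bellman optimality, then use the bound $V_{h+1}^{\lambda,*}(s')\leq (H-h)(1+\lambda\log|\calA|)$ to control the softmax ratio. The paper's proof is terser---it simply states the closed form and says the result follows from the value-function bound---whereas you spell out the lower bound $y_h\geq 0$ and the separation of the $y_{\min}$ term in the denominator, but the argument is the same.
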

    \begin{proof}[Proof of Proposition~\ref{prop:optpolicybound}]
        See Appendix~\ref{app:optpolicybound}
    \end{proof}
    Then we have that
    \begin{align*}
        &\big|V_{h}^{\lambda,\alpha}(s_{h},\barpi_{t+1}^{*,\alpha},\barmu_{t+1}^{\calI},W^{*})-V_{h}^{\lambda,\alpha}(s_{h},\barpi_{t}^{*,\alpha},\barmu_{t+1}^{\calI},W^{*})\big|\\
        &\quad\leq \big(H(1+\lambda\log|\calA|)+\lambda L_{R}\big)\bbE_{\barpi_{t+1}^{*,\alpha},\barmu_{t+1}^{\calI}}\bigg[\sum_{m=h}^{H}\big\|\barpi_{t+1,m}^{*,\alpha}(\cdot\,|\,s_{m})-\barpi_{t,m}^{*,\alpha}(\cdot\,|\,s_{m})\big\|_{1}\bigg],
    \end{align*}
    where $L_{R}=\log(1+|\calA|\exp\big(H(1+\lambda\log|\calA|)/\lambda)$, the inequality results from the performance difference lemma, Lemma~\ref{lem:pdl}, proposition~\ref{prop:optpolicybound} and Lemma~\ref{lem:rlip}. Thus, we have that
    \begin{align}
        \text{(V)}\leq H\big(H(1+\lambda\log|\calA|)+\lambda L_{R}\big)\bbE_{\barpi_{t+1}^{*,\alpha},\barmu_{t+1}^{\calI}}\bigg[\sum_{m=1}^{H}\big\|\barpi_{t+1,m}^{*,\alpha}(\cdot\,|\,s_{m})-\barpi_{t,m}^{*,\alpha}(\cdot\,|\,s_{m})\big\|_{1}\bigg].\label{ieq:62}
    \end{align}
    
    Term (VI) is the error that measures the difference between the distribution of states induced by optimal policies of $\barmu_{t+1}^{\calI}$ and $\barmu_{t}^{\calI}$, which is defined as
    \begin{align*}
        \text{(VI)}&=\bbE_{\barpi_{t+1}^{*,\alpha},\barmu_{t+1}^{\calI}}-\bbE_{\barpi_{t}^{*,\alpha},\barmu_{t+1}^{\calI}}\bigg[\sum_{h=1}^{H}V_{h}^{\lambda,\alpha}(s_{h},\barpi_{t}^{*,\alpha},\barmu_{t+1}^{\calI},W^{*})-V_{h}^{\lambda,\alpha}(s_{h},\pi_{t+1}^{\alpha},\barmu_{t+1}^{\calI},W^{*})\bigg]\\
        &\qquad+\frac{1}{\eta\theta^{*}}\Big(\bbE_{\barpi_{t+1}^{*,\alpha},\barmu_{t+1}^{\calI}}-\bbE_{\barpi_{t}^{*,\alpha},\barmu_{t+1}^{\calI}}\Big)\bigg[\sum_{h=1}^{H}\kl\big(\barpi_{t,h}^{*,\alpha}(\cdot\,|\,s_{h})\|\pi_{t+1,h}^{\alpha}(\cdot\,|\,s_{h})\big)\bigg].
    \end{align*}
    
    \begin{proposition}\label{prop:difpsamedd}
        Given any two policies $\pi^{\calI},\tilde{\pi}^{\calI}$ and distribution flow $\mu^{\calI}$, we define $\mu^{+,\calI}=\Gamma_{3}(\pi^{\calI},\mu^{\calI},W)$ and $\tilde{\mu}^{+,\calI}=\Gamma_{3}(\tilde{\pi}^{\calI},\mu^{\calI},W)$ for any graphons $W=\{W_{h}\}_{h=1}^{H}$. Then we have 
        \begin{align*}
            \big\|\mu_{h}^{+,\alpha}-\tilde{\mu}_{h}^{+,\alpha}\big\|_{1}\leq \sum_{m=1}^{h-1}\bbE_{\mu_{m}^{+,\alpha}}\Big[\big\|\pi_{m}^{\alpha}(\cdot\,|\,s)-\tilde{\pi}_{m}^{\alpha}(\cdot\,|\,s)\big\|_{1}\Big].
        \end{align*}
        for all $\alpha\in\calI$ and $h\in[H]$.
    \end{proposition}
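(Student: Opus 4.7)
The plan is a one-step induction on $h$, driven by the observation that the two processes $\mu^{+,\calI}$ and $\tilde{\mu}^{+,\calI}$ are generated by transitions that use \emph{exactly the same aggregates} $z_{h}^{\alpha}(\mu_{h}^{\calI},W_{h})$ at every step, because the operator $\Gamma_{3}$ computes the aggregate from the fixed background flow $\mu^{\calI}$ rather than from the evolving $\mu^{+,\calI}$ or $\tilde{\mu}^{+,\calI}$. Consequently the only source of discrepancy at step $h$ is the difference between the policies $\pi_{h}^{\alpha}$ and $\tilde{\pi}_{h}^{\alpha}$ acting on possibly different marginals $\mu_{h}^{+,\alpha}$ and $\tilde{\mu}_{h}^{+,\alpha}$.

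For the base case $h=1$, by the definition of $\Gamma_{3}$ we have $\mu_{1}^{+,\alpha}=\tilde{\mu}_{1}^{+,\alpha}=\mu_{1}^{\alpha}$, so the bound holds trivially (with the empty sum equal to $0$). For the inductive step, first write the one-step recursion, apply the triangle inequality, and use the standard add-and-subtract decomposition
\begin{align*}
\mu_{h}^{+,\alpha}(s)\pi_{h}^{\alpha}(a\,|\,s)-\tilde{\mu}_{h}^{+,\alpha}(s)\tilde{\pi}_{h}^{\alpha}(a\,|\,s)
&= \mu_{h}^{+,\alpha}(s)\bigl[\pi_{h}^{\alpha}(a\,|\,s)-\tilde{\pi}_{h}^{\alpha}(a\,|\,s)\bigr] \\
&\qquad +\bigl[\mu_{h}^{+,\alpha}(s)-\tilde{\mu}_{h}^{+,\alpha}(s)\bigr]\tilde{\pi}_{h}^{\alpha}(a\,|\,s).
\end{align*}
After integrating against $P_{h}(s'\,|\,s,a,z_{h}^{\alpha}(\mu_{h}^{\calI},W_{h}))$ and using $\int_{\calS}P_{h}(s'\,|\,s,a,z)\,\rmd s'=1$ together with $\sum_{a}\tilde{\pi}_{h}^{\alpha}(a\,|\,s)=1$, Fubini's theorem collapses the $s'$-integral and yields
\begin{align*}
\bigl\|\mu_{h+1}^{+,\alpha}-\tilde{\mu}_{h+1}^{+,\alpha}\bigr\|_{1}
\leq \bbE_{\mu_{h}^{+,\alpha}}\!\Bigl[\bigl\|\pi_{h}^{\alpha}(\cdot\,|\,s)-\tilde{\pi}_{h}^{\alpha}(\cdot\,|\,s)\bigr\|_{1}\Bigr] +\bigl\|\mu_{h}^{+,\alpha}-\tilde{\mu}_{h}^{+,\alpha}\bigr\|_{1}.
\end{align*}
Iterating this recursion from the base case gives the claimed bound.

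There is no substantive obstacle here; the main point to verify carefully is the cancellation of the transition kernels, which relies crucially on the fact that $\Gamma_{3}$ freezes the aggregate to be $z_{h}^{\alpha}(\mu_{h}^{\calI},W_{h})$ so the same kernel $P_{h}(\cdot\,|\,s,a,z_{h}^{\alpha}(\mu_{h}^{\calI},W_{h}))$ appears in both one-step updates. Had the aggregate instead depended on the evolving distribution flow (as for $\Gamma_{2}$), an additional Lipschitz term involving $\|\mu_{h}^{+,\calI}-\tilde{\mu}_{h}^{+,\calI}\|$ would appear and the clean bound above would need extra Lipschitz assumptions on $P_{h}$ to close the recursion.
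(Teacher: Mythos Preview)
Your proposal is correct and follows essentially the same approach as the paper's proof: the paper also writes the one-step recursion for $\Gamma_{3}$, applies the same add-and-subtract decomposition of $\mu_{h}^{+,\alpha}(s)\pi_{h}^{\alpha}(a\,|\,s)-\tilde{\mu}_{h}^{+,\alpha}(s)\tilde{\pi}_{h}^{\alpha}(a\,|\,s)$, collapses the $s'$-integral using that both updates share the same kernel $P_{h}(\cdot\,|\,s,a,z_{h}^{\alpha}(\mu_{h}^{\calI},W_{h}))$, and then sums the resulting recursion from the base case $\mu_{1}^{+,\alpha}=\tilde{\mu}_{1}^{+,\alpha}$. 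Your additional remark explaining why freezing the aggregate via $\Gamma_{3}$ is what makes the argument go through without Lipschitz assumptions on $P_{h}$ is a nice clarification not spelled out in the paper.
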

    \begin{proof}[Proof of Proposition~\ref{prop:difpsamedd}]
        See Appendix~\ref{app:difpsamedd}.
    \end{proof}
    
    Proposition~\ref{prop:difpsamedd} shows that
    \begin{align*}
        \text{(VI)}&\leq H\Big(H(1+\lambda\log|\calA|)+\max_{s\in\calS,h\in[H]}\kl\big(\barpi_{t,h}^{*,\alpha}(\cdot\,|\,s)\|\pi_{t+1,h}^{\alpha}(\cdot\,|\,s)\big)\Big)\\
        &\qquad\cdot\bbE_{\barpi_{t+1}^{*,\alpha},\barmu_{t+1}^{\calI}}\bigg[\sum_{m=1}^{H}\big\|\barpi_{t+1,m}^{*,\alpha}(\cdot\,|\,s_{m})-\barpi_{t,m}^{*,\alpha}(\cdot\,|\,s_{m})\big\|_{1}\bigg].
    \end{align*}
    Note that
    \begin{align*}
        \kl\big(\barpi_{t,h}^{*,\alpha}(\cdot\,|\,s)\|\pi_{t+1,h}^{\alpha}(\cdot\,|\,s)\big)\leq \log|\calA|+\log\frac{|\calA|}{\beta_{t+1}}.
    \end{align*}
    Thus, we have
    \begin{align}
        \text{(VI)}&\leq H\bigg(H\big(1+\lambda\log|\calA|\big)+\frac{1}{\eta\theta^{*}}\log\frac{|\calA|^{2}}{\beta_{t+1}}\bigg)\cdot\bbE_{\barpi_{t+1}^{*,\alpha},\barmu_{t+1}^{\calI}}\bigg[\sum_{m=1}^{H}\big\|\barpi_{t+1,m}^{*,\alpha}(\cdot\,|\,s_{m})-\barpi_{t,m}^{*,\alpha}(\cdot\,|\,s_{m})\big\|_{1}\bigg].\label{ieq:60}
    \end{align}
    
    Term (VII) is the error that measures the difference between the distribution of states induced by $\barmu_{t}^{\calI}$ on $\barmu_{t+1}^{\calI}$ and $\barmu_{t}^{\calI}$, which is defined as
    \begin{align*}
        \text{(VII)}&=\bbE_{\barpi_{t}^{*,\alpha},\barmu_{t+1}^{\calI}}-\bbE_{\barpi_{t}^{*,\alpha},\barmu_{t}^{\calI}}\bigg[\sum_{h=1}^{H}V_{h}^{\lambda,\alpha}(s_{h},\barpi_{t}^{*,\alpha},\barmu_{t+1}^{\calI},W^{*})-V_{h}^{\lambda,\alpha}(s_{h},\pi_{t+1}^{\alpha},\barmu_{t+1}^{\calI},W^{*})\bigg]\\
        &\qquad+\frac{1}{\eta\theta^{*}}\Big(\bbE_{\barpi_{t}^{*,\alpha},\barmu_{t+1}^{\calI}}-\bbE_{\barpi_{t}^{*,\alpha},\barmu_{t}^{\calI}}\Big)\bigg[\sum_{h=1}^{H}\kl\big(\barpi_{t,h}^{*,\alpha}(\cdot\,|\,s_{h})\|\pi_{t+1,h}^{\alpha}(\cdot\,|\,s_{h})\big)\bigg].
    \end{align*}
    
    \begin{proposition}\label{prop:difdsamepd}
        Given any policy $\pi^{\calI}$ and two distribution flows $\mu^{\calI}$ and $\tilde{\mu}^{\calI}$, we define $\mu^{+,\calI}=\Gamma_{3}(\pi^{\calI},\mu^{\calI},W^{*})$ and $\tilde{\mu}^{+,\calI}=\Gamma_{3}(\pi^{\calI},\tilde{\mu}^{\calI},W^{*})$. Under Assumption~\ref{assump:lipconti}, we have that
        \begin{align*}
            \big\|\mu_{h}^{+,\alpha}-\tilde{\mu}_{h}^{+,\alpha}\big\|_{1}\leq L_{P}\sum_{m=1}^{h-1}\int_{0}^{1}\|\mu_{m}^{\beta}-\tilde{\mu}_{m}^{\beta}\|_{1}\rmd \beta
        \end{align*}
        for all $\alpha\in\calI$ and $h\in[H]$.
    \end{proposition}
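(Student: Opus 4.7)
The plan is a direct induction on $h \in [H]$, leveraging the fact that $\mu^{+,\calI}$ and $\tilde\mu^{+,\calI}$ share the same policy $\pi^{\calI}$, so the only sources of discrepancy are (i) the already-accumulated difference at the previous time step and (ii) the mismatch between the two aggregates $z_h^\alpha(\mu_h^{\calI}, W_h^*)$ and $z_h^\alpha(\tilde\mu_h^{\calI}, W_h^*)$ that enter the transition kernels. The base case is trivial since, by the definition of $\Gamma_3$, $\mu_1^{+,\alpha} = \tilde\mu_1^{+,\alpha} = \mu_1^\alpha$ for all $\alpha\in\calI$.

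For the inductive step, I would write the one-step recursion using the definition of $\Gamma_3$ and add/subtract a coupling term to split the error:
\begin{align*}
\big\|\mu_{h+1}^{+,\alpha} - \tilde\mu_{h+1}^{+,\alpha}\big\|_1
&\leq \int_{\calS}\sum_{a\in\calA}\big|\mu_h^{+,\alpha}(s) - \tilde\mu_h^{+,\alpha}(s)\big|\,\pi_h^\alpha(a\,|\,s)\,\rmd s \\
&\quad + \int_\calS \sum_{a\in\calA}\tilde\mu_h^{+,\alpha}(s)\,\pi_h^\alpha(a\,|\,s)\,\tv\!\big(P_h(\cdot\,|\,s,a,z_h^\alpha(\mu_h^{\calI},W_h^*)),\,P_h(\cdot\,|\,s,a,z_h^\alpha(\tilde\mu_h^{\calI},W_h^*))\big)\,\rmd s.
\end{align*}
The first term reduces to $\|\mu_h^{+,\alpha} - \tilde\mu_h^{+,\alpha}\|_1$ after summing over $a$. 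For the second term, Assumption~\ref{assump:lipconti} yields a factor $L_P \,\|z_h^\alpha(\mu_h^{\calI},W_h^*) - z_h^\alpha(\tilde\mu_h^{\calI},W_h^*)\|_1$, and then the explicit expression~\eqref{eq:1} together with $0 \leq W_h^*(\alpha,\beta) \leq 1$ gives
\begin{align*}
\big\|z_h^\alpha(\mu_h^{\calI},W_h^*) - z_h^\alpha(\tilde\mu_h^{\calI},W_h^*)\big\|_1 \leq \int_0^1 W_h^*(\alpha,\beta)\,\|\mu_h^\beta - \tilde\mu_h^\beta\|_1\,\rmd\beta \leq \int_0^1 \|\mu_h^\beta - \tilde\mu_h^\beta\|_1\,\rmd\beta.
\end{align*}

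Combining the two pieces gives the one-step recurrence
\begin{align*}
\big\|\mu_{h+1}^{+,\alpha} - \tilde\mu_{h+1}^{+,\alpha}\big\|_1 \leq \big\|\mu_h^{+,\alpha} - \tilde\mu_h^{+,\alpha}\big\|_1 + L_P \int_0^1 \|\mu_h^\beta - \tilde\mu_h^\beta\|_1\,\rmd\beta,
\end{align*}
which I would unroll from $m=1$ to $h-1$ and invoke the base case $\mu_1^{+,\alpha} = \tilde\mu_1^{+,\alpha}$ to obtain the claimed inequality. No step looks truly difficult here; the only mild care needed is the clean splitting of the error into a ``transport'' part and a ``kernel perturbation'' part, and ensuring that the graphon-weighted integral is correctly bounded by the unweighted $L_1$ discrepancy between the distribution flows---everything else follows from Assumption~\ref{assump:lipconti} and bookkeeping.
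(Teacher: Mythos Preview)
Your proposal is correct and follows essentially the same route as the paper: split the one-step update into a ``transport'' part (which collapses to $\|\mu_h^{+,\alpha}-\tilde\mu_h^{+,\alpha}\|_1$) and a ``kernel perturbation'' part (handled by Assumption~\ref{assump:lipconti}), bound the aggregate difference via $|W_h^*|\le 1$, and then telescope from the common initialization $\mu_1^{+,\alpha}=\tilde\mu_1^{+,\alpha}$. The paper's proof is identical in structure, differing only in notation (it writes out the $L_1$ integrals explicitly instead of using $\tv$).
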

    \begin{proof}[Proof of Proposition~\ref{prop:difdsamepd}]
        See Appendix~\ref{app:difdsamepd}.
    \end{proof}
    Following the similar arguments in inequality~\eqref{ieq:60}, we have that
    \begin{align}
        \text{(VII)}\leq H\bigg(H\big(1+\lambda\log|\calA|\big)+\frac{1}{\eta\theta^{*}}\log\frac{|\calA|^{2}}{\beta_{t+1}}\bigg)L_{P}\cdot\sum_{m=1}^{H}\int_{0}^{1}\|\barmu_{t+1,m}^{\beta}-\barmu_{t,m}^{\beta}\|_{1}\rmd \beta.\label{ieq:63}
    \end{align}
    
     Term (VIII) is the error that measures the difference between the action-value function induced by difference distribution flows $\barmu_{t+1}^{\calI}$ and $\barmu_{t}^{\calI}$, which is defined as
     \begin{align*}
         \text{(VIII)}&=\bbE_{\barpi_{t}^{*,\alpha},\barmu_{t}^{\calI}}\bigg[\sum_{h=1}^{H}V_{h}^{\lambda,\alpha}(s_{h},\barpi_{t}^{*,\alpha},\barmu_{t+1}^{\calI},W^{*})-V_{h}^{\lambda,\alpha}(s_{h},\pi_{t+1}^{\alpha},\barmu_{t+1}^{\calI},W^{*})\bigg]\\
         &\qquad -\bbE_{\barpi_{t}^{*,\alpha},\barmu_{t}^{\calI}}\bigg[\sum_{h=1}^{H}V_{h}^{\lambda,\alpha}(s_{h},\barpi_{t}^{*,\alpha},\barmu_{t}^{\calI},W^{*})-V_{h}^{\lambda,\alpha}(s_{h},\pi_{t+1}^{\alpha},\barmu_{t}^{\calI},W^{*})\bigg]
     \end{align*}
    From Proposition~\ref{prop:difdsamepv}, we have that
    \begin{align}
        \text{(VIII)}&\leq 2H\big[L_{r}+H(1+\lambda\log|\calA|)L_{P}\big]\sum_{m=1}^{H}\int_{0}^{1}\|\mu_{m}^{\beta}-\tilde{\mu}_{m}^{\beta}\|_{1}\rmd\beta.\label{ieq:64}
    \end{align}
    
    Term (IX) is the error that measures the difference between the KL divergence related to the optimal policies of $\barmu_{t+1}^{\calI}$ and $\barmu_{t}^{\calI}$, which is defined as
    \begin{align*}
        \text{(IX)}=\frac{1}{\eta\theta^{*}}\bbE_{\barpi_{t+1}^{*,\alpha},\barmu_{t+1}^{\calI}}\bigg[\sum_{h=1}^{H}\kl\big(\barpi_{t+1,h}^{*,\alpha}(\cdot\,|\,s_{h})\|\pi_{t+1,h}^{\alpha}(\cdot\,|\,s_{h})\big)-\kl\big(\barpi_{t,h}^{*,\alpha}(\cdot\,|\,s_{h})\|\pi_{t+1,h}^{\alpha}(\cdot\,|\,s_{h})\big)\bigg].
    \end{align*}
    Lemma~\ref{lem:kllip} and Proposition~\ref{prop:optpolicybound} show that
    \begin{align}
        \text{(IX)}\leq \frac{2}{\eta\theta^{*}}\max\bigg\{\log\frac{|\calA|}{\beta_{t+1}},L_{R}\bigg\}\bbE_{\barpi_{t+1}^{*,\alpha},\barmu_{t+1}^{\calI}}\bigg[\sum_{m=1}^{H}\big\|\barpi_{t+1,m}^{*,\alpha}(\cdot\,|\,s_{m})-\barpi_{t,m}^{*,\alpha}(\cdot\,|\,s_{m})\big\|_{1}\bigg].\label{ieq:65}
    \end{align}
    Combining Eqn.~\eqref{ieq:61} and inequalities~\eqref{ieq:62}, \eqref{ieq:60}, \eqref{ieq:63}, \eqref{ieq:64}, \eqref{ieq:65}, we conclude the proof of this proposition.
\end{proof}
\subsubsection{Proof of Proposition~\ref{prop:optpolicybound}}\label{app:optpolicybound}
\begin{proof}[Proof of Proposition~\ref{prop:optpolicybound}]
    We denote the value function of the optimal policy $\pi^{*}$ as $V_{h}^{\lambda}(s,\pi^{*})$ for $h\in[H]$. From the definition of the optimal policy, we have that for any $s\in\calS$
    \begin{align*}
        \pi_{h}^{*}(\cdot\,|\,s)=\argmax_{p\in\Delta(\calA)}\langle r_{h}(s,\cdot),p\rangle -\lambda R(p)+\sum_{a\in\calA}\int_{\calS}p(a)P_{h}(s^{\prime}\,|\,s,a)V_{h+1}^{\lambda}(s^{\prime},\pi^{*})\rmd s^{\prime}.
    \end{align*}
    Then we have that
    \begin{align*}
        \pi_{h}^{*}(a\,|\,s)\propto \exp\bigg(\frac{1}{\lambda}\Big(r_{h}(s,a)+\int_{\calS}P_{h}(s^{\prime}\,|\,s,a)V_{h+1}^{\lambda}(s^{\prime},\pi^{*})\rmd s^{\prime}\Big)\bigg).
    \end{align*}
    The desired result follows from that $V_{h}^{\lambda}(s^{\prime},\pi^{*})\leq (H-h+1)(1+\lambda\log|\calA|)$. Thus, we conclude the proof of Proposition~\ref{prop:optpolicybound}.
\end{proof}

\subsubsection{Proof of Proposition~\ref{prop:difpsamedd}}\label{app:difpsamedd}
\begin{proof}[Proof of Proposition~\ref{prop:difpsamedd}]
    For any $h\in[H-1]$ and $\alpha\in\calI$, we have
    \begin{align*}
        &\big\|\mu_{h+1}^{+,\alpha}-\tilde{\mu}_{h+1}^{+,\alpha}\big\|_{1}\\
        &\quad\leq \sum_{a\in\calA}\int_{\calS}\int_{\calS}\big|\mu_{h}^{+,\alpha}(s)\pi_{h}^{\alpha}(a\,|\,s)-\tilde{\mu}_{h}^{+,\alpha}(s)\tilde{\pi}_{h}^{\alpha}(a\,|\,s) \big|P_{h}\big(s^{\prime}\,|\,s,a,z_{h}^{\alpha}(\mu_{h}^{\calI},W_{h})\big)\rmd s\rmd s^{\prime}\\
        &\quad\leq \sum_{a}\int_{S}\mu_{h}^{+,\alpha}(s)\big|\pi_{h}^{\alpha}(a\,|\,s)-\tilde{\pi}_{h}^{\alpha}(a\,|\,s)\big|\rmd s+\sum_{a}\int_{S}\big|\mu_{h}^{+,\alpha}(s)-\tilde{\mu}_{h}^{+,\alpha}(s)\big|\tilde{\pi}_{h}^{\alpha}(a\,|\,s)\rmd s\\
        &\quad =\bbE_{\mu_{h}^{+,\alpha}}\Big[\big\|\pi_{h}^{\alpha}(\cdot\,|\,s)-\tilde{\pi}_{h}^{\alpha}(\cdot\,|\,s)\big\|_{1}\Big]+\big\|\mu_{h}^{+,\alpha}-\tilde{\mu}_{h}^{+,\alpha}\big\|_{1},
    \end{align*}
    where the first inequality results from the definition of $\Gamma_{3}$ and the triangle inequality, and the second inequality results from the triangle inequality. Note that $\mu_{1}^{+,\alpha}=\tilde{\mu}_{1}^{+,\alpha}=\mu_{1}^{\alpha}$. Summing over $h$, we prove the desired result. This completes the proof of Proposition~\ref{prop:difpsamedd}.
\end{proof}
    
\subsubsection{Proof of Proposition~\ref{prop:difdsamepd}}\label{app:difdsamepd}
\begin{proof}[Proof of Proposition~\ref{prop:difdsamepd}]
    For any $h\in[H-1]$, we have that
    \begin{align*}
        &\big\|\mu_{h+1}^{+,\alpha}-\tilde{\mu}_{h+1}^{+,\alpha}\big\|_{1}\\
        &\quad\leq \int_{\calS}\bigg|\sum_{a\in\calA}\int_{\calS}\big(\mu_{h}^{+,\alpha}(s)-\tilde{\mu}_{h}^{+,\alpha}(s)\big)\pi_{h}^{\alpha}(a\,|\,s)P_{h}^{*}\big(s^{\prime}\,|\,s,a,z_{h}^{\alpha}(\mu_{h}^{\calI},W_{h}^{*})\big)\bigg|\rmd s^{\prime}\\
        &\quad\qquad+ \int_{\calS}\bigg|\sum_{a\in\calA}\int_{\calS}\tilde{\mu}_{h}^{+,\alpha}(s)\pi_{h}^{\alpha}(a\,|\,s)\Big(P_{h}^{*}\big(s^{\prime}\,|\,s,a,z_{h}^{\alpha}(\mu_{h}^{\calI},W_{h}^{*})\big)-P_{h}^{*}\big(s^{\prime}\,|\,s,a,z_{h}^{\alpha}(\tilde{\mu}_{h}^{\calI},W_{h}^{*})\big)\Big)\bigg|\rmd s^{\prime}\\
        &\quad \leq \|\mu_{h}^{+,\alpha}-\tilde{\mu}_{h}^{+,\alpha}\|_{1}+L_{P}\big\|z_{h}^{\alpha}(\mu_{h}^{\calI},W_{h}^{*})-z_{h}^{\alpha}(\tilde{\mu}_{h}^{\calI},W_{h}^{*})\big\|_{1},
    \end{align*}
    where the first inequality results from the definition of $\Gamma_{3}$ and triangle inequality, and the second inequality results from Assumption~\ref{assump:lipconti}. For the right-hand side term, we have that
    \begin{align*}
        \big\|z_{h}^{\alpha}(\mu_{h}^{\calI},W_{h}^{*})-z_{h}^{\alpha}(\tilde{\mu}_{h}^{\calI},W_{h}^{*})\big\|_{1}=\int_{\calS}\bigg|\int_{0}^{1}W_{h}^{*}(\alpha,\beta)\big(\mu_{h}^{\beta}(s)-\tilde{\mu}_{h}^{\beta}(s)\big)\rmd\beta\bigg|\rmd s\leq \int_{0}^{1}\|\mu_{h}^{\beta}-\tilde{\mu}_{h}^{\beta}\|_{1}\rmd \beta,
    \end{align*}
    where the inequality results from the triangle inequality and that $|W_{h}^{*}|\leq 1$. Summing over $h$, we prove the desired result. Thus, we conclude the proof of Proposition~\ref{prop:difdsamepd}.
\end{proof}

\subsubsection{Proof of Proposition~\ref{prop:difdsamepv}}\label{app:difdsamepv}
\begin{proof}[Proof of Proposition~\ref{prop:difdsamepv}]
    From the definitions of the value function and the action-value function, we have that for any $h\in[H]$
    \begin{align*}
        Q_{h}^{\lambda,\alpha}(s,a,\pi^{\alpha},\mu^{\calI},W^{*})
        &=r_{h}^{*}\big(s,a,z_{h}^{\alpha}(\mu^{\calI},W^{*})\big)+\int_{\calS}\Big[\sum_{a^{\prime}\in\calA}\pi_{h+1}^{\alpha}(a^{\prime}\,|\,s^{\prime})Q_{h+1}^{\lambda,\alpha}(s^{\prime},a^{\prime},\pi^{\alpha},\mu^{\calI},W^{*})\\
        &\qquad-\lambda R\big(\pi_{h+1}^{\alpha}(\cdot\, | \,s^{\prime})\big)\Big]P_{h}^{*}\big(s^{\prime}\,|\,s,a,z_{h}^{\alpha}(\mu^{\calI},W^{*})\big)\rmd s^{\prime}.
    \end{align*}
    Thus, we have that
    \begin{align*}
        &\big|Q_{h}^{\lambda,\alpha}(s,a,\pi^{\alpha},\mu^{\calI},W^{*})-Q_{h}^{\lambda,\alpha}(s,a,\pi^{\alpha},\tilde{\mu}^{\calI},W^{*})\big|\\
        &\quad\leq \big[L_{r}+H(1+\lambda\log|\calA|)L_{P}\big]\int_{0}^{1}\|\mu_{h}^{\beta}-\tilde{\mu}_{h}^{\beta}\|_{1}\rmd\beta\\
        &\quad\qquad+\sum_{a^{\prime}\in\calA}\int_{\calS}\big|Q_{h+1}^{\lambda,\alpha}(s^{\prime},a^{\prime},\pi^{\calI},\mu^{\calI},W^{*})-Q_{h+1}^{\lambda,\alpha}(s^{\prime},a^{\prime},\pi^{\calI},\tilde{\mu}^{\calI},W^{*})\big|\pi_{h+1}^{\alpha}(a^{\prime}\, | \,s^{\prime})P_{h}^{*}\big(s^{\prime}\,|\,s,a,z_{h}^{\alpha}(\mu^{\calI},W^{*})\big)\rmd s^{\prime},
    \end{align*}
    where the inequality results from the triangle inequality and Assumption~\ref{assump:lipconti}. By induction, it is easy to prove that
    \begin{align*}
        \big|Q_{h}^{\lambda,\alpha}(s,a,\pi^{\alpha},\mu^{\calI},W^{*})-Q_{h}^{\lambda,\alpha}(s,a,\pi^{\alpha},\tilde{\mu}^{\calI},W^{*})\big|\leq \big[L_{r}+H(1+\lambda\log|\calA|)L_{P}\big]\sum_{m=h}^{H}\int_{0}^{1}\|\mu_{m}^{\beta}-\tilde{\mu}_{m}^{\beta}\|_{1}\rmd\beta.
    \end{align*}
    From the relationship between the value function and action-value function, we have that
    \begin{align*}
        \big|V_{h}^{\lambda,\alpha}(s,\pi^{\alpha},\mu^{\calI},W^{*})-V_{h}^{\lambda,\alpha}(s,\pi^{\alpha},\tilde{\mu}^{\calI},W^{*})\big|\leq \big[L_{r}+H(1+\lambda\log|\calA|)L_{P}\big]\sum_{m=h}^{H}\int_{0}^{1}\|\mu_{m}^{\beta}-\tilde{\mu}_{m}^{\beta}\|_{1}\rmd\beta.
    \end{align*}
    Thus, we conclude the proof of Proposition~\ref{prop:difdsamepv}.
\end{proof}

\subsubsection{Proof of Proposition~\ref{prop:optpolicylip}}\label{app:optpolicylip}
\begin{proof}[Proof of Proposition~\ref{prop:optpolicylip}]
        We first prove the claim related to the value function. From the definition of the optimal policy, we have that
        \begin{align*}
            V_{h}^{\lambda,\alpha}(s,\pi^{*,\calI},\mu^{\calI},W^{*})&=\max_{p\in\Delta(\calA)}\big\langle r_{h}\big(s,\cdot,z_{h}^{\alpha}(\mu_{h}^{\calI},W_{h}^{*})\big),p\big\rangle-\lambda R(p)\\
            &\qquad+\sum_{a\in\calA}\int_{\calS}p(a)P_{h}\big(s^{\prime}\,|\,s,a,z_{h}^{\alpha}(\mu_{h}^{\calI},W_{h}^{*})\big)V_{h+1}^{\lambda,\alpha}(s,\pi^{*,\calI},\mu^{\calI},W^{*})\rmd s^{\prime}.
        \end{align*}
        Thus, we have that
        \begin{align*}
            &\big|V_{h}^{\lambda,\alpha}(s,\pi^{*,\calI},\mu^{\calI},W^{*})-V_{h}^{\lambda,\alpha}(s,\tilde{\pi}^{*,\calI},\tilde{\mu}^{\calI},W^{*})\big|\\
            &\quad\leq \big(H(1+\lambda\log|\calA|)L_{P}+L_{r}\big)\int_{0}^{1}\|\mu_{h}^{\beta}-\tilde{\mu}_{h}^{\beta}\|_{1}\rmd\beta+\max_{s\in\calS}\big|V_{h+1}^{\lambda,\alpha}(s,\pi^{*,\calI},\mu^{\calI},W^{*})-V_{h+1}^{\lambda,\alpha}(s,\tilde{\pi}^{*,\calI},\tilde{\mu}^{\calI},W^{*})\big|,
        \end{align*}
        where the inequality results from the fact that $|\max_{x}f(x)-\max_{x}g(x)|\leq \max_{x}|f(x)-g(x)|$ and Assumption~\ref{assump:lipconti}. By induction, it is easy to prove that
        \begin{align*}
            \max_{s\in\calS}\big|V_{h}^{\lambda,\alpha}(s,\pi^{*,\calI},\mu^{\calI},W^{*})-V_{h}^{\lambda,\alpha}(s,\tilde{\pi}^{*,\calI},\tilde{\mu}^{\calI},W^{*})\big|&\leq \big(H(1+\lambda\log|\calA|)L_{P}+L_{r}\big)\sum_{m=h}^{H}\int_{0}^{1}\|\mu_{m}^{\beta}-\tilde{\mu}_{m}^{\beta}\|_{1}\rmd\beta.
        \end{align*}
        
        Next, we prove the claim related to the optimal policies. From the definition of the optimal policies, we have that
        \begin{align*}
            \pi_{h}^{*,\alpha}(\cdot\,|\,s)&=\argmax_{p\in\Delta(\calA)}\big\langle r_{h}\big(s,\cdot,z_{h}^{\alpha}(\mu_{h}^{\calI},W_{h}^{*})\big),p\big\rangle-\lambda R(p)\\
            &\qquad+\sum_{a\in\calA}\int_{\calS}p(a)P_{h}\big(s^{\prime}\,|\,s,a,z_{h}^{\alpha}(\mu_{h}^{\calI},W_{h}^{*})\big)V_{h+1}^{\lambda,\alpha}(s,\pi^{*,\calI},\mu^{\calI},W^{*})\rmd s^{\prime}.
        \end{align*}
        We define that
        \begin{align*}
            y_{h}^{\alpha}(s,a)&=r_{h}\big(s,a,z_{h}^{\alpha}(\mu_{h}^{\calI},W_{h}^{*})\big)+\int_{\calS}P_{h}\big(s^{\prime}\,|\,s,a,z_{h}^{\alpha}(\mu_{h}^{\calI},W_{h}^{*})\big)V_{h+1}^{\lambda,\alpha}(s,\pi^{*,\calI},\mu^{\calI},W^{*})\rmd s^{\prime},\\
            \tily_{h}^{\alpha}(s,a)&=r_{h}\big(s,a,z_{h}^{\alpha}(\tilde{\mu}_{h}^{\calI},W_{h}^{*})\big)+\int_{\calS}P_{h}\big(s^{\prime}\,|\,s,a,z_{h}^{\alpha}(\tilde{\mu}_{h}^{\calI},W_{h}^{*})\big)V_{h+1}^{\lambda,\alpha}(s,\pi^{*,\calI},\tilde{\mu}^{\calI},W^{*})\rmd s^{\prime}.
        \end{align*}
        Then Lemma~\ref{lem:optapprox} shows that
        \begin{align*}
            \big\|\pi_{h}^{*,\alpha}(\cdot\,|\,s)-\tilde{\pi}_{h}^{*,\alpha}(\cdot\,|\,s)\big\|_{1}\leq \big\| y_{h}^{\alpha}(s,\cdot)-\tily_{h}^{\alpha}(s,\cdot)\big\|_{\infty}.
        \end{align*}
        From the triangle inequality and Assumption~\ref{assump:lipconti}, we have that
        \begin{align*}
            &\big\| y_{h}^{\alpha}(s,\cdot)-\tily_{h}^{\alpha}(s,\cdot)\big\|_{\infty}\\
            &\quad\leq \big(H(1+\lambda\log|\calA|)L_{P}+L_{r}\big)\int_{0}^{1}\|\mu_{h}^{\beta}-\tilde{\mu}_{h}^{\beta}\|_{1}\rmd\beta+\big(H(1+\lambda\log|\calA|)L_{P}+L_{r}\big)\sum_{m=h}^{H}\int_{0}^{1}\|\mu_{m}^{\beta}-\tilde{\mu}_{m}^{\beta}\|_{1}\rmd\beta,
        \end{align*}
        which proves the claim related to the optimal policies. Thus, we conclude the proof of Proposition~\ref{prop:optpolicylip}
    \end{proof}

\subsection{Propositions and Lemmas for Combination}
\subsubsection{Proof of Corollary~\ref{coro:knownfixest}}\label{app:knownfixest}
\begin{proof}[Proof of Corollary~\ref{coro:knownfixest}]
        Following the proof of Theorem~\ref{thm:fixest}, we decompose the risk difference as
        \begin{align*}
            &\calR_{\bar{\xi}}(\hatf_{h},\hatg_{h},\hatW_{h})-\calR_{\bar{\xi}}(f_{h}^{*},g_{h}^{*},W_{h}^{*})\nonumber\\
            &\quad=\text{Generalization Error of Risk}+\text{Empirical Risk Difference},
        \end{align*}
        where the generalization error of risk and the empirical risk difference are defined similarly as those in Theorem~\ref{thm:fixest}. From the procedure of Algorithm~\eqref{algo:estalgo}, we have
        \begin{align*}
            \text{Empirical Risk Difference}\leq 0.
        \end{align*}
        The generalization error of risk can be bounded by inequality~\eqref{ieq:32} in the proof of Theorem~\ref{thm:fixest}. Thus, we conclude the proof of Corollary~\ref{coro:knownfixest}.
    \end{proof}

\subsubsection{Proof of Proposition~\ref{prop:muerr}}\label{app:muerr}.
\begin{proof}[Proof of Proposition~\ref{prop:muerr}]
        For any $h\in[H-1]$, the definition of $\Gamma_{2}$ shows that
        \begin{align*}
            \mu_{h+1}^{\alpha}(s^{\prime})=\sum_{a\in\calA}\int_{\calS}P_{h}^{*}(s^{\prime}\,|\,s,a,z_{h}^{\alpha}(\mu_{h}^{\calI},W_{h}^{*}))\mu_{h}^{\alpha}(s)\pi_{h}^{\alpha}(a\,|\,s)\rmd s.
        \end{align*}
        Assumption~\ref{assump:noise} implies that we can bound the total variation between $\mu_{h+1}^{\alpha}$ and $\hmu_{h+1}^{\alpha}$ as 
        \begin{align}
            &\|\mu_{h+1}^{\alpha}-\hmu_{h+1}^{\alpha}\|_{1}\nonumber\\
            &\quad\leq L_{\varepsilon}\bbE_{\rho_{h}^{\alpha}}\Big[\big|\hatf_{h}\big(\omega_{h}^{\alpha}(\hatW_{h})\big)-f_{h}^{*}\big(\omega_{h}^{\alpha}(W_{h}^{*})\big)\big|\Big]+L_{\varepsilon}\bbE_{\rho_{h}^{\alpha}}\Big[\big|\hatf_{h}\big(\omega_{h}^{\alpha}(\hatW_{h})\big)-\hatf_{h}\big(\tilde{\omega}_{h}^{\alpha}(\hatW_{h})\big)\big|\Big] \nonumber\\
            &\quad\qquad+\|\mu_{h}^{\alpha}-\hmu_{h}^{\alpha}\|_{1},\label{ieq:74}
        \end{align}
        where 
        \begin{align*}
            \tilde{\omega}_{h}^{\alpha}(W)&=\int_{0}^{1}\int_{\calS}W(\alpha,\beta)k\big(\cdot,(s_{\tau,h}^{i},a_{\tau,h}^{i},s)\big)\hmu_{h}^{\beta}(s)\,\rmd s\,\rmd\beta.
        \end{align*}
        The first term in the right-hand side of inequality~\eqref{ieq:74} can be upper bounded as
        \begin{align*}
            L_{\varepsilon}\bbE_{\rho_{h}^{\alpha}}\Big[\big|\hatf_{h}\big(\omega_{h}^{\alpha}(\hatW_{h})-f_{h}^{*}\big(\omega_{h}^{\alpha}(W_{h}^{*})\big)\big|\Big]\leq e_{h}^{\pi,\alpha},
        \end{align*}
        where the inequality results from the H\"{o}lder inequality. The second term in the right-hand side of inequality~\eqref{ieq:74} can be upper bounded as
        \begin{align*}
            &L_{\varepsilon}\bbE_{\rho_{h}^{\alpha}}\Big[\big|\hatf_{h}\big(\omega_{h}^{\alpha}(\hatW_{h})\big)-\hatf_{h}\big(\tilde{\omega}_{h}^{\alpha}(\hatW_{h})\big)\big|\Big]\leq rL_{K}L_{\varepsilon}\|\omega_{h}^{\alpha}(\hatW_{h})-\tilde{\omega}_{h}^{\alpha}(\hatW_{h})\|_{\calH}\leq rL_{K}L_{\varepsilon}B_{k}\int_{0}^{1}\|\hmu_{h}^{\beta}-\mu_{h}^{\beta}\|_{1}\rmd\beta,
        \end{align*}
        where the first inequality results from Lemma~\ref{lem:rkhs}, and the second inequality results from the triangle inequality. Thus, we have that
        \begin{align*}
            \|\mu_{h+1}^{\alpha}-\hmu_{h+1}^{\alpha}\|_{1}\leq \|\mu_{h}^{\alpha}-\hmu_{h}^{\alpha}\|_{1}+rL_{K}L_{\varepsilon}B_{k}\int_{0}^{1}\|\hmu_{h}^{\beta}-\mu_{h}^{\beta}\|_{1}\rmd\beta+e_{h}^{\pi,\alpha}.
        \end{align*}
        By induction, it is easy to prove that
        \begin{align*}
            \|\mu_{h}^{\alpha}-\hmu_{h}^{\alpha}\|_{1}\leq \sum_{m=1}^{h-2}\sum_{k=0}^{h-m-2}(1+rL_{K}L_{\varepsilon}B_{k})^{k}\int_{0}^{1}e_{m}^{\pi,\beta}\rmd\beta+\sum_{m=1}^{h-1}e_{m}^{\pi,\alpha},
        \end{align*}
        which proves our desired results. Thus, we conclude the proof of Proposition~\ref{prop:muerr}.
    \end{proof}

\subsubsection{Proof of Proposition~\ref{prop:qerr}}\label{app:qerr}    
\begin{proof}[Proof of Proposition~\ref{prop:qerr}]
        From the definition of the action-value function, we have that 
        \begin{align}
            Q_{h}^{\lambda,\alpha}(s,a,\pi^{\alpha},\mu^{\calI},W^{*})&=r_{h}(s,a,z_{h}^{\alpha}(\mu^{\calI},W_{h}^{*}))+\bbE\big[V_{h+1}^{\lambda,\alpha}(s^{\prime},\pi^{\alpha},\mu^{\calI},W^{*})\,\big|\,s_{h}^{\alpha}=s,a_{h}^{\alpha}=a\big]\nonumber\\
            V_{h}^{\lambda,\alpha}(s^{\prime},\pi^{\alpha},\mu^{\calI},W^{*})&=\big\langle Q_{h}^{\lambda,\alpha}(s,\cdot,\pi^{\alpha},\mu^{\calI},W^{*}),\pi_{h}^{\alpha}(\cdot\,|\,s)\big\rangle+\lambda R(\pi_{h}^{\alpha}\big(\cdot\,|\,s)\big).\label{eq:22}
        \end{align}
        Thus for any $h\in[H]$, we have that
        \begin{align*}
            &\bbE_{\rho_{h}^{+,\alpha}}\Big[\Big|\hatQ_{h}^{\lambda,\alpha}(s,a,\pi^{\alpha},\mu^{\calI},\hatW)-Q_{h}^{\lambda,\alpha}(s,a,\pi^{\alpha},\mu^{\calI},W^{*})\Big|\Big]\\
            &\quad\leq \bbE_{\rho_{h}^{+,\alpha}}\Big[\Big|\hatr_{h}(s,a,z_{h}^{\alpha}(\mu^{\calI},\hatW_{h}))-r_{h}^{*}(s,a,z_{h}^{\alpha}(\mu^{\calI},W_{h}^{*}))\Big|\Big]\\
            &\quad\qquad+(H-h)(1+\lambda\log|\calA|)\bbE_{\rho_{h}^{+,\alpha}}\Big[\Big\|\hatP_{h}(\cdot\,|\,s,a,z_{h}^{\alpha}(\mu^{\calI},\hatW_{h}))-P_{h}^{*}(\cdot\,|\,s,a,z_{h}^{\alpha}(\mu^{\calI},W_{h}^{*}))\Big\|_{1}\Big]\\
            &\quad\qquad+C\cdot\bbE_{\rho_{h+1}^{\rmb,\alpha}}\Big[\Big|\hatQ_{h+1}^{\lambda,\alpha}(s,a,\pi^{\alpha},\mu^{\calI},\hatW)-Q_{h+1}^{\lambda,\alpha}(s,a,\pi^{\alpha},\mu^{\calI},W^{*})\Big|\Big],
        \end{align*}
        where the inequality results from the triangle inequality and Eqn.~\eqref{eq:22}. Since $\hatQ_{H+1}^{\lambda,\alpha}=Q_{H+1}^{\lambda,\alpha}=0$, we have that
        \begin{align*}
            &\bbE_{\rho_{h}^{\rmb,\alpha}}\Big[\Big|\hatQ_{h}^{\lambda,\alpha}(s,a,\pi^{\alpha},\mu^{\calI},\hatW)-Q_{h}^{\lambda,\alpha}(s,a,\pi^{\alpha},\mu^{\calI},W^{*})\Big|\Big]\\
            &\quad\leq \sum_{m=h}^{H}\bbE_{\rho_{m}^{\rmb,\alpha}}\Big[\Big|\hatg_{m}\big(\omega_{h}^{\alpha}(\hatW_{m})\big)-g_{m}^{*}\big(\omega_{h}^{\alpha}(W_{m}^{*})\big)\Big|\Big]\\
            &\quad\qquad+L_{\varepsilon}H(1+\lambda\log|\calA|)\sum_{m=h}^{H}\sqrt{\bbE_{\rho_{m}^{\rmb,\alpha}}\Big[\Big|\hatf_{m}\big(\omega_{h}^{\alpha}(\hatW_{m})\big)-f_{m}^{*}\big(\omega_{h}^{\alpha}(W_{m}^{*})\big)\Big|\Big]},
        \end{align*}
        where the inequality results from Assumption~\ref{assump:noise}. Our desired result follows from the H\"{o}lder inequality. Thus, we conclude the proof of Proposition~\ref{prop:qerr}.
    \end{proof}
    
\subsubsection{Proof of Corollary~\ref{coro:unknownfixest}}\label{app:unknownfixest}
\begin{proof}[Proof of Corollary~\ref{coro:unknownfixest}]
        We proof mainly takes two steps:
        \begin{itemize}
            \item Reformulate the algorithm~\eqref{eq:ufixdistest}.
            \item Decompose the risk difference and control each terms.
        \end{itemize}
        
        \textbf{Step 1: Reformulate the algorithm~\eqref{eq:ufixdistest}.}
        
        From the definition $\phi^{*}$, we have that for $\alpha\in((i-1)/N,i/N]$
        \begin{align*}
            \tilmu_{\tau,h}^{\alpha}=\mu_{\tau,h}^{\phi^{*}(i/N)+\alpha-i/N}=\mu_{\tau,h}^{\phi^{*}(\alpha)},
        \end{align*}
        where the first equality results from the definition of $\tilmu_{\tau}^{\calI}$, and the second equality results from that $\alpha\in((i-1)/N,i/N]$ and $\phi^{*}$ is a permutation of $\{((i-1)/N,i/N]\}_{i=1}^{N}$. Thus, we have that
        \begin{align*}
            \tilomega_{\tau,h}^{i}(W^{\phi})&=\int_{0}^{1}\int_{\calS}W\big(\phi(i/N),\phi(\beta)\big)k\big(\cdot,(s_{\tau,h}^{i},a_{\tau,h}^{i},s)\big)\mu_{\tau,h}^{\phi^{*}(\beta)}(s)\,\rmd s\,\rmd\beta\\
            &=\int_{0}^{1}\int_{\calS}W\big(\phi\circ\psi^{*}\circ\phi^{*}(i/N),\phi\circ\psi^{*}\circ\phi^{*}(\beta)\big)k\big(\cdot,(s_{\tau,h}^{i},a_{\tau,h}^{i},s)\big)\mu_{\tau,h}^{\phi^{*}(\beta)}(s)\,\rmd s\,\rmd\beta\\
            &=\int_{0}^{1}\int_{\calS}W\big(\phi\circ\psi^{*}(\xi_{i}),\phi\circ\psi^{*}(\gamma)\big)k\big(\cdot,(s_{\tau,h}^{i},a_{\tau,h}^{i},s)\big)\mu_{\tau,h}^{\gamma}(s)\,\rmd s\,\rmd\gamma\\
            &=\omega_{\tau,h}^{i}(W^{\phi\circ\psi^{*}}),
        \end{align*}
        where the second equality results from that $\phi^{*}$ is the inverse functio of $\psi^{*}$, and the third equality results from taking $\gamma=\phi^{*}(\beta)$ and $\phi^{*}(i/N)=\xi_{i}$. Thus, Algorithm~\eqref{eq:ufixdistest} can be equivalent formulated as
        \begin{align}
            &(\hatf_{h},\hatg_{h},\hatW_{h},\hphi_{h})\nonumber\\
            &\quad=\argmin_{f\in\bbB(r,\bar{\calH}),g\in\bbB(\tilr,\tilde{\calH}),W\in\tilde{\calW},\phi\in\calC_{[0,1]}^{N}} \frac{1}{NL}\sum_{\tau=1}^{L}\sum_{i=1}^{N}\Big(s_{\tau,h+1}^{i}-f\big(\omega_{\tau,h}^{i}(W^{\phi\circ\psi^{*}})\big)\Big)^{2}+\Big(r_{\tau,h}^{i}-g\big(\omega_{\tau,h}^{i}(W^{\phi\circ\psi^{*}})\big)\Big)^{2}.\label{eq:ufixdistest2}
        \end{align}
        
        \textbf{Step 2: Decompose the risk difference and control each terms.}
        
        \begin{align*}
            \calR_{\bar{\xi}}(\hatf_{h},\hatg_{h},\hatW_{h}^{\hphi_{h}\circ\psi^{*}})-\calR_{\bar{\xi}}(f_{h}^{*},g_{h}^{*},W_{h}^{*})=\text{Generalization Error of Risk}+\text{Empirical Risk Difference},
        \end{align*}
        where the generalization error of risk and the empirical risk difference are defined as
        \begin{align*}
            &\text{Generalization Error of Risk}\\
            &\quad=\frac{1}{NL}\sum_{\tau=1}^{L}\sum_{i=1}^{N}\bbE_{\rho_{\tau,h}^{i}}\bigg[\Big(s_{\tau,h+1}^{i}-\hatf_{h}\big(\omega_{\tau,h}^{i}(\hatW_{h}^{\hphi_{h}\circ\psi^{*}})\big)\Big)^{2}-\Big(s_{\tau,h+1}^{i}-f_{h}^{*}\big(\omega_{\tau,h}^{i}(W_{h}^{*})\big)\Big)^{2}\bigg]\nonumber\\
            &\quad\qquad -2\frac{1}{NL}\sum_{\tau=1}^{L}\sum_{i=1}^{N}\Big(s_{\tau,h+1}^{i}-\hatf_{h}\big(\omega_{\tau,h}^{i}(\hatW_{h}^{\hphi_{h}\circ\psi^{*}})\big)\Big)^{2}-\Big(s_{\tau,h+1}^{i}-f_{h}^{*}\big(\omega_{\tau,h}^{i}(W_{h}^{*})\big)\Big)^{2}\nonumber\\
            &\quad\qquad+\frac{1}{NL}\sum_{\tau=1}^{L}\sum_{i=1}^{N}\bbE_{\rho_{\tau,h}^{i}}\bigg[\Big(r_{\tau,h}^{i}-\hatg_{h}\big(\omega_{\tau,h}^{i}(\hatW_{h}^{\hphi_{h}\circ\psi^{*}})\big)\Big)^{2}-\Big(r_{\tau,h}^{i}-g_{h}^{*}\big(\omega_{\tau,h}^{i}(W_{h}^{*})\big)\Big)^{2}\bigg]\nonumber\\
            &\quad\qquad -2\frac{1}{NL}\sum_{\tau=1}^{L}\sum_{i=1}^{N}\Big(r_{\tau,h}^{i}-\hatg_{h}\big(\omega_{\tau,h}^{i}(\hatW_{h}^{\hphi_{h}\circ\psi^{*}})\big)\Big)^{2}-\Big(r_{\tau,h}^{i}-g_{h}^{*}\big(\omega_{\tau,h}^{i}(W_{h}^{*})\big)\Big)^{2},\nonumber\\
            &\text{Empirical Risk Difference}\\
            &\quad=2\frac{1}{NL}\sum_{\tau=1}^{L}\sum_{i=1}^{N}\Big(s_{\tau,h+1}^{i}-\hatf_{h}\big(\omega_{\tau,h}^{i}(\hatW_{h}^{\hphi_{h}\circ\psi^{*}})\big)\Big)^{2}-\Big(s_{\tau,h+1}^{i}-f_{h}^{*}\big(\omega_{\tau,h}^{i}(W_{h}^{*})\big)\Big)^{2}\\
            &\quad\qquad+2\frac{1}{NL}\sum_{\tau=1}^{L}\sum_{i=1}^{N}\Big(r_{\tau,h}^{i}-\hatg_{h}\big(\omega_{\tau,h}^{i}(\hatW_{h}^{\hphi_{h}\circ\psi^{*}})\big)\Big)^{2}-\Big(r_{\tau,h}^{i}-g_{h}^{*}\big(\omega_{\tau,h}^{i}(W_{h}^{*})\big)\Big)^{2}.
        \end{align*}
        From the procedure of Algorithm~\eqref{eq:ufixdistest2}, we have
        \begin{align*}
            \text{Empirical Risk Difference}\leq 0.
        \end{align*}
        The generalization error of risk can be controlled exactly as the proof of Theorem~\ref{thm:ufixest}. Thus, we conclude the proof of Corollary~\ref{coro:unknownfixest}.
    \end{proof}

\subsection{Technical Lemmas}
\begin{lemma}\label{lem:monoentro}
    For a finite alphabet $\calX$ and any distribution $p$ supported on it, we define $p_{\beta}=(1-\beta)p+\beta\unif(\calX)$. Then the function $f(\beta)=R(p_{\beta})$ is a decreasing function on $\beta\in[0,1]$.
\end{lemma}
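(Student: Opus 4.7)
\textbf{Proof proposal for Lemma~\ref{lem:monoentro}.} The plan is to reduce the claim to two classical facts about Shannon entropy $H(p)=-R(p)$: (i) concavity of $H$, and (ii) the uniform distribution maximizes entropy on a finite alphabet, i.e.\ $H(p)\le H(\unif(\calX))=\log|\calX|$ for every $p\in\Delta(\calX)$. Once these are in hand, the lemma will follow by a one-line convex-combination argument.

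First I would fix $0\le\beta_1<\beta_2\le 1$ and try to express $p_{\beta_2}$ as an explicit convex combination of $p_{\beta_1}$ and the uniform law $q=\unif(\calX)$. A direct computation shows
\begin{align*}
p_{\beta_2}=(1-\gamma)\,p_{\beta_1}+\gamma\, q,\qquad\text{with }\gamma=\frac{\beta_2-\beta_1}{1-\beta_1}\in[0,1],
\end{align*}
(and the degenerate case $\beta_1=1$ forces $\beta_2=1$, giving equality trivially). Then by concavity of $H$,
\begin{align*}
H(p_{\beta_2})\;\ge\;(1-\gamma)H(p_{\beta_1})+\gamma H(q)\;\ge\;(1-\gamma)H(p_{\beta_1})+\gamma H(p_{\beta_1})\;=\;H(p_{\beta_1}),
\end{align*}
where the second inequality uses $H(q)\ge H(p_{\beta_1})$. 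Negating both sides yields $R(p_{\beta_2})\le R(p_{\beta_1})$, which is the desired monotonicity.

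As a sanity check I would also verify the conclusion via a derivative computation: writing $f(\beta)=\sum_x p_\beta(x)\log p_\beta(x)$ and using $\sum_x(q(x)-p(x))=0$, one gets $f'(\beta)=\sum_x(q(x)-p(x))\log p_\beta(x)$. Substituting $q-p=(q-p_\beta)/(1-\beta)$ for $\beta<1$ rewrites this as
\begin{align*}
f'(\beta)=\frac{1}{1-\beta}\bigl[H(p_\beta)-H(q)-\kl(q\|p_\beta)\bigr]\le 0,
\end{align*}
since $H(p_\beta)\le H(q)$ and $\kl(q\|p_\beta)\ge 0$; continuity at $\beta=1$ handles the endpoint. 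There is no real obstacle here — the main thing to be careful about is choosing the right convex-combination parametrization so that the two ingredients (concavity of $H$ and maximality of $H(q)$) can be invoked together; the rest is bookkeeping.
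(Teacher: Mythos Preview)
Your argument is correct. Both the convex-combination route and the derivative sanity check are valid.

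The paper's proof takes a slightly different calculus route: it observes that $f''(\beta)=\sum_{x}(q(x)-p(x))^2/p_\beta(x)\ge 0$ so $f$ is convex on $[0,1]$, and then computes $f'(1)=\sum_x(q(x)-p(x))\log q(x)=0$ (since $q$ is uniform); convexity forces $f'(\beta)\le f'(1)=0$ throughout, giving monotonicity. Your main argument instead works directly with the concavity of Shannon entropy together with the fact that the uniform law maximizes it, avoiding any derivative computation; this is a cleaner, self-contained convexity proof that does not need to evaluate $f'$ at the endpoint or worry about differentiability when some coordinate of $p$ vanishes at $\beta=0$. Your derivative ``sanity check'' actually reproduces the paper's idea in a different guise: rather than using $f''\ge 0$ and $f'(1)=0$, you rewrite $f'(\beta)$ as $(1-\beta)^{-1}[H(p_\beta)-H(q)-\kl(q\|p_\beta)]$ and invoke $H(p_\beta)\le H(q)$ directly, which is the same maximality ingredient as in your first argument. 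Either path is fine; yours is marginally more robust at the $\beta=0$ boundary.
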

\begin{proof}[Proof of Lemma~\ref{lem:monoentro}]
    From the calculus we can show that
    \begin{align*}
        f^{\prime\prime}(\beta)\geq 0 \text{ for }\beta\in[0,1].
    \end{align*}
    Since $f^{\prime}(1)=0$, we have that $f^{\prime}(\beta)\leq 0$ for $\beta\in[0,1]$. Thus, we conclude the proof of Lemma~\ref{lem:monoentro}.
\end{proof}
\begin{lemma}[Theorem 3.5 in~\citet{pinelis1994optimum}]\label{lem:hilconcen}
    Let $X_{1},\cdots,X_{n}$ be independent random variables that take values in a Hilbert space. If $\|X_{i}\|\leq M$ and $\bbE[X_{i}]=0$ for all $i\in[n]$. Then $\bbP(\|X_{1}+\cdots+X_{n}\|\geq t)\leq 2\exp(-t^{2}/(2nM))$.
\end{lemma}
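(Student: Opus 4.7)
The plan is to emulate the scalar Hoeffding--Chernoff argument in the Hilbert-space setting, following the blueprint of \citet{pinelis1994optimum}. Let $S_n = \sum_{i=1}^n X_i$ and let $\calF_k = \sigma(X_1,\dots,X_k)$. For $\lambda>0$ I would aim to control the exponential moment of $\|S_n\|$ and then apply Markov's inequality. Concretely, after proving
\[
\bbE\bigl[\cosh(\lambda \|S_n\|)\bigr] \leq \exp\!\bigl(\tfrac{1}{2}\lambda^2 n M^2\bigr),
\]
the elementary bound $\cosh(u) \geq \tfrac{1}{2}e^{|u|}$ together with Markov gives $\bbP(\|S_n\| \geq t) \leq 2 e^{-\lambda t + \lambda^2 n M^2/2}$, and optimizing over $\lambda = t/(nM^2)$ yields the claim.

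The engine of the argument is a one-step conditional submartingale inequality
\[
\bbE\bigl[\cosh(\lambda \|S_n\|) \,\big|\, \calF_{n-1}\bigr] \leq \cosh\!\bigl(\lambda \|S_{n-1}\|\bigr)\, \exp\!\bigl(\tfrac{1}{2}\lambda^2 M^2\bigr),
\]
from which the full bound follows by iterating $n$ times and observing $\cosh(0)=1$. To derive it, I would reduce the vector step to a scalar one using the parallelogram identity: writing $u = S_{n-1}/\|S_{n-1}\|$, $\alpha = \langle X_n, u\rangle$, and $Y = X_n - \alpha u$, we have $\|S_{n-1}+X_n\|^2 = (\|S_{n-1}\|+\alpha)^2 + \|Y\|^2$. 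Since $\bbE[\alpha \mid \calF_{n-1}] = 0$ (by mean-zero of $X_n$ and independence) and $|\alpha|, \|Y\| \leq M$, the problem collapses to bounding the conditional moment of $\cosh(\lambda\sqrt{(\|S_{n-1}\|+\alpha)^2+\|Y\|^2})$, which can be controlled by the scalar Hoeffding lemma applied to the bounded mean-zero variable $\alpha$ plus a small correction coming from $\|Y\|^2$.

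The main obstacle is handling the nonlinearity of $\|\cdot\|$, in particular the singular behavior near the origin that appears if one naively linearizes $\sqrt{\cdot}$. Pinelis sidesteps this by observing that $t \mapsto \cosh(\lambda t)$, extended to an even function on $\bbR$, is convex and 2-smooth, and by proving the comparison inequality against a scalar random walk with the same $L^\infty$ budget directly at the level of $\cosh$, rather than at the level of $\|\cdot\|$ itself; this comparison is the key lemma (Theorem 3.2 in his paper) that feeds into Theorem 3.5. I would therefore spend the bulk of the proof on that comparison step, after which the Chernoff optimization and the final bound are routine. This avoids any delicate handling of the direction $u$ when $S_{n-1} \approx 0$, and recovers the stated sub-Gaussian tail in $t$ with the correct variance proxy $nM^2$.
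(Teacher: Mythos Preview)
The paper does not prove this lemma at all; it is stated with the attribution ``Theorem 3.5 in \citet{pinelis1994optimum}'' and then used as a black box. So there is no paper proof to compare against. Your sketch is a faithful outline of Pinelis's original argument: the $\cosh$-based supermartingale, the one-step conditional bound iterated over $n$, and the final Chernoff optimization are exactly the ingredients of his Theorem~3.5, and your remark that the smoothness of $t\mapsto\cosh(\lambda t)$ is what circumvents the non-differentiability of the norm at the origin is the right conceptual point.

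One minor observation: your optimization over $\lambda$ correctly produces the exponent $-t^{2}/(2nM^{2})$, whereas the lemma as stated in the paper has $-t^{2}/(2nM)$. This is a typo in the paper's statement; its own applications of the lemma (e.g.\ the bound $\exp(-(N-1)t^{2}/(32B_{k}^{2}))$ in the proof of Theorem~\ref{thm:fixest}) use $M^{2}$ in the denominator, consistent with your derivation.
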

\begin{lemma}\label{lem:rkhs}
    In a \ac{rkhs} $\calH$ with kernel $k:\calX\times\calX\rightarrow\bbR$ that satisfies: (i) $k(x,x)\leq B_{k}^{2}$ for all $x\in\calX$. (ii) $\|k(\cdot,x)-k(\cdot,x^{\prime})\|_{\calH}\leq L_{k}\|x-x^{\prime}\|_{\calX}$ for all $x,x^{\prime}\in\calX$. We have that for any $f\in\bbB(r,\calH)$: (i) $|f(x)|\leq rB_{k}$ for all $x\in\calX$. (ii) $|f(x)-f(x^{\prime})|\leq rL_{k}\|x-x^{\prime}\|_{\calX}$ for all $x,x^{\prime}\in\calX$.
\end{lemma}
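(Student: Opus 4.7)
The plan is to invoke the reproducing property of the kernel Hilbert space $\calH$, namely that $f(x) = \langle f, k(\cdot, x)\rangle_{\calH}$ for every $f \in \calH$ and every $x \in \calX$. Both bounds in the lemma will then follow almost immediately from the Cauchy--Schwarz inequality in $\calH$, combined with the two hypotheses on $k$.

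For part (i), I would write $|f(x)| = |\langle f, k(\cdot,x)\rangle_{\calH}| \leq \|f\|_{\calH} \cdot \|k(\cdot,x)\|_{\calH}$. The norm $\|k(\cdot,x)\|_{\calH}$ is exactly $\sqrt{\langle k(\cdot,x), k(\cdot,x)\rangle_{\calH}} = \sqrt{k(x,x)}$ by another application of the reproducing property, and hypothesis (i) on $k$ bounds this by $B_k$. Since $f \in \bbB(r, \calH)$ means $\|f\|_{\calH} \leq r$, we obtain $|f(x)| \leq rB_k$.

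For part (ii), I would apply the reproducing property to the \emph{difference}: $f(x) - f(x') = \langle f, k(\cdot,x) - k(\cdot,x')\rangle_{\calH}$. Cauchy--Schwarz then gives $|f(x) - f(x')| \leq \|f\|_{\calH} \cdot \|k(\cdot,x) - k(\cdot,x')\|_{\calH}$, and hypothesis (ii) on $k$ bounds the second factor by $L_k\|x - x'\|_{\calX}$. Combined with $\|f\|_{\calH} \leq r$, this yields $|f(x) - f(x')| \leq rL_k \|x - x'\|_{\calX}$.

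There is no real obstacle here: the lemma is a textbook consequence of the defining properties of an RKHS, and the only tools needed are the reproducing property and Cauchy--Schwarz. The proof should fit in a handful of lines with no case analysis or nontrivial estimation.
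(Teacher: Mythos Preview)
Your proposal is correct and essentially identical to the paper's proof: both apply the reproducing property $f(x)=\langle f,k(\cdot,x)\rangle_{\calH}$ (and its linearity for the difference $f(x)-f(x')$), followed by Cauchy--Schwarz and the assumed bounds on $k$. The only cosmetic difference is that you spell out $\|k(\cdot,x)\|_{\calH}=\sqrt{k(x,x)}\leq B_k$, which the paper leaves implicit.
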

\begin{proof}[Proof of Lemma~\ref{lem:rkhs}]
    For the first claim, we have that
    \begin{align*}
        \big|f(x)\big|=\Big|\big\langle f,k(\cdot,x)\big\rangle\Big|\leq \|f\|_{\calH}\cdot\big\|k(\cdot,x)\big\|_{\calH}\leq rB_{k}.
    \end{align*}
    For the second claim, we have that
    \begin{align*}
        \big|f(x)-f(x^{\prime})\big|=\Big|\big\langle f,k(\cdot,x)-k(\cdot,x^{\prime})\big\rangle\Big|\leq \|f\|_{\calH}\cdot\big\|k(\cdot,x)-k(\cdot,x^{\prime})\big\|_{\calH}\leq rL_{k}\|x-x^{\prime}\|_{\calX}.
    \end{align*}
    Thus, we conclude the proof of Lemma~\ref{lem:rkhs}.
\end{proof}

\begin{lemma}\label{lem:optapprox}
    Let $\calX$ be a nonempty compact convex set and $f:\calX\rightarrow\bbR$ be a differentiable $k$-strongly convex function, i.e., $f(x)\geq f(y)+\langle \nabla f(y),x-y \rangle +\frac{k}{2} \|x-y\|^{2}$ for all $x,y\in\calX$. For any two elements $y_{1},y_{2}$, we define
    \begin{align*}
        x_{i}=\argmax_{x\in\calX}\langle x,y_{i}\rangle-f(x) \text{ for }i=1,2.
    \end{align*}
    Then $\|x_{1}-x_{2}\|\leq \|y_{1}-y_{2}\|_{*}/k$, where $\|\cdot\|_{*}$ is the dual norm of $\|\cdot\|$.
\end{lemma}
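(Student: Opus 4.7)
The statement is the standard stability estimate for maximizers of strongly concave functions, and the plan is to prove it via the first-order optimality condition combined with the strong monotonicity of $\nabla f$. Since $\calX$ is convex and the objective $x \mapsto \langle x, y_i\rangle - f(x)$ is strongly concave (as $f$ is $k$-strongly convex), the maximizer $x_i$ is unique and characterized by the variational inequality
\begin{align*}
    \langle y_i - \nabla f(x_i),\, x - x_i\rangle \leq 0 \qquad \text{for all } x \in \calX.
\end{align*}

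The plan is to instantiate this inequality twice, once with $i=1$ and test point $x=x_2$, and once with $i=2$ and test point $x=x_1$, and add the two resulting inequalities. After cancelling terms this yields
\begin{align*}
    \langle y_1 - y_2,\, x_2 - x_1\rangle \;\leq\; \langle \nabla f(x_1) - \nabla f(x_2),\, x_2 - x_1\rangle,
\end{align*}
or equivalently $\langle \nabla f(x_1) - \nabla f(x_2),\, x_1 - x_2\rangle \leq \langle y_1 - y_2,\, x_1 - x_2\rangle$.

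Next I would invoke the strong monotonicity consequence of $k$-strong convexity, namely $\langle \nabla f(x_1) - \nabla f(x_2),\, x_1 - x_2\rangle \geq k\|x_1 - x_2\|^2$, which follows by adding the two defining inequalities of $k$-strong convexity at $x_1$ and $x_2$. Combining with the previous display gives
\begin{align*}
    k \|x_1 - x_2\|^2 \;\leq\; \langle y_1 - y_2,\, x_1 - x_2\rangle \;\leq\; \|y_1 - y_2\|_* \cdot \|x_1 - x_2\|,
\end{align*}
where the last step is the generalized Cauchy--Schwarz (duality) inequality. Dividing through by $\|x_1-x_2\|$ (the case $x_1=x_2$ being trivial) yields the claim $\|x_1-x_2\| \leq \|y_1-y_2\|_*/k$.

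There is no real obstacle here: the argument is a textbook application of the first-order optimality condition, the monotonicity of the gradient under strong convexity, and Cauchy--Schwarz for dual norms. The only points that require (minor) care are (i) ensuring the variational inequality is used with the correct sign (the problem is a maximization, so the inner product is nonpositive on feasible directions), and (ii) justifying that strong convexity of $f$ implies $k$-strong monotonicity of $\nabla f$, which is standard and follows by summing $f(x_2) \geq f(x_1)+\langle \nabla f(x_1), x_2-x_1\rangle + \tfrac{k}{2}\|x_2-x_1\|^2$ with the symmetric inequality obtained by swapping $x_1$ and $x_2$.
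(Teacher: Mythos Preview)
Your proof is correct. Both your argument and the paper's arrive at the same pivotal inequality $k\|x_1-x_2\|^2 \leq \langle y_1-y_2,\,x_1-x_2\rangle$ and then conclude via the dual-norm Cauchy--Schwarz inequality, but the paths to that inequality differ slightly. The paper defines $f_i(x)=\langle x,y_i\rangle-f(x)$, invokes \citet[Lemma~2.8]{shalev2012online} to obtain the suboptimality-gap bounds $\tfrac{k}{2}\|x_1-x_2\|^2\leq f_1(x_1)-f_1(x_2)$ and $\tfrac{k}{2}\|x_1-x_2\|^2\leq f_2(x_2)-f_2(x_1)$, and sums them so that the $f$-terms cancel. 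You instead combine the two first-order optimality (variational) inequalities and then apply the strong monotonicity of $\nabla f$. Both are standard equivalent characterizations of strong convexity, so the arguments are essentially interchangeable; your version has the minor advantage of being fully self-contained (no external lemma is cited), while the paper's version avoids explicitly manipulating the variational inequality and the gradient monotonicity step.
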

\begin{proof}[Proof of Lemma~\ref{lem:optapprox}]
    Define $f_{i}(x)=\langle x,y_{i}\rangle-f(x)$ for $i=1,2$. Then \citet[Lemma 2.8]{shalev2012online} shows that
    \begin{align*}
        \frac{k}{2}\|x_{1}-x_{2}\|^{2}\leq f_{1}(x_{1})-f_{1}(x_{2})\quad  \text{ and } \quad \frac{k}{2}\|x_{1}-x_{2}\|^{2}\leq f_{2}(x_{2})-f_{2}(x_{1}).
    \end{align*}
    Summing these two inequalities, we have
    \begin{align*}
        k\|x_{1}-x_{2}\|^{2}\leq \langle x_{1}-x_{2},y_{1}-y_{2}\rangle\leq \|x_{1}-x_{2}\|\cdot \|y_{1}-y_{2}\|_{*},
    \end{align*}
    where the second inequality results from the definition of the dual norm. Thus, we conclude the proof of Lemma~\ref{lem:optapprox}.
\end{proof}

\begin{lemma}[Lemma 3.3 in~\cite{cai2020provably}]\label{lem:mdupdate}
    For any distribution $p,p^{*}\in\Delta(\calA)$ and any function $g:\calA\rightarrow [0,H]$, it holds for $q\in\Delta(\calA)$ with $q(\cdot)\propto p(\cdot)\exp\big(\alpha g(\cdot)\big)$ that
    \begin{align*}
        \langle g(\cdot),p^{*}(\cdot)-p(\cdot)\rangle \leq \alpha H^{2}/2+\alpha^{-1}\big[\kl(p^{*}\|p)-\kl(p^{*}\|q)\big].
    \end{align*}
\end{lemma}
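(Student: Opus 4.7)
\textbf{Proof proposal for Lemma~\ref{lem:mdupdate}.} The plan is to exploit the fact that the multiplicative-weights update $q(\cdot)\propto p(\cdot)\exp(\alpha g(\cdot))$ has a closed form, so the difference of two KL divergences $\kl(p^{*}\|p)-\kl(p^{*}\|q)$ telescopes into a linear functional of $g$ plus a log-partition term. Once that is done, the inequality we need reduces to a standard sub-Gaussian / Hoeffding bound on the moment generating function of a bounded random variable.

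First, write $q(a)=p(a)\exp(\alpha g(a))/Z$ where $Z=\sum_{a\in\calA}p(a)\exp(\alpha g(a))$. Direct computation gives
\begin{align*}
\kl(p^{*}\|p)-\kl(p^{*}\|q)
&=\sum_{a\in\calA}p^{*}(a)\log\frac{q(a)}{p(a)}
=\alpha\langle g,p^{*}\rangle-\log Z.
\end{align*}
Substituting this into the right-hand side of the target inequality and multiplying by $\alpha$, the claim is equivalent to
\begin{align*}
\log Z\leq \alpha\langle g,p\rangle+\tfrac{\alpha^{2}H^{2}}{2}.
\end{align*}

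Second, view $g$ as a random variable $X=g(A)$ with $A\sim p$; then $X\in[0,H]$ and the displayed inequality is exactly
\begin{align*}
\log \bbE_{A\sim p}\!\bigl[\exp\bigl(\alpha (X-\bbE X)\bigr)\bigr]\leq \tfrac{\alpha^{2}H^{2}}{2}.
\end{align*}
This follows from Hoeffding's lemma for bounded random variables, which in fact yields the sharper constant $\alpha^{2}H^{2}/8\leq \alpha^{2}H^{2}/2$. One can derive it by bounding $e^{\alpha X}$ along the chord of $e^{\alpha x}$ on $[0,H]$ (convexity), taking expectations, and finally taking logarithms and optimising over the resulting univariate expression; alternatively, one can use the elementary bound $e^{x}\leq 1+x+\tfrac{x^{2}}{2}e^{|x|}$ together with $|\alpha(X-\bbE X)|\leq \alpha H$ and $\log(1+y)\leq y$.

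Combining the two steps and dividing by $\alpha$ yields the stated inequality. No essential obstacle arises: the algebraic identity in the first step is mechanical, and the concentration bound in the second step is textbook. The only mild point to be careful about is that the lemma is stated with the loose constant $\alpha H^{2}/2$ rather than $\alpha H^{2}/8$, so any version of Hoeffding's lemma (or even the cruder sub-Gaussian bound obtained via $\operatorname{Var}_{p}(g)\leq H^{2}/4$ and $e^{x}\leq 1+x+x^{2}$ on a bounded range) is enough.
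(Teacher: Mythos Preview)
The paper does not prove this lemma: it merely cites it as Lemma~3.3 of \cite{cai2020provably} and states it without argument. So there is no in-paper proof to compare against.

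Your argument is correct. The identity $\kl(p^{*}\|p)-\kl(p^{*}\|q)=\alpha\langle g,p^{*}\rangle-\log Z$ is exact, and the remaining reduction to $\log\bbE_{A\sim p}[e^{\alpha(g(A)-\bbE g(A))}]\leq \alpha^{2}H^{2}/2$ is indeed Hoeffding's lemma (which even gives $\alpha^{2}H^{2}/8$). Nothing is missing.

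For context, the route one typically sees in the mirror-descent literature (and presumably in \cite{cai2020provably}) is geometric rather than probabilistic: one observes that $q$ solves $\max_{r\in\Delta(\calA)}\{\alpha\langle g,r\rangle-\kl(r\|p)\}$, invokes the three-point identity to get $\alpha\langle g,p^{*}-q\rangle\leq \kl(p^{*}\|p)-\kl(p^{*}\|q)-\kl(q\|p)$, then writes $\alpha\langle g,p^{*}-p\rangle=\alpha\langle g,p^{*}-q\rangle+\alpha\langle g,q-p\rangle$ and bounds the last term by $\alpha H\|q-p\|_{1}\leq \tfrac{\alpha^{2}H^{2}}{2}+\tfrac{1}{2}\|q-p\|_{1}^{2}\leq \tfrac{\alpha^{2}H^{2}}{2}+\kl(q\|p)$ via H\"older, Young, and Pinsker. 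Your Hoeffding-based computation is shorter and more direct here because the update has an explicit closed form; the three-point/Pinsker route, on the other hand, generalises immediately to other Bregman geometries where no such closed form is available. Either approach is perfectly adequate for the statement as given.
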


\begin{lemma}\label{lem:policyave}
    For any two distributions $p^{*},p\in\Delta(\calA)$ and $\hatp=(1-\beta)p+\beta\unif(\calA)$ with $\beta\in(0,1)$. Then
    \begin{align*}
        \kl(p^{*}\|\hatp)&\leq\log\frac{|\calA|}{\beta}\\
        \kl(p^{*}\|\hatp)-\kl(p^{*}\|p)&\leq \beta/(1-\beta).
    \end{align*}
\end{lemma}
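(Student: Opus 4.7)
The plan is to prove each bound by an elementary pointwise comparison on the mixture distribution $\hatp$, exploiting the two different lower bounds $\hatp(a)\geq \beta/|\calA|$ and $\hatp(a)\geq (1-\beta)p(a)$.

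For the first inequality, I would start by observing that since $\unif(\calA)(a)=1/|\calA|$, the mixture satisfies $\hatp(a)=(1-\beta)p(a)+\beta/|\calA|\geq \beta/|\calA|$ for every $a\in\calA$. Expanding the KL divergence as
\begin{align*}
\kl(p^{*}\|\hatp)=\sum_{a\in\calA}p^{*}(a)\log p^{*}(a)-\sum_{a\in\calA}p^{*}(a)\log\hatp(a),
\end{align*}
the first sum is the negative entropy $-H(p^{*})\leq 0$, and the pointwise bound on $\hatp(a)$ gives $-\log\hatp(a)\leq \log(|\calA|/\beta)$. Summing against $p^{*}$ (a probability measure) yields the claim.

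For the second inequality, I would write the difference as a single expectation under $p^{*}$:
\begin{align*}
\kl(p^{*}\|\hatp)-\kl(p^{*}\|p)=\sum_{a\in\calA}p^{*}(a)\log\frac{p(a)}{\hatp(a)}.
\end{align*}
Now use the complementary lower bound $\hatp(a)\geq (1-\beta)p(a)$, which gives $\log\bigl(p(a)/\hatp(a)\bigr)\leq \log\bigl(1/(1-\beta)\bigr)$ pointwise. Summing against $p^{*}$ produces the upper bound $-\log(1-\beta)$. The final step is the scalar inequality $-\log(1-\beta)\leq \beta/(1-\beta)$, which follows from $\log(1+x)\leq x$ applied to $x=\beta/(1-\beta)$ (equivalently, from concavity of $\log$).

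Neither step presents a real obstacle; the only subtlety is choosing the appropriate pointwise lower bound on $\hatp(a)$ for each of the two inequalities ($\beta/|\calA|$ for the first, $(1-\beta)p(a)$ for the second), and then invoking the standard scalar inequality $-\log(1-\beta)\leq \beta/(1-\beta)$ to finish the second bound.
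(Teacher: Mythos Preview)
Your proposal is correct and follows essentially the same approach as the paper: both use the pointwise lower bound $\hatp(a)\geq\beta/|\calA|$ for the first inequality and $\hatp(a)\geq(1-\beta)p(a)$ for the second, then finish the second via the scalar inequality $-\log(1-\beta)\leq\beta/(1-\beta)$ (the paper phrases it as $\log x\leq x-1$, which is equivalent).
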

\begin{proof}[Proof of Lemma~\ref{lem:policyave}]
    \begin{align*}
        \kl(p^{*}\|\hatp)\leq \Big\langle p^{*},\log \frac{p^{*}}{(1-\beta)p+\beta/|\calA|}\Big\rangle\leq \Big\langle p^{*},\log \frac{1}{\beta/|\calA|}\Big\rangle=\log\frac{|\calA|}{\beta}.
    \end{align*}
    Thus, we prove the first inequality. For the second inequality, we have
    \begin{align*}
        \kl(p^{*}\|\hatp)-\kl(p^{*}\|p)=\Big\langle p^{*},\log \frac{p}{(1-\beta)p+\beta/|\calA|}\Big\rangle\leq \Big\langle p^{*},\log \frac{p}{(1-\beta)p}\Big\rangle \leq \Big\langle p^{*},\frac{\beta}{1-\beta}\Big\rangle=\frac{\beta}{1-\beta},
    \end{align*}
    where the second inequality results from $\log(x)\leq x-1$ for $x>0$. Thus, we conclude the proof of Lemma~\ref{lem:policyave}.
    
\end{proof}

\begin{lemma}[Performance Difference Lemma]\label{lem:pdl}
    Given a policy $\pi^{\calI}$ and the corresponding mean-field flow $\mu^{\calI}$, for any agent $\alpha\in\calI$ and any policy $\tilde{\pi}^{\alpha}$, we have
    \begin{align*}
        &V_{1}^{\lambda,\alpha}(s,\tilde{\pi}^{\alpha},\mu^{\calI},W)-V_{1}^{\lambda,\alpha}(s,\pi^{\alpha},\mu^{\calI},W)+\lambda \bbE_{\tilde{\pi}^{\alpha},\mu^{\calI}}\bigg[\sum_{h=1}^{H}\kl\big(\tilde{\pi}_{h}^{\alpha}(\cdot\,|\,s_{h}^{\alpha})\|\pi_{h}^{\alpha}(\cdot\,|\,s_{h}^{\alpha})\big)\,|\,s_{1}^{\alpha}=s\bigg]\\
        &\quad= \bbE_{\tilde{\pi}^{\alpha},\mu^{\calI}}\bigg[\sum_{h=1}^{H}\big\langle Q_{h}^{\lambda,\alpha}(s_{h}^{\alpha},\cdot,\pi^{\alpha},\mu^{\calI},W)-\lambda\log\pi_{h}^{\alpha}(\cdot\,|\,s_{h}^{\alpha}),\tilde{\pi}_{h}^{\alpha}(\cdot\,|\,s_{h}^{\alpha})-\pi_{h}^{\alpha}(\cdot\,|\,s_{h}^{\alpha})\big\rangle\,|\,s_{1}^{\alpha}=s\bigg],
    \end{align*}
    where the expectation $\bbE_{\tilde{\pi}^{\alpha},\mu^{\calI}}$ is taken with respect to the randomness in implementing policy $\tilde{\pi}^{\alpha}$ for agent $\alpha$ under the \ac{mdp} induced by $\mu^{\calI}$.
\end{lemma}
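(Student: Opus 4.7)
\textbf{Proof proposal for Lemma~\ref{lem:pdl}.}
The plan is to reduce the statement to a single-agent argument. Once the policy $\pi^{\calI}$ and the distribution flow $\mu^{\calI}$ are fixed, agent $\alpha$ faces an ordinary time-inhomogeneous $\lambda$-regularized finite-horizon MDP with transitions $P_h^{*}(\cdot\,|\,s,a,z_h^{\alpha}(\mu_h^{\calI},W_h))$ and rewards $r_h^{*}(s,a,z_h^{\alpha}(\mu_h^{\calI},W_h))$. All external dependence on the mean-field can therefore be suppressed throughout the proof, and the desired identity is just the regularized performance difference lemma for that single-agent MDP.

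First I would write, for the trajectory induced by $\tilde{\pi}^{\alpha}$ starting at $s_1^{\alpha}=s$, the telescoping identity
\begin{align*}
V_1^{\lambda,\alpha}(s,\pi^{\alpha},\mu^{\calI},W)
=\bbE_{\tilde{\pi}^{\alpha},\mu^{\calI}}\!\bigg[\sum_{h=1}^{H}\!\big(V_h^{\lambda,\alpha}(s_h^{\alpha},\pi^{\alpha},\mu^{\calI},W)-V_{h+1}^{\lambda,\alpha}(s_{h+1}^{\alpha},\pi^{\alpha},\mu^{\calI},W)\big)\,\Big|\,s_1^{\alpha}=s\bigg],
\end{align*}
which holds because $V_{H+1}^{\lambda,\alpha}\equiv 0$. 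Subtracting this from the definition
$V_1^{\lambda,\alpha}(s,\tilde{\pi}^{\alpha},\mu^{\calI},W)=\bbE_{\tilde{\pi}^{\alpha},\mu^{\calI}}\!\big[\sum_h r_h^{*}-\lambda\log\tilde{\pi}_h^{\alpha}(a_h^{\alpha}\,|\,s_h^{\alpha})\big]$ and using the tower property together with the definition of $Q_h^{\lambda,\alpha}$, i.e.\ $r_h^{*}(s_h^{\alpha},a_h^{\alpha},z_h^{\alpha})+\bbE[V_{h+1}^{\lambda,\alpha}\,|\,s_h^{\alpha},a_h^{\alpha}]=Q_h^{\lambda,\alpha}(s_h^{\alpha},a_h^{\alpha},\pi^{\alpha},\mu^{\calI},W)$, yields
\begin{align*}
V_1^{\lambda,\alpha}(s,\tilde{\pi}^{\alpha},\mu^{\calI},W)-V_1^{\lambda,\alpha}(s,\pi^{\alpha},\mu^{\calI},W)
=\bbE_{\tilde{\pi}^{\alpha},\mu^{\calI}}\!\bigg[\sum_{h=1}^{H}\! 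Q_h^{\lambda,\alpha}(s_h^{\alpha},a_h^{\alpha},\cdot)-\lambda\log\tilde{\pi}_h^{\alpha}(a_h^{\alpha}\,|\,s_h^{\alpha})-V_h^{\lambda,\alpha}(s_h^{\alpha},\cdot)\bigg].
\end{align*}

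Next I would invoke the soft Bellman equation $V_h^{\lambda,\alpha}(s_h^{\alpha},\cdot)=\langle\pi_h^{\alpha}(\cdot\,|\,s_h^{\alpha}),\,Q_h^{\lambda,\alpha}(s_h^{\alpha},\cdot)-\lambda\log\pi_h^{\alpha}(\cdot\,|\,s_h^{\alpha})\rangle$ and take the inner expectation over $a_h^{\alpha}\sim\tilde{\pi}_h^{\alpha}(\cdot\,|\,s_h^{\alpha})$. Each summand becomes
\begin{align*}
\big\langle Q_h^{\lambda,\alpha}(s_h^{\alpha},\cdot),\tilde{\pi}_h^{\alpha}(\cdot\,|\,s_h^{\alpha})-\pi_h^{\alpha}(\cdot\,|\,s_h^{\alpha})\big\rangle
-\lambda\big\langle\tilde{\pi}_h^{\alpha},\log\tilde{\pi}_h^{\alpha}\big\rangle
+\lambda\big\langle\pi_h^{\alpha},\log\pi_h^{\alpha}\big\rangle,
\end{align*}
and a direct rewriting using $\mathrm{KL}(\tilde{\pi}_h^{\alpha}\,\|\,\pi_h^{\alpha})=\langle\tilde{\pi}_h^{\alpha},\log\tilde{\pi}_h^{\alpha}\rangle-\langle\tilde{\pi}_h^{\alpha},\log\pi_h^{\alpha}\rangle$ regroups the entropy contributions as
$\langle Q_h^{\lambda,\alpha}(s_h^{\alpha},\cdot)-\lambda\log\pi_h^{\alpha}(\cdot\,|\,s_h^{\alpha}),\,\tilde{\pi}_h^{\alpha}(\cdot\,|\,s_h^{\alpha})-\pi_h^{\alpha}(\cdot\,|\,s_h^{\alpha})\rangle-\lambda\,\mathrm{KL}(\tilde{\pi}_h^{\alpha}\,\|\,\pi_h^{\alpha})$. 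Summing over $h$ and transposing the KL term to the left-hand side gives exactly the identity in the lemma.

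There is no real obstacle here; the argument is a standard telescoping plus soft-Bellman manipulation, and the only care needed is in the algebraic identity that converts the entropy-weighted inner products into $\lambda\,\mathrm{KL}(\tilde{\pi}_h^{\alpha}\,\|\,\pi_h^{\alpha})$ plus the $\lambda\log\pi_h^{\alpha}$ shift of the Q-function. The substantive observation is simply that the graphon $W$ and flow $\mu^{\calI}$ enter only as data defining a fixed single-agent MDP, so no notion of equilibrium, contractivity, or cross-agent interaction is used at all.
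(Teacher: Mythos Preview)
Your proposal is correct and follows essentially the same approach as the paper: both proofs use the telescoping decomposition of $V_1^{\lambda,\alpha}(s,\pi^{\alpha},\mu^{\calI},W)$ along the trajectory generated by $\tilde{\pi}^{\alpha}$, combine it with the definition of $Q_h^{\lambda,\alpha}$ and the soft Bellman identity $V_h^{\lambda,\alpha}=\langle Q_h^{\lambda,\alpha},\pi_h^{\alpha}\rangle-\lambda R(\pi_h^{\alpha})$, and then perform the same entropy-to-KL algebra to isolate the $\lambda\,\mathrm{KL}(\tilde{\pi}_h^{\alpha}\|\pi_h^{\alpha})$ term and the $\lambda\log\pi_h^{\alpha}$ shift of the $Q$-function. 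Your explicit remark that $W$ and $\mu^{\calI}$ enter only as fixed data defining a single-agent MDP is a helpful framing that the paper leaves implicit.
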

\begin{proof}[Proof of Lemma~\ref{lem:pdl}]
    From the definition of $V_{1}^{\lambda,\alpha}(s,\tilde{\pi}^{\alpha},\mu^{\calI},W)$, we have
    \begin{align}
        &V_{1}^{\lambda,\alpha}(s,\tilde{\pi}^{\alpha},\mu^{\calI},W)\nonumber\\
        &\quad=\bbE_{\tilde{\pi}^{\alpha},\mu^{\calI}}\bigg[\sum_{h=1}^{H}r_{h}(s_{h}^{\alpha},a_{h}^{\alpha},z_{h}^{\alpha})-\lambda\log\tilde{\pi}_{h}^{\alpha}(a_{h}^{\alpha}\,|\,s_{h}^{\alpha})+V_{h}^{\lambda,\alpha}(s_{h}^{\alpha},\pi^{\alpha},\mu^{\calI},W)-V_{h}^{\lambda,\alpha}(s_{h}^{\alpha},\pi^{\alpha},\mu^{\calI},W)\,\bigg|\,s_{1}^{\alpha}=s\bigg]\nonumber\\
        &\quad=\bbE_{\tilde{\pi}^{\alpha},\mu^{\calI}}\bigg[\sum_{h=1}^{H}r_{h}(s_{h}^{\alpha},a_{h}^{\alpha},z_{h}^{\alpha})-\lambda\log\tilde{\pi}_{h}^{\alpha}(a_{h}^{\alpha}\,|\,s_{h}^{\alpha})+V_{h+1}^{\lambda,\alpha}(s_{h+1}^{\alpha},\pi^{\alpha},\mu^{\calI},W)\nonumber\\
        &\quad\qquad-V_{h}^{\lambda,\alpha}(s_{h}^{\alpha},\pi^{\alpha},\mu^{\calI},W)\,\bigg|\,s_{1}^{\alpha}=s\bigg]+V_{1}^{\lambda,\alpha}(s,\pi^{\alpha},\mu^{\calI},W),\label{eq:8}
    \end{align}
    where the second equality results from the rearrangement from the terms. We then focus on a part of the right-hand side of Eqn.~\eqref{eq:8}.
    \begin{align}
        &\bbE_{\tilde{\pi}^{\alpha},\mu^{\calI}}\big[r_{h}(s_{h}^{\alpha},a_{h}^{\alpha},z_{h}^{\alpha})-\lambda\log\tilde{\pi}_{h}^{\alpha}(a_{h}^{\alpha}\,|\,s_{h}^{\alpha})+V_{h+1}^{\lambda,\alpha}(s_{h+1}^{\alpha},\pi^{\alpha},\mu^{\calI},W)\,|\,s_{1}^{\alpha}=s\big]\nonumber\\
        &\quad=\bbE_{\tilde{\pi}^{\alpha},\mu^{\calI}}\big[r_{h}(s_{h}^{\alpha},a_{h}^{\alpha},z_{h}^{\alpha})+V_{h+1}^{\lambda,\alpha}(s_{h+1}^{\alpha},\pi^{\alpha},\mu^{\calI},W)\,|\,s_{1}^{\alpha}=s\big]-\lambda\bbE_{\tilde{\pi}^{\alpha},\mu^{\calI}}\Big[ R\big(\tilde{\pi}_{h}^{\alpha}(\cdot\,|\,s_{h}^{\alpha})\big)\,|\,s_{1}^{\alpha}=s\Big]\nonumber\\
        &\quad=\bbE_{\tilde{\pi}^{\alpha},\mu^{\calI}}\Big[\big\langle Q_{h}^{\lambda,\alpha}(s_{h}^{\alpha},\cdot,\pi^{\alpha},\mu^{\calI},W),\tilde{\pi}_{h}^{\alpha}(\cdot\,|\,s_{h}^{\alpha})\big\rangle\,|\,s_{1}^{\alpha}=s\Big]-\lambda\bbE_{\tilde{\pi}^{\alpha},\mu^{\calI}}\Big[ R\big(\tilde{\pi}_{h}^{\alpha}(\cdot\,|\,s_{h}^{\alpha})\big)\,|\,s_{1}^{\alpha}=s\Big],\label{eq:9}
    \end{align}
    where $R(\cdot)$ is the negative entropy function, the inner product $\langle\cdot,\cdot\rangle$ is taken with respect to the action space $\calA$, and the second equality results from the definition of $Q_{h}^{\lambda,\alpha}$ and $V_{h+1}^{\lambda,\alpha}$. Substituting Eqn.~\eqref{eq:9} into Eqn.~\eqref{eq:8} and noting the fact that $V_{h}^{\lambda,\alpha}(s_{h}^{\alpha},\pi^{\alpha},\mu^{\calI},W)=\langle Q_{h}^{\lambda,\alpha}(s_{h}^{\alpha},\cdot,\pi^{\alpha},\mu^{\calI},W),\pi_{h}^{\alpha}(\cdot\,|\,s_{h}^{\alpha})\rangle-R(\pi_{h}^{\alpha}(\cdot\,|\,s_{h}^{\alpha}))$, we derive that
    \begin{align*}
        &V_{1}^{\lambda,\alpha}(s,\tilde{\pi}^{\alpha},\mu^{\calI},W)-V_{1}^{\lambda,\alpha}(s,\pi^{\alpha},\mu^{\calI},W)\nonumber\\
        &\quad =\bbE_{\tilde{\pi}^{\alpha},\mu^{\calI}}\bigg[\sum_{h=1}^{H}\big\langle Q_{h}^{\lambda,\alpha}(s_{h}^{\alpha},\cdot,\pi^{\alpha},\mu^{\calI},W),\tilde{\pi}_{h}^{\alpha}(\cdot\,|\,s_{h}^{\alpha})-\pi_{h}^{\alpha}(\cdot\,|\,s_{h}^{\alpha})\big\rangle\,|\,s_{1}^{\alpha}=s\bigg]\nonumber\\
        &\quad\qquad-\lambda\bbE_{\tilde{\pi}^{\alpha},\mu^{\calI}}\bigg[\sum_{h=1}^{H}R\big(\tilde{\pi}_{h}^{\alpha}(\cdot\,|\,s_{h}^{\alpha})\big)-R\big(\pi_{h}^{\alpha}(\cdot\,|\,s_{h}^{\alpha})\big)\,|\,s_{1}^{\alpha}=s\bigg]\\
        &\quad =\bbE_{\tilde{\pi}^{\alpha},\mu^{\calI}}\bigg[\sum_{h=1}^{H}\big\langle Q_{h}^{\lambda,\alpha}(s_{h}^{\alpha},\cdot,\pi^{\alpha},\mu^{\calI},W),\tilde{\pi}_{h}^{\alpha}(\cdot\,|\,s_{h}^{\alpha})-\pi_{h}^{\alpha}(\cdot\,|\,s_{h}^{\alpha})\big\rangle\,|\,s_{1}^{\alpha}=s\bigg]\nonumber\\
        &\quad\qquad -\lambda\bbE_{\tilde{\pi}^{\alpha},\mu^{\calI}}\bigg[\sum_{h=1}^{H}\kl\big(\tilde{\pi}_{h}^{\alpha}(\cdot\,|\,s_{h}^{\alpha})\|\pi_{h}^{\alpha}(\cdot\,|\,s_{h}^{\alpha})\big)+\big\langle \log\pi_{h}^{\alpha}(\cdot\,|\,s_{h}^{\alpha}),\tilde{\pi}_{h}^{\alpha}(\cdot\,|\,s_{h}^{\alpha})-\pi_{h}^{\alpha}(\cdot\,|\,s_{h}^{\alpha}) \big\rangle\,|\,s_{1}^{\alpha}=s\bigg],
    \end{align*}
    where the last equality results from the definition of the negative entropy $R(\cdot)$. This concludes the proof of Lemma~\ref{lem:pdl}.
    
\end{proof}

\begin{lemma}\label{lem:rlip}
    For a finite alphabet $\calX$, define $R$ as the negative entropy function. For two distributions $p,q$ supported on $\calX$, we have that
    \begin{align*}
        |R(p)-R(q)|\leq \max\Big\{\big\|\log(p)\big\|_{\infty},\big\|\log(q)\big\|_{\infty}\Big\}\|p-q\|_{1}.
    \end{align*}
\end{lemma}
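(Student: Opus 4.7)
The plan is to exploit the elementwise structure of $R(p) = \sum_{x\in\calX} p(x)\log p(x)$ and apply the mean value theorem to the scalar function $h(t) = t\log t$, which has derivative $h'(t) = \log t + 1$. Treating the case where both $p$ and $q$ have full support on $\calX$ (otherwise the right-hand side is vacuously $+\infty$), I would write
\begin{align*}
R(p) - R(q) = \sum_{x\in\calX}\bigl[h(p(x)) - h(q(x))\bigr] = \sum_{x\in\calX}(\log\xi_x + 1)\bigl(p(x)-q(x)\bigr),
\end{align*}
where $\xi_x$ lies between $p(x)$ and $q(x)$, by the mean value theorem applied coordinatewise.

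Next I would cancel the constant $1$: since $p$ and $q$ are both probability distributions, $\sum_x (p(x)-q(x)) = 0$, so the identity simplifies to
\begin{align*}
R(p) - R(q) = \sum_{x\in\calX}\log\xi_x \cdot \bigl(p(x)-q(x)\bigr).
\end{align*}
Because $\xi_x$ lies between $p(x)$ and $q(x)$ and the logarithm is monotone, $|\log\xi_x| \le \max\{|\log p(x)|, |\log q(x)|\}$ for every $x$. Taking absolute values and applying H\"older's inequality (with the $\ell_\infty$--$\ell_1$ pairing) then yields
\begin{align*}
|R(p) - R(q)| \le \max_{x\in\calX}\max\{|\log p(x)|, |\log q(x)|\}\cdot\|p-q\|_1 = \max\bigl\{\|\log p\|_\infty, \|\log q\|_\infty\bigr\}\cdot\|p-q\|_1,
\end{align*}
which is exactly the claim.

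There is no real obstacle here; the only subtlety is the boundary behavior when some $p(x)$ or $q(x)$ equals zero, in which case $\|\log p\|_\infty = \infty$ or $\|\log q\|_\infty = \infty$ and the bound holds trivially (with the convention $0\log 0 = 0$, which keeps $R$ finite on the simplex). This proof is essentially two lines once the mean value theorem identity is in place, and the cancellation of the $+1$ term via $\sum_x (p(x)-q(x))=0$ is the only algebraic observation needed.
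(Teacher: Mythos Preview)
Your proof is correct and follows essentially the same approach as the paper: the paper writes $R(p)-R(q)=\int_0^1 \langle \nabla R(q+t(p-q)),\,p-q\rangle\,\rmd t$, cancels the $+1$ in $\nabla R=1+\log(\cdot)$ via $\sum_x(p(x)-q(x))=0$, and then applies H\"older and the monotonicity of $\log$ on the segment. You do the same thing coordinatewise with the Lagrange form of the mean value theorem instead of the integral form, which is a cosmetic difference only.
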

\begin{proof}[Proof of Lemma~\ref{lem:rlip}]
    Then we have that
    \begin{align*}
        |R(p)-R(q)|\leq\int_{0}^{1}\Big|\Big\langle\nabla R\big(q+t(p-q)\big),p-q\Big\rangle\Big|\rmd t\leq \|p-q\|_{1}\int_{0}^{1}\Big\|\log\big(q+t(p-q)\big)\Big\|_{\infty}\rmd t,
    \end{align*}
    where the first inequality results from the definition of integral and the triangle inequality, and the second inequality results from H\"{o}lder's inequality. The desired result follows from the fact that for $t\in[0,1]$
    \begin{align*}
        \Big\|\log\big(q+t(p-q)\big)\Big\|_{\infty}\leq \max\Big\{\big\|\log(p)\big\|_{\infty},\big\|\log(q)\big\|_{\infty}\Big\}.
    \end{align*}
    Thus, we conclude the proof of Lemma~\ref{lem:rlip}.
\end{proof}

\begin{lemma}[Lemma 3 in~\citet{xie2021learning}]\label{lem:kllip}
    Let $p,q,u\in\Delta(\calX)$ be distributions supported on a finite set $\calX$. If $p(x)\geq \alpha_{1}$, $q(x)\geq\alpha_{1}$, and $u(x)\geq \alpha_{2}$ for all $x\in\calX$. Then 
    \begin{align*}
        \kl(p\|u)-\kl(q\|u)\leq \bigg(1+\log\frac{1}{\min\{\alpha_{1},\alpha_{2}\}}\bigg)\|p-q\|_{1}
    \end{align*}
\end{lemma}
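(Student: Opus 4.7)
The plan is to treat $\kl(\cdot\|u)$ as a differentiable function of the first argument and apply the fundamental theorem of calculus along the straight-line segment from $q$ to $p$ inside the simplex $\Delta(\calX)$. Specifically, I would set $f(r) = \kl(r\|u) = \sum_{x} r(x)\log\bigl(r(x)/u(x)\bigr)$ and write
\begin{align*}
    \kl(p\|u) - \kl(q\|u) \;=\; \int_{0}^{1}\bigl\langle \nabla f\bigl(q + t(p-q)\bigr),\, p-q \bigr\rangle\, \mathrm{d}t,
\end{align*}
then bound the integrand by H\"older's inequality as $\|\nabla f(r_t)\|_\infty \|p-q\|_1$, where $r_t := q + t(p-q)$.

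Next I would compute the gradient explicitly: $\nabla f(r)_x = 1 + \log\bigl(r(x)/u(x)\bigr)$. The key observation is that along the entire segment the constraints on $p$ and $q$ are preserved: since $r_t(x) = (1-t)q(x) + t\, p(x)$ and both $p(x),q(x) \geq \alpha_1$, convexity gives $r_t(x) \geq \alpha_1$ for every $t \in [0,1]$ and every $x$. Combined with $r_t(x) \leq 1$ and $\alpha_2 \leq u(x) \leq 1$, this yields $r_t(x)/u(x) \in [\alpha_1, 1/\alpha_2]$, so $|\log(r_t(x)/u(x))| \leq \log(1/\min\{\alpha_1,\alpha_2\})$. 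Hence $\|\nabla f(r_t)\|_\infty \leq 1 + \log(1/\min\{\alpha_1,\alpha_2\})$ uniformly in $t$, and integrating over $[0,1]$ delivers the claimed inequality.

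There is no real obstacle here: the only thing to be careful about is that the bound on $|\log(r_t/u)|$ must use the two-sided constraints on both $r_t$ and $u$ simultaneously, which is exactly why $\min\{\alpha_1,\alpha_2\}$ appears rather than just $\alpha_1$ or just $\alpha_2$. A naive split of $\kl(p\|u)-\kl(q\|u) = [R(p)-R(q)] - \sum_x(p(x)-q(x))\log u(x)$ (invoking Lemma~\ref{lem:rlip} on the entropy piece and H\"older on the second) would produce an additive bound of the form $\log(1/\alpha_1)+\log(1/\alpha_2)$, which is weaker by up to a factor of two; keeping $\log(r_t/u)$ as a single term is what yields the tight constant and the additive ``$1+$'' coming from the derivative of $r\log r$. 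Thus the segment/gradient argument, rather than the entropy-plus-linear decomposition, is the right route.
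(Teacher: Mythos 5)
Your argument is correct. Note that the paper itself offers no proof of this lemma---it is imported verbatim as Lemma 3 of \citet{xie2021learning}---so there is no in-paper derivation to compare against; however, your segment-plus-gradient argument is precisely the technique the paper does use to prove the companion Lemma~\ref{lem:rlip} for the negative entropy, so your route is entirely consistent with the paper's style. The one step that needs care goes through: $r_t(x)=(1-t)q(x)+tp(x)\geq\alpha_1$ by convexity of the lower bound, and combined with $r_t(x)\leq 1$ and $\alpha_2\leq u(x)\leq 1$ this gives $r_t(x)/u(x)\in[\alpha_1,1/\alpha_2]$, hence $\big|\log\big(r_t(x)/u(x)\big)\big|\leq\log\big(1/\min\{\alpha_1,\alpha_2\}\big)$ and $\|\nabla f(r_t)\|_\infty\leq 1+\log\big(1/\min\{\alpha_1,\alpha_2\}\big)$ uniformly in $t$, after which H\"older and integration over $[0,1]$ finish the proof. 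One small remark: since $\sum_{x}\big(p(x)-q(x)\big)=0$, the constant $1$ in the gradient $\nabla f(r)_x=1+\log\big(r(x)/u(x)\big)$ contributes nothing to the inner product, so you could drop it before applying H\"older and obtain the slightly sharper bound $\log\big(1/\min\{\alpha_1,\alpha_2\}\big)\|p-q\|_1$; the stated lemma only requires the weaker constant, so your version suffices.
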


\begin{lemma}[Lemma 39 in~\citet{wei2021last}]\label{lem:seqconv}
    Let $\{g_{t}\}_{t\geq 0}$ and $\{h_{t}\}_{t\geq 0}$ be non-negative sequences that satisft $g_{t}\leq (1-c)g_{t-1}+h_{t}$ for some $0<c<1$ for all $t\geq 1$. Then
    \begin{align*}
        g_{t}\leq g_{0}(1-c)^{t}+\frac{\max_{\tau\in[1,t/2]}h_{\tau}}{c}(1-c)^{t/2}+\frac{\max_{\tau\in[t/2,t]}h_{\tau}}{c}.
    \end{align*}
\end{lemma}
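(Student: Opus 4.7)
The plan is to prove this lemma by directly unrolling the one-step contraction and then splitting the resulting sum at the midpoint $t/2$ to capture the two distinct scales of decay. This is essentially a bookkeeping argument for a linear recursion with an inhomogeneous forcing term.

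First, I would iterate the recursion $g_\tau \leq (1-c)g_{\tau-1} + h_\tau$ for $\tau = 1, 2, \ldots, t$. A straightforward induction gives
\begin{align*}
    g_t \leq (1-c)^t g_0 + \sum_{\tau=1}^{t} (1-c)^{t-\tau} h_\tau,
\end{align*}
since each application of the recurrence contributes a new $h_\tau$ weighted by the remaining number of $(1-c)$ factors. This accounts for the first term $g_0(1-c)^t$ in the claimed bound.

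Next, I would split the sum $\sum_{\tau=1}^{t} (1-c)^{t-\tau} h_\tau$ into $\sum_{\tau=1}^{\lfloor t/2 \rfloor}$ and $\sum_{\tau=\lfloor t/2 \rfloor + 1}^{t}$. For the first (early) block, the exponent $t-\tau$ is at least $t/2$, so pulling out the maximum of $h_\tau$ over $[1,t/2]$ and using the geometric-sum bound $\sum_{\tau=1}^{\lfloor t/2\rfloor}(1-c)^{t-\tau} \leq (1-c)^{t/2}\sum_{k\geq 0}(1-c)^{k} = (1-c)^{t/2}/c$ produces the middle term $\max_{\tau\in[1,t/2]} h_\tau \cdot (1-c)^{t/2}/c$. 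For the second (late) block, I simply factor out $\max_{\tau\in[t/2,t]} h_\tau$ and use $\sum_{\tau=\lfloor t/2\rfloor + 1}^{t}(1-c)^{t-\tau} \leq \sum_{k\geq 0}(1-c)^{k} = 1/c$, giving the final term.

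There is no genuine obstacle here, since the argument is a direct computation with geometric series; the only care needed is in handling the floor/ceiling in $t/2$ (which only affects constants and is absorbed by the loose $\max$ over the closed interval $[1,t/2]$ or $[t/2,t]$ as stated) and in noting that the non-negativity of $\{g_t\}$ and $\{h_t\}$ justifies dropping absolute values when comparing to the maxima. Summing the three contributions yields exactly the stated bound, completing the proof.
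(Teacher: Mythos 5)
Your unrolling-and-splitting argument is correct: the induction gives $g_t \leq (1-c)^t g_0 + \sum_{\tau=1}^{t}(1-c)^{t-\tau}h_\tau$, and the two geometric-sum bounds for the early and late blocks yield exactly the three stated terms. Note that the paper itself offers no proof of this statement --- it is imported verbatim as Lemma 39 of Wei et al.\ (2021) --- so your argument supplies the standard derivation rather than competing with an alternative in the text; it matches the usual proof of such contraction-plus-forcing bounds and handles the floor/ceiling and non-negativity issues appropriately.
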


\end{document}